\documentclass[a4paper]{article}
\input{packages.sty}
\input{commands.sty}

\begin{document}

\author{\begin{minipage}{\columnwidth}\centering\normalsize
	Bjoern Andres$^{a,b,1}$, Silvia Di Gregorio$^{a,c}$, Jannik Irmai$^a$,\\
	Lucas Fabian Naumann$^a$ and Shengxian Zhao$^{a,b}$
	\end{minipage}}
\date{\begin{minipage}{\columnwidth}\centering\normalsize
	$^a$Faculty of Computer Science, TU Dresden, Germany\\
	$^b$Center for Scalable Data Analytics and AI (ScaDS.AI) Dresden/Leipzig, Germany\\
	$^c$Université Sorbonne Paris Nord, CNRS, Laboratoire d'Informatique de Paris Nord, LIPN, F-93430 Villetaneuse, France\end{minipage}}
\title{\Large\bf The canonical facets of multi-separator polytopes}
\maketitle

\footnotetext[1]{Correspondence: \texttt{bjoern.andres@tu-dresden.de}}
\setcounter{footnote}{1}

\section*{Abstract}

We initiate a polyhedral study of the graph multi-separator problem proposed by \citet{irmai-2023-separator} as an alternative to the lifted multicut problem for application to the task of image segmentation.
Starting with an integer linear program (ILP) formulation and the multi-separator polytope spanned by its feasible solutions, we characterize in terms of efficiently-decidable, graph-theoretic conditions all facets induced by inequalities of the ILP.
We proceed by strengthening these inequalities and describing additional facets of some multi-separator polytopes induced by the stronger inequalities.
Specifically, we obtain a totally dual integral description of the multi-separator polytope for paths in the case where separation is considered for all vertex pairs.
Finally, we relate the multi-separator polytope to the boolean quadric polytope, showing that facets induced by odd-cycle inequalities do not transfer generally, and to the lifted multicut polytope, showing that either polytope is a projection of a face of the other.

\setlength{\cftbeforesecskip}{0ex}
\renewcommand{\cftsecfont}{\normalfont}
\renewcommand{\cftsecpagefont}{\normalfont}
\tableofcontents

\section{Introduction}

The fundamental mathematical objects in this article are \emph{connectors} and \emph{separators} of a graph\footnote{Unless otherwise noted, all graphs $G = (V, E)$ in this article are finite, simple and undirected, i.e., $E \subseteq \tbinom{V}{2}$.}:
For any two vertices $a$,~$b$ of a graph $G$, we say that $a$ and $b$ are \emph{connected} in $G$ if and only if there exists a path in $G$ from $a$ to $b$.
A \emph{connected component} of $G$ is a maximal subgraph of $G$ in which any two vertices are connected.
We call a subset $U \subseteq V$ an \emph{$\{a,b\}$-connector} of $G$, and say that $a$ and $b$ are \emph{connected} in $G$ by $U$, if and only if $a$ and $b$ are contained and connected in the subgraph $G[U]$ of $G$ induced by $U$.
Dually, we call a subset $S \subseteq V$ an \emph{$\{a,b\}$-separator} of $G$, and say that $a$ and $b$ are \emph{separated} in $G$ by $S$, if and only if $V \setminus S$ is not an $\{a,b\}$-connector of $G$, i.e., either $\{a, b\} \cap S \neq \emptyset$ or $a$,~$b$ are in different connected components of $G[V \setminus S]$.
This definition adopted from \citet{goering-2000} is slightly uncommon in that an $\{a,b\}$-separator may contain $a$ or $b$.

The multi-separator problem is a combinatorial optimization problem whose feasible solutions are the vertex subsets $S$ of a graph $G$.
In addition to the edges of \(G\), the model includes long-range interactions between prescribed vertex pairs, which reward or penalize whether these vertices remain connected after the removal of \(S\).

Given costs for vertices and vertex pairs, the objective is to minimize the sum of the costs of the vertices in $S$ and those vertex pairs that are separated by $S$ in $G$:

\begin{definition}[\citet{irmai-2023-separator}]
\label{definition:multi-separator-problem}
For any graph $G = (V, E)$, 
any set $F \subseteq \tbinom{V}{2}$ of vertex pairs called \emph{interactions} on $G$, and any $c \in \mathbb{R}^{V \cup F}$, the instance of the \emph{(minimum cost) multi-separator problem} has the form \eqref{eq:ms} written below in which $F(S)$ denotes the set of those $f \in F$ that are separated by $S$ in $G$.
\begin{align}
\label{eq:ms}
\min_{S \subseteq V} \;\;
\sum_{v \in S} c_v
+
\sum_{\mathclap{f \in F(S)}} c_f
\tag{\ms}
\end{align}
\end{definition}

Interactions can be edges, i.e.~vertex pairs in $F \cap E$, or non-edges, i.e.~vertex pairs in $F \setminus E$.
Costs assigned to edges (respectively, non-edges) penalize or reward the separation of neighboring (respectively, non-neighboring) vertices.
Notice that $E$ and $F$ play different roles here: 
While the separation of any vertex pair is determined by $E$ and $S$ alone, costs are assigned to the separation of vertex pairs in $F$.
See also \Cref{fig:ms}.

\begin{figure}
    \centering
    \input{figures/ms.tex}
    \caption{
        Depicted above are a graph $G$ (solid black lines) and a set $F$ of vertex pairs called interactions (dashed green lines) that includes the edges $\{s_1, w_1\}$,~$\{s_2, s_3\}$, $\{w_2, w_3\}$ and the non-edges $\{u_1, u_3\}$,~$\{u_2, s_2\}$,~$\{s_1, s_2\}$,~$\{u_3, v_3\}$, $\{v_2, w_2\}$.
        Depicted in addition are a vertex subset $S = \{s_1, s_2, s_3, s_4\}$ (solid black vertices) and the connected components of the graph $G[V \setminus S]$ induced by the complement of $S$ (green areas).
        All interactions except $\{u_1, u_3\}$ and $\{w_2, w_3\}$ are separated by $S$ in $G$, i.e., they belong to $F(S)$.
		In the \emph{multi-separator problem}, vertex separation is considered for all interactions, whether these are edges or non-edges.
    }
    \label{fig:ms}
\end{figure}

The multi-separator problem generalizes the quadratic unconstrained binary optimization problem \citep{padberg1989boolean}, the node-weighted Steiner tree problem \citep{moss-2007} and the multi-terminal vertex separator problem \citep{cornaz2019multi}, according to \citet[Thm.~3--5]{irmai-2023-separator}.
\citet{irmai-2023-separator} introduce it as an alternative to the lifted multicut problem \citep{andres-2023-a-polyhedral} for application to the tasks of image segmentation 
\citep{keuper-2015a, beier-2017, keuper-2017, keuper-2020}.
While the two problems are reducible to one another in linear time \citep[Thm.~1 and 2]{irmai-2023-separator}, the multi-separator problem is less complex than the lifted multicut problem for certain cost patterns \citep[Thm.~8 and 9]{irmai-2023-separator}.

In this article, we state the multi-separator problem in the form of an integer linear program (ILP) and characterize in terms of efficiently decidable, graph-theoretic conditions the canonical facets of the associated multi-separator polytopes, i.e., those facets of multi-separator polytopes that are induced by inequalities of the ILP.
Next, we strengthen inequalities of the ILP and describe additional facets of some multi-separator polytopes induced by the stronger inequalities.
In particular, we obtain a totally dual integral description of multi-separator polytopes of paths in the case where separation is considered for all vertex pairs.
Finally, we relate the multi-separator polytope to the boolean quadric polytope, showing that facets induced by odd-cycle inequalities do not transfer generally, and to the lifted multicut polytope, showing that either polytope is a projection of a face of the other.

The article is organized as follows.
\Cref{section:related-work} reviews related work on vertex separators, with emphasis on polyhedral aspects.
\Cref{section:preliminaries} recalls some basic facts about connectors and separators and their generalizations to pairs of vertex subsets, which are used in our polyhedral analysis of the multi-separator polytope.
\Cref{section:ilp} presents an ILP formulation of the multi-separator problem, from which the multi-separator polytope is defined.
In \Cref{section:dimension}, we begin our study of the facial structure of multi-separator polytopes by  determining their dimension.
\Cref{section:facets-box} is devoted to the study of box inequalities.
\Cref{section:facets-connector} and \Cref{section:facets-separator} give complete characterizations of the facets of multi-separator polytopes defined by connector inequalities and separator inequalities, respectively.
\Cref{section:facets-path,section:facets-intersection} present two additional classes of facets of multi-separator polytopes, which are induced by inequalities that are not part of the ILP formulation.
\Cref{section:related-polytopes} discusses the relationships of the multi-separator polytope with two noticeable relatives: the boolean quadric polytope and the lifted multicut polytope.
\Cref{section:conclusion} concludes the article.
Efficient separation algorithms for inequalities of the ILP formulation are given in \Cref{section:separation}.
A technical lemma that we use for constructing basis vectors of faces of multi-separator polytopes is put into \Cref{section:lemma-ie}, together with its proof.
A technical lemma that we use to establish the sufficiency of conditions for an inequality to be facet-defining is discussed in \Cref{section:pseudoforest}, together with its proof.
All other proofs can be found in \Cref{section:proofs}.


\section{Related work}
\label{section:related-work}

Many combinatorial optimization problems related to vertex separators have been proposed in the literature, capturing different aspects of tasks we encounter in the real world, and many researchers have contributed to the analysis of the associated polytopes:
Notably, \citet{balas2005vertex} initiate a polyhedral study of the problem of finding a partition of the vertex set of a connected graph $G$ into three disjoint subsets $A$,~$B$ and~$C$, with $A$ and $B$ non-empty and bounded by an integer depending only on the order of $G$, such that $C$ is an $\{A, B\}$-separator and its weight is minimized.
After stating the problem in the form of a mixed integer linear program (MILP), they define the vertex separator polytope and investigate its dimension and facial structure.
Based on the MILP and the valid inequalities described there, a branch-and-cut algorithm for this problem is developed in a companion paper \cite{souza2005vertex}.
Another class of valid inequalities for the vertex separator polytope is proposed by \citet{didi2011exact}.

A generalization of this problem has also been studied:
Given positive constants $\beta$ and $\kappa$, the generalized vertex separator problem asks for a vertex subset such that the graph obtained by deleting it has at most $\beta$ parts, each of cardinality at most $\kappa$.
A polyhedral investigation of this problem is conducted by \citet{borndorfer1998decomposing} from the perspective of matrix decomposition.

Given a vertex weighted graph $G$ and a positive integer, the $k$-separator problem consists in finding a minimum-weight vertex subset whose removal leads to a graph where the size of each connected component is less than or equal to $k$.
This problem is studied by \citet{oosten2007disconnecting} and \citet{ben2015k} from a polyhedral point of view, based on different ILP formulations.
The ILP proposed by \citet{oosten2007disconnecting} is similar to ours but does not include what we call separator inequalities (see \Cref{lemma:ilp} below), which are an essential part of the defining inequalities of the multi-separator problem.
Related is the vertex $k$-cut problem that asks for a vertex $k$-cut of minimum weight.
Here, a vertex $k$-cut is a vertex subset whose deletion disconnects the given graph into at least $k$ connected components.
Several ILP formulations are introduced by \citet{cornaz2019vertex} and \citet{furini2020integer}, along with families of valid inequalities.

Another variant is the so-called multi-terminal vertex separator problem, in which a set of terminal vertices are prescribed, the task is to determine a minimum weight vertex subset whose deletion leads to a disconnected graph whose each connected component contains precisely one terminal vertex.
The polyhedral structure related to an ILP formulation is studied by \citet{cornaz2019multi} who develop from this a branch-and-cut algorithm.
The multi-separator problem we study here assigns costs not only to vertices but also to vertex pairs, which incorporates additional non-local information.
Moreover, we do not impose any constraints on the number of connected components, their size or the size of separators; instead, these are properties of feasible solutions.

The multi-separator problem is closely related also to other combinatorial optimization problems regarding polyhedral aspects:
The boolean quadric polytope studied by \citet{padberg1989boolean} is a special case of the multi-separator polytope where only adjacent vertex pairs are taken into account.
We discuss this relation in \Cref{section:relatives-bqp}.
The lifted multicut polytope studied by \citet{hornakova-2017} and \citet{andres-2023-a-polyhedral} is associated with the lifted multicut problem, introduced by \citet{keuper-2015a} as a generalization of the multicut problem \cite{chopra1993partition}, that can be seen as an edge version of the multi-separator problem.
We discuss this relation in \Cref{section:relatives-lmp}.
The cut inequalities that occur in an ILP formulation of the lifted multicut problem \citep{andres-2023-a-polyhedral} resemble the separator inequalities we introduce for the multi-separator problem.
Interestingly, these two classes of seemingly similar inequalities have different properties: 
The problem of deciding whether a cut inequality defines a facet of the lifted multicut polytope is NP-hard \cite{naumannbox}.
In contrast, we show in \Cref{section:facets-separator} that the problem of deciding whether a separator inequality defines a facet of the multi-separator polytope can be solved in polynomial time.

The Steiner cut inequalities provide an ILP formulation of the Steiner tree problem \cite{aneja1980integer,chopra1994steiner} and have a similar form as the separator inequalities.
Despite this similarity, the conditions for separator inequalities to be facet-defining are completely different from those for the Steiner cut inequalities, which are always facet-defining \cite{chopra1994steiner}.

There are some works \citep{alvarez2013maximum, wang2017imposing, rehfeldt2022optimal, fischetti2017thinning, carvajal2013imposing} on separator inequalities in the context of studying connectivity of graphs.
Typically, such a separator inequality has the form $x_a + x_b \leq 1 + \sum_{s \in S} x_{s}$, where $a$,~$b$ are two vertices in a graph $G$ and $S$ is an $ab$-separator of $G$ with $S \cap \{a, b\} = \emptyset$.
Note that though this class of separator inequalities looks similar to our separator inequalities introduced in \Cref{lemma:ilp}, their corresponding facial structures differ substantially.
In particular, \citet{wang2017imposing} studied the connected subgraph polytope of a graph and showed that the above separator inequality is facet-defining if and only if $S$ is a minimal $ab$-separator.
This is in contrast to the facet-defining separator inequalities for the multi-separator problem, whose structure is much more complicated (see \Cref{section:facets-separator}).

Toward structural properties of vertex separators, Menger's theorem \cite{menger1927allgemeinen} determines the minimum size of vertex separators in terms of the maximum number of disjoint paths between any pair of vertex subsets.
Moreover, the celebrated planar separator theorem \cite{lipton1979separator} establishes an asymptotic bound on the size of a separator whose removal splits the given graph into smaller pieces, and has been generalized to other classes of graphs \cite{gilbert1984separator,kawarabayashi2010separator}.
Furthermore, Escalante has discovered a natural partial order on the set of all minimal separators with respect to any two fixed non-adjacent vertices, with which this set becomes a lattice.
See \citet{escalante1972schnittverbande,escalante1974note} for the original papers and \citet{halin1993lattices} for a survey.


\section{Preliminaries}
\label{section:preliminaries}

In this section, we fix elementary terms and notation.
Let $G = (V, E)$ be a graph.
For any $U \subseteq V$, we define the \emph{open neighborhood} $N_{G}(U)=\{v \in V\setminus U \mid \exists u \in U: \{u,v\} \in E\}$ of $U$ in $G$ as the set of all vertices not in $U$ but adjacent to some vertex of $U$.
We define the \emph{closed neighborhood} $N_{G}[U]$ of $U$ in $G$ as the union of $U$ and $N_G(U)$.
For any distinct $a, b \in V$, we use the shorthand notation $ab$ for the unordered pair $\{a,b\}$.
Since we interpret an edge or interaction as a two-element set of vertices, for any $f \in \tbinom{V}{2}$ with endvertices $a$ and $b$, i.e., $f = \{a, b\}$, an $f$-connector (respectively, $f$-separator) of $G$ means an $\{a, b\}$-connector (respectively, $\{a, b\}$-separator) of $G$.

We adopt a generalization of connectors and separators from \citet{goering-2000} who uses it for a short proof of Menger's theorem \cite{menger1927allgemeinen}.
Let $A, B \subseteq V$.
We call a subset $U \subseteq V$ an \emph{$\{A, B\}$-connector} of $G$ if and only if there exist $a \in A$ and $b \in B$ such that $U$ is an $\{a, b\}$-connector of $G$.
Dually, we call a subset $S \subseteq V$ an \emph{$\{A, B\}$-separator} of $G$ if and only if for all $a \in A$ and all $b \in B$, $S$ is an $\{a, b\}$-separator of $G$.
Clearly, $S$ is an $\{A, B\}$-separator of $G$ if and only if $V \setminus S$ is not an $\{A, B\}$-connector of $G$, i.e., every $\{A, B\}$-connector of $G$ intersects $S$.
For any pair $a$,~$b$ of vertices of $G$, $\{\{a\}, \{b\}\}$-connectors/separators are exactly $\{a, b\}$-connectors/separators.
So, by abuse of notation, we may safely identify them whenever no confusion arises.

\begin{figure}
    \centering
    \input{figures/connector-separator.tex}
    \caption{
        Depicted above is a graph $G$, along with two vertex subsets, $A = \{a\}$ and $B = \{b_1, b_2, b_3\}$ (in green).
        There are three minimal $\{A, B\}$-connectors of $G$: $\{a, u_1, b_1\}$, $\{a, u_2, b_2\}$ and~$\{a, u_2, u_3, b_3\}$.
        All the minimal $\{A, B\}$-separators of $G$ are given by $\{a\}$, $\{u_1, u_2\}$, $\{b_1, b_2, b_3\}$, $\{u_2, b_1\}$, $\{u_1, u_3, b_2\}$, $\{b_1, b_2, u_3\}$ and $\{u_1, b_2, b_3\}$.
        Notice that $\{u_1, u_2\}$ is the only minimal $\{A, B\}$-separator of $G$ that is disjoint from $A$ and $B$.
    }
    \label{fig:connector-separator}
\end{figure}

Of particular interest are connectors and separators that are
minimal.
A \emph{minimal $\{A, B\}$-connector} of $G$ is an $\{A, B\}$-connector $U$ of $G$ such that $U$ is minimal among all $\{A, B\}$-connectors of $G$ with respect to set-theoretic inclusion.
Similarly, a \emph{minimal $\{A, B\}$-separator} of $G$ is an $\{A, B\}$-separator $S$ of $G$ that is minimal with respect to set-theoretic inclusion.
See \Cref{fig:connector-separator} for an illustration.
The following proposition generalizes characterizations of minimal connectors and minimal separators that are well-known for vertex pairs.
\begin{proposition}\label{prop:connector-separator-minimal}
    Let $A$,~$B$ and~$U$ be vertex subsets of a graph $G$.
    \begin{enumerate}[label=\textnormal{(\roman*)}]
    \item \label{item:prop-connector-minimal}
    $U$ is a minimal $\{A, B\}$-connector of $G$ if and only if $G[U]$ is an induced path $u_0, u_1, \dots, u_n$ for some $n \geq 0$, such that $U \cap A = \{u_0\}$ and $U \cap B = \{u_n\}$.
    \item \label{item:prop-separator-minimal}
    $U$ is a minimal $\{A, B\}$-separator of $G$ if and only if every minimal $\{A, B\}$-connector of $G$ intersects $U$, and for every \(u\in U\), there exists a minimal \(\{A,B\}\)-connector \(C\) of \(G\) such that $C\cap U=\{u\}$.
    \end{enumerate}
\end{proposition}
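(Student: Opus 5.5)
For part \ref{item:prop-connector-minimal}, I prove both directions directly from the definitions. Suppose first that $U$ is a minimal $\{A,B\}$-connector. Then there are $a\in A\cap U$ and $b\in B\cap U$ joined by a path in $G[U]$. Choose a shortest such path $P$ among all pairs $(a,b)\in (A\cap U)\times(B\cap U)$. Its vertex set is itself an $\{A,B\}$-connector, so minimality gives $U=V(P)$. Write $P=u_0,\dots,u_n$. A shortest path is induced, since a chord would yield a shorter path. Moreover, if some $u_i$ with $i>0$ lay in $A$, then $u_i,\dots,u_n$ would be a shorter $A$--$B$ path. Hence $U\cap A=\{u_0\}$, and symmetrically $U\cap B=\{u_n\}$. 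The case $n=0$, where $u_0\in A\cap B$, is included.

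Conversely, let $G[U]$ be an induced path $u_0,\dots,u_n$ with $U\cap A=\{u_0\}$ and $U\cap B=\{u_n\}$. Then $U$ is clearly an $\{A,B\}$-connector. Take any proper subset $U'\subsetneq U$ that is an $\{A,B\}$-connector. Its only candidate endpoints are $u_0\in A$ and $u_n\in B$, so $U'$ must contain both. Since $G[U]$ is an induced path, the only $u_0$--$u_n$ path in $G[U]$ uses every vertex of $U$. Thus $u_0$ and $u_n$ are disconnected in $G[U']$, a contradiction.

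For part \ref{item:prop-separator-minimal}, I start from the remark before the proposition: $S$ is an $\{A,B\}$-separator iff every $\{A,B\}$-connector meets $S$. Every connector contains a minimal one, and connectors are closed under supersets. Hence $S$ is a separator iff every \emph{minimal} connector meets $S$.

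Now suppose $U$ is a minimal separator and $u\in U$. Then $U\setminus\{u\}$ is not a separator, so some minimal connector $C$ avoids $U\setminus\{u\}$. Since $U$ is a separator, $C$ meets $U$, and therefore $C\cap U=\{u\}$.

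For the converse, $U$ is a separator by the reformulation above. For minimality, let $U'\subsetneq U$ and pick $u\in U\setminus U'$. The given connector $C$ with $C\cap U=\{u\}$ is disjoint from $U'$, so $U'$ is not a separator.

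The only delicate point is the forward direction of part \ref{item:prop-connector-minimal}. The shortest path must be chosen over all pairs $(a,b)$ at once, not for a single fixed pair; otherwise the conditions $U\cap A=\{u_0\}$ and $U\cap B=\{u_n\}$ could fail. The rest is routine.
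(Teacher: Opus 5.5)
Your proof is correct and follows essentially the same route as the paper's: in (i), a path inside $U$ must have vertex set equal to $U$ by minimality, and the converse uses that the only $u_0$--$u_n$ path in $G[U]$ uses all of $U$; in (ii), everything reduces to the fact that $S$ is a separator if and only if every minimal connector meets $S$. The global choice of a shortest path over all pairs, which you flag as delicate, is not actually needed: minimality of $U$ alone excludes a second vertex of $A$ or $B$ on the path, since the corresponding proper subpath would be a smaller connector, and this is how the paper argues it.
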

\begin{proof}
    See \Cref{section:proof:prop:connector-separator}.
\end{proof}
\begin{remark}
    Let $A$ and $B$ as in \Cref{prop:connector-separator-minimal}.
    Then, any singleton contained in $A \cap B$ is a minimal $\{A, B\}$-connector of $G$.
    Consequently, $A \cap B$ is a subset of any minimal $\{A, B\}$-separator of $G$.
\end{remark}
\begin{corollary}\label{cor:connector-separator-minimal}
    Let $a$ and $b$ be two vertices of a graph $G$ and let $U$ be a vertex subset of $G$.
    \begin{enumerate}[label=\textnormal{(\roman*)}]
        \item \label{item:cor-connector-minimal}
        $U$ is a minimal $\{a, b\}$-connector of $G$ if and only if $G[U]$ is an $\{a, b\}$-path.
        \item \label{item:cor-separator-minimal}
        $U$ is a minimal $\{a, b\}$-separator of $G$ if and only if every $\{a, b\}$-path $P$ in $G$ intersects $U$ and for each $u \in U$ there exists an $\{a, b\}$-path in $G$ intersecting $U$ at precisely $u$.
    \end{enumerate}
\end{corollary}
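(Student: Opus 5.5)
This follows by specializing \Cref{prop:connector-separator-minimal} to $A = \{a\}$ and $B = \{b\}$. Under this choice, the paper already identifies $\{A,B\}$-connectors and $\{A,B\}$-separators with $\{a,b\}$-connectors and $\{a,b\}$-separators, so minimality transfers verbatim. What remains is to translate the structural descriptions.

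For \ref{item:cor-connector-minimal}, part \ref{item:prop-connector-minimal} of the proposition says that $U$ is minimal if and only if $G[U]$ is an induced path $u_0,\dots,u_n$ with $U\cap\{a\}=\{u_0\}$ and $U\cap\{b\}=\{u_n\}$. With singletons this simply means that $u_0=a$ and $u_n=b$. So the condition reads: $G[U]$ is a path from $a$ to $b$, i.e.\ an $\{a,b\}$-path. The phrase ``induced'' is automatic, since we are speaking of $G[U]$ itself being a path graph.

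I should check the reading of ``$G[U]$ is an $\{a,b\}$-path'' here. It means $G[U]$, as a graph, is a path whose endpoints are $a$ and $b$. This includes the degenerate case $a=b$ with $n=0$, although $ab$ is normally a pair of distinct vertices.

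For \ref{item:cor-separator-minimal}, part \ref{item:prop-separator-minimal} gives the criterion: every minimal $\{a,b\}$-connector meets $U$, and for each $u\in U$ some minimal connector $C$ satisfies $C\cap U=\{u\}$. By part \ref{item:cor-connector-minimal}, minimal $\{a,b\}$-connectors are exactly the vertex sets of induced $\{a,b\}$-paths. This reduces the claim to a single point: the statement's ``$\{a,b\}$-path in $G$'' quantifies over arbitrary (not necessarily induced) paths, while the proposition quantifies over induced ones. Handling this mismatch is the only real obstacle, and I would resolve it by shortcutting.

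Every $\{a,b\}$-path $P$ in $G$ contains an induced $\{a,b\}$-path $P'$ with $V(P')\subseteq V(P)$. To get $P'$, take a shortest $a$–$b$ path in $G[V(P)]$; equivalently, take a minimal connector inside $V(P)$.

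With this fact, the first condition transfers in both directions. If every induced path meets $U$, then every path meets $U$, because the induced subpath $P'$ does. The converse is trivial, since induced paths are paths.

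For the second condition, one direction is trivial, since induced paths are paths. Conversely, suppose a path $P$ satisfies $V(P)\cap U=\{u\}$. Then $P'$ satisfies $V(P')\cap U\subseteq\{u\}$, and this intersection is nonempty because $P'$ must meet $U$ by the first condition. Hence $V(P')\cap U=\{u\}$, so $P'$ is the required minimal connector. Combining the two directions yields \ref{item:cor-separator-minimal}.
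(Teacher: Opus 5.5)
Your proposal is correct and matches the paper, which gives no separate proof and treats this corollary as the specialization $A=\{a\}$, $B=\{b\}$ of the preceding proposition. Your shortcutting argument, which extracts an induced $\{a,b\}$-path from any $\{a,b\}$-path, supplies the one detail the paper leaves implicit in part (ii): quantifying over all paths rather than only induced ones changes neither condition.
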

Given two vertices $a$ and~$b$ in a graph $G$, a minimal $\{a, b\}$-separator of $G$ is either a singleton whose unique element is one of the two endvertices $a$ and~$b$, or a vertex subset (possibly empty) whose complement in $G$ induces a disconnected subgraph with $a$ and~$b$ in different connected components.


\section{Multi-separator polytopes}
\label{section:polytope}

\subsection{Definition and outer approximation}
\label{section:ilp}

In this section, we reformulate the multi-separator problem in the form of an ILP and define the associated multi-separator polytope.

\begin{definition}
Let $G = (V, E)$ be a graph and let $F \subseteq \tbinom{V}{2}$.
Given a set $U$ of vertices, let $x^U$ be the characteristic vector of our objects of interest, vertices and interactions, that are separated by $V \setminus U$ in $G$.
Formally, $x^U$ is the vector in $\{0, 1\}^{V \cup F}$ defined such that the following conditions hold:
\begin{align*}
\forall v \in V \colon \quad
	& x^U_v = 0 \;\Leftrightarrow\; v \in U \,, \\
\forall f \in F \colon \quad
	& x^U_f = 0 \;\Leftrightarrow\; U \text{ is an $f$-connector of } G \,.
\end{align*}
We call
$
\feasible \coloneqq 
\left\{
	x^U \in \{0, 1\}^{V \cup F} \mid
	U \subseteq V
\right\}
$
the set of \emph{feasible vectors}.
Moreover, we call the convex hull
$
\polytope \coloneqq \conv \feasible
$
of $\feasible$ in the affine space $\mathbb{R}^{V \cup F}$ the \emph{multi-separator polytope} with respect to $G$ and $F$.
\end{definition}


The map $2^V \to \feasible \colon S \mapsto x^{V \setminus S}$ is a bijection from the feasible set of the multi-separator problem \eqref{eq:ms} to the set $\feasible$ of feasible vectors.
For any feasible solution $S \subseteq V$ to the multi-separator problem, its objective value is
\begin{align*}
\sum_{v \in S} c_v + \sum_{f \in F(S)} c_f
= 
\sum_{v \in V} c_v \, x^{V \setminus S}_v + \sum_{f \in F} c_f \, x^{V \setminus S}_f
= \langle c, x^{V \setminus S} \rangle
\enspace .
\end{align*}
Thus, $S \subseteq V$ is a solution to the multi-separator problem if and only if $x^{V \setminus S}$ is a solution to the problem $\min \; \{ \langle c, x \rangle \mid x \in \feasible\}$.

\begin{lemma}
\label{lemma:ilp}
For any graph $G = (V, E)$ and any $F \subseteq \tbinom{V}{2}$, the set $\feasible$ contains precisely those $x \in \mathbb{Z}^{V \cup F}$ that satisfy the linear system
\begin{align}
\forall f \in F \ 
\forall C \in f\textnormal{-connectors}(G) \colon \hspace{6.5ex}
x_{f} & \leq \sum_{v \in C} x_v 
\,,
\label{eq:connector}
\\
\forall f \in F \ 
\forall S \in f\textnormal{-separators}(G) \colon \quad
1 - x_{f} & \leq \sum_{v \in S} (1 - x_v)
\,,
\label{eq:separator}
\\
\forall g \in V \cup F \colon \hspace{8ex}
0 & \leq x_g \leq 1
\,,
\label{eq:box}
\end{align}
where $f\textnormal{-connectors}(G)$ and $f\textnormal{-separators}(G)$ denote the set of all $f$-connectors and all $f$-separators of $G$, respectively.
This still holds if we enforce \eqref{eq:connector} only for minimal connectors and enforce \eqref{eq:separator} only for minimal separators.
\end{lemma}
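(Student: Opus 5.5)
We prove the two inclusions directly, using only the definition of $x^U$ and the duality between connectors and separators recalled in \Cref{section:preliminaries}. Throughout, for an integer vector $x$ satisfying \eqref{eq:box} we set $U := \{v \in V \mid x_v = 0\}$, so the vertex coordinates of $x$ and $x^U$ agree by construction. It then remains to show that the interaction coordinates agree exactly when \eqref{eq:connector} and \eqref{eq:separator} hold.

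\emph{Validity.} Let $x = x^U$ for some $U \subseteq V$; it is integral and satisfies \eqref{eq:box}. For \eqref{eq:connector}, take $f \in F$ and an $f$-connector $C$. If $x_f = 1$, then $U$ is not an $f$-connector. Since $C$ is an $f$-connector, this forces $C \not\subseteq U$, because any superset of $U$'s complement-free connector would otherwise give $U$ a path inside $G[U]$. Hence some $v \in C$ has $x_v = 1$, and the right-hand side is at least $1$. If $x_f = 0$, the inequality is trivial.

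For \eqref{eq:separator}, take an $f$-separator $S$. If $x_f = 0$, then $U$ is an $f$-connector, so $V \setminus U$ is not an $f$-separator. Supersets of separators are separators, so $S \not\subseteq V \setminus U$. Thus some $v \in S$ lies in $U$, i.e.\ $1 - x_v = 1$. If $x_f = 1$, the inequality is trivial.

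\emph{Sufficiency.} Let $x$ be integral and satisfy the system, and define $U$ as above. Fix $f \in F$; we must show $x_f = 0$ exactly when $U$ is an $f$-connector.
\begin{itemize}
\item If $U$ is an $f$-connector, apply \eqref{eq:connector} with $C = U$. This gives $x_f \le \sum_{v \in U} x_v = 0$.
\item If $U$ is not an $f$-connector, then $S := V \setminus U$ is an $f$-separator. Applying \eqref{eq:separator} gives $1 - x_f \le \sum_{v \in S} (1 - x_v) = 0$, so $x_f = 1$.
\end{itemize}
Hence $x = x^U \in \feasible$.

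\emph{Minimal version.} Validity of the restricted system is inherited, so only sufficiency needs adjusting. Every $f$-connector contains a minimal one, and every $f$-separator contains a minimal one, by finiteness. The sums in \eqref{eq:connector} and \eqref{eq:separator} have nonnegative terms under \eqref{eq:box}, so the inequality for a minimal subset implies the inequality for any superset. Concretely, in the two bullets above we apply the inequality to a minimal $f$-connector $C' \subseteq U$, respectively a minimal $f$-separator $S' \subseteq V \setminus U$; the right-hand side is still $0$.

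There is no real obstacle. The only point needing care is the monotonicity fact in the validity step, namely that supersets of connectors are connectors and supersets of separators are separators. For connectors this holds because induced subgraphs are monotone: if $C \subseteq C''$, a path in $G[C]$ is also a path in $G[C'']$. For separators it follows by complementation from the connector case.
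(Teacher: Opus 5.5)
Your proof is correct and follows the paper's argument. For integral $x$ satisfying the system, the paper proves the cycle ``$x_f=0$'' $\Rightarrow$ ``every $f$-separator contains a zero-labeled vertex'' $\Rightarrow$ ``some $f$-connector is entirely zero-labeled'' $\Rightarrow$ ``$x_f=0$''; the first two steps together amount to your separator inequality on $V\setminus U$, the last to your connector inequality, and it handles the minimal version by the same domination-via-nonnegativity remark, so your case split on whether $U$ is an $f$-connector is just this cycle unrolled (plus an explicit validity check the paper calls clear).
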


\begin{proof}
See \Cref{section:proof:lemma:ilp}.
\end{proof}

We call an inequality of the form~\eqref{eq:connector} a \emph{connector inequality},
one of the form~\eqref{eq:separator} a \emph{separator inequality}, and 
one of the form~\eqref{eq:box} a \emph{box inequality}.
Precisely these inequalities are said to be \emph{canonical}.
Intuitively, the connector (respectively, separator) inequalities encode the fact that if all the vertices of an $f$-connector (respectively, an $f$-separator) of $G$ are labeled zero (respectively, one) by a feasible vector, then so is the interaction $f$ itself.

\begin{corollary}
\label{corollary:ilp}
The polytope $\Xi'_{GF} = \{x \in \mathbb{R}^{V \cup F} \mid \eqref{eq:connector} \text{ and } \eqref{eq:separator} \text{ and } \eqref{eq:box}\}$ is an outer polytope of the multi-separator polytope, i.e., $\polytope \subseteq \Xi'_{GF}$, with the same integer points, i.e., $\feasible = \polytope \cap \{0,1\}^{V \cup F} = \Xi'_{GF} \cap \{0,1\}^{V \cup F}$.
\end{corollary}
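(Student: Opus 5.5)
The corollary follows from \Cref{lemma:ilp} by convexity and elementary set identities.

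First, $\polytope \subseteq \Xi'_{GF}$. By \Cref{lemma:ilp}, every $x \in \feasible$ is an integer vector satisfying \eqref{eq:connector}, \eqref{eq:separator} and \eqref{eq:box}, so $\feasible \subseteq \Xi'_{GF}$. The set $\Xi'_{GF}$ is an intersection of finitely many closed half-spaces, hence convex. Therefore it contains $\conv \feasible = \polytope$. It is bounded by \eqref{eq:box} and defined by finitely many inequalities, since $V$ and $F$ are finite and so are the families of connectors and separators. Hence it is a polytope.

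Next, the integer points, which I would establish as a chain of inclusions. Every $x \in \feasible$ is a $0/1$ vector by definition and lies in $\polytope$. This gives $\feasible \subseteq \polytope \cap \{0,1\}^{V\cup F}$. The inclusion just shown yields $\polytope \cap \{0,1\}^{V\cup F} \subseteq \Xi'_{GF} \cap \{0,1\}^{V\cup F}$.

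To close the chain, take $x \in \Xi'_{GF} \cap \{0,1\}^{V\cup F}$. Then $x \in \mathbb{Z}^{V\cup F}$ and $x$ satisfies the full linear system. By the converse direction of \Cref{lemma:ilp}, $x \in \feasible$. All three sets therefore coincide.

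I do not expect a genuine obstacle here, since all the content sits in \Cref{lemma:ilp}. The only point needing care is that integer points of $\Xi'_{GF}$ are automatically $0/1$ because of \eqref{eq:box}, so using $\{0,1\}$ instead of $\mathbb{Z}$ in the statement loses nothing.
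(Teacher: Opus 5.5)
Your proof is correct and follows the route the paper intends. The paper gives no separate proof and treats the corollary as immediate from \Cref{lemma:ilp}: $\feasible \subseteq \Xi'_{GF}$ together with convexity gives $\polytope \subseteq \Xi'_{GF}$, and the converse direction of the lemma closes the chain $\feasible \subseteq \polytope \cap \{0,1\}^{V\cup F} \subseteq \Xi'_{GF} \cap \{0,1\}^{V\cup F} \subseteq \feasible$, exactly as you argue. Your added check that $\Xi'_{GF}$ is bounded and defined by finitely many inequalities, so that it is indeed a polytope, is a detail the paper leaves implicit.
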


\Cref{corollary:ilp} asserts that $\min \{\langle c, x \rangle \mid x \in \mathbb{R}^{V \cup F} \text{ and } \eqref{eq:connector} \text{ and } \eqref{eq:separator} \text{ and } \eqref{eq:box}\}$ is a linear relaxation of the multi-separator problem \eqref{eq:ms} with the same integer feasible solutions.
The separation problems for the canonical inequalities can be solved in polynomial time, and hence the linear relaxation.
For the box inequalities, efficient separation is trivial.
For the connector and separator inequalities, efficient separation algorithms are given in \Cref{section:separation}.


\subsection{Dimension}
\label{section:dimension}

In this section, we show that multi-separator polytopes of connected graphs with arbitrary interaction sets are full-dimensional.

\begin{proposition}\label{prop:dimension}
For any connected graph $G = (V, E)$ and any $F \subseteq \tbinom{V}{2}$, the multi-separator polytope $\polytope$ is full-dimensional, i.e., $\dim \polytope = \lvert V \cup F \rvert$.
\end{proposition}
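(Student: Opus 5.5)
The plan is to exhibit $\lvert V \cup F\rvert + 1$ affinely independent vectors in $\feasible$. Equivalently, I will show that $x^{V}$ (the zero vector, since $V$ is a connector for every pair when $G$ is connected) together with $\lvert V\cup F\rvert$ further feasible vectors are affinely independent. Concretely, I will build feasible vectors so that the matrix of differences $x - x^V = x$ is, after a suitable ordering, unitriangular. Connectivity of $G$ is used exactly for this: $x^V = 0$, because every $f\in F$ is connected in $G[V]=G$.

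For the vertex coordinates, take for each $v\in V$ the set $U_v = V\setminus\{v\}$. Then $x^{U_v}_v = 1$ and $x^{U_v}_w=0$ for all $w\in V\setminus\{v\}$. On interactions, $x^{U_v}_f = 1$ whenever $v\in f$, and possibly also for some $f$ not containing $v$ if $v$ is a cut vertex. The interaction coordinates are not controlled, but that is harmless if they are handled first.

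For each interaction $f=\{a,b\}\in F$, I take $U_f$ to be the vertex set of a shortest $a$--$b$ path in $G$, which exists by connectivity. Then $U_f$ is an $f$-connector, so $x^{U_f}_f=0$. This is the wrong direction: I want a vector that isolates $f$. So instead I use the empty set $U=\emptyset$, which gives the all-ones vector. I then look for vectors $x$ with $x_v$ varying but $x_f$ switched.

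The cleaner route is to compare vectors in pairs. Let $U_f = \{a,b\}\cup P$ for an induced path $P$, and $U'_f = U_f$ minus an internal vertex. This becomes messy, so I commit instead to the following ordering argument. Order $F$ by nondecreasing distance $d(a,b)$ in $G$, and for $f=\{a,b\}$ let $C_f$ be the vertex set of a shortest $a$--$b$ path. Then $G[C_f]$ is an induced path. An interaction $g$ is connected by $C_f$ iff both endpoints of $g$ lie in $C_f$, and any such $g\neq f$ has distance strictly smaller than $d(a,b)$, or equal only if $g=f$. The vector $x^{C_f}$ then has $x_f=0$, $x_g=0$ for the earlier $g$ inside $C_f$, and $x_g=1$ for all other $g$, including all later interactions.

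The vectors $y^f := \mathbf{1} - x^{C_f}$, together with $\mathbf 1 - x^{V\setminus\{v\}}$ and $\mathbf 1 - x^{V}$, are affine images of feasible points, so affine independence is preserved. Using the reference point $x^\emptyset=\mathbf 1$ instead of $x^V$, the differences $\mathbf 1 - x^{C_f}$ are supported on $C_f\cup\{g: g\subseteq C_f\}$. Their interaction part has a one at $f$ and otherwise only at interactions of smaller or equal distance contained in $C_f$; ties of equal distance would need both endpoints in $C_f$ at distance $d(a,b)$, which forces $g=f$. So the interaction block is triangular in the chosen order. Adding the $\lvert V\rvert$ vectors $\mathbf 1 - x^{\{v\}}$ handles the vertex block: $x^{\{v\}}$ is $0$ at $v$, $1$ on the other vertices, and $1$ on all interactions, since a single vertex connects no pair. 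Hence $\mathbf 1 - x^{\{v\}} = e_v$. The full matrix, with rows the $e_v$ and the $\mathbf 1-x^{C_f}$, is then block triangular with identity diagonal blocks, hence of full rank $\lvert V\cup F\rvert$, which proves the proposition. The main obstacle is verifying the triangularity claim for ties in distance, and I argue it as above via the shortest-path property.
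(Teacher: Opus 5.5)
Your final argument is correct, but it takes a different route from the paper for the interaction coordinates; the vertex directions are obtained the same way in both, as $x^{\emptyset}-x^{\{v\}}=\mathbbm{1}_{\{v\}}$.

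For an interaction $f=\{a,b\}$, the paper takes an arbitrary minimal $f$-connector $U$ and applies the inclusion--exclusion identity of Lemma~\ref{lemma:ie} with $A=U\setminus\{a\}$ and $B=U\setminus\{b\}$. This gives $x^{A}+x^{B}-x^{A\cap B}-x^{A\cup B}=\mathbbm{1}_{\{f\}}$ exactly. You instead use a single point $x^{C_f}$, where $C_f$ is a shortest $a$--$b$ path. The difference $\mathbbm{1}-x^{C_f}$ is supported on $C_f\cup\{g\in F: g\subseteq C_f\}$. Ordering $F$ by distance then makes the interaction block unitriangular, since every $g=\{c,c'\}\subseteq C_f$ with $g\neq f$ satisfies $d(c,c')\le d(a,b)-1$.

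What each route buys:
\begin{itemize}
\item Your version is more elementary and needs no auxiliary lemma. It exhibits an explicit full-dimensional simplex of $\lvert V\rvert+\lvert F\rvert+1$ feasible points. The price is that it needs shortest (not merely induced) paths and a global ordering of $F$.
\item The paper's four-point construction isolates each coordinate locally, with no ordering. It only requires the four sets $A$, $B$, $A\cap B$, $A\cup B$ to be admissible. That is what lets the same template be reused in all the facet proofs. There only the feasible points of a face are available, and points like $x^{C_g}$ together with a global distance ordering would generally not lie in the face.
\end{itemize}

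When you write this up, drop the abandoned first attempts: the reference point $x^{V}$, the sets $V\setminus\{v\}$, and the mention of $\mathbbm{1}-x^{V\setminus\{v\}}$ and $\mathbbm{1}-x^{V}$. The proof uses only the reference point $x^{\emptyset}$ and the points $x^{\{v\}}$ and $x^{C_f}$. Connectivity is needed solely to guarantee $d(a,b)<\infty$ for every $\{a,b\}\in F$.
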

\begin{proof}
	See \Cref{section:proof:prop:dimension}.
\end{proof}

\Cref{prop:dimension} implies that each facet of $\polytope$ is defined by a unique linear inequality (up to scaling by a positive constant).
In addition, \Cref{prop:dimension} implies the following characterization of facet-defining canonical inequalities:

\begin{proposition}\label{prop:canonical-facets}
	Let $G = (V, E)$ be a connected graph, let $F \subseteq \tbinom{V}{2}$, and let $ax \leq \alpha$ be a facet-defining inequality for $\polytope$ with $a \in \mathbb{Z}^{V \cup F}$ and $\alpha \in \mathbb{Z}$.
	Then $ax \leq \alpha$ is a canonical inequality up to positive scaling if and only if there is at most one $f \in F$ such that $a_f \neq 0$.
\end{proposition}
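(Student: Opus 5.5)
Necessity is immediate: every canonical inequality \eqref{eq:connector}, \eqref{eq:separator}, \eqref{eq:box} has at most one nonzero interaction coefficient, and positive scaling keeps zero coefficients zero. The work is in sufficiency. So let $ax \leq \alpha$ be facet-defining with $a_g = 0$ for all $g \in F \setminus \{f\}$, for a fixed $f \in F$ (if no such $f$ exists, choose any $f$, or treat the case $F = \emptyset$ separately as below). By \Cref{prop:dimension}, $\polytope$ is full-dimensional, so the facet determines its inequality up to positive scaling. It therefore suffices to show that the facet $\Phi = \{x \in \polytope \mid ax = \alpha\}$ is contained in the face induced by some canonical inequality valid for $\polytope$. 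Containment of a facet in a proper face forces equality, and full-dimensionality then makes the two inequalities positive multiples of each other.

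The key step is a projection argument. Let $\pi$ be the projection of $\mathbb{R}^{V \cup F}$ onto $\mathbb{R}^{V \cup \{f\}}$. Since $a$ vanishes outside $V \cup \{f\}$, the inequality $ax \le \alpha$ is valid for $\pi(\polytope) = \conv\{\pi(x^U) \mid U \subseteq V\}$, which is exactly $\Xi_{G,\{f\}}$. By \Cref{lemma:ilp}, its integer points are described by the box inequalities on $V \cup \{f\}$, the connector inequalities for $f$ and the separator inequalities for $f$. I would then show that this system is in fact a complete linear description of $\Xi_{G,\{f\}}$. Write $f = \{s,t\}$. For fixed $x_V \in [0,1]^V$, the admissible values of $x_f$ form the interval
\[
\max\Bigl(0,\ 1-\min_S \sum_{v\in S}(1-x_v)\Bigr) \;\le\; x_f \;\le\; \min\Bigl(1,\ \min_C \sum_{v \in C} x_v\Bigr),
\]
where $S$ ranges over $f$-separators and $C$ over $f$-connectors. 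The claim is that every point of this region is a convex combination of feasible vectors. The natural route is a threshold (randomized rounding) argument: draw $\theta$ uniformly from $[0,1]$ and set $U_\theta = \{v \mid x_v < \theta\}$. This gives $\mathbb{E}[x^{U_\theta}_v] = x_v$, so it reproduces $x_V$ exactly, and the probability that $U_\theta$ connects $s$ and $t$ equals a bottleneck quantity. The difficulty is that a single threshold family realizes only one value of $x_f$, whereas we need both extremes of the interval. To reach them, I would mix two or more rounding schemes that preserve the vertex marginals while achieving the minimal and the maximal connection probability. Attaining the extremes is a max-flow/min-cut type statement, a Menger-like duality between the path bound $\min_C \sum_{v\in C} x_v$ and the separator bound, and I expect this to be the main obstacle. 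A possible shortcut is to avoid the full description altogether and argue only about vertices of $\Phi$ through the combinatorial structure of tight feasible sets.

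Once $\Xi_{G,\{f\}}$ is described by canonical inequalities, $\pi(\Phi)$ lies in the face $\{y \in \pi(\polytope) \mid ay = \alpha\}$, and this face is contained in a face induced by a canonical inequality of $\Xi_{G,\{f\}}$ that is valid and tight on it. Standard facet reasoning identifies this inequality: a facet of a polytope given by a finite system is induced by one of the system's inequalities. We need $\pi(\Phi)$ to be a facet of $\pi(\polytope)$, which holds because $\dim \pi(\polytope) = |V|+1$ and the lift of a face of $\pi(\polytope)$ is a face of $\polytope$ of codimension $1$ exactly when the projected face has codimension $1$. The canonical inequality identified this way is also valid for $\polytope$, with the same coefficients, so $ax \le \alpha$ is a positive multiple of it. Finally, if $a_g = 0$ for all $g \in F$, the same argument with $\pi$ projecting onto $\mathbb{R}^V$ shows $\pi(\polytope) = [0,1]^V$, so the inequality is a box inequality.
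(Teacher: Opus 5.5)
Your projection reduction is sound. Since $a$ vanishes off $V\cup\{f\}$, the set $\pi(\Phi)$ is a proper face of $\pi(\polytope)=\Xi_{G,\{f\}}$. It has dimension at least $|V|+|F|-1-(|F|-1)=|V|$, so it is a facet of the $(|V|+1)$-dimensional polytope $\Xi_{G,\{f\}}$. A canonical inequality inducing that facet lifts with the same coefficients. Only this direction is true: your ``exactly when'' fails in the other direction. For example, $x_f\le 1$ always defines a facet of $\Xi_{G,\{f\}}$, but by \Cref{thm:f=1-fd} it need not define one of $\polytope$.

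The genuine gap is the step everything rests on: that the box, connector and separator inequalities for the single interaction $f$ describe $\Xi_{G,\{f\}}$. For $F=\{f\}$ this claim is equivalent to the proposition itself, so it is the whole content, and you have not proved it. The threshold rounding you describe reproduces $x_V$, but it realizes only the bottleneck value $1-\min_P\max_{v\in P}x_v$ for $x_f$, which in general is neither endpoint of the interval. ``Mixing rounding schemes'' is not yet an argument.

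The missing idea is to build the two extreme distributions from fractional packings and then pad the marginals.

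Upper endpoint. By max-potential min-work (cut packing versus shortest connector) there are $\lambda_S\ge 0$ on $f$-separators with $\sum_{S\ni v}\lambda_S\le x_v$ and $\sum_S\lambda_S=R:=\min\bigl(1,\min_C\sum_{v\in C}x_v\bigr)$, after scaling down if needed. Block exactly $S$ with probability $\lambda_S$ and nothing with probability $1-R$. Then block each further $v$, conditionally on it not yet being blocked, so that $\Pr(v\text{ blocked})=x_v$; this is possible since $x_v\le 1$. Supersets of separators are separators, so $f$ is separated with probability at least $R$. It is also separated with probability at most $R$, since for any connector $C$ this probability is at most $\Pr(\text{the blocked set meets }C)\le\sum_{v\in C}x_v$.

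Lower endpoint. Vertex-capacitated max-flow min-cut with capacities $1-x_v$ gives a fractional packing of $f$-connectors of value $\min\bigl(1,\min_S\sum_{v\in S}(1-x_v)\bigr)$. Padding the unblocked sets in the same way, now using $x_v\ge 0$, yields the point with $x_f=L$.

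Convexity then gives the whole fiber $[L,R]$, which is exactly what the canonical system allows. With this filled in your proof is complete and genuinely different from the paper's. The paper never describes $\Xi_{G,\{f\}}$. It works on the facet's coefficient vector directly: it uses specific feasible vectors to pin down $\alpha$ and to show $V(a)^+=\emptyset$ when $a_f>0$ (resp.\ $V(a)^-=\emptyset$ when $a_f<0$). It then applies the integral versions of the same two min-max theorems, in the opposite roles, to the integer weights $-a|_V$ (resp.\ $a|_V$). This writes the inequality as a nonnegative combination of connector (resp.\ separator) and box inequalities. Your route yields a reusable projection principle and an explicit description of the single-interaction polytope. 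The paper's avoids the probabilistic construction.
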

\begin{proof}
	See \Cref{section:proof:prop:canonical-facets}.
\end{proof}

We conclude this section by discussing how the multi-separator polytope on a graph can be constructed from the multi-separator polytopes on the connected components of that graph.

\begin{proposition}\label{prop:ms-disconnected}
	Let $G = (V, E)$ be any graph, let $F$ be any set of interactions on $G$, and let $\{G_i\}_{i \in I}$ be the collection of all connected components of $G$.
	For any $i \in I$, let $F_i \subseteq F$ contain those interactions whose both endvertices belong to $G_i$.
	Then, up to translation, $\polytope$ is the product of the multi-separator polytopes on its connected components, i.e.:
	\[
		\polytope \cong
		\mathbbm{1}_{F(\emptyset)}
		+
		\prod_{i \in I} \Xi_{G_iF_i}
		\,.
	\]
	In particular, suppose an inequality $ax \leq \alpha$ with $a \in \mathbb{R}^{V \cup F}$ and $\alpha \in \mathbb{R}$ is facet-defining for $\polytope$, then there is a unique $i \in I$ such that $a|_{V_i \cup F_i} x \leq \alpha - \sum_{f \in F(\emptyset)} a_f$ is facet-defining for $\Xi_{G_iF_i}$.
	Conversely, any facet-defining inequality of $\Xi_{G_iF_i}$ for some $i \in I$ is also facet-defining for $\polytope$.
\end{proposition}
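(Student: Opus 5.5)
The plan is to first establish the product decomposition at the level of feasible vectors, then pass to convex hulls, and finally read off the facet statements from standard facts about products of polytopes.

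\textbf{Step 1: feasible vectors.} Partition $F$ into $F(\emptyset)$ and the sets $F_i$. Here $F(\emptyset)$ consists of the interactions whose endvertices lie in different components, since only those are separated by the empty set. For any $U \subseteq V$, write $U_i = U \cap V_i$.

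For $f \in F(\emptyset)$, no path of $G$ joins the endvertices of $f$. Hence $U$ is never an $f$-connector, and $x^U_f = 1$ for every $U$.

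For $f \in F_i$, every path in $G[U]$ between the endvertices of $f$ stays inside $G_i$. Therefore $U$ is an $f$-connector of $G$ if and only if $U_i$ is an $f$-connector of $G_i$. The vertex coordinates split trivially in the same way.

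Consequently, $x^U$ is obtained by placing $\mathbbm{1}$ on the coordinates $F(\emptyset)$ and $x^{U_i}$ (computed in $G_i$ with respect to $F_i$) on each block $V_i \cup F_i$. Since $U \mapsto (U_i)_{i\in I}$ is a bijection $2^V \to \prod_i 2^{V_i}$, we get that $\feasible$ equals $\mathbbm{1}_{F(\emptyset)}$ plus the Cartesian product of the sets $X_{G_iF_i}$.

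\textbf{Step 2: convex hulls.} Taking convex hulls commutes with Cartesian products, because $\conv(A\times B)=\conv A\times\conv B$. Translations commute with convex hulls as well. This yields the claimed isomorphism, which is in fact an equality after reordering coordinates.

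\textbf{Step 3: facets.} I will use the standard fact that, for a product $P = \prod_i P_i$ of nonempty polytopes, the faces of $P$ are exactly the products $\prod_i F_i$ of faces $F_i$ of $P_i$. Then
\[
\dim \prod_i F_i=\sum_i \dim F_i .
\]
So a face has dimension $\dim P - 1$ precisely when exactly one factor is a facet and all other factors are the whole polytope.

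Given a facet-defining inequality $ax\le\alpha$ for $\polytope$, every $x$ in the polytope has $x_{F(\emptyset)}=\mathbbm{1}$. Substituting this, the inequality becomes
\[
\sum_i a|_{V_i\cup F_i}x_i \le \alpha-\sum_{f\in F(\emptyset)}a_f .
\]
Let $\beta_i = \max\{a|_{V_i\cup F_i} y : y \in \Xi_{G_iF_i}\}$. Then the face is $\prod_i \arg\max_i$, and validity of the inequality gives $\sum_i \beta_i \le \alpha - \sum_{f \in F(\emptyset)} a_f$. By the facet structure, there is a unique $i$ whose block face is a facet of $\Xi_{G_iF_i}$, while the remaining blocks are the full polytopes, so $a|_{V_j\cup F_j}$ is constant on $\Xi_{G_jF_j}$ for $j\neq i$.

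\textbf{Main obstacle.} Here the statement as written needs care. The restricted inequality for block $i$ has right-hand side $\beta_i$, not $\alpha-\sum_{f\in F(\emptyset)} a_f$, unless the other blocks contribute zero. Those blocks contribute the constants $\beta_j$, and these need not vanish.

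To handle this, I would normalize. If some $G_j$ is a single vertex, then $\Xi_{G_jF_j}$ is still full-dimensional, since $\{0,1\}$ is spanned on the vertex coordinate. Thus $a|_{V_j\cup F_j}$ being constant on a full-dimensional polytope forces $a|_{V_j\cup F_j}=0$, using \Cref{prop:dimension} for each connected $G_j$, and $\beta_j = 0$ follows.

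The $F(\emptyset)$ coordinates are fixed, so the coefficients $a_f$ there are arbitrary up to this shift. The correct right-hand side therefore becomes $\alpha-\sum_{f\in F(\emptyset)}a_f$, as claimed. Uniqueness of $i$ follows because exactly one factor is a proper face.

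For the converse, take a facet-defining inequality of $\Xi_{G_iF_i}$ and extend it by zeros. The face it induces on $\polytope$ is a facet of the $i$-th factor times all other factors, which has dimension $\dim\polytope-1$. It is therefore a facet.
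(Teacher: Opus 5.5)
Your proof is correct and follows the paper's route: you decompose the feasible vectors blockwise, with the coordinates in $F(\emptyset)$ fixed to $1$, and then use the face structure of products of polytopes. Your normalization step uses \Cref{prop:dimension} on each component to force $a|_{V_j\cup F_j}=0$ for $j\neq i$, which justifies the stated right-hand side $\alpha-\sum_{f\in F(\emptyset)}a_f$ and makes explicit what the paper covers only as a ``basic property of polytope product''. One small addition: because the facet is nonempty, your inequality $\sum_i\beta_i\le\alpha-\sum_{f\in F(\emptyset)}a_f$ is in fact an equality, and this equality is what makes the face equal to the product of the blockwise maximizer sets.
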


\begin{proof}
	See \Cref{section:proof:prop:ms-disconnected}.
\end{proof}

In view of \Cref{prop:ms-disconnected}, there is no loss of generality in considering only connected graphs for a polyhedral study of the multi-separator problem.
This justifies our restriction to connected graphs in some of the previous and forthcoming statements.


\subsection{Facets induced by box inequalities}
\label{section:facets-box}

\begin{figure}
	\centering
	\begin{subfigure}[b]{0.23\textwidth}
		\centering
		\input{figures/box/v=0_1.tex}
		\caption{}
		\label{fig:box-v=0_1}
	\end{subfigure}
	\hspace{3ex}
	\begin{subfigure}[b]{0.23\textwidth}
		\centering
		\input{figures/box/v=0_2.tex}
		\caption{}
		\label{fig:box-v=0_2}
	\end{subfigure}
	\hspace{3ex}
	\begin{subfigure}[b]{0.23\textwidth}
		\centering
		\input{figures/box/f=1.tex}
		\caption{}
		\label{fig:box-f=1}
	\end{subfigure}
	\caption{Depicted above are graphs (black) together with interactions (green, dashed) where some condition of \Cref{thm:v=0-fd} or \Cref{thm:f=1-fd} is violated for a vertex or interaction.
	\subref{fig:box-v=0_1} Since $u$ is a neighbor of $v$ and $\{u, v\} \in F$, Condition~\ref{item:v=0-fd-1} of \Cref{thm:v=0-fd} is violated, and thus, the lower box inequality $x_v \geq 0$ is not facet-defining.
	\subref{fig:box-v=0_2} Since $u$ is a $\{v, w\}$-cut-vertex, Condition~\ref{item:v=0-fd-2} of \Cref{thm:v=0-fd} is violated, and thus, the lower box inequality $x_v \geq 0$ is not facet-defining.
	\subref{fig:box-f=1} Since both $v$ and $w$ are $\{u, u^{\prime}\}$-cut-vertices, the condition in \Cref{thm:f=1-fd} is violated, and thus, the upper box inequality $x_{\{v, w\}} \leq 1$ is not facet-defining.}
	\label{fig:box-not-facet}
\end{figure}

In this section, we characterize those box inequalities~\eqref{eq:box} that define a facet of a multi-separator polytope.
We begin by noting that some box inequalities may be implied (dominated) by other, stronger, valid linear inequalities, as demonstrated by the proposition below.

\begin{proposition}\label{prop:box}
	For any graph $G = (V, E)$, any $F \subseteq \tbinom{V}{2}$ and any $x \in \mathbb{R}^{V \cup F}$ satisfying inequalities~\eqref{eq:connector},~\eqref{eq:separator} and $x_f \leq 1$ for all $f \in F$, the following two statements hold:
	\begin{enumerate}[label=\textnormal{(\roman*)}]
		\item If $x_v \geq 0$ for all $v \in V$, then $x_f \geq 0$ for all $f \in F$.
		\item Suppose that $G$ does not contain isolated vertices and $E \subseteq F$, then $0 \leq x_v \leq 1$ for all $v \in V$.
	\end{enumerate}
\end{proposition}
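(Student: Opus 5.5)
Both parts follow by choosing a suitable connector or separator and adding up inequalities; no polyhedral machinery is needed.

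\textbf{Part (i).} Fix $f = ab \in F$. I would use a separator inequality whose right-hand side contributes nothing. The set $S = \{a\}$ is an $f$-separator of $G$, since it contains an endvertex of $f$. The separator inequality~\eqref{eq:separator} for $S$ gives $1 - x_f \leq 1 - x_a$, that is, $x_f \geq x_a$. By hypothesis $x_a \geq 0$, so $x_f \geq 0$. Only the separator inequalities and nonnegativity of the vertex variables are used; the bound $x_f \leq 1$ is not needed here.

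\textbf{Part (ii), the lower bound.} Let $v \in V$. Because $G$ has no isolated vertices, $v$ has a neighbour $u$, and $e = uv \in F$ since $E \subseteq F$. The singleton $\{v\}$ is an $e$-separator, so~\eqref{eq:separator} gives $x_e \geq x_v$. The set $\{u, v\}$ is an $e$-connector, since it induces the edge $uv$, so~\eqref{eq:connector} gives $x_e \leq x_u + x_v$. Together these give $x_u \geq 0$. Since $v$ was arbitrary and every vertex is a neighbour of some vertex, applying the same argument with the roles of $u$ and $v$ exchanged shows $x_w \geq 0$ for every $w \in V$.

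\textbf{Part (ii), the upper bound.} For the same edge $e = uv$, the set $\{u, v\}$ is also an $e$-separator, since it contains both endvertices. The separator inequality~\eqref{eq:separator} for it reads $1 - x_e \leq (1 - x_u) + (1 - x_v)$. The connector inequality for the singleton $\{u\}$ does not apply, because $\{u\}$ does not contain $v$ and so is not an $e$-connector. Instead, combine the separator inequality with $x_e \leq 1$ to get $x_u + x_v \leq 1 + x_e \leq 2$. This alone does not bound $x_u$ by $1$, so I use the lower bounds just proved. The separator inequality for $\{u\}$ gives $x_u \leq x_e$, and $x_e \leq 1$ then gives $x_u \leq 1$.

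\textbf{Assembling.} Parts (i) and (ii) together give the whole statement. The main obstacle is only choosing the right connector or separator in each step. Once the singleton separators and the two-vertex connector of an edge are identified, everything else is direct addition of inequalities.
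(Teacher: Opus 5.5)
Your proof is correct and follows the paper's argument: part (i) uses the singleton separator $\{a\}$, and part (ii) combines the separator inequality $x_u \leq x_{uv}$ with the connector inequality $x_{uv} \leq x_u + x_v$ for the lower bound, and with $x_{uv} \leq 1$ for the upper bound. The detour through the two-vertex separator $\{u,v\}$ and the remark that you ``use the lower bounds just proved'' are unnecessary, since your final step $x_u \leq x_e \leq 1$ uses neither.
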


\begin{proof}
	See \Cref{section:proof:prop:box}.
\end{proof}

As a corollary of \Cref{prop:box}, the lower box inequalities associated with interactions never define facets of the multi-separator polytope:

\begin{corollary}\label{cor:f=0-fd}
	For any connected graph $G=(V,E)$, any $F \subseteq \tbinom{V}{2}$ and any $f \in F$, the lower box inequality $x_f \geq 0$ associated with $f$ does not define a facet of $\polytope$.
\end{corollary}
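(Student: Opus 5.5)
The plan is to show that the face $\{x \in \polytope \mid x_f = 0\}$ is contained in a face defined by another valid inequality that is not a positive multiple of $x_f \geq 0$. Since \Cref{prop:dimension} gives full-dimensionality, facet-defining inequalities are unique up to positive scaling. Hence the face cannot be a facet unless it equals the face of that other inequality, and we will rule out equality by dimension or by a direct argument.

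Write $f = \{a,b\}$. The idea comes from the proof of \Cref{prop:box}(i). The inequality $x_f \geq 0$ is implied by the connector inequality $x_f \leq \sum_{v\in C} x_v$ together with the separator inequality for $S = C$, or more simply by combining with the lower box inequalities on vertices. Concretely, take the $f$-separator $S=\{a\}$. The separator inequality \eqref{eq:separator} reads $1 - x_f \leq 1 - x_a$, i.e.\ $x_a \leq x_f$, and this is valid by \Cref{lemma:ilp}. Adding the valid inequality $x_a \geq 0$ gives $x_f \geq x_a \geq 0$. Thus $x_f \geq 0$ is the sum of two valid inequalities, $x_f - x_a \geq 0$ and $x_a \geq 0$, and neither is a positive multiple of $x_f \geq 0$.

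The standard argument then applies. Any $x \in \polytope$ with $x_f = 0$ satisfies both summed inequalities with equality, so $x_a = 0$ as well. The face $\{x_f = 0\}$ is therefore contained in $\{x \in \polytope \mid x_f = 0,\ x_a = 0\}$. The equations $x_f = 0$ and $x_a = 0$ are linearly independent, so this set lies in an affine subspace of dimension at most $\lvert V \cup F\rvert - 2$. By \Cref{prop:dimension}, a facet has dimension $\lvert V\cup F\rvert - 1$, so the face is not a facet. Connectedness of $G$ is used only to invoke \Cref{prop:dimension}. If the face is empty, it is trivially not a facet; it is in fact nonempty, since $x^{V}$ has $x_f = 0$ because $G$ is connected.

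The only step needing care is validity of $x_a \leq x_f$. This holds because $\{a\}$ is an $f$-separator under the paper's convention that separators may contain endpoints. The same fact can be seen directly: if $a \notin U$, then $U$ is not an $f$-connector, so $x^U_a = 1$ implies $x^U_f = 1$. No real obstacle is expected.
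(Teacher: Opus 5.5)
Your proposal is correct and follows the paper's route: the paper obtains the corollary from \Cref{prop:box}(i), which derives $x_f \geq 0$ from the separator inequality $x_a \leq x_f$ for $S=\{a\}$ together with $x_a \geq 0$. Your explicit dimension count via \Cref{prop:dimension} spells out why this redundancy rules out a facet. The aside that $x_f \geq 0$ follows from a connector inequality plus the separator inequality for $S=C$ is not correct, since combining them only yields $0 \leq |C|-1$, but your argument never uses it.
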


In the following theorems, we state under which conditions the remaining box inequalities are facet-defining.
Examples depicted in \Cref{fig:box-not-facet} illustrate how the conditions of \Cref{thm:v=0-fd} or \Cref{thm:f=1-fd} can be violated for a vertex or interaction, and thus, the corresponding inequality does not define a facet.

\begin{theorem}\label{thm:v=0-fd}
	For any connected graph $G = (V, E)$, any $F \subseteq \tbinom{V}{2}$ and any $v \in V$,
    the lower box inequality $x_v \geq 0$ defines a facet of $\polytope$ if and only if the following two conditions hold:
	\begin{enumerate}[label=\textnormal{(\roman*)}]
		\item \label{item:v=0-fd-1}
		For any $u \in N_{G}(v)$, $\{u, v\} \not\in F$.		
		\item \label{item:v=0-fd-2}
		For any $u \in N_G(v)$ and any $w \in V \setminus \{u,v\}$ with $\{u,w\}, \{v,w\} \in F$, $u$ is not a $\{v,w\}$-cut-vertex\footnote{For any vertices $u, v, w$ of a graph $G$, we call $u$ a $\{v, w\}$-\emph{cut-vertex} of $G$ if and only if $\{u\}$ is a minimal $\{v, w\}$-separator of $G$.} of $G$.
	\end{enumerate}
\end{theorem}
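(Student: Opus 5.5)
The plan is to prove necessity by exhibiting, for each violated condition, a second linear equation that holds on the whole face $\{x \in \polytope \mid x_v = 0\}$. Sufficiency will be shown by the standard argument: any valid equation $ax = \alpha$ that holds on this face must satisfy $a_g = 0$ for all $g \neq v$.

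\emph{Necessity.} Recall that $x_v = 0$ means $v \in U$. Suppose first that \ref{item:v=0-fd-1} fails, i.e.\ $u \in N_G(v)$ with $uv \in F$. The connector inequality for the connector $\{u,v\}$ gives $x_{uv} \leq x_u + x_v = x_u$ on the face. The separator inequality for the $uv$-separator $\{u\}$ gives $x_{uv} \geq x_u$. Hence the face lies in the hyperplane $x_{uv} = x_u$, which is not a multiple of $x_v = 0$. Since $\polytope$ is full-dimensional (\Cref{prop:dimension}), the face is therefore not a facet.

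Suppose next that \ref{item:v=0-fd-2} fails, with $u$ a $\{v,w\}$-cut-vertex and $uw, vw \in F$. I would check directly on the feasible vectors $x^U$ with $v \in U$ that $x_{vw} = x_{uw}$. If $u \notin U$, then both pairs are separated: $vw$ because $u$ is a cut-vertex, and $uw$ because $u$ itself is removed. If $u \in U$, then $u$ and $v$ are adjacent and both in $U$, so $v$ is connected to $w$ in $G[U]$ if and only if $u$ is. This again gives a second equation, so the face is not a facet.

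\emph{Sufficiency.} Assume both conditions hold and let $ax = \alpha$ hold for all $x^U$ with $v \in U$. Subtracting the equations for $x^U$ and $x^{U'}$, where $U, U'$ differ in a controlled way, yields linear relations among the coefficients. The steps are as follows.
\begin{enumerate}[label=\textnormal{(\arabic*)}]
\item Compare $U = \{v\}$ with $U = \{v, u\}$ for a vertex $u$ that is not adjacent to $v$. Adding $u$ changes only $x_u$, because an isolated $u$ cannot connect any pair. Hence $a_u = 0$ for every $u \notin N_G[v]$.
\item For a neighbour $u$ of $v$, adding $u$ to $\{v\}$ changes $x_u$ and possibly $x_{uv}$. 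By \ref{item:v=0-fd-1} we have $uv \notin F$, so again $a_u = 0$.
\item Show $a_f = 0$ for each interaction $f = st$. I would do this by induction on the length of a shortest $st$-path. Take $U$ to be the vertex set of an induced $st$-path, or of an induced path from $v$ to that path, chosen so that adding or removing one vertex switches $x_f$ together with only interactions whose coefficients are already known to vanish.
\end{enumerate}
This inductive bookkeeping is presumably what the technical lemmas in \Cref{section:lemma-ie,section:pseudoforest} are designed for: one expresses each such difference as an inclusion–exclusion over connectors and argues that the resulting coefficient system is triangular.

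The main obstacle is step~(3) for interactions $f = vw$ or $f = uw$ with $u \in N_G(v)$. In that situation, forcing $v \in U$ may make $x_{vw}$ and $x_{uw}$ move together, and this coupling is exactly what \ref{item:v=0-fd-2} must rule out. Here I would use that $u$ is not a $\{v,w\}$-cut-vertex: there is a $vw$-path $P$ avoiding $u$. I would then compare $U = V(P) \cup \{u\}$ with $V(P)$, and $V(P)$ with a set in which $P$ is broken at a vertex other than $v$. These comparisons toggle $x_{uw}$ and $x_{vw}$ separately, which decouples the two coefficients and yields $a_{uw} = a_{vw} = 0$.
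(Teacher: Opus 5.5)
Your necessity argument and steps (1)--(2) are correct and match the paper. You also correctly locate where (ii) must be used. The gap is step~(3), which is the whole substance of sufficiency: you never produce comparisons that actually isolate $a_f$, and the mechanism you sketch does not do so.

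A single toggle $x^{U\setminus\{y\}}-x^{U}$ flips \emph{every} interaction for which $y$ is an endpoint or a cut-vertex in $G[U]$. Because $v$ must belong to every $U$, this includes interactions through $v$ that are not ``shorter'' than $f$. For example, let $v$ be pendant at $t$ on a shortest $st$-path $C$ and take $U=C\cup\{v\}$. Removing $s$ also flips $x_{sv}$, and $d(s,v)$ may exceed $d(s,t)$. Removing $t$ flips every $x_{vz}$ with $z\in C$. So induction on the length of a shortest $st$-path is not well founded here.

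The same defect breaks your decoupling in the hard case. Comparing $V(P)\cup\{u\}$ with $V(P)$ flips $x_{uz}$ for \emph{all} $z\in V(P)$ with $uz\in F$, not just $x_{uw}$. Breaking $P$ at $y$ flips every pair in $V(P)$ straddling $y$ (all $x_{wz}$ if $y=w$). Since $P$ is just some $vw$-path avoiding $u$, possibly much longer than $d(v,w)$, nothing in your induction makes those extra coefficients vanish. Attaching $v$ to the connector by a path, as you suggest, only adds more flipped interactions; $v$ may simply be isolated in $G[U]$.

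The missing idea is to subtract two toggles of the \emph{same} vertex in different contexts, i.e.\ to use
$x^{U\setminus\{s\}}+x^{U\setminus\{t\}}-x^{U\setminus\{s,t\}}-x^{U}$.
By \Cref{lemma:ie} this equals $\mathbbm{1}_{F_{AB}}-\mathbbm{1}_{H_{AB}}$, which cancels everything except interactions that need both removed vertices, so no induction is required. Your framework of comparing equations on pairs of feasible vectors already permits this combination; you just did not use it. The paper proceeds by cases:
\begin{itemize}
\item If some $st$-path (not necessarily induced) has $v$ as an internal vertex, take $U$ to be its vertex set. The combination is exactly $\mathbbm{1}_{\{f\}}$.
\item Failing that, if some minimal $st$-connector $C\not\ni v$ has $G[C\cup\{v\}]$ not a path, take $U=C\cup\{v\}$. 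Again the combination is exactly $\mathbbm{1}_{\{f\}}$.
\item Otherwise $f=nw$ or $f=vw$, where $n$ is the unique neighbour of $v$ on a minimal $f$-connector $C$. Removing $n$ and $w$ from $C\cup\{v\}$ gives $\mathbbm{1}_{\{nw,vw\}\cap F}$.
\end{itemize}
Only in the last case does (ii) enter. If both $nw,vw\in F$, (ii) yields a $vw$-path $P$ avoiding $n$. Then $n,v,P$ is an $nw$-path with $v$ internal, so the first case gives $a_{nw}=0$, and hence $a_{vw}=0$ as well.
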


\begin{proof}
	See \Cref{section:proof:thm:box:v=0}.
\end{proof}


\begin{theorem}\label{thm:f=1-fd}
    For any connected graph $G = (V, E)$, any $F \subseteq \tbinom{V}{2}$ and any $f \in F$,
    the upper box inequality $x_f \leq 1$ defines a facet of $\polytope$ if and only if there is no interaction \(f'\in F\setminus \{f\}\) such that $f$ is a pair of $f^\prime$-cut-vertices of $G$.
\end{theorem}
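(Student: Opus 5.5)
Throughout, write $f=\{v,w\}$. For an interaction $f'=\{u,u'\}$ we read ``$f$ is a pair of $f'$-cut-vertices'' as: both $\{v\}$ and $\{w\}$ are minimal $f'$-separators of $G$. An endvertex of $f'$ counts, since it alone is a minimal $f'$-separator. The plan is to prove necessity by exhibiting a strictly larger face, and sufficiency by the standard coefficient argument, using full-dimensionality (\Cref{prop:dimension}) to force every coefficient except $a_f$ to vanish.

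\emph{Necessity.} Suppose $f'=\{u,u'\}\neq f$ and both $v$ and $w$ are $f'$-cut-vertices. I claim every $x^U\in\feasible$ with $x^U_f=1$ also has $x^U_{f'}=1$. If $v\notin U$ or $w\notin U$, then $U$ misses a singleton $f'$-separator, so $x^U_{f'}=1$. Otherwise $v$ and $w$ lie in different components of $G[U]$. If $U$ were an $f'$-connector, a $u$--$u'$ path in $G[U]$ would have to pass through both cut-vertices $v$ and $w$, and would then connect them, a contradiction.

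Hence $\{x\in\polytope\mid x_f=1\}\subseteq\{x\in\polytope\mid x_{f'}=1\}$. The latter face is proper, because $x^V_{f'}=0$ when $G$ is connected. The inclusion is strict: since $f'\neq f$, pick an endvertex $u\in f'\setminus f$ and take $U=V\setminus\{u\}$. If $G-u$ connects $v$ and $w$, this gives $x_{f'}=1$ and $x_f=0$. If not, then $u$ separates $v$ from $w$; I would pick instead a set $U$ that connects $v$ and $w$ but omits an endvertex of $f'$, for example a $v$--$w$ path avoiding $u$ or $u'$. This small case analysis is the fiddly part of necessity. So the face $x_f=1$ is not a facet.

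\emph{Sufficiency.} Let $ax\le\alpha$ be valid and tight on all $x^U$ with $x^U_f=1$; I show $a_g=0$ for all $g\neq f$. First consider all $U\subseteq V\setminus\{v\}$. Each such $x^U$ is tight. On these points $x_v=1$, and every interaction containing $v$ equals $1$. The remaining coordinates form exactly the feasible vectors of $G-v$ with the interactions avoiding $v$. By \Cref{prop:ms-disconnected} together with \Cref{prop:dimension}, the corresponding polytope has dimension equal to the number of vertices of $G-v$ plus the number of interactions $f'$ not automatically separated, i.e.\ those $f'$ whose endvertices are distinct from $v$ and lie in one component of $G-v$. Some interactions are constant on this family: an $f'$ avoiding $v$ whose endvertices lie in different components of $G-v$, equivalently with $v$ an $f'$-cut-vertex, is always $1$. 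Since $ax$ is constant on a full-dimensional set in the free coordinates, $a_t=0$ for every $t\in V\setminus\{v\}$ and $a_{f'}=0$ for every $f'$ for which $v$ is not an $f'$-cut-vertex. Running the same argument with $w$ gives $a_{f'}=0$ whenever $w$ is not an $f'$-cut-vertex, and $a_v=0$ as well, since $v$ is a vertex of $G-w$. Any interaction $f'\neq f$ not yet covered would have both $v$ and $w$ as $f'$-cut-vertices, which the hypothesis excludes. Thus $a=a_f\mathbbm{1}_f$, and evaluating at $x^{V\setminus\{v\}}$ gives $\alpha=a_f$. Since the face is proper, $a_f>0$, so the inequality is a positive multiple of $x_f\le1$, which is facet-defining.

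The main obstacle is bookkeeping in the sufficiency step. One must check that \Cref{prop:ms-disconnected} really applies to the restricted family, including interactions with an endvertex equal to $v$ and interactions split across components of $G-v$. One must also check that ``$v$ is an $f'$-cut-vertex'' exactly matches ``$f'$ is constantly $1$ once $v$ is removed''; for interactions containing $v$, the endvertex convention makes this match.
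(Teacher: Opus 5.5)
Your proof is correct. Necessity is essentially the paper's argument; sufficiency takes a different route.

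\textbf{Necessity.} Like the paper, you show that every $f'$-connector is an $f$-connector, so $x_{f'}=1$ on the face. The strict-inclusion step you call fiddly is unnecessary. The face lies in the two distinct hyperplanes $x_f=1$ and $x_{f'}=1$, so its dimension is at most $\lvert V\cup F\rvert-2$. This is how the paper concludes via \Cref{prop:dimension}. If you keep your version, note that your second case never occurs. Any $u$--$u'$ path in $G$ contains both $v$ and $w$. Since $u\notin f$ is an endpoint of that path, its $v$--$w$ subpath avoids $u$, so $G-u$ always connects $v$ and $w$.

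\textbf{Sufficiency.} The paper builds an explicit basis of $\lin{\sigmab}$ with \Cref{lemma:ie}.
\begin{itemize}
\item It gets $\mathbbm{1}_{\{t\}}$ from the pair $\emptyset,\{t\}$.
\item It gets $\mathbbm{1}_{\{f'\}}$ from the two sets obtained by deleting one or the other endvertex of $f'$ from a minimal $f'$-connector $C$. Here $C$ avoids an endpoint of $f$ that is not an $f'$-cut-vertex. Then $F_{AB}=\{f'\}$, $H_{AB}=\emptyset$, and none of the four sets connects $f$.
\end{itemize}

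You instead observe that the face contains $\{x^U\mid U\subseteq V\setminus\{v\}\}$. Freeze to $1$ the coordinate $x_v$, the interactions containing $v$, and the interactions split by $v$. What remains is exactly the feasible set of the instance $G-v$ with the interactions avoiding $v$. You then apply \Cref{prop:ms-disconnected} and \Cref{prop:dimension} to this smaller, possibly disconnected instance, and likewise for $w$. This forces every coefficient except $a_f$ to vanish.

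Your route uses the dimension result as a black box and explains the condition conceptually: an interaction is frozen in the $G-v$ family exactly when $v$ is one of its cut-vertices. The paper's route is self-contained and follows the uniform \Cref{lemma:ie} template used in all its other facet proofs. There the hypothesis enters only through the choice of $C$.
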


\begin{proof}
	See \Cref{section:proof:thm:box:f=1}.
\end{proof}

\begin{theorem}\label{thm:v=1-fd}
    For any connected graph $G = (V, E)$, any $F \subseteq \tbinom{V}{2}$ and any $v \in V$,
    the upper box inequality $x_v \leq 1$ defines a facet of $\polytope$ if and only if there is no interaction $f \in F$ such that $v$ is an $f$-cut-vertex of $G$.
\end{theorem}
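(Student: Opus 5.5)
The plan is to prove necessity by exhibiting a canonical inequality that dominates $x_v \leq 1$, and sufficiency by the standard argument that every valid equation on the face is a multiple of $x_v = 1$. Throughout I use that $\polytope$ is full-dimensional (\Cref{prop:dimension}). I also note that if $v$ is an endvertex of $f$, then $\{v\}$ is a minimal $f$-separator, so $v$ counts as an $f$-cut-vertex; the hypothesis therefore also excludes $v \in f$.

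\emph{Necessity.} Suppose $v$ is an $f$-cut-vertex for some $f = ab \in F$. The separator inequality \eqref{eq:separator} for $S = \{v\}$ reads $1 - x_f \leq 1 - x_v$, i.e.\ $x_v \leq x_f$. Hence every $x \in \polytope$ with $x_v = 1$ also satisfies $x_f = 1$, so the face $\{x_v = 1\}$ is contained in the face $\{x_f = 1\}$. The latter is proper, since $x^V$ has $x_f = 0$ because $G$ is connected. The inclusion is strict: $U = \{v\}$ gives a feasible vector with $x_v = 0$ and $x_f = 1$. This works because at least one endvertex of $f$ lies outside $U$, so $U$ is not an $f$-connector. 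A face strictly contained in a proper face cannot be a facet.

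\emph{Sufficiency.} Assume no $f \in F$ has $v$ as an $f$-cut-vertex. Let $\langle a, x \rangle = \alpha$ hold for all feasible $x$ with $x_v = 1$, i.e.\ for all $x^U$ with $v \notin U$. I will show $a_g = 0$ for all $g \neq v$; by full-dimensionality this gives a facet (the face is proper since $x^V$ has $x_v = 0$).

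First take $u \neq v$ and compare $x^{\emptyset}$ with $x^{\{u\}}$. All interactions equal $1$ in both vectors, since a single vertex is never an $f$-connector. The only differing coordinate is $x_u$, so $a_u = 0$.

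Next take $f = bc \in F$. By hypothesis $\{v\}$ is not an $f$-separator, so $v \notin f$ and $b, c$ are connected in $G - v$. Choose an induced $b$–$c$ path $P$ in $G - v$, which is a minimal $f$-connector by \Cref{cor:connector-separator-minimal}. I proceed by induction on $|P|$, proving the stronger claim that $a_{f'} = 0$ for every $f' \in F$ joined by an induced path of $G - v$ with at most $|P|$ vertices.

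Compare $x^{P}$ and $x^{P \setminus \{b\}}$. Vertex coordinates differ only at $b$, and $a_b = 0$. Interaction coordinates differ exactly for the $f' \in F$ that are connected in $G[P]$ but not in $G[P \setminus \{b\}]$. These are the pairs $bd$ with $d \in P$. For $d \neq c$, the pair $bd$ is joined by a proper subpath of $P$, which is again induced and avoids $v$, so $a_{bd} = 0$ by induction. Hence $a_f = 0$.

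The base case is $|P| = 2$. Then only $f$ itself changes, which again gives $a_f = 0$.

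The only delicate point is checking exactly which interaction coordinates change between $x^P$ and $x^{P \setminus \{b\}}$. This follows from $P$ being induced: connectivity inside $G[P]$ is path connectivity along $P$.
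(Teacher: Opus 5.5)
Your proof is correct. Necessity is the paper's argument in slightly different form. The paper observes that $x_v \le x_f$ and $x_f \le 1$ force the extra equation $x_f = 1$ on the face. You phrase this as strict containment in the proper face $\{x_f = 1\}$, with explicit witnesses $x^V$ and $x^{\{v\}}$, which makes the independence from $x_v = 1$ visible. Your tacit passage from ``$v$ is not an $f$-cut-vertex'' to ``$\{v\}$ is not an $f$-separator'' is fine, because $\emptyset$ is not an $f$-separator of the connected graph $G$.

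For sufficiency the routes differ. The paper works with \Cref{lemma:ie}: for a minimal $f$-connector $C$ avoiding $v$, with $A = C\setminus\{u\}$ and $B = C\setminus\{w\}$, the vector $x^A + x^B - x^{A\cap B} - x^{A\cup B}$ is exactly $\mathbbm{1}_{\{f\}}$, because $F_{AB} = \{f\}$ and $H_{AB} = \emptyset$. You instead take an arbitrary equation valid on the face and use only two-point differences $x^{P\setminus\{b\}} - x^{P}$. Each such difference picks up $b$ together with every interaction $bd$, $d \in P$, and you remove the unwanted $bd$ with $d \neq c$ by strong induction on the length of the induced path. In fact the paper's four-term vector is precisely the difference of two of your two-point vectors, namely $(x^{P\setminus\{b\}} - x^{P}) - (x^{P\setminus\{b,c\}} - x^{P\setminus\{c\}})$, which is why it needs no induction. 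Your version is fully self-contained and elementary. The paper's is a one-step construction and reuses the lemma that drives its other facet proofs (connector, separator, path and intersection inequalities).
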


\begin{proof}
	See \Cref{section:proof:thm:box:v=1}.
\end{proof}


\subsection{Facets induced by connector inequalities}
\label{section:facets-connector}

\begin{figure}
	\centering
	\begin{subfigure}[b]{0.15\textwidth}
		\centering
		\input{figures/connector/fd-a.tex}
		\caption{$x_{pq} = x_p + x_q$}
		\label{fig:connector-fd-a}
	\end{subfigure}
	\hfill
	\begin{subfigure}[b]{0.19\textwidth}
		\centering
		\input{figures/connector/fd-b.tex}
		\caption{$x_{pq} = x_{pc_1} + x_q$}
		\label{fig:connector-fd-b}
	\end{subfigure}
	\hfill
	\begin{subfigure}[b]{0.19\textwidth}
		\centering
		\input{figures/connector/fd-c.tex}
		\caption{$x_u = x_{uc_1}$}
		\label{fig:connector-fd-c}
	\end{subfigure}
	\hfill
	\begin{subfigure}[b]{0.19\textwidth}
		\centering
		\input{figures/connector/fd-d.tex}
		\caption{$x_{uw} = x_{uc_1}$}
		\label{fig:connector-fd-d}
	\end{subfigure}
	\hfill
	\begin{subfigure}[b]{0.23\textwidth}
		\centering
		\input{figures/connector/fd-e.tex}
		\caption{$x_{uc_2} = x_{uc_3}$}
		\label{fig:connector-fd-e}
	\end{subfigure}
	\caption{
		Depicted above are graphs $G = (V,E)$ (black, solid) and interactions $F$ (green, dashed) with $\set{p,q} \in F$.
		In each of these graphs, there exists a $\set{p,q}$-connector $C$ that violates one condition of \Cref{thm:connector-fd}.
		Thus, the connector inequality with respect to $\{p,q\}$ and $C$ is not facet-defining for $\polytope$.
		Written below each of the depicted graphs is an inequality satisfied by all feasible vectors for this graph.
		These inequalities are examples of the inequalities constructed in the proof of \Cref{thm:connector-fd} to establish necessity of the respective conditions.
		\subref{fig:connector-fd-a} Condition \ref{item:connector-fd-1} is violated for $C = \{p, c_1, q\}$ as $\{p,q\} \subset C$ is a $\set{p,q}$-connector.
		\subref{fig:connector-fd-b} Condition \ref{item:connector-fd-2} is violated for $C = \{p, c_1, q\}$ as $\{p, c_1\} \in F$.
		\subref{fig:connector-fd-c} Condition \ref{item:connector-fd-3} is violated for $C = \{p, c_1, q\}$ as $u$ is a $c_1$-bypass and $\{u, c_1\} \in F$.
		\subref{fig:connector-fd-d} Condition \ref{item:connector-fd-4} is violated for $C = \{p, c_1, q\}$ as $\{u, w\}, \{u, c_1\} \in F$ and $w$ is both a $c_1$-bypass and a $\{\{u\}, C\}$-cut-vertex.
		\subref{fig:connector-fd-e} Condition \ref{item:connector-fd-5} is violated for $C = \{p, c_1, c_2, c_3, c_4, q\}$ as $\{u, c_2\}, \{u, c_3\} \in F$ and the set $\{v, v'\}$ of $c_2$- and $c_3$-bypasses is a $\{\{u\}, C\}$-separator.
	}
	\label{fig:connector-conditions}
\end{figure}

In this section, we establish the exact condition under which a connector inequality~\eqref{eq:connector} defines a facet of a multi-separator polytope.
For this purpose, we introduce the notion of a \emph{bypass} vertex of a connector.
As the connector inequality associated with a non-minimal connector cannot be facet-defining (by \Cref{lemma:ilp}), we introduce bypass vertices only for minimal connectors, avoiding complications.

\begin{definition}\label{def:bypass}
	Let $G = (V, E)$ be a connected graph, let $F \subseteq \tbinom{V}{2}$, let $f \in F$ and let $C \subseteq V$ be a minimal $f$-connector of $G$.
	For any $v \in V$ and any $c \in C$, we call $v$ a $c$-\emph{bypass} of $C$ if and only if $v \notin C$ and $(C \setminus \{c\}) \cup \{v\}$ is an $f$-connector of $G$.
	For any $v \in V$, we call $v$ a \emph{bypass} of $C$ if and only if $v$ is a $c$-bypass of $C$ for some $c \in C$.
\end{definition}

Examples of bypasses are depicted in \Cref{fig:connector-conditions}~\subref{fig:connector-fd-c}~--~\subref{fig:connector-fd-e}.
Notice that the neighborhood of any bypass of \(C\) contains at least two vertices of \(C\).
With the definition of a bypass, we are now ready to state the conditions that characterize facet-defining connector inequalities:

\begin{theorem}\label{thm:connector-fd}
	For any connected graph $G = (V, E)$, 
	any $F \subseteq \tbinom{V}{2}$,
    any $f \in F$ and 
	any $f$-connector $C$ of $G$,
	the associated connector inequality
	\begin{equation}\label{eq:connector-fd}
		x_f \leq \sum_{v \in C} x_v
	\end{equation}
	defines a facet of the multi-separator polytope $\polytope$ if and only if all the following conditions hold:
	\begin{enumerate}[label=\textnormal{(\roman*)}]
		\item $C$ is a minimal $f$-connector of $G$.
		\label{item:connector-fd-1}
		\item There is no interaction $f^\prime \in F\setminus \{f\}$ such that $f^\prime \subseteq C$.
		\label{item:connector-fd-2}
		\item There is no interaction $uw \in F$ such that $u$ is a $w$-bypass of $C$.
		\label{item:connector-fd-3}
		\item There are no two distinct interactions $uw, uw^\prime \in F$ such that $w$ is both a $w^\prime$-bypass of $C$ and a $\{\{u\}, C\}$-cut-vertex\footnote{For any vertex $v$ and vertex subsets $A, B$ of a graph $G$, we call $v$ an $\{A,B\}$-cut-vertex of $G$ if and only if $\{v\}$ is a minimal $\{A,B\}$-separator of $G$.} of $G$.
		\label{item:connector-fd-4}
		\item There are no two distinct interactions $\{u, w\}, \{u, w^\prime\} \in F$ with $u \not\in N_G[C]$ and $w, w^\prime \in C$ such that
		$
			\left\{
                v \in N_G(C) \;\middle|\;
				v \ \textnormal{is both a} \ w \text{-bypass and a} \ w^\prime \textnormal{-bypass of} \ C
            \right\}
		$
		is a $\{\{u\}, C\}$-separator of $G$.
		\label{item:connector-fd-5}
	\end{enumerate}
\end{theorem}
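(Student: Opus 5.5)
We prove necessity and sufficiency separately.

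\emph{Necessity.} For each violated condition we show that the face $\Phi = \{x \in \polytope \mid x_f = \sum_{v \in C} x_v\}$ lies in a further hyperplane, linearly independent of the connector equation. By \Cref{prop:dimension} this implies that $\Phi$ is not a facet. In every case it suffices to check, for every $U \subseteq V$ with $x^U \in \Phi$, an identity of the kind written under the examples in \Cref{fig:connector-conditions}. Note that $x^U \in \Phi$ forces either $C \subseteq U$, so that both sides are $0$, or $\lvert C \setminus U\rvert = 1$ with $f$ separated.

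If \ref{item:connector-fd-1} fails, the inequality is the sum of the connector inequality for a minimal $C' \subsetneq C$ and lower box inequalities, so it is not a facet by \Cref{lemma:ilp}. If \ref{item:connector-fd-2} fails for $f' \subseteq C$, take $C'$ the subpath of $C$ joining $f'$. Then $x_{f'} = \sum_{v \in C'} x_v$ on $\Phi$, because exactly one vertex of $C$ is removed. If \ref{item:connector-fd-3} fails, with $u$ a $w$-bypass and $uw \in F$, then on $\Phi$ we have $x_u = x_{uw}$. Indeed, if $w \notin U$ then $u \in U$ would make $(C\setminus\{w\})\cup\{u\}$ an $f$-connector inside $U$, contradicting the separation of $f$. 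Hence $u \notin U$, and both sides equal $1$. If $w \in U$, then $C \subseteq U$ up to one other vertex, and $u$ and $w$ are connected exactly when $u \in U$. Conditions \ref{item:connector-fd-4} and \ref{item:connector-fd-5} are handled the same way, giving $x_{uw} = x_{uw'}$. The point is that removing $w$ or $w'$ from $C$ also removes every common bypass, or the cut-vertex $w$, which already separates $u$ from $C$ and hence from both $w$ and $w'$.

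\emph{Sufficiency.} We use the standard technique. Assume $ax \le \alpha$ is valid and $\Phi \subseteq \{ax = \alpha\}$, and show that $(a,\alpha)$ is a multiple of $(-e_f + \mathbbm{1}_C, 0)$. The vector $x^V = 0 \in \Phi$ gives $\alpha = 0$. Next we use the characteristic vectors $x^U$ for $U \supseteq C$; these all lie in $\Phi$, and they are exactly the vectors from which the technical lemma of \Cref{section:lemma-ie} builds basis vectors.
\begin{enumerate}[label=(\alph*)]
	\item Comparing $U$ and $U \setminus \{v\}$ for $v \notin C$, starting from $U = V$ and proceeding by inclusion--exclusion over interactions, we show $a_v = 0$ and $a_g = 0$ for every interaction $g \neq f$ with an endpoint outside $C$. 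This uses conditions \ref{item:connector-fd-3}--\ref{item:connector-fd-5}: they guarantee that a vertex $v \notin C$ is never forced to be tied to the removal of a $C$-vertex.
	\item For each $c \in C$, take $U = V \setminus (\{c\} \cup B_c)$, where $B_c$ is the set of $c$-bypasses. This vector lies in $\Phi$ and relates $a_c$ to $-a_f$ plus the coefficients of interactions separated in the process. By (a) and condition \ref{item:connector-fd-2}, those extra coefficients vanish, so $a_c = -a_f$ for every $c \in C$.
\end{enumerate}
Together these yield $a = -a_f(-e_f + \mathbbm{1}_C)$.

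The main obstacle is step (a) for interactions $uw$ with $w \in C$ and $u \notin C$. To isolate $a_{uw}$ we must find sets $U$ in which the $C$-vertex $w$ is deleted, $f$ stays separated, and the $C$-vertices other than $w$ remain, while $u$ is connected to $w'$ in one configuration and not in another. Producing such pairs is exactly where \ref{item:connector-fd-4} and \ref{item:connector-fd-5} enter: their failure is precisely the situation in which every route from $u$ to $C$ passes through bypasses that must be deleted. We would organize this step via the pseudoforest lemma of \Cref{section:pseudoforest}. We regard interactions $uw, uw'$ as linked by a ``difference'' relation, and show that the resulting graph is acyclic, or has one cycle of odd type, so that the equalities propagate to force $a_{uw} = 0$.
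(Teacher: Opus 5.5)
Your necessity part follows the paper and is essentially fine. For \ref{item:connector-fd-4} and \ref{item:connector-fd-5}, however, you only treat the case where the deleted $C$-vertex is $w$ or $w'$. In the remaining case you still need that the relevant bypass $v\in U$ (namely $w$ itself for \ref{item:connector-fd-4}, a vertex of the separator for \ref{item:connector-fd-5}) cannot be a bypass of the deleted vertex, and hence reaches $w'$ inside $G[U]$.

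The sufficiency part has a genuine gap. Step (a) only uses sets $U\supseteq C$. On every such $U$ one has $x^U_v=x^U_{vw}$ for $v\in N_G(C)$, $w\in C$, and $x^U_{uw}=x^U_{uw'}$ for $w,w'\in C$, because $G[C]$ is connected and contained in $U$. The same holds for $uw$ with $w$ a $\{\{u\},C\}$-cut-vertex in $N_G(C)$ and $w'\in C$. Hence $a=\mathbbm{1}_{\{v\}}-\mathbbm{1}_{\{vw\}}$ and $a=\mathbbm{1}_{\{uw\}}-\mathbbm{1}_{\{uw'\}}$ satisfy $a\cdot x^U=0$ for all these $U$. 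No argument confined to them can give $a_v=0$ for $v\in N_G(C)$ or $a_{uw}=0$ for $w\in C$. These are exactly the equalities from your necessity part, and ruling them out is where \ref{item:connector-fd-3}--\ref{item:connector-fd-5} must be used. Your last paragraph only offers a heuristic (``acyclic or one odd cycle'') with no construction and no proof, so sufficiency is not established. Your remark that the sets $U\supseteq C$ are the ones from which \Cref{lemma:ie} builds the basis is also not right.

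Step (b) is wrong as stated: $V\setminus(\{c\}\cup B_c)$ need not separate $f$. If $C=\{c_0,c_1,c_2\}$ and $G$ contains another path $c_0,y_1,y_2,c_2$, there are no $c_1$-bypasses, yet $V\setminus\{c_1\}$ connects $f$, so that vector is not in $\Phi$. This is easily repaired: $x^{C}$ and $x^{C\setminus\{c\}}$ lie in $\Phi$ and, by \ref{item:connector-fd-2}, differ exactly in the coordinates $c$ and $f$, giving $a_c=-a_f$ directly.

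The missing idea is a controlled family of sets with one $C$-vertex deleted. Write $\mathrm{BP}_v$ for the set consisting of $v$ and all $c\in C$ such that $v$ is a $c$-bypass. Take $u\notin C$ and a neighbor $v$ of $C$ on a minimal $\{\{u\},N_G[C]\}$-connector $\tilde C$ with $\tilde C\cap N_G(C)=\{v\}$ (allowing $u=v$). The paper puts $U=C\cup\tilde C$, $A=U\setminus\{u\}$ and $B(t)=U\setminus\{t\}$ for $t\in C\setminus\mathrm{BP}_v$. These sets, their unions and their intersections lie in the face precisely because $v$ is the only vertex of $U\setminus C$ adjacent to $C$ and is not a $t$-bypass. \Cref{lemma:ie} then gives the indicator vector of $\{uv'\in F \mid v'\in C(v,t)\}$. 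Here $C(v,t)$ is the set of vertices of $C$ separated from $v$ by $t$ in $G[C\cup\{v\}]$, i.e.\ a prefix or suffix of the path $G[C]$.

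Consecutive differences of these vectors isolate every $uw$ with $w\in C\setminus\mathrm{BP}_v$. Comparing $A$ with $U\setminus\{v\}$ (or $C$ with $C\cup\{v\}$ if $u=v$) then leaves only the block $\{uw\in F\mid w\in\mathrm{BP}_v\}$, with $uv$ read as $v$ when $u=v$. The conditions handle this block as follows:
\begin{itemize}
\item Condition \ref{item:connector-fd-3} reduces the block to $\{v\}$ when $u=v$.
\item Condition \ref{item:connector-fd-4} reduces it to $\{uv\}$ when $v$ is a $\{\{u\},C\}$-cut-vertex.
\item If $w\in C$ lies in $\mathrm{BP}_v$ for every admissible $v$, conditions \ref{item:connector-fd-4} and \ref{item:connector-fd-5} guarantee that every other member of the block is already isolated. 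This happens either through some admissible $v'$ with $w'\notin\mathrm{BP}_{v'}$, or as an interaction with both ends outside $C$. Then $uw$ follows by subtraction.
\end{itemize}
The relevant structure is intervals along the path $G[C]$; the paper uses the pseudoforest lemma only for separator inequalities. In your dual formulation, each of these vectors $y$ yields an equation $a\cdot y=0$, which is exactly what step (a) is missing.
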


\begin{proof}
	See \Cref{section:proof:thm:connector}.
\end{proof}

Non-trivial examples of connectors whose associated inequalities fail to define facets of $\polytope$ are depicted in \Cref{fig:connector-conditions}.
Moreover, Condition~\ref{item:connector-fd-2} of \Cref{thm:connector-fd} implies the following corollary.

\begin{corollary}
Consider the special case in which every edge of $G$ is an interaction in $F$, i.e.~$E \subseteq F$.
Now, for any interaction $f \in F$ and any minimal $f$-connector $C$ of $G$ containing more than two vertices, the corresponding connector inequality is not facet-defining. \end{corollary}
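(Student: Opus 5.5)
The plan is to apply \Cref{thm:connector-fd} directly and show that Condition~\ref{item:connector-fd-2} fails. Let $f = pq \in F$ and let $C$ be a minimal $f$-connector of $G$ with $\lvert C \rvert > 2$. If $C$ is not minimal, Condition~\ref{item:connector-fd-1} already fails; the statement assumes minimality anyway.

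By \Cref{cor:connector-separator-minimal}~\ref{item:cor-connector-minimal}, $G[C]$ is an induced $\{p,q\}$-path $p = c_0, c_1, \dots, c_n = q$. Since $\lvert C \rvert > 2$, we have $n \geq 2$.

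Next, consider the edge $\{c_0, c_1\} \in E$. Because $E \subseteq F$, it is an interaction. It is contained in $C$, and it differs from $f$: if $\{c_0, c_1\} = \{p, q\}$, then $c_1 = q = c_n$, which forces $n = 1$ and contradicts $n \geq 2$. Thus $f' = \{c_0, c_1\} \in F \setminus \{f\}$ satisfies $f' \subseteq C$, so Condition~\ref{item:connector-fd-2} of \Cref{thm:connector-fd} is violated. By that theorem, the connector inequality is not facet-defining.

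This is also the situation of \Cref{fig:connector-fd-b}, where $x_{pq} \leq x_{pc_1} + \sum_{v \in C \setminus \{p, c_1\}} x_v$ is valid and dominates the original inequality. This shows the failure of the facet property without relying on the theorem.

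The only point requiring care is the distinctness $f' \neq f$, which is exactly where $\lvert C \rvert > 2$ enters. Connectedness of $G$ is needed only to invoke \Cref{thm:connector-fd}. Otherwise the argument is immediate.
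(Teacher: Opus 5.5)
Your proof is correct and is the paper's own argument: the corollary follows because the first edge $c_0c_1$ of the induced path $G[C]$ is an interaction in $F\setminus\{f\}$ contained in $C$, which violates Condition~\ref{item:connector-fd-2} of \Cref{thm:connector-fd}, and $\lvert C\rvert>2$ is exactly what guarantees $c_0c_1\neq f$. One wording fix in your side remark: $x_{pq}\le x_{pc_1}+\sum_{v\in C\setminus\{p,c_1\}}x_v$ does not dominate the connector inequality on its own; the connector inequality is the sum of it and the valid inequality $x_{pc_1}\le x_p+x_{c_1}$, which is how the paper's necessity proof argues.
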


\subsection{Facets induced by separator inequalities}
\label{section:facets-separator}

\begin{figure}
	\centering
	\input{figures/separator/def.tex}
	\caption{Depicted above are a graph $G$ (solid black), a designated interaction $f=\{p,q\}$ (dashed red), and other interactions $vs_1, vp, vq, us_2, uw, s_1s_2, qw$ (dashed green).
	The set $S=\{s_1, s_2\}$ is a minimal $f$-separator of $G$.
	All interactions except $qw$ are in $F(S)$.
	The subgraph $G(S, s_1)$ is induced by $\{p, u, s_1, v, w, q\}$.
	The subgraph $G(S, s_2)$ is induced by $\{p, u, s_2, w, q\}$.
	Moreover, $S(f) = S(uw) = S$, $S(vs_1) = S(vp) = S(vq) = \{s_1\}$, $S(us_2) = \{s_2\}$, and $S(s_1s_2) = \emptyset$.
	}
	\label{fig:separator-def}
\end{figure}

In this section, we characterize by graph properties those separator inequalities~\eqref{eq:separator} that define a facet of a multi-separator polytope.
In order to capture important features of the set of those feasible vectors that satisfy any given separator inequality at equality, we first introduce some additional concepts and notation.
Since the separator inequality associated with a non-minimal separator cannot be facet-defining (by \Cref{lemma:ilp}), we present the following definitions only for minimal separators, once again avoiding complications.

\begin{definition}\label{def:separator}
	Let $G = (V, E)$ be a connected graph, let $F \subseteq \tbinom{V}{2}$, let $f \in F$ and let $S \subseteq V$ be a minimal $f$-separator of $G$.
	\begin{enumerate}[label=\textnormal{(\roman*)}]
		\item For any vertex $s \in S$, let $G(S, s)$ denote the connected component of the subgraph of $G$ induced by $V \setminus (S\setminus \{s\})$ that contains $s$.
		\item For any interaction $f^{\prime} \in F(S)$, let $S(f^{\prime})$ denote the set of those vertices $s$ in $S$ for which $f^\prime$ is connected in $G(S, s)$.
		\item Call a vertex $v \in V$ \emph{degenerate} (with respect to $G$,~$F$,~$f$ and~$S$) if $v \in S$ or each $s \in S$ has at least one of the following three properties: 1.~$v \not\in G(S, s)$; 2.~$v \in G(S, s)$ and $\{v, s\} \in F$, 3.~$v \in G(S, s)$ and $v$ is an $f$-cut-vertex of $G(S, s)$.
		Otherwise, call $v$ \emph{non-degenerate} (with respect to $G$,~$F$,~$f$ and~$S$).
		\item We define $\HH(f, S)$ as the graph $(V \setminus S, F^{\prime})$, where $F'$ is the set of those interactions $f' \in F(S) \setminus \{f\}$ such that $f^{\prime} \cap S = \emptyset$ and for each $s \in S(f^{\prime})$, the interaction $f^{\prime}$ contains at least one $f$-cut-vertex of $G(S, s)$.
	\end{enumerate}
\end{definition}

\begin{remark}\label{remark:seprator-def-P}
	For a fixed minimal $f$-separator $S$ of $G$ as in \Cref{def:separator}, we can list all the elements of $G(S, s)$, $S(f^\prime)$ and $\HH(f, S)$ in polynomial time.
	Indeed, such a problem boils down to checking whether two vertices are connected in some induced subgraph of $G$.
	As a consequence, deciding whether a vertex is degenerate can also be done in polynomial time.
\end{remark}

Examples of $G(S, s)$ and $S(f^\prime)$ are shown in \Cref{fig:separator-def}.
Examples illustrating degenerate vertices and $\HH(f, S)$ are given in \Cref{fig:separator-not-facet}.
Note that an interaction that contains an $f$-cut-vertex of $G(S, s)$ for each $s \in S$ is not necessarily a minimal $f$-separator of $G$.

\begin{remark}\label{remark:separator-feasible-set}
	The main motivation for introducing $G(S, s)$ and $S(f^\prime)$ comes from the following fact: 
	For any vertex subset $U$ of a graph $G$, the feasible vector $x^{U}$ is in the face $\Sigma$ defined by the separator inequality associated with an interaction $f$ and an $f$-separator $S$ of $G$ if and only if either $U \cap S$ is empty, or $U \cap S$ is a singleton and $U$ is an $f$-connector of $G$.
	For an informal discussion, we call a vertex whose variable assumes the value 0 (respectively 1) \emph{unblocked} (respectively \emph{blocked}):
	For every feasible solution in $\Sigma$, the $f$-separator $S$ is either blocked, or unblocked at exactly one vertex such that $f$ is connected by an unblocked path.
	Thus, in order to stay in $\Sigma$, when a vertex $s$ in $S$ is unblocked, the vertices of $G(S, s)$ cannot be blocked or unblocked freely. 
	Instead, we need to maintain the existence of an unblocked path connecting $f$.
	On the one hand, this fact prevents us from constructing a basis of the associated linear space of $\Sigma$ in a straightforward way, like in the proof of \Cref{prop:dimension}.
	On the other hand, interactions connected in components of $G[V \setminus (S \setminus \{s\})]$ other than $G(S, s)$ are necessarily not in $F(S)$ and can thus be handled analogously to the proof of \Cref{prop:dimension}, as will be shown in detail below.
\end{remark}

\begin{remark}\label{remark:separator-inequality-minimal}
	As the $f$-separator $S$ is assumed to be minimal in \Cref{def:separator}, every $s \in S$ is such that $f$ is connected in $G(S, s)$, by \ref{item:cor-separator-minimal} of \Cref{cor:connector-separator-minimal}.
\end{remark}

\begin{remark}
	With the notation of \Cref{def:separator}, every $f$-cut-vertex of $G$ is degenerate.
\end{remark}

\begin{theorem}\label{thm:separator-fd}
	Let $G = (V, E)$ be a connected graph, let $F \subseteq \tbinom{V}{2}$, let $f \in F$ and let $S$ be an $f$-separator of $G$.
	The associated separator inequality
	\begin{equation}\label{eq:separator-fd}
		1 - x_{f} \leq \sum_{s \in S} (1 - x_{s})
	\end{equation}
	is facet-defining for the multi-separator polytope $\polytope$ if and only if all the following conditions are satisfied:
	\begin{enumerate}[label=\textnormal{(\roman*)}]
		\item \label{item:separator-fd-1}
		$S$ is a minimal $f$-separator of $G$.
		\item \label{item:separator-fd-2}
		For every interaction $f^{\prime} \in F(S) \setminus \{f\}$, there exists an $s \in S(f^{\prime})$ such that $f^{\prime}$ is not a pair of $f$-cut-vertices of $G(S, s)$.
		\item \label{item:separator-fd-3}
		If $v$ is a non-isolated vertex of $\HH(f, S)$ such that $S$ is an $\{f, \{v\}\}$-separator of $G$, then $v$ is a non-degenerate leaf\;\footnote{A \emph{leaf} of a graph is a vertex of degree $1$ in this graph.} of $\HH(f, S)$.
		\item \label{item:separator-fd-4}
		$\HH(f, S)$ is a forest in which every connected component contains at most one degenerate vertex.
	\end{enumerate}
\end{theorem}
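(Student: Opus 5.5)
Since $\polytope$ is full-dimensional by \Cref{prop:dimension}, I will work with the face $\Sigma$ of $\polytope$ defined by \eqref{eq:separator-fd}. By \Cref{remark:separator-feasible-set}, a feasible vector $x^U$ lies in $\Sigma$ exactly when either $U\cap S=\emptyset$, or $U\cap S=\{s\}$ and $U$ is an $f$-connector. In the second case the unblocked vertices relevant to $f$ lie in $G(S,s)$.

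The inequality is facet-defining if and only if every valid equation $ax=\alpha$ holding on $\Sigma$ is a multiple of \eqref{eq:separator-fd}. I will prove both directions through this standard lifting argument.

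\textbf{Necessity.} Condition~\ref{item:separator-fd-1} follows from \Cref{lemma:ilp}: a non-minimal separator yields an inequality that is the sum of the inequality for a minimal subseparator and box inequalities $x_v\le 1$, so it is not facet-defining.

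For the remaining conditions I will exhibit, in each failure case, an equation valid on $\Sigma$ but not implied by \eqref{eq:separator-fd}.
\begin{itemize}
\item If condition~\ref{item:separator-fd-2} fails for $f'=uw$, take any $x^U\in\Sigma$ with $U\cap S=\{s\}$ for some $s\in S(f')$. Since $u,w$ are $f$-cut-vertices of $G(S,s)$, both lie in $U$, so $f'$ is connected and $x_{f'}=0=x_u=x_w$. If instead $U\cap S=\emptyset$, then $f'\in F(S)$ gives $x_{f'}=1$. Hence $x_{f'}=\sum_{s\in S}(1-x_s)$ on $\Sigma$ (more exactly, the appropriate combination with $x_u$), an equation independent of \eqref{eq:separator-fd}.
\item If $\HH(f,S)$ contains a cycle, or a component with two degenerate vertices, I will build an alternating combination of $x_{f'}-x_u-x_w$ terms along the cycle or along the path between the degenerate vertices. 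Each edge $f'$ of $\HH(f,S)$ satisfies $x_{f'}=\max(x_u,x_w)$-type relations on $\Sigma$, and the combination telescopes to zero on $\Sigma$. Degenerate vertices $v$ are exactly those where $x_v$ is determined by the remaining variables on $\Sigma$: either $v\in S$, or $x_v$ equals $x_{vs}$ where relevant, or $v$ is forced unblocked.
\item Condition~\ref{item:separator-fd-3} is handled the same way, using the fact that $S$ separates $v$ from $f$.
\end{itemize}

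\textbf{Sufficiency.} Assume all four conditions and let $ax=\alpha$ hold on $\Sigma$. I will show $a$ is proportional to the coefficient vector of \eqref{eq:separator-fd} by comparing pairs of points of $\Sigma$ that differ in few coordinates, using the technical lemma of \Cref{section:lemma-ie}.
\begin{enumerate}
\item Using points with $U\cap S=\emptyset$, toggle single vertices outside $S$ and interactions outside $F(S)$, as in the proof of \Cref{prop:dimension}. This shows $a_g=0$ for interactions not in $F(S)$ and relates the vertex coefficients.
\item Using $U=V\setminus S$ versus $U=(V\setminus S)\cup\{s\}$ (which connects $f$), show that $a_s=-a_f$ for every $s\in S$.
\item For interactions $f'\in F(S)$ and vertices $v\notin S$, compare points where $s$ is unblocked and a minimal path for $f$ in $G(S,s)$ is kept unblocked. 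Condition~\ref{item:separator-fd-2} supplies an $s$ for which $f'$ can be toggled without blocking that path, giving $a_{f'}=0$ except for edges of $\HH(f,S)$. For those edges one gets linear relations of the form $a_{f'}+a_u=0$ or $a_{f'}+a_w=0$, and non-degenerate vertices get $a_v=0$ directly.
\item Finally, use the pseudoforest lemma of \Cref{section:pseudoforest}. Because $\HH(f,S)$ is a forest with at most one degenerate (undetermined) vertex per component, and leaves as in condition~\ref{item:separator-fd-3} can be pinned, the linear system on the tree edges has only the zero solution. This forces the remaining coefficients to vanish.
\end{enumerate}

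\textbf{Main obstacle.} The hard part is step~3 of sufficiency, together with its converse in necessity. One must construct, for each edge $f'$ of $\HH(f,S)$ and each non-degenerate vertex, feasible vectors in $\Sigma$ that isolate a single coefficient while preserving an unblocked $f$-path in $G(S,s)$. One must also verify that the obstructions are exactly cycles and multiple degenerate vertices in a component. I would first reduce everything to a system of equations indexed by the edges and vertices of $\HH(f,S)$. I would then check that its solution space is trivial precisely under conditions~\ref{item:separator-fd-3} and~\ref{item:separator-fd-4}, via the pseudoforest lemma.
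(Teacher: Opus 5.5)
\textbf{There is a genuine gap.} Your overall frame matches the paper's: characterize the tight points, determine all linear relations they satisfy, and finish with \Cref{lemma:pseudoforest}. But the relations you posit in sufficiency step~3 and in the necessity bullets do not hold on $\Sigma$, and that is where the content of the theorem lies.

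\textbf{Why the proposed relations fail.}
Your own step~1 already gives $a_v=0$ for \emph{every} $v\notin S$, because $x^{\emptyset}$ and $x^{\{v\}}$ are both in $\Sigma$ and differ only in the coordinate $v$. Several consequences follow.
\begin{itemize}
\item No equation valid on $\Sigma$ can involve $x_v$ for $v\notin S$. So ``degenerate'' cannot mean ``$x_v$ is determined on $\Sigma$''.
\item Relations $a_{f'}+a_u=0$ would force $a_{f'}=0$ on every edge of $\HH(f,S)$. Your argument would then prove facetness from (i) and (ii) alone, contradicting \Cref{fig:separator-not-facet}.
\item $x_{f'}=\max(x_u,x_w)$ is false on $\Sigma$: for an edge $uw$ of $\HH(f,S)$, the point $x^{\{u,w\}}$ lies in $\Sigma$ with $x_u=x_w=0$ but $x_{uw}=1$.
\item An alternating sum of $x_{f'}-x_u-x_w$ along a path between degenerate vertices leaves uncancelled endpoint terms $x_{v_0},x_{v_k}$, so it is not valid on $\Sigma$.
\item Step~2 is premature. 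Unblocking $s$ from $V\setminus S$ connects every $f'\in F(S)$ with $s\in S(f')$, so it only yields $a_s+\sum_{f'\in F(S),\,s\in S(f')}a_{f'}=0$, not $a_s=-a_f$.
\item For necessity of (ii), the correct equation is $1-x_{f'}=\sum_{s\in S(f')}(1-x_s)$. You must also include the tight points with $U\cap S=\{s\}$ and $s\notin S(f')$, where $x_{f'}=1$.
\end{itemize}

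\textbf{The missing mechanism.}
What couples coordinates is the following. If $U\cap S=\{s\}$ and $x^U\in\Sigma$, the unblocked $f$-path lies in $G(S,s)$, so it contains every $f$-cut-vertex of $G(S,s)$. Hence, on all tight points with $U\cap S\subseteq\{s\}$:
\begin{itemize}
\item for fixed $u$, all $uw'\in F(S)$ with $w'$ an $f$-cut-vertex of $G(S,s)$ take a common value;
\item $s$, $f$ and every $f''\in F(S)$ with both ends among these cut-vertices take a common value.
\end{itemize}
These are the paper's classes $[f']_s$. The dual constraints you must analyze are $\sum_{g\in[f']_s}a_g=0$, one per class and per $s$, not vertex--edge relations.

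You also need the converse: each class vector is realized by differences of tight points via \Cref{lemma:ie}. This is nontrivial when an endpoint of $f'$ is a cut-vertex.

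Only after merging, at each non-degenerate vertex $v$, its classes into one set $\langle v\rangle$ do you get a genuine incidence system:
\begin{itemize}
\item Edges are interactions; rows are the non-degenerate vertices not separated from $f$ by $S$.
\item Degenerate vertices impose no constraint and act as loops. At each $s$, either their class belongs to the other endpoint, or it contains the private coordinate $vs$.
\item Vertices separated from $f$ by $S$ contribute separate blocks; this is where (iii) comes from.
\item The graph is bipartite between the $p$-side and the $q$-side, which excludes odd cycles other than loops, so \Cref{lemma:pseudoforest} yields (iv).
\end{itemize}
Your step~4 intuition, that degenerate vertices are the unconstrained ones, is right. But without this class structure you have not identified the system to which the lemma applies. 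Sufficiency and necessity of (iii) and (iv) therefore remain unproved.
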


\begin{proof}
	See \Cref{section:proof:thm:separator-fd}.
\end{proof}

The following result says that one can efficiently certify in polynomial time that a given separator inequality is facet-defining.
In contrast, for a cut inequality of the closely related lifted multicut problem, such a decision problem is NP-hard \cite{naumannbox}.

\begin{lemma}\label{lemma:separator-polynomial-time}
	Conditions~\ref{item:separator-fd-1}~--~\ref{item:separator-fd-4} in \Cref{thm:separator-fd} can be checked in polynomial time.
	In particular, we can efficiently decide whether a separator inequality defines a facet.
\end{lemma}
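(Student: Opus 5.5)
The plan is to check each of the four conditions of \Cref{thm:separator-fd} directly, using only polynomially many connectivity queries in induced subgraphs of $G$, as suggested by \Cref{remark:seprator-def-P}. Throughout, let $f = \{p,q\}$ and let $S$ be the given set. Every primitive operation reduces to deciding whether two vertices (or two vertex sets) are connected in $G[U]$ for some $U \subseteq V$, which takes linear time by breadth-first search.

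\textbf{Condition~\ref{item:separator-fd-1}.} First I verify that $S$ is an $f$-separator, i.e.\ $\{p,q\} \cap S \neq \emptyset$ or $p,q$ lie in different components of $G[V\setminus S]$. For minimality it suffices to check single-vertex removals, because the separator property is monotone under taking supersets. So $S$ is minimal if and only if $S \setminus \{s\}$ fails to be an $f$-separator for every $s \in S$. This costs $|S|+1$ connectivity tests.

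\textbf{Precomputation.} Next I compute, for every $s \in S$, the component $G(S,s)$ by one search in $G[V \setminus (S\setminus\{s\})]$. For each $s$ I also compute the set of $f$-cut-vertices of $G(S,s)$: a vertex $w$ qualifies if and only if $\{w\}$ is a minimal $f$-separator of $G(S,s)$. This holds if $w \in \{p,q\}$, or if $p,q$ are disconnected in $G(S,s) - w$ (they are connected in $G(S,s)$ by \Cref{remark:separator-inequality-minimal}). That is $O(|V|)$ searches per $s$. From these data:
\begin{itemize}
\item $F(S)$ follows from checking each $f'$ against $G[V\setminus S]$;
\item each $S(f')$ follows from checking whether both endpoints of $f'$ lie in $G(S,s)$, since connected in $G(S,s)$ just means both endpoints belong to this connected graph;
\item degeneracy of each $v$ is read off directly from the three listed properties.
\end{itemize}

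\textbf{Conditions~\ref{item:separator-fd-2} and~\ref{item:separator-fd-4}.} Condition~\ref{item:separator-fd-2} is a double loop over $f' \in F(S)\setminus\{f\}$ and $s \in S(f')$, using the cut-vertex tables. I then build $\HH(f,S)$ edge by edge from its definition in \Cref{def:separator}. Checking that it is a forest (the number of edges equals the number of vertices minus the number of components) and counting degenerate vertices per component is linear in its size.

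\textbf{Condition~\ref{item:separator-fd-3}.} For each non-isolated vertex $v$ of $\HH(f,S)$, I test whether $S$ is an $\{f,\{v\}\}$-separator. Since $v \notin S$, this amounts to checking whether $v$ is connected to $p$ or to $q$ in $G[V\setminus S]$, with the convention that a pair meeting $S$ is already separated. If $S$ is such a separator, I check that $v$ has degree one in $\HH(f,S)$ and is non-degenerate.

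All steps together use $O(|V|\,|S| + |F|\,|S|)$ searches, so the total running time is polynomial. The second statement then follows from \Cref{thm:separator-fd}. The only point that needs care is the reduction of minimality in Condition~\ref{item:separator-fd-1} to single-vertex deletions, and that is immediate from monotonicity. I expect no real obstacle here.
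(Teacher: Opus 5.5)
Your proposal is correct and follows the paper's own proof. The paper likewise checks minimality in Condition~\ref{item:separator-fd-1} by single-vertex deletions, and handles the remaining conditions by polynomially many connectivity queries in induced subgraphs, citing \Cref{remark:seprator-def-P}; you spell out the details the paper leaves implicit.
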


\begin{proof}
	See \Cref{section:proof:lemma:separator-polynomial-time}.
\end{proof}

As $f$-cut-vertices are a specific type of minimal $f$-separators, we can obtain the following result from the proof of \Cref{thm:separator-fd}.

\begin{corollary}\label{cor:ineq-cut-vertex}
	Let $G = (V, E)$ be a connected graph, let $F \subseteq \tbinom{V}{2}$, let $f \in F$, and let $g \in V \cup F$ be an $f$-cut-vertex or a pair of $f$-cut-vertices of $G$.
	Then the inequality $x_{g} \leq x_f$ is valid.

	Moreover, it is facet-defining for $\polytope$ if and only if the following two conditions are satisfied:
	\begin{enumerate}[label=\textnormal{(\roman*)}, leftmargin=2.5em]
		\item \label{item:cor-ineq-cut-vertex-i}
		There is no interaction $f^\prime \in F \setminus \{f, g\}$ such that $f'$ is a pair of $f$-cut-vertices of $G$, and $g$ is an $f^\prime$-cut-vertex or a pair of $f^\prime$-cut-vertices of $G$.
		\item \label{item:cor-ineq-cut-vertex-ii}
		There do not exist two adjacent interactions $f^\prime, f^{\prime\prime} \in F$ such that all of the following conditions are satisfied:
		\begin{enumerate}[label=\textnormal{(\alph*)}, leftmargin=2.2em]
			\item The vertex in $f^{\prime} \cap f^{\prime\prime}$ is not an $f$-cut-vertex of $G$;
			\item $f^{\prime} \triangle f^{\prime\prime}$ is a pair of $f$-cut-vertices of $G$;
			\item $g$ is a cut-vertex or a pair of cut-vertices of both $f^{\prime}$ and $f^{\prime\prime}$ of $G$.
		\end{enumerate}
	\end{enumerate}
\end{corollary}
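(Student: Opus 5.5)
Validity comes first. If $g=v$ is an $f$-cut-vertex, then $\{v\}$ is a minimal $f$-separator. The separator inequality~\eqref{eq:separator} for $S=\{v\}$ reads $1-x_f\le 1-x_v$, i.e.\ $x_v\le x_f$. If $g=\{u,v\}$ is a pair of $f$-cut-vertices, then for every feasible $x^U$ we have $x^U_f=0$ only if $U$ is an $f$-connector. Every $f$-connector contains both $u$ and $v$. Since $u$ and $v$ are $f$-cut-vertices, the path inside $G[U]$ witnessing connectivity of $f$ passes through $u$ and $v$, and its subpath between them shows that $U$ is a $g$-connector, so $x^U_g=0$. Hence $x_g\le x_f$ holds on $\feasible$ and thus on $\polytope$.

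For the facet characterization, I would treat $x_g\le x_f$ as a special case of the separator inequality with $S=\{s\}$ a singleton, where $s$ is an $f$-cut-vertex (the case $g=s$). For $g$ a pair $\{u,v\}$ I would argue analogously, replacing the role of $S$ by the observation that the face is $\{x^U: U$ connects $f$, or $U$ does not connect $g\}$. For a singleton $S=\{s\}$ we have $G(S,s)=G$ and $S(f')=\{s\}$ for every $f'\in F(S)$. Conditions~\ref{item:separator-fd-2}--\ref{item:separator-fd-4} of \Cref{thm:separator-fd} then collapse into statements about $f$-cut-vertices of $G$. Condition~\ref{item:separator-fd-2} becomes: no $f'\in F\setminus\{f\}$ with $s$ separating $f'$ is a pair of $f$-cut-vertices. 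The graph $\HH(f,S)$ has as edges exactly the interactions $f'\not\ni s$ that are cut by $s$ and contain an $f$-cut-vertex. A vertex is degenerate iff it is an $f$-cut-vertex or its pair with $s$ is an interaction. The main work is to check that the conjunction of \ref{item:separator-fd-2}--\ref{item:separator-fd-4} in this specialization is equivalent to \ref{item:cor-ineq-cut-vertex-i} and \ref{item:cor-ineq-cut-vertex-ii}. Condition~\ref{item:cor-ineq-cut-vertex-i} is the translation of \ref{item:separator-fd-2}.

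The key steps, in order, are as follows.
\begin{enumerate}
\item Reproduce from the proof of \Cref{thm:separator-fd} the necessity arguments. For each violated condition, exhibit an equation valid on the face. For instance, if $f'$ is a pair of $f$-cut-vertices cut by $g$, then $x_{f'}$ is squeezed between $x_g$ and $x_f$, giving $x_{f'}=x_f$ on the face. If two adjacent $f',f''$ satisfy (a)--(c), the common endpoint $w$ yields an equation of the form $x_{f'}+x_{f''}=x_w+x_f$ or similar on the face, following \Cref{lemma:ie}-type inclusion--exclusion.
\item For sufficiency, construct $\lvert V\cup F\rvert$ affinely independent feasible vectors on the face. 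Start from vectors $x^U$ with $U$ a connector of $f$ built from minimal $f$-connectors plus single extra vertices. Use the technical lemma of \Cref{section:lemma-ie} to obtain unit-type directions for each vertex and for each interaction not in $\HH$.
\item Handle the interactions forming $\HH(f,S)$ with the pseudoforest lemma of \Cref{section:pseudoforest}. Its hypothesis (a forest with at most one degenerate vertex per component) is where the adjacency conditions in \ref{item:cor-ineq-cut-vertex-ii} enter. A cycle, or a path joining two degenerate vertices, in $\HH$ corresponds exactly to a pair of adjacent interactions $f',f''$ with (a)--(c): their symmetric difference consists of $f$-cut-vertices and their common vertex is not.
\end{enumerate}

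I expect the main obstacle to be this last translation, namely showing that in the singleton/cut-vertex setting every obstruction to condition~\ref{item:separator-fd-3} or~\ref{item:separator-fd-4} reduces to a length-two configuration $f',f''$. Because all relevant vertices lie on the linear order of $f$-cut-vertices along every $f$-path, $\HH$ restricted to $f$-cut-vertices is very structured. The plan is to use this order to show that longer cycles force a two-edge witness. The same ordering argument covers the pair case $g=\{u,v\}$, with $u,v$ playing the role of $s$ at both ends.
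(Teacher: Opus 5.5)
Your route is the one the paper indicates: read $x_s\le x_f$ as the separator inequality of the singleton separator $S=\{s\}$, specialise \Cref{thm:separator-fd}, and rerun its proof when $g$ is a pair. Your validity argument, your description of the face, and your translation of condition~\ref{item:separator-fd-2} of the theorem into~\ref{item:cor-ineq-cut-vertex-i} are correct. For a vertex $g$ the theorem already gives both directions, so your steps 1--3 matter only for the pair case. Three points need repair.

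\emph{The necessity equation for~\ref{item:cor-ineq-cut-vertex-ii} is wrong.} Take $f'=wa$, $f''=wb$ as in (a)--(c). The equation $x_{f'}+x_{f''}=x_w+x_f$ fails on the face. Since $w$ is not an $f$-cut-vertex, there is an $f$-connector $U$ avoiding $w$, and $x^U$ gives $2$ on the left and $1$ on the right. The valid equation is $x_{f'}=x_{f''}$. If $x_f=0$, the $f$-cut-vertices $a,b$ lie in the component of $G[U]$ containing $f$, so $w$ is connected to $a$ in $G[U]$ iff it is connected to $b$. If $x_f=1$, then $x_g=1$ on the face, and (c) gives $x_{f'}=x_{f''}=1$. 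By (a), $f',f''\notin\{f,g\}$, so this equation is independent of $x_g=x_f$. No inclusion--exclusion is involved.

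\emph{Your ``main obstacle'' is not one.} Under~\ref{item:cor-ineq-cut-vertex-i}, every edge of $\HH(f,\{s\})$ joins an $f$-cut-vertex to a vertex $w$ that is not an $f$-cut-vertex, and such a $w$ is degenerate iff $ws\in F$. Hence every violation of~\ref{item:separator-fd-3} or~\ref{item:separator-fd-4} contains either an edge $wa$ with $ws\in F$ or a vertex $w$ with two neighbours $a,b$. That is exactly a configuration of~\ref{item:cor-ineq-cut-vertex-ii} with $f'\triangle f''=\{a,s\}$ or $\{a,b\}$, and conversely. No ordering of cut-vertices is needed.

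Even more directly: since $S(f')=\{s\}$ for all $f'\in F(S)$, the classes $[f']_s$ partition $\{s\}\cup F(S)$. So \eqref{eq:separator-decomposition-2} gives codimension $\sum(\lvert[f']_s\rvert-1)$ over the distinct classes. The face is therefore a facet iff two things hold:
\begin{itemize}
\item $[f]_s=\{s,f\}$, which is~\ref{item:cor-ineq-cut-vertex-i};
\item for every $w$ that is not an $f$-cut-vertex, the class $\{wa\in F(S)\mid a\text{ an }f\text{-cut-vertex}\}$ is a singleton, which is~\ref{item:cor-ineq-cut-vertex-ii}.
\end{itemize}
Neither $\HH$ nor \Cref{lemma:pseudoforest} is needed.

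\emph{The pair case is where the work is.} For $g=\{u,v\}$ the inequality is not a separator inequality. So $F(S)$, $\HH$ and degeneracy are undefined, and the theorem cannot be quoted. Saying ``the same argument with $u,v$ playing the role of $s$'' does not tell you what to do. What works is the following.

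The face is $Z_*\cup Z_f$, where $Z_*=\{x^U\mid U\text{ not a }g\text{-connector}\}$ and $Z_f=\{x^U\mid U\text{ an }f\text{-connector}\}$. Let $F_g:=\{f'\in F\mid g\text{ is a pair of }f'\text{-cut-vertices}\}$. \Cref{lemma:ie} on $Z_*$ yields $\mathbbm{1}_{\{w\}}$ for all $w\in V$ and $\mathbbm{1}_{\{f'\}}$ for all $f'\notin F_g$, while every coordinate in $F_g$ equals $1$ on $Z_*$. Thus $F_g$ plays the role of $\{s\}\cup F(S)$, and the interaction coordinate $x_g$, not any vertex coordinate, plays the role of $x_s$.

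Let $C$ be the set of $f$-cut-vertices. On $Z_f$ all of $C$ lies in the component of $f$. This ties together the class $\{f''\in F_g\mid f''\subseteq C\}$, which contains $f$ and $g$, and, for each $w\notin C$, the class $\{wa\in F_g\mid a\in C\}$. Rerun \Cref{claim:separator-equivalence} with deleting an endvertex of $g$ from an $f$-connector in place of deleting $s$. This puts the characteristic vectors of these classes in the linear space of the face. The same count then yields~\ref{item:cor-ineq-cut-vertex-i} and~\ref{item:cor-ineq-cut-vertex-ii}.

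(The statement tacitly requires $g\neq f$; otherwise the inequality is trivial.)
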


See \Cref{fig:separator-cor} for examples illustrating the conditions of \Cref{cor:ineq-cut-vertex}.

\begin{figure}
	\centering
	\begin{subfigure}[b]{0.45\textwidth}
		\centering
		\input{figures/separator/separator-cor-1.tex}
		\caption{$x_f = x_{f^{\prime}}$}
		\label{fig:separator-cor-1}
	\end{subfigure}
	\hspace{3ex}
	\begin{subfigure}[b]{0.45\textwidth}
		\centering
		\input{figures/separator/separator-cor-2.tex}
		\caption{$x_{f^{\prime}} = x_{f^{\prime\prime}}$}
		\label{fig:separator-cor-2}
	\end{subfigure}
	\caption{Examples illustrating the conditions of \Cref{cor:ineq-cut-vertex}.
	Each subfigure depicts a graph $G$ (solid black) together with the interactions (dashed) defining a valid inequality $x_g \leq x_f$. 
	Neither inequality is facet-defining: In \subref{fig:separator-cor-1}, condition~\ref{item:cor-ineq-cut-vertex-i} is violated by $f'$ (highlighted in red), while in \subref{fig:separator-cor-2}, condition~\ref{item:cor-ineq-cut-vertex-ii} is violated by $f'$ and $f''$ (highlighted in red).  
	For each subfigure, we additionally list an equality satisfied by all feasible vectors in the face induced by the corresponding inequality.
	}
	\label{fig:separator-cor}
\end{figure}

For a better understanding of the conditions of \Cref{thm:separator-fd}, and complementary to the proof, we make some remarks and give several examples (see \Cref{fig:separator-not-facet}) illustrating how they can be violated.

\begin{remark}
	In \Cref{thm:separator-fd}, Conditions~\ref{item:separator-fd-3} and \ref{item:separator-fd-4} are not independent:
	If there is an edge $\{u, v\}$ of $\HH(f, S)$ whose two endvertices are degenerate and $v$ is separated from $f$ by $S$ in $G$, then both conditions are violated.
	See \Cref{fig:separator-not-facet}~\subref{fig:sep-a} for an example.
\end{remark}

\begin{remark}
	In the context of \Cref{thm:separator-fd}, assume Condition~\ref{item:separator-fd-1}.
	If Condition~\ref{item:separator-fd-2} does not hold for some $f^\prime \in F(S) \setminus \{f\}$, then $S(f^\prime)$ can only be
	\begin{enumerate}[label=\textnormal{(\Alph*)}]
		\item \label{item:separator-condition-ii-1}
		the empty set, in which case $S$ contains two disjoint $f^\prime$-separators of $G$;
		\item \label{item:separator-condition-ii-2}
		a singleton $\{s\}$ with $s \in f^\prime \cap S$, in which case the endvertex of $f^\prime$ that is not in $S$ must be an $f$-cut-vertex of $G(S, s)$;
		\item \label{item:separator-condition-ii-3}
		the whole separator $S$, in which case both endvertices of $f^\prime$ are $f$-cut-vertices of $G$.
	\end{enumerate}
	E.g., in \Cref{fig:separator-def}, the interactions $s_1s_2$,~$us_2$ and~$uw$ violate Condition~\ref{item:separator-fd-2}, since they satisfy the cases \ref{item:separator-condition-ii-1},~\ref{item:separator-condition-ii-2} and~\ref{item:separator-condition-ii-3} above, respectively.
	To see why only these three cases are possible, suppose $S(f^\prime)$ is neither a singleton consisting of an endvertex of $f$ nor the empty set.
	In other words, $f^\prime$ is disjoint from $S$ and it is a pair of $f$-cut-vertices of $G(S, s)$ for any vertex $s$ in the non-empty set $S(f^\prime)$.
	Because of \Cref{remark:separator-inequality-minimal}, this implies that $f^\prime$ is connected in $G(S, s)$ for every $s \in S$. 
	Hence, $S(f^\prime) = S$.
\end{remark}

We also give an alternative characterization of facet-defining separator inequalities for an interesting special case.

\begin{lemma}\label{lemma:separator-three-disjoint-paths}
	Under conditions~\ref{item:separator-fd-1},~\ref{item:separator-fd-2} and~\ref{item:separator-fd-3} of \Cref{thm:separator-fd}, if there exist at least three internally vertex-disjoint $f$-paths in $G$ such that each of them intersects the separator $S$ in one vertex, then condition~\ref{item:separator-fd-4} is satisfied if and only if there is no interaction $\{p, v\} \in F(S)$ with $p \in f$ such that the following two properties hold: 1.~$v$ is a vertex not separated from $f$ by $S$ in $G$ and 2.~$\{v, s\} \in F$ for every $s \in S$.
\end{lemma}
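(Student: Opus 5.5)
The plan is to determine the structure of $\HH(f, S)$ under the hypothesis, and then to read condition~\ref{item:separator-fd-4} off that structure. Write $f = \{p, q\}$ and let $P_1, P_2, P_3$ be internally vertex-disjoint $f$-paths with $P_i \cap S = \{s_i\}$. By minimality of $S$ (condition~\ref{item:separator-fd-1}), $p$ and $q$ lie in different components of $G[V \setminus S]$; call them $K_p$ and $K_q$.

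\textbf{Step 1: cut-vertices.} I would first show that, for distinct $s, s' \in \{s_1, s_2, s_3\}$, the only common $f$-cut-vertices of $G(S,s)$ and $G(S,s')$ are $p$ and $q$. Indeed, an $f$-cut-vertex of $G(S, s_i)$ lies on $P_i$, since $P_i \subseteq G(S, s_i)$, and the paths $P_i$ are internally disjoint. Also, $p$ and $q$ are $f$-cut-vertices of every $G(S,s)$, so both are degenerate.

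\textbf{Step 2: every edge of $\HH(f,S)$ meets $f$.} Let $f' = \{u, w\}$ be an edge of $\HH(f,S)$ with $u, w \notin f$. Since $f' \in F(S)$ and $f' \cap S = \emptyset$, the vertices $u$ and $w$ lie in different components of $G[V \setminus S]$. Since $f'$ is connected in $G(S,s)$ for each $s \in S(f')$, I would argue as follows.
\begin{itemize}
\item Either $S(f')$ contains at least two of the $s_i$. Then, by Step~1, the two vertices of $f'$ would have to cover the $f$-cut-vertices outside $f$ of each of these $G(S, s_i)$. Combined with condition~\ref{item:separator-fd-2}, this should yield a contradiction.
\item Or the components containing $u$ and $w$ attach to $S$ in a way that contradicts the existence of the three paths.
\end{itemize}
This case analysis is the part I am least sure of and would check carefully.

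\textbf{Step 3: $\HH(f,S)$ is a disjoint union of stars centred at $p$ and at $q$.} Assuming Step~2, every edge has the form $\{p, v\}$ or $\{q, v\}$, and $pq$ is excluded because it equals $f$. Suppose some $v$ were adjacent to both $p$ and $q$.
\begin{itemize}
\item If $v \in K_p \cup K_q$, then one of $pv$, $qv$ is connected in $G[V \setminus S]$, so it is not in $F(S)$; this is a contradiction.
\item Otherwise $S$ is an $\{f,\{v\}\}$-separator, so condition~\ref{item:separator-fd-3} forces $v$ to be a leaf of $\HH(f,S)$, again a contradiction.
\end{itemize}
Hence $\HH(f,S)$ is a forest. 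Its only nontrivial components are the star at $p$ and the star at $q$, and each centre is degenerate. Therefore condition~\ref{item:separator-fd-4} holds if and only if no leaf of either star is degenerate. Leaves $v$ that are separated from $f$ by $S$ are non-degenerate by condition~\ref{item:separator-fd-3}. The remaining leaves of the star at $p$ are the vertices $v \in K_q$ with $pv \in F$; note that $pv \in F(S)$ holds automatically. Symmetrically, the remaining leaves of the star at $q$ are the vertices $v \in K_p$ with $qv \in F$.

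\textbf{Step 4: degenerate leaves not separated from $f$.} For such a leaf $v$ we have $v \in G(S,s)$ for every $s$, by \Cref{remark:separator-inequality-minimal}. So $v$ is degenerate if and only if, for every $s \in S$, either $\{v,s\} \in F$ or $v$ is an $f$-cut-vertex of $G(S,s)$. The direction ``no such interaction $\Rightarrow$ condition~\ref{item:separator-fd-4}'' then requires showing that the cut-vertex alternative cannot replace the interaction $\{v,s\}$ for any $s$. This is the step I expect to be the main obstacle. I would first try to exploit the fact that $v \in K_q \setminus \{q\}$ is an $f$-cut-vertex of $G(S,s_1)$. Then all $f$-paths through $s_1$ pass through $v$, and using $P_2$ and $P_3$ one could reroute to show that $\{p,v\}$ violates condition~\ref{item:separator-fd-2} or condition~\ref{item:separator-fd-3}. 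If this rerouting fails, I would invoke the pseudoforest lemma of \Cref{section:pseudoforest} to argue directly that such a configuration does not prevent the face from being a facet.

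The converse direction is immediate. An interaction $\{p,v\}$ with properties 1 and 2 makes $v$ a degenerate vertex, and $pv \in \HH(f,S)$ because $p$ is an $f$-cut-vertex of every $G(S,s)$. So the component of $p$ contains the two degenerate vertices $p$ and $v$, violating condition~\ref{item:separator-fd-4}.
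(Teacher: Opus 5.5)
\textbf{The gap is in Step~4, and it cannot be closed.} The obstacle you flagged there is real, and the implication ``no interaction with properties~1 and~2 $\Rightarrow$ condition~\ref{item:separator-fd-4}'' is false as stated.

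Take vertices $p,q,v,s_1,s_2,s_3$, edges $ps_1,ps_2,ps_3,s_1v,vq,s_2q,s_3q$, interactions $F=\{pq,pv,vs_2,vs_3\}$, $f=pq$ and $S=\{s_1,s_2,s_3\}$.
\begin{itemize}
\item $S$ is a minimal $f$-separator. The paths $ps_1vq$, $ps_2q$, $ps_3q$ are internally disjoint $f$-paths, each meeting $S$ exactly once.
\item The only $f$-path in $G(S,s_1)$ is $ps_1vq$, and for $i=2,3$ the only one in $G(S,s_i)$ is $ps_iq$. So $v$ is an $f$-cut-vertex of $G(S,s_1)$ and of no other $G(S,s_i)$.
\item $S(pv)=S$ and $S(vs_i)=\{s_i\}$. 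Hence condition~\ref{item:separator-fd-2} holds, with witness $s_2$ for $pv$ and $s_i$ for $vs_i$.
\item $\HH(f,S)$ has the single edge $pv$. Condition~\ref{item:separator-fd-3} is vacuous because $p\in f$ and $vq\in E$.
\item $v$ is degenerate: the cut-vertex alternative covers $s_1$, and the interactions $vs_2,vs_3$ cover $s_2,s_3$. So the component $\{p,v\}$ violates condition~\ref{item:separator-fd-4}.
\item No interaction has properties~1 and~2, since $vs_1\notin F$.
\end{itemize}
This is not an artifact of how condition~\ref{item:separator-fd-4} is phrased. Check the cases $U\cap S=\emptyset$, $U\cap S=\{s_1\}$ (which forces $v\in U$), and $U\cap S=\{s_i\}$ for $i=2,3$ (with $v\in U$ or not). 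Every vector of the face satisfies $x_{pv}+2=x_{s_1}+x_{vs_2}+x_{vs_3}$. So the separator inequality is genuinely not facet-defining. Neither rerouting along $P_2,P_3$ nor an appeal to \Cref{lemma:pseudoforest} can rescue the step.

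The paper's own proof makes the same leap in its last sentence. Condition~\ref{item:separator-fd-2} only supplies \emph{one} $s\in S$ for which $v$ is not an $f$-cut-vertex of $G(S,s)$, not all of them.

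\textbf{What your Steps~1--3 actually prove.} Once repaired, they establish a corrected statement: condition~\ref{item:separator-fd-4} holds iff there is no interaction $\{p',v\}\in F(S)\setminus\{f\}$ with $p'\in f$ such that $v$ is not separated from $f$ by $S$ and $v$ is degenerate. Degenerate here means that for every $s\in S$, either $\{v,s\}\in F$ or $v$ is an $f$-cut-vertex of $G(S,s)$.

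\textbf{Step~2 needs repair.} Its claim is false in general. Suppose a component $K$ of $G[V\setminus S]$, other than those of $p$ and $q$, is attached only to $s_1$. Then an interaction $uw$ with $u\in K$ and $w\notin f\cup S$ an $f$-cut-vertex of $G(S,s_1)$ is an edge of $\HH(f,S)$ that misses $f$. This is compatible with conditions~\ref{item:separator-fd-1}--\ref{item:separator-fd-3}.

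The right argument, which is the one in the paper, goes as follows.
\begin{itemize}
\item First prune edges with an endpoint separated from $f$ by $S$. By condition~\ref{item:separator-fd-3} that endpoint is a non-degenerate leaf, so pruning does not affect condition~\ref{item:separator-fd-4}.
\item Now take an edge $e$ both of whose endpoints are not separated from $f$. Then $S(e)=S$.
\item An $f$-path meeting $S$ only in $s$ lies in $G(S,s)$, so it passes through every $f$-cut-vertex of $G(S,s)$.
\item Hence $e$ meets all three paths and therefore meets $f$. Condition~\ref{item:separator-fd-2} is not needed for this.
\end{itemize}
Your bipartiteness argument in Step~3 then shows that the remaining edges form two disjoint stars centred at $p$ and $q$. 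This yields the corrected characterization above.
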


\begin{proof}
	See \Cref{section:proof:lemma:separator-three-disjoint-paths}.
\end{proof}

\begin{remark}
	As an application of the facet-defining conditions we obtained, we can construct valid inequalities stronger than the ordinary separator inequalities~\eqref{eq:separator}.
	%
	%
	Here, as a simple illustration, we observe that \Cref{lemma:separator-three-disjoint-paths} suggests to add the term $\sum_{s \in S} (1-x_{vs}) - (1-x_{vp})$ to the left hand side of the separator inequality $(1-x_f) \leq \sum_{s \in S} (1-x_s)$, where $v$ is an arbitrary vertex separated from $q \in f = \{p, q\}$ by $S$ in $G$ and $\{v, p\}, \{v, s\} \in F$ for all $s \in S$.
	This procedure results in the following inequality
	\begin{equation}\label{eq:seprator-strengthened}
		(1 - x_f)
		\leq
		\sum_{s \in S} (x_{vs} - x_s) + (1 - x_{vp})
		\,,
	\end{equation}
	which can indeed be verified to be valid for $\polytope$.
	Of course, since $\sum_{s \in S} (1-x_{vs}) - (1-x_{vp}) \geq 0$, \eqref{eq:seprator-strengthened} is stronger than the separator inequality from which it is derived.
\end{remark}

\begin{figure}
	\centering	
	\begin{subfigure}[b]{0.23\textwidth}
		\centering
		\input{figures/separator/a.tex}
		\caption{$x_{f_1} = x_{f_2}$}
		\label{fig:sep-a}
	\end{subfigure}
	\hfill
	\begin{subfigure}[b]{0.23\textwidth}
		\centering
		\input{figures/separator/b.tex}
		\caption{$x_{f_1} = x_{f_2}$}
		\label{fig:sep-b}
	\end{subfigure}
	\hfill
	\begin{subfigure}[b]{0.23\textwidth}
		\centering
		\input{figures/separator/c.tex}
		\caption{$1 + x_{f_3} = x_{f_1} + x_{f_2}$}
		\label{fig:sep-c}
	\end{subfigure}
	\hfill
	\begin{subfigure}[b]{0.23\textwidth}
		\centering
		\input{figures/separator/d.tex}
		\caption{$1 + x_{f_2} = x_{f_1} + x_{f_3}$}
		\label{fig:sep-d}
	\end{subfigure}

	\begin{subfigure}[b]{0.23\textwidth}
		\centering
		\input{figures/separator/e.tex}
		\caption{$x_{f} + x_{f_2} = x_{f_1} + x_{f_3}$}
		\label{fig:sep-e}
	\end{subfigure}
	\hfill
	\begin{subfigure}[b]{0.23\textwidth}
		\centering
		\input{figures/separator/f.tex}
		\caption{$x_{f_1} + x_{f_3} = x_{f_2} + x_{f_4}$}
		\label{fig:sep-f}
	\end{subfigure}
	\hfill
	\begin{subfigure}[b]{0.23\textwidth}
		\centering
		\input{figures/separator/g.tex}
		\caption{$x_{f_1} + x_{f_3} = x_{f_2} + x_{f_4}$}
		\label{fig:sep-g}
	\end{subfigure}
	\hfill
	\begin{subfigure}[b]{0.23\textwidth}
		\centering
		\input{figures/separator/h.tex}
		\caption{$x_{f_1} + x_{f_3} = x_{f_2} + x_{f_4}$}
		\label{fig:sep-h}
	\end{subfigure}
	\caption{Each subfigure depicts a graph $G$ (solid black), interactions (dashed edges), a designated interaction $f = pq$ (dashed, red), and a minimal $f$-separator $S$ (filled black vertices).
	In each case, the separator inequality associated with $f$ and $S$ does not define a facet because either Condition~\ref{item:separator-fd-3} or~\ref{item:separator-fd-4} is violated, though Conditions~\ref{item:separator-fd-1} and~\ref{item:separator-fd-2} are satisfied.
	This fact can also be confirmed by the additional equalities listed below the subfigures.
	%
	\subref{fig:sep-a}
		All the vertices are degenerate and $f_1$ is the only edge of $\HH(f, \{s\})$.
		Condition~\ref{item:separator-fd-3} is violated by the degenerate vertex $v$, though it is a leaf of $\HH(f, \{s\})$.
		As both endvertices of $f_1$ are degenerate, Condition~\ref{item:separator-fd-4} also does not hold.
	\subref{fig:sep-b}
		The only degenerate vertices are $s_1$,~$s_2$,~$p$,~$q$ and the edge set of $\HH(f, S)$ consists of $f_1$ and $f_2$.
		The unique non-isolated vertex $v$ of $\HH(f, S)$ separated from $f$ by $S = \{s_1, s_2\}$ in $G$ is $v$, it is non-degenerate but not a leaf of $\HH(f, S)$, so Condition~\ref{item:separator-fd-3} of \Cref{thm:separator-fd} is violated by $v$.
		Thus the corresponding separator inequality is not facet-defining.
		However, $\HH(f, S)$ satisfies Condition~\ref{item:separator-fd-4}.
	\subref{fig:sep-c}
		The only degenerate vertices are $v$,~$s_1$,~$s_2$,~$p$,~$q$ and the only edge of $\HH(f, \{s_1, s_2\})$ is $f_3$.
		It is direct to see that Condition~\ref{item:separator-fd-3} is satisfied as no vertices are separated from $f$ by $S = \{s_1, s_2\}$ in $G$.
		The odd path $\HH(f, \{s_1, s_2\})$ has two degenerate endvertices and hence Condition~\ref{item:separator-fd-4} is false.
	\subref{fig:sep-d}
		All the vertices are degenerate and $f_2$ is the only edge of $\HH(f, \{s_1, s_2\})$.
		As no vertices of $\HH(f, \{s_1, s_2\})$ are separated from $f$ by $S$ in $G$, Condition~\ref{item:separator-fd-3} is vacuously true.
		However, Condition~\ref{item:separator-fd-4} is not satisfied as the two leaves, i.e. the endvertices of $f_2$, of the odd path $\HH(f, \{s_1, s_2\})$ are degenerate.
	\subref{fig:sep-e}
		In this example, $s_1$,~$s_2$,~$p$,~$q$ are all the degenerate vertices and the edge set of $\HH(f, \{s_1, s_2\})$ is $\{f_1, f_2, f_3\}$.
		Condition~\ref{item:separator-fd-3} is vacuously true due to the same reason as in the previous case.
		$\HH(f, \{s_1, s_2\})$ is an odd path whose two endvertices are degenerate and hence violates Condition~\ref{item:separator-fd-4}.
	\subref{fig:sep-f}
		The only degenerate vertices are $s_1$,~$s_2$,~$p$,~$q$ and the edge set of $\HH(f, \{s_1, s_2\})$ consists of $f_1$,~$f_2$,~$f_3$ and~$f_4$.
		As the previous case, Condition~\ref{item:separator-fd-3} is vacuously true.
		On the other hand, $\HH(f, \{s_1, s_2\})$ is an even cycle and hence Condition~\ref{item:separator-fd-4} is violated.
	\subref{fig:sep-g}
		The degenerate vertices in this case include $v$ and~$v^\prime$ besides $f$-cut-vertices $p$,~$q$ and the vertices $s_1$,~$s_2$ in the separator $S$.
		The edge set of $\HH(f, S)$ consists of $f_1$ and $f_2$.
		As no vertices are separated from $f$ by $S$ in $G$, Condition~\ref{item:separator-fd-3} is trivially satisfied.
		Condition~\ref{item:separator-fd-4} is violated because $\HH(f, S)$ is an even path with two degenerate endvertices $v$,~$v^\prime$.
	\subref{fig:sep-h}
		The only degenerate vertices are $f$-cut-vertices $p$,~$q$ and~$v$, besides those in $S$.
		The edges of $\HH(f, S)$ are given by $f_1$,~$f_2$,~$f_3$ and~$f_4$.
		Again, as no vertices are separated from $f$ by $S$ in $G$, Condition~\ref{item:separator-fd-3} is trivially satisfied.
		$\HH(f, S)$ is an even path with two endvertices $p$ and~$v$ being $f$-cut-vertices of $G$, so Condition~\ref{item:separator-fd-4} is violated.
	}
	\label{fig:separator-not-facet}
\end{figure}


\subsection{Facets induced by path inequalities}
\label{section:facets-path}

In this section, we introduce a class of valid inequalities, called \emph{path inequalities}, which strengthen the connector inequalities.
We then characterize when these inequalities define facets of the multi-separator polytope.
Two illustrative examples are shown in \Cref{fig:path-ineq-example}.

\begin{definition}
	Let $G = (V, E)$ be a connected graph, let $F \subseteq \tbinom{V}{2}$, let $f \in F$, and let $P = (V_P, E_P)$ be an $f$-path in the graph $(V, F)$.
	We call the inequality written below the \emph{path inequality} associated with $f$ and $P$.
	\begin{equation}\label{eq:path}
		x_f \;\leq\; \sum_{e \in E_P} x_e \;-\; \sum_{v \in {V_P} \setminus f} x_v
	\end{equation}
\end{definition}

The inequality~\eqref{eq:path} can be interpreted as follows: Separating the pair $f$ requires separating at least one edge along the path $P$, while accounting for overlaps at internal vertices.
In this sense, path inequalities refine connector inequalities by exploiting the structure of a specific path between the endpoints of $f$.

\begin{lemma}\label{lemma:path-valid}
	Let $G = (V, E)$ be a connected graph and let $F \subseteq \tbinom{V}{2}$.
	For any $f \in F$ and any $f$-path $P$ in $(V, F)$, the corresponding path inequality~\eqref{eq:path} is valid for $\polytope$.
\end{lemma}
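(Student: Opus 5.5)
Since the polytope is the convex hull of the feasible vectors $x^U$, it suffices to verify \eqref{eq:path} for every $x^U$ with $U \subseteq V$. We argue by cases on $x^U_f$.

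Write the path as $P = v_0, v_1, \dots, v_k$ with $f = \{v_0, v_k\}$, so that $E_P = \{v_{i-1}v_i \mid 1 \leq i \leq k\} \subseteq F$ and $V_P \setminus f = \{v_1, \dots, v_{k-1}\}$. Because every $e \in E_P$ is an interaction, $x_e$ is a coordinate of $x^U$, and the right-hand side of \eqref{eq:path} can be regrouped as the telescoping-type sum
\[
\sum_{i=1}^{k} x_{v_{i-1}v_i} - \sum_{i=1}^{k-1} x_{v_i}.
\]

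\emph{Case $x^U_f = 0$.} We must show that the right-hand side is nonnegative. For each internal vertex $v_i$ with $x_{v_i} = 1$ we have $v_i \notin U$. Then both incident path interactions $v_{i-1}v_i$ and $v_iv_{i+1}$ contain a vertex outside $U$, so neither is connected by $U$, and both take the value $1$. Assign to $v_i$ one of these two interactions, say $v_iv_{i+1}$. This is an injective assignment from $\{i \mid x_{v_i} = 1\}$ to $E_P$, because distinct $i$ give distinct edges $v_iv_{i+1}$. Every assigned edge has value $1$, so $\sum_{e \in E_P} x_e \geq \sum_{i=1}^{k-1} x_{v_i}$, which is the required inequality.

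\emph{Case $x^U_f = 1$.} We need a finer count. Let $I = \{i \in \{1,\dots,k-1\} \mid v_i \notin U\}$. The elements of $I$ split the path $P$ into $|I| + 1$ consecutive segments, each running between consecutive elements of $\{0\} \cup I \cup \{k\}$.

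For a segment of length at least $2$, at least one of its edges is incident to a blocked internal vertex, that is, a vertex in $I$. Such an edge has value $1$. Each blocked internal vertex $v_i$ has two incident path edges, both of value $1$. The total number of edges incident to the vertices of $I$ is at least $|I| + 1$, since the union of the edges incident to $I$ along the path has size at least $|I|+1$: each vertex of $I$ contributes a left edge, and the last one also contributes a right edge. This yields $\sum_{e} x_e \geq |I| + 1 = \sum_i x_{v_i} + 1 \geq x_f + \sum_i x_{v_i}$.

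The remaining situation is $I = \emptyset$. Then every internal vertex lies in $U$. If some edge $v_{i-1}v_i$ has value $1$, the inequality holds, since the subtracted vertex sum is $0$. Otherwise every path interaction has value $0$, so each consecutive pair $v_{i-1}, v_i$ is connected in $G[U]$. Consecutive pairs share vertices, so by transitivity of connectivity $v_0$ and $v_k$ are connected in $G[U]$. Hence $U$ is an $f$-connector and $x^U_f = 0$, contradicting the case assumption.

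\textbf{Main obstacle.} The delicate step is this transitivity argument. It is the place where the interactions of $P$, being pairs of $F$ rather than edges of $G$, must still chain together into connectivity in $G[U]$. The edge-counting with $|I| + 1$ is a simple disjointness check that I expect to be routine.
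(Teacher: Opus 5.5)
Your proof is correct. It organizes the case analysis differently from the paper.

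\emph{The paper's route.} The paper doubles the inequality into the form $2x_f - x_p - x_q \le \sum_{uw\in E_P}(2x_{uw}-x_u-x_w)$. It notes that at feasible $0/1$ points each bracket lies in $\{0,1,2\}$, which uses $x_{uw}\ge\max(x_u,x_w)$. It then splits on the value of the left-hand side:
\begin{itemize}
\item value $1$ forces some path interaction whose endpoints are labelled differently;
\item value $2$ forces either a path interaction of value $1$ with both endpoints labelled $0$, or at least two path interactions with differently labelled endpoints.
\end{itemize}

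\emph{Your route.} You never inspect the labels of $v_0$ and $v_k$.
\begin{itemize}
\item You charge each blocked internal vertex to an incident path interaction, both of which are forced to $1$. This gives nonnegativity of the right-hand side, which is the same local fact that makes the paper's brackets nonnegative.
\item When $x_f=1$, the last blocked vertex supplies one extra interaction of value $1$.
\item If no internal vertex is blocked, you chain connectivity in $G[U]$ along $P$ to contradict $x_f=1$.
\end{itemize}

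\emph{What each buys.} Your bookkeeping is more direct and shows slightly more: the right-hand side is at least $1$ whenever some internal vertex of $P$ is blocked, regardless of $x_f$. The paper's reformulation instead exhibits the inequality as a sum of local terms $2x_{uw}-x_u-x_w\ge 0$. This reads as a triangle-type inequality for that quantity along paths in $(V,F)$, which is closer in spirit to the cycle and path inequalities of the related polytopes.

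\emph{A presentational fix.} The sentence about ``segments of length at least $2$'' is unused and false as stated when $I=\emptyset$. Your count only needs $I\neq\emptyset$, so state that assumption explicitly before referring to ``the last one''.
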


\begin{proof}
	See \Cref{section:proof:lemma:path-valid}.
\end{proof}

The class of path inequalities is large and structurally rich. 
In the remainder of this section, we restrict attention to path inequalities where $P$ is a path in $G$.
For this subclass, we obtain a complete characterization of when such inequalities are facet-defining.

\begin{theorem}\label{thm:path-facet}
	Let $G = (V, E)$ be a connected graph, let $F \subseteq \tbinom{V}{2}$, let $f \in F$, and let $P = (V_P, E_P)$ be an $f$-path in $G$.
	Suppose that $E_P \subseteq F \setminus \{f\}$.
	Then the path inequality~\eqref{eq:path} defines a facet of $\polytope$ if and only if $P$ is chordless in the graph $(V, F \setminus \{f\})$.
\end{theorem}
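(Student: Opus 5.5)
We prove the two directions separately. Write $f = pq$, $P = (v_0, v_1, \dots, v_n)$ with $v_0 = p$, $v_n = q$, and $e_i = v_{i-1}v_i \in F \setminus \{f\}$. Let $\Phi$ be the face of $\polytope$ induced by \eqref{eq:path}. Since $\polytope$ is full-dimensional by \Cref{prop:dimension}, $\Phi$ is a facet if and only if every valid inequality $ax \leq \alpha$ that is tight on $\Phi$ is a positive multiple of \eqref{eq:path}.

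\emph{Necessity.} Suppose $P$ has a chord $g = v_iv_j \in F \setminus \{f\}$ with $j > i+1$. We decompose \eqref{eq:path} into two other valid path inequalities.
\begin{enumerate}[label=\textnormal{(\alph*)}]
	\item The subpath $P_{ij} = (v_i, \dots, v_j)$ of $P$ is a $g$-path in $G$, hence in $(V,F)$. By \Cref{lemma:path-valid}, the inequality $x_g \leq \sum_{k=i+1}^{j} x_{e_k} - \sum_{k=i+1}^{j-1} x_{v_k}$ is valid.
	\item Let $P'$ be obtained from $P$ by replacing $P_{ij}$ with the edge $g$. Then $P'$ is an $f$-path in $(V,F)$, so its path inequality $x_f \leq x_g + \sum_{k \leq i \text{ or } k > j} x_{e_k} - \sum_{v \in V_{P'} \setminus f} x_v$ is valid.
\end{enumerate}
The internal vertices of $P'$ and of $P_{ij}$ together are exactly the internal vertices of $P$. (This also holds when $v_i = p$ or $v_j = q$, since such an endpoint is internal to neither $P$ nor $P'$.) Hence adding the two inequalities cancels $x_g$ and yields exactly \eqref{eq:path}. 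So every point of $\Phi$ satisfies both inequalities with equality. In particular $\Phi$ lies in the face induced by the first inequality. That face is proper because $\polytope$ is full-dimensional, and it is induced by an inequality not involving $x_f$, so it is not a positive multiple of \eqref{eq:path}. Therefore $\Phi$ is not a facet.

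\emph{Sufficiency.} Assume $P$ is chordless in $(V, F\setminus\{f\})$, and let $ax \leq \alpha$ be valid and tight on $\Phi$. First we list the tight feasible vectors.
\begin{itemize}
	\item For $x^U$ with $U \supseteq V_P$, both sides of \eqref{eq:path} are $0$, whatever $U$ contains off the path.
	\item Suppose instead that every vertex off the path is blocked, i.e.\ $U \subseteq V_P$. Let $B = V_P \setminus U$ be the set of blocked path vertices. If $p,q \notin B$, then the right-hand side of \eqref{eq:path} equals the number of maximal runs of consecutive vertices of $B$ along $P$. So $x^U$ is tight exactly when $B$ is a single nonempty contiguous run; then $x_f = 1$.
	\item Analogous counts apply when $p$ or $q$ is blocked, giving further tight vectors. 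For example, $U = \emptyset$ is tight with both sides equal to $1$.
\end{itemize}
Comparing tight vectors that differ in one contiguous run, for instance $B = \{v_i,\dots,v_j\}$ against $B = \{v_i,\dots,v_{j+1}\}$, gives linear relations among $a_f$, $a_{e_k}$ and $a_{v_k}$. Combining these relations, we expect to show $a_{e_k} = -a_f$ for every $k$, $a_{v_k} = a_f$ for internal $v_k$, and $a_p = a_q = 0$; this is the pattern of \eqref{eq:path} written as $ax \leq 0$.

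The main obstacle is to show $a_w = 0$ for every $w \in V \setminus V_P$ and $a_g = 0$ for every $g \in F \setminus (E_P \cup \{f\})$. For this we work inside the first class of tight vectors, $U \supseteq V_P$, where the path keeps $f$ and all $e_k$ connected. Vertices off $P$ can then be blocked and unblocked freely, exactly as in the proof of \Cref{prop:dimension}. We would use the technical lemma of \Cref{section:lemma-ie} to build the difference vectors and peel off, by induction, the coefficients of the vertices off $P$ and of the interactions having at most one endvertex on $P$.

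Interactions with both endvertices on $V_P$ cannot be isolated this way, since they remain connected whenever $V_P \subseteq U$. Chordlessness is exactly what rules them out: any such $g$ other than $f$ and the $e_k$ would be a chord. To handle interactions between a path vertex and an off-path vertex in the second class, we would unblock a single off-path vertex $w$ adjacent to a run $B$ and compare with the same vector with $w$ blocked. We expect this case analysis to be the most delicate part.
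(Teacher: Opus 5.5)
\textbf{Necessity.} Your necessity argument is correct and coincides with the paper's. The chord $g=v_iv_j$ splits \eqref{eq:path} into the path inequalities of $P_{ij}$ and of the shortcut path $P'$, and your maximality argument for why this rules out a facet is, if anything, more careful than the paper's.

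\textbf{Sufficiency: the list of tight vectors is wrong.} The sufficiency part has a genuine gap, and it starts with your census of tight vectors. The hypotheses allow $P$ to have chords in $G$, i.e.\ edges of $E\setminus F$ between non-consecutive path vertices, and $f$ itself may lie in $E$. Your count of the right-hand side is fine, but $x_f$ need not be $1$. For example, take the triangle on $p,v,q$ with $F=E$ and $P=(p,v,q)$. The set $U=\{p,q\}$ gives $x_f=0$ against a right-hand side of $1$, so the single interior run $B=\{v\}$ is not tight. Relations obtained by comparing interior runs may therefore come from vectors outside $\Phi$. This part is repairable. With all off-path vertices blocked, the traces $\{v_0,\dots,v_h\}$ and $\{v_h,\dots,v_n\}$ are always tight. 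Comparing consecutive ones yields $a_p=a_q=0$, $a_{v_h}+a_{e_h}=a_{v_h}+a_{e_{h+1}}=0$ for $0<h<n$, and $a_{e_n}=-a_f$, which gives the pattern you expect.

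\textbf{Sufficiency: the main obstacle is not handled.} The real gap is the step you identify as the main obstacle, which is where the substance of the proof lies, and the mechanisms you propose do not work.
\begin{itemize}
\item In the class $U\supseteq V_P$ the path is connected in $G[U]$. So for each $s\notin V_P$, all variables $x_{sv_i}$ with $sv_i\in F$ coincide, and this class only sees sums of the coefficients $a_{sv_i}$.
\item In that class, unblocking some $w\in N_G(V_P)$ changes $x_w$ together with all $x_{wv_i}$. So nothing can be peeled off as in \Cref{prop:dimension}.
\item Your fallback, unblocking off-path vertices next to an interior run, loses tightness whenever they bypass the run. If $w$ is adjacent to both $p$ and $q$, every such vector has $x_f=0$. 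For $s\notin N_G(V_P)$ a whole connector from $s$ to $P$ is needed, with the same problem.
\end{itemize}

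\textbf{The missing observation.} $x^U$ is tight for \emph{every} $U$ whose trace $U\cap V_P$ is $\emptyset$, $V_P$, or a subpath of $P$ containing $p$ or $q$, whatever $U\setminus V_P$ is. The reason is that a blocked endpoint forces $x_f=1$, and the trace alone determines all $x_{e_k}$ and $x_{v_k}$. With this freedom the remaining coefficients are handled as follows.
\begin{itemize}
\item Comparing $x^{\emptyset}$ with $x^{\{w\}}$ gives $a_w=0$ for every off-path vertex $w$.
\item For $sv_m$, take a chordless $\{s,V_P\}$-path $P'$. Let $u$ be its unique vertex in $N_G(V_P)$, and let $n^-\le n^+$ be the smallest and largest indices of neighbors of $u$ on $P$.
\item If $m<n^+$, the sets $U=(V_{P'}\setminus V_P)\cup\{v_l\mid l\ge m\}$, $A=U\setminus\{s\}$ and $B=U\setminus\{v_m\}$ all have suffix traces. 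Here $F_{AB}=\{sv_m\}$ and $H_{AB}=\emptyset$, so \Cref{lemma:ie} gives $x^A+x^B-x^{A\cap B}-x^{A\cup B}=\mathbbm{1}_{\{sv_m\}}$, hence $a_{sv_m}=0$.
\item The case $m>n^-$ is symmetric.
\item The case $m=n^-=n^+$ follows from \eqref{eq:ie-1} with $A=(V_P\cup V_{P'})\setminus\{s\}$ and $B=V_P\cup V_{P'}$, after removing the coefficients already known to vanish.
\item Interactions with no endvertex on $P$ are handled by connectors disjoint from $V_P$ or attached to $V_P$ (traces $\emptyset$ or $V_P$).
\item Chordlessness leaves no other interaction inside $V_P$.
\end{itemize}
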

\begin{proof}
	See \Cref{section:proof:thm:path-facet}.
\end{proof}

As a consequence of \Cref{thm:path-facet}, we obtain the following characterization regarding a class of inequalities defined on cycles in $G$:

\begin{corollary}
	Let $G = (V, E)$ be a connected graph, let $F \subseteq \tbinom{V}{2}$, let $C = (V_C, E_C)$ be a cycle in $G$ with $E_C \subseteq F$, and let $e$ be an edge of $C$.
	Then the inequality written below is facet-defining for $\polytope$ if and only if $C$ is chordless in $(V, F)$.
	\[
	\sum_{v \in V_C \setminus e} x_v + x_e - \sum_{e^\prime \in E_C \setminus \{e\}} x_{e^\prime} \leq 0
	\]
\end{corollary}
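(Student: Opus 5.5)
This corollary should follow from \Cref{thm:path-facet} applied to $f := e$ and the path $P := C - e$. Write $e = \{p,q\}$. Then $P = (V_C, E_C \setminus \{e\})$ is an $e$-path in $G$, because $C$ is a cycle in $G$ and removing one edge leaves a path between the endpoints of that edge. Moreover $E_P = E_C \setminus \{e\} \subseteq F \setminus \{e\}$ holds, since $E_C \subseteq F$ and the edges of a cycle are distinct. So the hypotheses of \Cref{thm:path-facet} are met with $f = e$.

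Next I check that the displayed inequality is the path inequality~\eqref{eq:path} for $e$ and $P$, rearranged. The path inequality reads
\[
x_e \leq \sum_{e' \in E_C \setminus \{e\}} x_{e'} - \sum_{v \in V_C \setminus e} x_v .
\]
Moving terms across gives exactly $\sum_{v \in V_C \setminus e} x_v + x_e - \sum_{e' \in E_C\setminus\{e\}} x_{e'} \leq 0$. By \Cref{thm:path-facet}, this inequality is facet-defining if and only if $P$ is chordless in $(V, F \setminus \{e\})$.

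It remains to show that $P$ is chordless in $(V, F\setminus\{e\})$ if and only if $C$ is chordless in $(V,F)$.
\begin{itemize}
\item A chord of $P$ in $(V, F\setminus\{e\})$ is a pair $\{u,w\} \in F \setminus \{e\}$ with $u,w \in V_C$ and $\{u,w\} \notin E_P$. Since $\{u,w\} \neq e$, it is also not in $E_C$, so it is a chord of $C$ in $(V,F)$.
\item Conversely, a chord $\{u,w\}$ of $C$ in $(V,F)$ lies in $F \setminus E_C$. In particular it differs from $e$, so it lies in $F\setminus\{e\}$ and is not in $E_P$. Hence it is a chord of $P$.
\end{itemize}
The one point needing care is that $e$ is an edge of $P$'s host cycle, not a chord. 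It is excluded on both sides, from $(V,F\setminus\{e\})$ and from the chords of $C$ because $e \in E_C$. So the two notions coincide and the corollary follows. I expect no real obstacle beyond this bookkeeping.
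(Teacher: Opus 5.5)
Your proposal is correct and follows the paper's route exactly. The paper obtains this corollary by specializing Theorem~\ref{thm:path-facet} to $f = e$ and $P = C - e$. Your checks supply the details the paper leaves implicit:
\begin{itemize}
\item $E_P = E_C \setminus \{e\} \subseteq F \setminus \{e\}$;
\item the displayed inequality is the path inequality rearranged;
\item chords of $P$ in $(V, F \setminus \{e\})$ are exactly chords of $C$ in $(V, F)$, because $E_C = E_P \cup \{e\}$ and $V_C = V_P$.
\end{itemize}
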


\begin{figure}
	\centering
    \begin{subfigure}[b]{0.3\textwidth}
        \centering
        \input{figures/path-ineq-example-1.tex}
        \caption{}
        \label{fig:path-ineq-1}
    \end{subfigure}
    \hspace{2ex}
    \begin{subfigure}[b]{0.3\textwidth}
        \centering
        \input{figures/path-ineq-example-2.tex}
        \caption{}
        \label{fig:path-ineq-2}
    \end{subfigure}
	\caption{
		Each subfigure shows a graph \(G\) (solid black) with interactions (dashed green) and a path \(P\) on vertices \(\{s, v, t\}\).
		In \subref{fig:path-ineq-1}, the $st$-connector inequality $x_{st} \leq x_s + x_v + x_t$ is not facet-defining by \Cref{thm:connector-fd}~\ref{item:connector-fd-2}, whereas the path inequality $x_{st} \leq x_{sv} + x_{vt} - x_v$ is facet-defining by \Cref{thm:path-facet}.
		In \subref{fig:path-ineq-2}, the connector inequality $x_{st} \leq x_s + x_v + x_t$ is not facet-defining by \Cref{thm:connector-fd}~\ref{item:connector-fd-1}, whereas the path inequality $x_{st} \leq x_{sv} - x_v + x_{vt}$ is facet-defining by \Cref{thm:path-facet}.}
	\label{fig:path-ineq-example}
\end{figure}


\subsection{Facets induced by intersection inequalities}
\label{section:facets-intersection}

In this section, we introduce a class of inequalities, called \emph{intersection inequalities}, arising from interactions between vertex pairs, and characterize precisely when such inequalities define facets of the multi-separator polytope.

\begin{definition}\label{def:ineq-interaction}
	Let $G = (V, E)$ be a connected graph and let $F \subseteq \tbinom{V}{2}$.
    Suppose that for some pairwise distinct vertices $v_1, v_2, w_1, w_2 \in V$, the four pairs
    \[
        \{v_1, w_1\},\ \{v_2, w_2\},\ \{v_1, w_2\},\ \{v_2, w_1\}
    \]
    are all contained in $F$.  
    If, in addition, $\{v_1, v_2\}, \{w_1, w_2\} \in E$, then the inequality
    \begin{equation}\label{eq:ineq-intersection}
        x_{v_1w_2} + x_{v_2w_1} \;\leq\; x_{v_1w_1} + x_{v_2w_2}
    \end{equation}
    is called an \emph{intersection inequality}.
\end{definition}

The inequality~\eqref{eq:ineq-intersection} compares the two \emph{parallel pairs} $\{v_1,w_2\}$ and $\{v_2,w_1\}$ with the two \emph{crossing pairs} $\{v_1,w_1\}$ and $\{v_2,w_2\}$.
Intuitively, it asserts that separating the parallel pairs is at least as restrictive as separating the crossing pairs, provided the endpoints are linked by the edges $\{v_1,v_2\}$ and $\{w_1,w_2\}$.
For an illustration, see \Cref{fig:intersection-ineq-example}.

\begin{figure}
	\centering
	\input{figures/intersection-ineq-example-1.tex}
	\caption{An illustration of the intersection inequality~\eqref{eq:ineq-intersection} with $G$ depicted in solid black: the crossing pairs $(v_1,w_1)$ and $(v_2,w_2)$ (in red) versus the parallel pairs $(v_1,w_2)$ and $(v_2,w_1)$ (in green).}
	\label{fig:intersection-ineq-example}
\end{figure}

\begin{lemma}\label{lemma:intersection-valid}
    Under the assumptions of \Cref{def:ineq-interaction}, the intersection inequality~\eqref{eq:ineq-intersection} is valid for the multi-separator polytope $\polytope$ if and only if $\{v_1, w_1\}$ is a $\{v_2, w_2\}$-separator of $G$ and $\{v_2, w_2\}$ is a $\{v_1, w_1\}$-separator of $G$.
\end{lemma}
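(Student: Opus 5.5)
Since $\polytope$ is the convex hull of the feasible vectors $x^U$, $U \subseteq V$, the intersection inequality~\eqref{eq:ineq-intersection} is valid if and only if it holds for every $x^U$. The plan is to argue entirely combinatorially about which of the four pairs are connected by $U$, i.e., lie in a common connected component of $G[U]$. Write $A = \{v_1,w_1\}$ (crossing pair) and $B = \{v_2,w_2\}$ (the other crossing pair). The condition in the statement says that every $A$-connector, i.e.\ every $\{v_1,w_1\}$-path, meets $B$, and vice versa.

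\emph{Sufficiency.} Assume both separation conditions and fix $U$. The left-hand side of~\eqref{eq:ineq-intersection} exceeds the right-hand side only if at least one parallel pair is separated while both crossing pairs are connected. Equivalently, $x^U_{v_1w_1} = x^U_{v_2w_2} = 0$ and at least one of $x^U_{v_1w_2}, x^U_{v_2w_1}$ equals $1$.

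So suppose $U$ is both a $\{v_1,w_1\}$-connector and a $\{v_2,w_2\}$-connector. A $v_1$--$w_1$ path in $G[U]$ must meet $\{v_2,w_2\}$ by the first separator condition. Hence $v_1$ and $w_1$ lie in the same component of $G[U]$ as $v_2$ or $w_2$. Since $v_2$ and $w_2$ are themselves connected in $G[U]$, all four vertices lie in one component. Then all four pairs are connected, the left-hand side is $0$, and the inequality holds.

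The case analysis is complete: the right-hand side is at least $1$ whenever a crossing pair is separated, and the left-hand side is at most $2$. The only remaining danger is left-hand side $2$ with right-hand side $1$. This means exactly one crossing pair is separated and both parallel pairs are separated, say $v_1w_1$ connected and $v_2w_2$ separated, with $v_1w_2$ and $v_2w_1$ both separated. Here the edges $v_1v_2$ and $w_1w_2$ are used. If $v_2 \in U$, then the edge $v_1v_2$ together with the connection of $v_1$ and $w_1$ connects $v_2$ with $w_1$, a contradiction. So $v_2 \notin U$, and symmetrically $w_2 \notin U$. That configuration has right-hand side $1$ and left-hand side $2$, so I need more.

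Here the first separator condition enters again: $v_1$ and $w_1$ are connected in $G[U]$, so the connecting path meets $\{v_2,w_2\}$. Hence $v_2 \in U$ or $w_2 \in U$. Combined with the edges, this puts $v_2$ (or $w_2$) in the component of $v_1$ and $w_1$, connecting one parallel pair. So the left-hand side is at most $1$. The symmetric case uses the second condition. This exhausts all cases and gives validity.

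\emph{Necessity.} Suppose without loss of generality that $\{v_2,w_2\}$ is not a $\{v_1,w_1\}$-separator. Then there is a $\{v_1,w_1\}$-path $P$ in $G$ avoiding $v_2$ and $w_2$. Take $U = V_P$. Then $v_1w_1$ is connected, so $x^U_{v_1w_1}=0$, while $v_2,w_2 \notin U$ gives $x^U_{v_2w_2}=x^U_{v_1w_2}=x^U_{v_2w_1}=1$. The left-hand side is $2$ and the right-hand side is $1$, so the inequality is violated. The main subtlety is the sufficiency case analysis, specifically using the edges $v_1v_2$ and $w_1w_2$ together with the separator conditions to rule out the single violating pattern. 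Necessity is a one-line construction.
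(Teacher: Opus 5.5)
Your proof is correct and follows essentially the same route as the paper. For necessity you use the same witness, a $\{v_1,w_1\}$-path avoiding $v_2,w_2$. For sufficiency you use the same case split on whether the crossing pairs are connected: the edges $\{v_1,v_2\},\{w_1,w_2\}$ keep $v_2$ and $w_2$ out of the connected component, while the separator condition forces one of them in (the paper phrases this as ``exactly one of $x_{v_2},x_{w_2}$ equals $1$'').

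The only blemish is expository. Your initial ``equivalently'' claim, that a violation requires both crossing pairs to be connected, is false as stated. You then handle the remaining pattern (left-hand side $2$, right-hand side $1$) explicitly, so the case analysis is complete. In the both-connected case, the edge $\{w_1,w_2\}$ alone would already suffice, which is what the paper uses.
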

\begin{proof}
    See \Cref{section:proof:lemma:intersection-valid}.
\end{proof}

The next result shows that no additional conditions are required for these inequalities to be facet-defining: validity already captures the full strength of the construction.

\begin{theorem}\label{theorem:intersection-facet}
    Under the assumptions of \Cref{def:ineq-interaction}, the intersection inequality~\eqref{eq:ineq-intersection} is facet-defining for the multi-separator polytope $\polytope$ if and only if $\{v_1, w_1\}$ is a $\{v_2, w_2\}$-separator of $G$ and $\{v_2, w_2\}$ is a $\{v_1, w_1\}$-separator of $G$.
\end{theorem}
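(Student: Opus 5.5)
The plan is to handle necessity via \Cref{lemma:intersection-valid} and to prove sufficiency by the standard ``equation'' argument. Necessity is immediate: a facet-defining inequality is in particular valid, so by \Cref{lemma:intersection-valid} both separation conditions hold. For sufficiency, assume both conditions; by \Cref{lemma:intersection-valid} the inequality \eqref{eq:ineq-intersection} is then valid. Let $\Sigma$ be the face it induces. By \Cref{prop:dimension}, $\polytope$ is full-dimensional, so it suffices to show the following. Whenever $a x = \alpha$ holds for all $x \in \Sigma \cap \feasible$, we have $\alpha = 0$ and $a = \lambda\,(\chi_{v_1w_1} + \chi_{v_2w_2} - \chi_{v_1w_2} - \chi_{v_2w_1})$ for some $\lambda$.

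\textbf{A structural observation about $\Sigma$.} Let $U$ be a $\{v_1,w_2\}$-connector of $G$. Then $U \cup \{v_2\}$ is a $\{v_2,w_2\}$-connector, because $v_1v_2 \in E$. Since $\{v_1,w_1\}$ is a $\{v_2,w_2\}$-separator, every $v_2$--$w_2$ path in $G[U\cup\{v_2\}]$ meets $v_1$ or $w_1$. Applying this to the path through $v_1$ alone is not enough, so I will use it in the sharper form: $G[U]$ contains a $v_1$--$w_2$ path, and hence also a path from $v_1$ to $w_1$ or from $w_2$ to $w_1$ (using $w_1w_2 \in E$) whenever $w_1 \in U$. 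I expect this to yield the following case analysis for feasible vectors:
\begin{enumerate}[label=\textnormal{(\alph*)}]
\item If $x^U$ has a parallel pair at $0$, then it also has a crossing pair at $0$.
\item $x^U \in \Sigma$ exactly when the number of parallel pairs connected by $U$ equals the number of crossing pairs connected by $U$.
\end{enumerate}
I will make this precise first, since everything below uses it to check membership in $\Sigma$.

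\textbf{Steps.}
\begin{enumerate}[label=\textnormal{(\arabic*)}]
\item Vertices and $\alpha$. The vectors $x^{\emptyset}$ (all ones) and $x^{\{u\}}$ for $u \in V$ lie in $\Sigma$. Subtracting them gives $a_u = 0$ for all $u \in V$, and $x^V$ (all zeros, as $G$ is connected) gives $\alpha = 0$.
\item Interactions other than the four. Let $f \in F$ be such an interaction. I proceed by induction on the size of a minimal $f$-connector $C$, which by \Cref{cor:connector-separator-minimal} induces a path. I compare $x^C$ with vectors $x^{C'}$ for proper subsets $C'$, whose interaction-coefficients are already known to vanish by induction, using the technical lemma of \Cref{section:lemma-ie} to organize this. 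If $C$ connects none of the four pairs, or connects them in balanced fashion, then $x^C$ and $x^{C\setminus\{c\}}$ lie in $\Sigma$. Their difference involves only $f$ and interactions handled earlier, so $a_f = 0$. If $C$ connects an unbalanced set of the four pairs, I augment $C$ by some of $v_1, v_2, w_1, w_2$ to restore balance. The edges $v_1v_2$ and $w_1w_2$ let me connect the missing crossing pair. I then compare with the same augmentation of a smaller set.
\item The four coefficients. I use the sets $\{v_1,v_2\}$-closures of connectors. Take a minimal $\{v_1,w_1\}$-connector $P$. By observation (a), $P$ avoids $v_2$ and $w_2$ in the relevant cases; otherwise I pass to a shortest such path. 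The vector $x^{P \cup \{v_2\}}$ connects $v_1w_1$ and $v_2w_1$ (one crossing, one parallel), so it lies in $\Sigma$. Comparing it with $x^{P\cup\{v_2\}\setminus\{\cdot\}}$-type vectors and with $x^{P\cup\{v_2,w_2\}}$ yields $a_{v_1w_1} = -a_{v_2w_1}$ and similar relations. Symmetric choices then give all four coefficients equal up to sign.
\end{enumerate}

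\textbf{Main obstacle.} I expect the main obstacle to be Step (2). When $f$'s connector passes near $v_1, v_2, w_1, w_2$, a single connector can connect exactly one parallel pair without its crossing partner only through the augmentation. I must choose the augmentations so that the difference of two face vectors isolates $a_f$. The separation hypotheses are used exactly here, to guarantee that the augmented sets remain in $\Sigma$.
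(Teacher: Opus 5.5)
\textbf{There is a genuine gap.} Your framework (show that every equation valid on the face is a multiple of $x_{v_1w_1}+x_{v_2w_2}-x_{v_1w_2}-x_{v_2w_1}=0$) is just the dual of the paper's spanning-set argument, and Step~(1) is correct. The substantive steps, however, are wrong or missing.

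\emph{Observation~(a) is reversed and false.} Let $G$ be the path $v_2v_1w_2w_1$, with edges $v_1v_2$, $v_1w_2$, $w_1w_2$. Both separation hypotheses hold, yet $U=\{v_1,w_2\}$ connects the parallel pair $v_1w_2$ and no crossing pair. Validity says the converse: the number of crossing pairs connected by $U$ is at most the number of parallel pairs connected by $U$, and the face consists of the $U$ with equality.

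\emph{Step~(3) starts from a configuration the hypotheses forbid.} Since $\{v_2,w_2\}$ is a $\{v_1,w_1\}$-separator, every $\{v_1,w_1\}$-connector contains $v_2$ or $w_2$, so $P$ never avoids both. Your ``symmetric choices'' are also unavailable, because the graph need not share the symmetry of the hypotheses. In the example, $P=\{v_1,w_2,w_1\}$ and $P\cup\{v_2\}=V$ connects all four pairs. In fact no face vector there connects exactly $v_1w_1$ and $v_2w_1$, since every $\{v_2,w_1\}$-connector contains all four special vertices.

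The missing idea is an anchor through all four special vertices. First show that ``$\{v_1,w_2\}$ is a $\{v_2,w_1\}$-separator'' and ``$\{v_2,w_1\}$ is a $\{v_1,w_2\}$-separator'' cannot both hold. If they did, the hypotheses force a minimal $\{v_1,w_2\}$-connector to contain both $v_2$ and $w_1$, and its interior would be a $\{v_2,w_1\}$-connector avoiding $v_1,w_2$. Assume without loss of generality that the first fails. Then a minimal $\{v_2,w_1\}$-connector avoiding $v_1,w_2$, together with $v_1,w_2$, induces a chordless path or cycle $C'$ of the form $v_1,v_2,\dots,w_1,w_2$. The sets $C'$, $C'\setminus\{v_1\}$, $C'\setminus\{w_2\}$ and $C'\setminus\{v_1,v_2\}$ are all balanced. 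Compare $x^{C'\setminus\{v_1\}}$ and $x^{C'\setminus\{w_2\}}$ with $x^{C'}$, and $x^{C'\setminus\{v_1,v_2\}}$ with $x^{C'\setminus\{v_1\}}$. Once all other coefficients vanish, this gives the three independent relations $a_{v_1w_1}+a_{v_1w_2}=0$, $a_{v_2w_2}+a_{v_1w_2}=0$ and $a_{v_2w_2}+a_{v_2w_1}=0$.

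\emph{Step~(2) does not work as stated, and the anchor is also what repairs it.} When $c$ is a special vertex, $x^{C\setminus\{c\}}$ can leave the face. Moreover, $x^C-x^{C\setminus\{c\}}$ contains every interaction $cw$ with $w\in C$, possibly special ones. Your ``augmentation'' is left unspecified exactly in the hard case. The paper's split needs no induction:
\begin{itemize}
\item If $f$ is not separated by $\{v_1,v_2\}$ (or not by $\{w_1,w_2\}$), take a minimal $f$-connector avoiding that pair. Then none of the four sets of \Cref{lemma:ie} connects a special pair, so they all lie on the face and you get $a_f=0$ at once.
\item If $f=st$ is separated by both pairs, hang its endpoints off $C'$ by minimal connectors. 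Then all four sets of \Cref{lemma:ie} stay balanced, with $F_{AB}=\{f\}$ and $H_{AB}=\emptyset$. When $v_2\in f$ you must additionally delete $v_1$, and when $w_1\in f$ additionally delete $w_2$.
\end{itemize}
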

\begin{proof}
    See \Cref{section:proof:theorem:intersection-facet}.
\end{proof}

As we will see in \Cref{section:relatives-lmp}, intersection inequalities, together with those discussed before, suffice to completely describe the multi-separator polytope for path graphs with complete pairwise vertex interactions.


\section{Related polytopes}\label{section:related-polytopes}

In this section, we discuss relations between the multi-separator polytope and two closely related polytopes, namely the boolean quadric polytope and the lifted multicut polytope.

\subsection{Boolean quadric polytope}
\label{section:relatives-bqp}

The boolean quadric polytope $\bqp_G$ associated with a graph $G = (V, E)$ as defined by \citet{padberg1989boolean} is isomorphic to the multi-separator polytope associated with the graph $G$ and interactions $F = E$, more specifically, it is the image of $\Xi_{GE}$ under the automorphism of $\mathbb{R}^{V \cup E}$ given by $x \mapsto \mathbbm{1}_{V \cup E} - x$.
In this section, we consider any interaction set $F$ on $G$ and show how to transfer valid inequalities from the boolean quadric polytope associated with the graph $(V, F)$ to the multi-separator polytope associated with $G$ and $F$. Afterward, we give some examples illustrating that the conditions under which an inequality that is valid for $\bqp_G$ becomes facet-defining for $\bqp_G$ are generally quite different from the conditions under which the corresponding derived inequality that is valid for $\polytope$ becomes facet-defining for $\polytope$.

\begin{lemma}\label{lemma:bqp-msp}
    Let $G = (V, E)$ be a connected graph, let $F$ be an interaction set on $G$, and let $ax \leq \alpha$ be an inequality in $\mathbb{R}^{V \cup F}$ with $a \in \mathbb{R}^{V \cup F}$ and $\alpha \in \mathbb{R}$.
    \begin{enumerate}[label=\textnormal{(\roman*)}]
        \item \label{item:bqp-msp-1}
        If $ax \leq \alpha$ is a valid inequality for the boolean quadric polytope $\bqp_{(V, F)}$ such that for every $f \in F$ we have $f \in E$ or $a_f \geq 0$, then the flipped inequality $a(1-x) \leq \alpha$ is valid for the multi-separator polytope $\polytope$.
        \item \label{item:bqp-msp-2}
        If $a(1-x) \leq \alpha$ is valid for $\polytope$ such that for every $f \in F$ we have $f \in E$ or $a_f \leq 0$, then $ax \leq \alpha$ is valid for $\bqp_{(V, F)}$.
        \item \label{item:bqp-msp-3}
        Under the assumption that $\{f \in F \mid a_f \neq 0\} \subseteq E \subseteq F$, if $a(1-x) \leq \alpha$ is facet-defining for $\polytope$, then so is $a|_{V \cup E}x \leq \alpha$ for $\bqp_{G}$.
    \end{enumerate}
\end{lemma}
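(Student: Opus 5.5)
The plan is to compare the vertex sets of the two polytopes through the complement map. For $U \subseteq V$, let $y^U \in \{0,1\}^{V \cup F}$ be the vertex of $\bqp_{(V,F)}$ with $y^U_v = 1$ iff $v \in U$ and $y^U_{uw} = y^U_u y^U_w$. Then $z := 1 - x^U$ satisfies $z_v = y^U_v$ for all $v\in V$. For an interaction $f$, $z_f = 1$ iff $U$ is an $f$-connector, which requires both endpoints to lie in $U$. Hence $z_f \leq y^U_f$ always, with equality whenever $f \in E$: if both endpoints of an edge lie in $U$, then $U$ connects them. Every inequality is linear, so validity only needs to be checked on vertices, and a facet can be detected by counting affinely independent tight vertices.

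For (i), I fix $U$ and compute $a(1-x^U) = a y^U - \sum_{f \in F\setminus E} a_f (y^U_f - z_f)$. Each subtracted term has $a_f \geq 0$ and $y^U_f - z_f \geq 0$, so $a(1-x^U) \leq a y^U \leq \alpha$. For (ii), I fix $U$ and take $y^U$. The same identity gives $a y^U = a(1-x^U) + \sum_{f \in F\setminus E} a_f (y^U_f - z_f) \leq \alpha$, since now $a_f \le 0$ on non-edges.

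For (iii), the coefficients are supported on $V \cup E$, and every $f\in E$ is an interaction. So $a(1-x^U) = a|_{V\cup E}\,(1-x^U)|_{V\cup E}$, and $(1-x^U)|_{V\cup E}$ is exactly the $\bqp_G$ vertex $y^U|_{V\cup E}$, because $z_f = y_f$ on edges. This makes the projection $\pi$ of $\mathbb{R}^{V\cup F}$ onto $V\cup E$, composed with the complement map, a bijection between the vertex sets of $\polytope$ and $\bqp_G$ that preserves the values of the two inequalities. Validity for $\bqp_G$ follows at once. For the facet property, $\polytope$ is full-dimensional by \Cref{prop:dimension}, so the face contains $|V\cup F|$ affinely independent vertices $1-x^{U_i}$. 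Their images under the linear surjection $\pi$ affinely span the image of the affine hull of the face under $\pi$. That affine hull is the hyperplane $\{a(1-x) = \alpha\}$, which is defined using only coordinates in $V \cup E$. Its image under $\pi$ is therefore the hyperplane $\{a|_{V\cup E}\,y' = \alpha\}$ of $\mathbb{R}^{V\cup E}$, at least when $a|_{V\cup E}\neq 0$. So the tight vertices of $\bqp_G$ span a hyperplane, and since $\bqp_G$ is full-dimensional, the inequality defines a facet.

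The main obstacle is this last step: I must make sure that $a$ restricted to $V\cup E$ is nonzero, which holds because a facet-defining inequality is nontrivial. I must also make sure that the projection argument genuinely gives dimension $|V\cup E|-1$ rather than merely an upper bound. Both points follow because the preimage hyperplane is a cylinder over the coordinates in $F\setminus E$.
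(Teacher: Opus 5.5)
Your proposal is correct and takes essentially the same approach as the paper. For (i) and (ii) you compare $1-x^U$ coordinatewise with the boolean quadric vertex $y^U$: the two agree on $V$ and on $E$, and $1-x^U \le y^U$ on non-edges. For (iii) you use the value-preserving bijection between tight vertices after projecting onto $\mathbb{R}^{V\cup E}$. Your ``cylinder over the $F\setminus E$ coordinates'' argument for the exact dimension count is the paper's observation that $\mathbbm{1}_{\{f\}}$ lies in the direction space of the facet for every $f\in F\setminus E$, so projecting loses exactly $\lvert F\setminus E\rvert$ dimensions.
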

\begin{proof}
    See \Cref{section:proof:lemma:bqp-msp}.
\end{proof}

The last statement~\ref{item:bqp-msp-3} of \Cref{lemma:bqp-msp} tells us that, under the given assumption, if a condition is necessary for $ax \leq \alpha$ to be facet-defining for $\bqp_G$, then so it is for $a(1-x) \leq \alpha$ to be facet-defining for $\polytope$.
We remark that the validity of an inequality $ax \leq \alpha$ for the boolean quadric polytope $\bqp_{(V, F)}$ does not necessarily imply that $a(1-x) \leq \alpha$ is valid for the multi-separator polytope $\polytope$; even if this is the case, the property of being facet-defining usually does not transfer from $\bqp_{(V, F)}$ to $\polytope$.
Illustrative examples can be given by the odd-cycle inequalities described by \citet{padberg1989boolean}:

\begin{example}
    Let $G = (V, E)$ be a connected graph, let $F$ be an interaction set $F$ on $G$, let $C = (V_C, E_C)$ be a cycle in $G$ with $E_C \subseteq F$ and let $M \subseteq E_C$ be an edge subset of odd cardinality.
	We denote by
	\begin{align*}
	    S_0 & =
        \left\{
            v \in V_C \mid \exists \, e \neq e^{\prime} \in M \colon \quad 
            	e \cap e^{\prime} = \{v\}
        \right\}
	\\
	    S_2 & =
        \left\{
            v \in V_C \mid \exists \, e \neq e^{\prime} \in E_C \setminus M \colon \quad 
            	e \cap e^{\prime} = \{v\}
        \right\}
	\end{align*}
	the set of all vertices on the cycle $C$ with both incident edges in $M$ and the set of all vertices on the cycle $C$ with no incident edges in $M$, respectively, and
	\begin{equation*}
	    S_1 = V_C \setminus (S_0 \cup S_2)\,.
	\end{equation*}
	Then the \emph{odd-cycle inequality}
    \begin{equation}\label{eq:odd-cycle-bqp}
	    \sum_{v \in S_0} x_{v} - \sum_{v^\prime \in S_2} x_{v^\prime} + \sum_{e^\prime \in E_C \setminus M} x_{e^\prime} - \sum_{e \in M} x_{e} \leq \left \lfloor{\lvert M \rvert /2}\right \rfloor
	\end{equation}
	is valid for $\bqp_{G}$ \cite{padberg1989boolean}.
    Since $\lvert E_C \rvert = 2 \lvert M \rvert - \lvert S_0 \rvert + \lvert S_2 \rvert$, the inequality obtained from \eqref{eq:odd-cycle-bqp} by flipping variables is
    \begin{equation}\label{eq:odd-cycle-msp}
	    \sum_{v^\prime \in S_2} x_{v^\prime} - \sum_{v \in S_0} x_{v} + \sum_{e \in M} x_{e} - \sum_{e^\prime \in E_C \setminus M} x_{e^\prime} \leq \left \lfloor{\lvert M \rvert /2}\right \rfloor\,,
	\end{equation}
    which, by \Cref{lemma:bqp-msp}, is still valid for $\polytope$.
    However, if the cycle $C$ is in the graph $(V, F)$ instead of $G$, the associated inequality~\eqref{eq:odd-cycle-msp} may not be a valid inequality for $\polytope$ anymore, unless $M$ is an edge subset of $G$, by \Cref{lemma:bqp-msp}.
    Such an example is given in \Cref{fig:odd-cycle-example} \subref{fig:odd-cycle-a}.

    By Theorem~$9$ of \citet{padberg1989boolean}, an odd-cycle inequality~\eqref{eq:odd-cycle-bqp} is facet-defining for $\bqp_{G}$ if and only if $C$ is a chordless cycle of $G$, which is also a necessary condition for \eqref{eq:odd-cycle-msp} to be facet-defining for $\polytope$ with $E \subseteq F$, by \Cref{lemma:bqp-msp}.
    However, it can be checked that the flipped odd-cycle inequality~\eqref{eq:odd-cycle-msp} defines a facet of $\Xi_{CF_C}$ if and only if $F_C = E_C$, i.e., the cycle $C$ is chordless in the graph $(V_C, F_C)$, where $F_C = F \cap \tbinom{V_C}{2}$.
    Therefore, if $E_C$ is a strict subset of $F_C$, then none of the facet-defining odd-cycle inequalities for the boolean quadric polytope $\bqp_C$ defines a facet of the multi-separator polytope $\Xi_{CF_{C}}$.
    See \Cref{fig:odd-cycle-example}~\subref{fig:odd-cycle-b},~\subref{fig:odd-cycle-c} and~\subref{fig:odd-cycle-d} for an illustration.
    These examples tell us that the converse of the last statement of \Cref{lemma:bqp-msp} is not true.
    \Cref{fig:odd-cycle-example}~\subref{fig:odd-cycle-e} gives a cycle in $G$ that is chordless in $(V, F)$, whose associated odd-cycle inequality is valid for both $\bqp_{(V, F)}$ and $\polytope$ but is facet-defining only for $\bqp_{(V, F)}$.
    In \Cref{fig:odd-cycle-example}~\subref{fig:odd-cycle-f}, the cycle is not chordless in $(V, F)$ and hence the associated odd-cycle inequality is not facet-defining for $\bqp_{(V, F)}$; however, it is facet-defining for $\polytope$.
    
    While odd-cycle inequalities~\eqref{eq:odd-cycle-bqp} that are facet-defining for the boolean quadric polytope admit a simple characterization, the examples given in \Cref{fig:odd-cycle-example} suggest that the conditions under which the odd-cycle inequalities~\eqref{eq:odd-cycle-msp} define facets of the multi-separator polytope are more complicated.
\end{example}

\begin{figure}[t]
    \centering
    \begin{subfigure}[b]{0.15\textwidth}
		\centering
		\input{figures/relatives/odd-cycle-a.tex}
		\caption{}
		\label{fig:odd-cycle-a}
	\end{subfigure}
	\hfill
    \begin{subfigure}[b]{0.15\textwidth}
		\centering
		\input{figures/relatives/odd-cycle-b.tex}
		\caption{}
		\label{fig:odd-cycle-b}
	\end{subfigure}
	\hfill
    \begin{subfigure}[b]{0.15\textwidth}
		\centering
		\input{figures/relatives/odd-cycle-c.tex}
		\caption{}
		\label{fig:odd-cycle-c}
	\end{subfigure}	
	\hfill
    \begin{subfigure}[b]{0.15\textwidth}
		\centering
		\input{figures/relatives/odd-cycle-d.tex}
		\caption{}
		\label{fig:odd-cycle-d}
	\end{subfigure}
	\hfill
    \begin{subfigure}[b]{0.15\textwidth}
		\centering
		\input{figures/relatives/odd-cycle-e.tex}
		\caption{}
		\label{fig:odd-cycle-e}
	\end{subfigure}	
	\hfill
    \begin{subfigure}[b]{0.2\textwidth}
		\centering
		\input{figures/relatives/odd-cycle-f.tex}
		\caption{}
		\label{fig:odd-cycle-f}
	\end{subfigure}
    \caption{Depicted above are graphs $G$ (solid) and interactions $F$ (both solid and dashed).
    Vertices and interactions with coefficients $+1$ and $-1$ in the odd-cycle inequalities~\eqref{eq:odd-cycle-msp} are depicted in red and green, respectively.
    The odd-cycle inequality associated with the cycle in \subref{fig:odd-cycle-a} is not valid for the corresponding multi-separator polytope $\polytope$, which can be seen by looking at the feasible vector $x^{\{v_0, v_1, v_3\}}$.
    Each of the odd-cycle inequalities associated with the colored cycles in \subref{fig:odd-cycle-b},~\subref{fig:odd-cycle-c},~\subref{fig:odd-cycle-d} and~\subref{fig:odd-cycle-e} is valid but not facet-defining for $\polytope$ due to the existence of extra interactions.
    More specifically, in \subref{fig:odd-cycle-b} we have the additional equality $x_{v_5} + x_{v_0v_4} = x_{v_0v_5} + x_{v_4v_5}$ containing the face defined by the corresponding odd-cycle inequality.
    Similarly, the additional inequalities $x_{v_2v_5} = x_{v_3v_4}$, $x_{v_0v_3} = 1$, and $x_{uv_1} = x_{uv_2}$ are satisfied in \subref{fig:odd-cycle-c},~\subref{fig:odd-cycle-d}, and~\subref{fig:odd-cycle-e}, respectively.
    The odd-cycle inequality associated with the colored cycle in \subref{fig:odd-cycle-f} is valid and facet-defining for $\polytope$.
    }
    \label{fig:odd-cycle-example}
\end{figure}

\subsection{Lifted multicut polytope}
\label{section:relatives-lmp}

In this section, we show that every lifted multicut polytope (see \Cref{def:lifted-multicut-problem} below) is the projection of a face of some multi-separator polytope, and vice versa.
Our proofs are inspired by the linear-time reductions between the multi-separator and lifted multicut problem by \citet{irmai-2023-separator}.

\begin{definition}\cite{hornakova-2017}
	\label{def:lifted-multicut-problem}
    For any connected graph $G = (V, E)$ and any graph $G' = (V, F)$ with $E \subseteq F$, let $\lmcfeasible_{GG'}$ denote the set of all $y \in \{0, 1\}^{F}$ that satisfy the following inequalities:
    \begin{align}
        \forall (V_C, E_C) \in \textnormal{cycles}(G) \ 
        \forall e \in C \colon \qquad
        \hspace{4ex} y_{e} & \leq \sum_{e' \in E_C \setminus \{e\}} y_{e'}
        \,,
        \label{eq:lmc-cycle}
        \\
        \forall f \in F \setminus E \ 
        \forall (V_P, E_P) \in f\textnormal{-paths}(G) \colon \qquad
        \hspace{4ex} y_{f} & \leq \sum_{e \in E_P} y_e
        \,,
        \label{eq:lmc-path}
        \\
        \forall f \in F \setminus E \
        \forall C \in f\textnormal{-cuts}(G) \colon \qquad
        1 - y_f & \leq \sum_{e \in C} (1 - y_e)
        \,.
        \label{eq:lmc-cut}
    \end{align}
    The convex hull $\lmc_{GG'} \coloneqq \conv \lmcfeasible_{GG'}$ is called the \emph{lifted multicut polytope} with respect to $G$ and $G'$.
\end{definition}

\begin{remark}\label{remark:lmc-ilp}
    As for the multi-separator problem, replacing the cycles in \eqref{eq:lmc-cycle} by chordless cycles, the paths in \eqref{eq:lmc-path} by chordless paths, and the cuts in \eqref{eq:lmc-cut} by minimal cuts, results in equivalent definitions of the same $\lmcfeasible_{GG'}$ and $\lmc_{GG'}$.
\end{remark}

Recall that the subdivision $\widehat{G}$ of a graph $G = (V, E)$ is the graph obtained from subdividing all edges of $G$, i.e., every edge $e = \{u, w\} \in E$ is replaced by two edges $\{u, v_e\}$ and $\{v_e, w\}$ via inserting a new vertex $v_e$.
Therefore, $\widehat{G} = (\widehat{V}, \widehat{E})$, where $\widehat{V} = V \cup \{v_e \mid e \in E\}$ and $\widehat{E} = \{\{v, v_e\} \mid v \in e \in E\}$.

\begin{proposition}\label{prop:lmc-to-ms}
    For any connected graph $G = (V, E)$, any graph $G^{\prime} = (V, F)$ with $E \subseteq F$, the lifted multicut polytope $\lmc_{GG^{\prime}}$ is the projection of a face of the multi-separator polytope $\Xi_{\widehat{G}F}$ onto $\mathbb{R}^{F}$, where $\widehat{G}$ denotes the graph resulting from the subdivision of all edges of $G$.
\end{proposition}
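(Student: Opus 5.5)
The plan is to realize $\lmc_{GG'}$ as the image, under the coordinate projection $\pi\colon \mathbb{R}^{\widehat{V} \cup F} \to \mathbb{R}^F$, of the face of $\Xi_{\widehat{G}F}$ on which all original vertices are unblocked and every edge of $G$ is cut exactly when its subdivision vertex is blocked. Note first that $F$ is a valid interaction set on $\widehat{G}$, since $F \subseteq \tbinom{V}{2} \subseteq \tbinom{\widehat{V}}{2}$. For every $v \in V$, the lower box inequality $x_v \geq 0$ is valid. For every $e = \{u, w\} \in E \subseteq F$, the set $\{u, v_e, w\}$ is an $e$-connector of $\widehat{G}$, so the connector inequality
\[
x_e \leq x_u + x_{v_e} + x_w
\]
is valid by \Cref{lemma:ilp}. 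Let $\Phi$ be the intersection of the faces of $\Xi_{\widehat{G}F}$ defined by these $\lvert V \rvert + \lvert E \rvert$ inequalities. As an intersection of faces, $\Phi$ is a face, and it equals the convex hull of the feasible vectors it contains. Because projection commutes with convex hulls, it then suffices to prove
\[
\pi\bigl(\Phi \cap \feasible[\widehat{G}F]\bigr) = \lmcfeasible_{GG'} ,
\]
where $\feasible[\widehat{G}F]$ denotes the set of feasible vectors of $\Xi_{\widehat{G}F}$.

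Next I would describe the feasible vectors in $\Phi$. A vector $x^U$ lies in $\Phi$ if and only if $V \subseteq U$ and, for each $e = uw \in E$, $x^U_e = x^U_{v_e}$. Once $u, w \in U$, this equality says: if $v_e \notin U$, then $u$ and $w$ are not connected in $\widehat{G}[U]$. (If $v_e \in U$, they are automatically connected.) So the vectors in $\Phi$ correspond bijectively to the sets $U = V \cup \{v_e \mid e \in E_0\}$, for subsets $E_0 \subseteq E$ such that the endpoints of every edge in $E \setminus E_0$ lie in different components of $(V, E_0)$. The key observation is that two vertices of $V$ are connected in $\widehat{G}[U]$ if and only if they are connected in $(V, E_0)$, because paths in $\widehat{G}[U]$ between original vertices are exactly subdivided paths over edges in $E_0$. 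Hence the condition on $E_0$ says precisely that $E \setminus E_0$ is the set of edges of $G$ between different components of $(V, E_0)$, i.e., a multicut of $G$. Moreover, for every $f \in F$, we get $x^U_f = 1$ if and only if the endpoints of $f$ lie in different components of $(V, E_0)$.

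Finally I would identify this set with $\lmcfeasible_{GG'}$. I would use the standard characterization, which follows from \eqref{eq:lmc-cycle}--\eqref{eq:lmc-cut}: a vector $y \in \{0,1\}^F$ is in $\lmcfeasible_{GG'}$ if and only if $\{e \in E \mid y_e = 1\}$ is a multicut of $G$ and, for each $f \in F \setminus E$, $y_f = 1$ exactly when the endpoints of $f$ lie in different components of $(V, \{e \in E \mid y_e = 0\})$. Indeed, \eqref{eq:lmc-path} gives "different components implies $y_f = 1$", and \eqref{eq:lmc-cut} gives the converse. With this, $\pi(x^U)$ lies in $\lmcfeasible_{GG'}$ for every $x^U \in \Phi$. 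Conversely, given $y \in \lmcfeasible_{GG'}$, set $U = V \cup \{v_e \mid y_e = 0\}$; then $x^U \in \Phi$ and $\pi(x^U) = y$.

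The main obstacle I expect is stating this characterization of $\lmcfeasible_{GG'}$ carefully enough, and checking that connectivity in $\widehat{G}[U]$ among original vertices matches connectivity in $(V, E_0)$. Both are routine, but they must be done explicitly, including the case where an edge $e \in E$ is itself an interaction.
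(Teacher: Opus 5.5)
Your proof is correct and follows the paper's approach. You use the same face of $\Xi_{\widehat{G}F}$ ($x_v=0$ for $v\in V$ and $x_e=x_{v_e}$ for $e\in E$), the same projection onto $\mathbb{R}^F$, and the same reduction to equality of the sets of integer points. The differences are only in execution: you obtain the face directly from the globally valid connector inequalities $x_e\le x_u+x_{v_e}+x_w$, and you check the set equality through connectivity in $(V,E_0)$ rather than by matching the lifted multicut inequalities with connector and separator inequalities of $\widehat{G}$.

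One slip to fix: you have the roles of \eqref{eq:lmc-path} and \eqref{eq:lmc-cut} reversed. The path inequalities give ``$y_f=1$ implies the endpoints of $f$ lie in different components of $(V,\{e\in E\mid y_e=0\})$'', and the cut inequalities give the converse. This does not affect the argument, since you use both directions.
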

\begin{proof}
    See \Cref{section:proof:proposition:lmc-to-ms}.
\end{proof}

\begin{proposition}\label{prop:ms-to-lmc}
    For any connected graph $G = (V, E)$ with $F \subseteq \tbinom{V}{2}$, the multi-separator polytope $\Xi_{GF}$ is the projection of a face of the lifted multicut polytope $\lmc_{\bar{G}\bar{G}^{\prime}}$, where $\bar{G}$ is the graph obtained from the subdivision $\widehat{G}$ of $G$ by adding a new vertex $\bar{v}$ and a new edge $\{v, \bar{v}\}$ for each $v \in V$, and $\bar{G}^{\prime}$ is obtained from $\bar{G}$ by adding new edges $\{\bar{v}, v_e\}$ for all $v \in e \in E$ and all edges in $F$.
\end{proposition}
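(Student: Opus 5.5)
Our plan is to mirror the linear-time reduction of \citet{irmai-2023-separator} from the multi-separator problem to the lifted multicut problem, and to realize it polyhedrally as a linear map between feasible sets.

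\textbf{Encoding.} Write $\bar{V} = \widehat{V} \cup \{\bar{v} \mid v \in V\}$ and let $\bar{F}'$ be the edge set of $\bar{G}'$. The idea is that a vertex $v \in V$ is "blocked" exactly when the edge $\{v, \bar{v}\}$ is cut. Given $U \subseteq V$, we define the decomposition $\Pi_U$ of $\bar{G}$ as follows.
\begin{itemize}
\item Every $\bar{v}$ with $v \notin U$ is a singleton.
\item Every vertex $v \notin U$ and every subdivision vertex $v_e$ with $e \not\subseteq U$ is also a singleton.
\item The components of $\bar{G}$ restricted to $U \cup \{\bar{u} \mid u \in U\} \cup \{v_e \mid e \subseteq U\}$ form the remaining blocks.
\end{itemize}
Because the subdivision vertices of cut edges are isolated, two vertices $a, b \in V \cap U$ lie in the same block iff $U$ is an $ab$-connector of $G$. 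Let $y^U$ be the lifted multicut vector of $\Pi_U$. For $v \in V$ we get $y^U_{v\bar{v}} = x^U_v$, and for $f \in F$ we get $y^U_f = x^U_f$, where $f \in F$ is read as an edge of $\bar{G}'$ between original vertices.

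\textbf{Choice of face.} The vectors $y^U$ are characterized by linear equalities, which we impose as a face. For each $e = \{u, w\} \in E$:
\begin{itemize}
\item the $v_e$ edges satisfy $y_{u v_e} = y_{\bar{u} v_e}$ and $y_{w v_e} = y_{\bar{w} v_e}$;
\item $y_{uv_e} \geq y_{u\bar{u}}$;
\item $y_{uv_e} + y_{wv_e} \leq 1 + \dots$, in the form of the statement "$v_e$ is joined to $u$ iff both $u$ and $w$ are unblocked."
\end{itemize}
A cleaner route is to pick valid inequalities whose common tight set is exactly $\{y^U\}$. These are:
\begin{itemize}
\item the cycle/path inequalities $y_{u v_e} \leq y_{u\bar{u}} + y_{\bar{u}v_e}$ together with their symmetric versions;
\item the cut-type inequalities $1 - y_{\bar{u}v_e} \leq (1-y_{u\bar u}) + (1-y_{uv_e})$, which say that joining $\bar{u}$ to $v_e$ forces $u$ to be joined to both;
\item equalities forcing $y_{u v_e} = y_{w v_e}$, namely $v_e$ being joined to one end iff it is joined to the other, which we realize via the triangle $u, v_e, w$ in $\bar{G}'$ (note $\{u,w\}$ need not be an edge of $\bar G'$, so we use the lifted edges through $\bar{u}$, $\bar{w}$ instead).
\end{itemize}
We would define the face $\Phi$ as the intersection of $\lmc_{\bar G\bar G'}$ with the hyperplanes of these valid inequalities. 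We would then check two things. First, every $y^U$ lies on $\Phi$. Second, every $y \in \lmcfeasible_{\bar G \bar G'} \cap \Phi$ equals $y^U$ with $U = \{v \mid y_{v\bar v} = 0\}$. Here the constraints force each $v_e$ to be joined to its ends exactly when both ends are unblocked, and to be a singleton otherwise. The connectivity pattern on $V$ is then exactly that of $G[U]$ glued through the $v_e$.

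\textbf{Projection.} Let $\pi$ be the linear map $y \mapsto x$ with $x_v = y_{v\bar v}$ and $x_f = y_f$. Since the integer points of the face $\Phi$ are exactly the $y^U$, we have $\Phi = \conv\{y^U\}$, as any face of $\conv X$ is the convex hull of the points of $X$ it contains. Then $\pi(\Phi) = \conv\{\pi(y^U)\} = \conv\{x^U\} = \Xi_{GF}$.

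\textbf{Main obstacle.} The hard step is choosing the supporting valid inequalities so that the face contains exactly the encoded vectors. The delicate case is excluding partitions in which $v_e$ is joined to $u$ while $u$ and $w$ are unblocked but $v_e$ is not joined to $w$; there, $e$ is cut although $U$ would connect $u$ and $w$. We would first try to sum tight inequalities of the forms \eqref{eq:lmc-path} and \eqref{eq:lmc-cut} along the lifted edges $\{\bar u, v_e\}$ and $\{\bar w, v_e\}$. The intended sum is a single valid inequality, for example $\sum_{v \in V} y_{v\bar v} + \dots$ aggregated, whose tight set is precisely $\{y^U\}$. Validity of each summand would be verified from \Cref{def:lifted-multicut-problem}.
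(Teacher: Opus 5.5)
There is a genuine gap, and it lies in your encoding rather than in the choice of supporting inequalities. No face of $\lmc_{\bar G\bar G'}$ has exactly the vectors $y^U$ as its integer points. Take any edge $e=\{u,w\}\in E$. In your encoding $y^{\emptyset}=\mathbbm{1}$ (all singletons), $y^{\{u\}}=\mathbbm{1}-\mathbbm{1}_{\{u\bar u\}}$ and $y^{\{w\}}=\mathbbm{1}-\mathbbm{1}_{\{w\bar w\}}$, so
\[
y^{\{u\}}+y^{\{w\}}=y^{\emptyset}+p\,,\qquad p=\mathbbm{1}-\mathbbm{1}_{\{u\bar u,\,w\bar w\}}\,.
\]
Here $p$ is the feasible lifted multicut of the decomposition with blocks $\{u,\bar u\}$, $\{w,\bar w\}$ and singletons otherwise. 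Any valid inequality that is tight at $y^{\emptyset}$, $y^{\{u\}}$ and $y^{\{w\}}$ is also tight at $p$, so every face containing all $y^U$ contains $p$. But $p\neq y^U$ for every $U$: in $p$, $u$ and $w$ are unblocked while $v_e$ is a singleton. So your claim that the face's integer points are exactly the $y^U$ fails for every $G$ with an edge, and no aggregation of \eqref{eq:lmc-path} and \eqref{eq:lmc-cut} can change that.

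If moreover $uw\in F$, then $\pi(p)$ has $x_u=x_w=0$ and $x_{uw}=1$, which violates the connector inequality $x_{uw}\le x_u+x_w$. Hence $\pi(\Phi)\neq\Xi_{GF}$ for every face $\Phi$ containing your encoded vectors. The ``delicate case'' you single out is not the real obstruction. The obstruction is that $y^U_{uv_e}=\max(x^U_u,x^U_w)$ is an OR of two vertex states, and restricting to a face cannot impose such a relation.

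The paper instead uses an encoding in which every edge of $\bar G$ depends affinely on a single vertex:
\begin{itemize}
\item A blocked vertex $v$ is glued to its pendant $\bar v$ and cut off from all its subdivision vertices.
\item An unblocked $v$ is cut off from $\bar v$ and glued to all its subdivision vertices.
\item Every lifted edge $\{\bar v,v_e\}$ is cut.
\end{itemize}
In coordinates, $y_{v\bar v}=1-x_v$ and $y_{vv_e}=x_v$ for all $v\in e\in E$. A subdivision vertex with exactly one blocked end then hangs as a dead end off its unblocked end. Hence two unblocked original vertices share a block if and only if $U$ connects them in $G$, which gives $y_f=x_f$ for $f\in F$.

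The face is obtained in two steps. First make the bounds $y_{\bar v v_e}\le 1$ tight. Then, inside that face, make $y_{v\bar v}+y_{vv_e}\ge 1$ tight; this is valid there by the lifted path inequality along $\bar v,v,v_e$, and it is exactly what the extra edges $\{\bar v,v_e\}$ of $\bar G'$ are for. The integer points of this face are exactly the encoded vectors. Projecting onto $F$ together with one edge $\{v,v_e\}$ per vertex $v$ recovers $X_{GF}$, and hence $\Xi_{GF}$.
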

\begin{proof}
    See \Cref{section:proof:proposition:ms-to-lmc}.
\end{proof}

Although these two polytopes are closely related, we do not see a direct way to transform facets from one to the other in general.
A special case where such a relation can be identified is when the underlying graph is a path.
In the following, we show that the multi-separator polytope for paths is equivalent to the lifted multicut polytope for paths.
Thanks to this equivalence and a result from \citet{lange2021lifted}, we are able to obtain a complete totally dual integral description of the multi-separator polytope in the case where we consider as interactions all the edges that are incident on the vertices of the underlying path.

In this part, we stick to the following notation for simplicity:
For any $n \geq 1$, let $P_n = (V_n, E_n)$ be the path of length $n$ with $V_n = \{0,\dots,n\}$, $E_n = \{\{i,i+1\}\mid i\in \{0,\dots,n-1\}\}$, and let $F \subseteq F_n = \binom{V_n}{2}$ be a superset of $E_n$.
Note that for $P_n$ the connector inequalities~\eqref{eq:connector} and separator inequalities~\eqref{eq:separator} reduce to
\begin{align}
	\forall ij \in F \colon i < j \colon \quad 
	x_{ij} & \leq \sum_{k=i}^j x_k 
	\label{eq:path-connector} 
	\\
	\forall ij \in F \colon i < j \quad
	\forall k \in V \colon i \leq k \leq j \colon \quad
	x_k & \leq x_{ij} 
	\label{eq:path-separator}
\end{align}
Let $P^\prime_{n+1} = (V_{n+1}, E^\prime_{n+1})$ be another path, where 
\[
    E^\prime_{n+1} = \{e^\prime = \{i, j\} \subseteq V_{n+1} \mid i < j, \{i, j-1\} \in V_n \cup F \}
	\,.
\]
The following proposition establishes a correspondence between the multi-separator polytope $\Xi_{P_n F}$ and the lifted multicut polytope $\lmc_{P_{n+1} P^\prime_{n+1}}$.

\begin{proposition}\label{prop:ms-lmc-path}
	For any $n \in \mathbb{N}$ and any superset $F \subseteq \tbinom{V_n}{2}$ of $E_n$, we have $\Xi_{P_n F} = \lmc_{P_{n+1} P^\prime_{n+1}}$.
\end{proposition}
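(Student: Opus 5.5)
We identify coordinates: the vertex $k \in V_n$ of $P_n$ corresponds to the edge $\{k, k+1\} \in E_{n+1}$ of $P_{n+1}$, and an interaction $\{i, j\} \in F$ with $i < j$ corresponds to the pair $\{i, j+1\} \in E'_{n+1}$. By the definition of $E'_{n+1}$, this identification is a bijection $V_n \cup F \to E'_{n+1}$ sending $V_n$ onto $E_{n+1}$; we regard both polytopes as subsets of the same space via this bijection. Since both polytopes are convex hulls of their integral feasible sets, it suffices to show that the bijection maps $\feasible_{P_nF}$ onto $\lmcfeasible_{P_{n+1}P'_{n+1}}$.

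First, take $U \subseteq V_n$ and the feasible vector $x^U$. On the path $P_{n+1}$ we declare the edge $\{k, k+1\}$ cut iff $k \notin U$. This edge labeling $y$ is a valid multicut of $P_{n+1}$, since a path has no cycles and every $\{0,1\}$-edge labeling of a tree is a multicut, so \eqref{eq:lmc-cycle} is vacuous. The induced lifted labeling puts $y_{\{i, j+1\}} = 0$ iff $i$ and $j+1$ lie in the same component, i.e.\ iff all edges $\{k, k+1\}$ with $i \le k \le j$ are uncut, i.e.\ iff $\{i, \dots, j\} \subseteq U$. In $P_n$, the set $U$ is an $\{i, j\}$-connector iff $\{i, \dots, j\} \subseteq U$, so this agrees with $x^U_{ij}$. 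Likewise, for the pair $\{i, i+1\}$ corresponding to the vertex $i$ the value is consistent.

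Conversely, every $y \in \lmcfeasible_{P_{n+1}P'_{n+1}}$ is determined by its restriction to $E_{n+1}$ through \eqref{eq:lmc-path} and \eqref{eq:lmc-cut}. Indeed, the unique $\{i, j+1\}$-path and the minimal cuts, which are the single edges of that path, force $y_{\{i, j+1\}} = \max_{i \le k \le j} y_{\{k, k+1\}}$. Hence it is the image of $x^U$ with $U = \{k \mid y_{\{k,k+1\}} = 0\}$. Equivalently, one can verify the correspondence at the level of inequalities: under the bijection, \eqref{eq:path-connector} becomes \eqref{eq:lmc-path} and \eqref{eq:path-separator} becomes \eqref{eq:lmc-cut}, and by \Cref{lemma:ilp} and Remark~\ref{remark:lmc-ilp} both systems characterize the integer points.

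The only delicate point is the bookkeeping of the index shift in $E'_{n+1}$. This means checking that $E_{n+1} \subseteq E'_{n+1}$, which follows from $\{i, i\}$ being read as $i \in V_n$, and that the map is injective and surjective. Both are immediate from the definition.
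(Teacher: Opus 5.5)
Your proof is correct and follows essentially the paper's route: you use the same index-shift bijection ($k \mapsto \{k,k+1\}$, $\{i,j\} \mapsto \{i,j+1\}$) and reduce the claim to the integer points. The paper observes that this renaming turns \eqref{eq:path-connector} and \eqref{eq:path-separator} into the path and cut inequalities of the lifted multicut polytope for paths (the alternative you mention), whereas your main argument checks directly that in both feasible sets each pair's value is the maximum of the values on the corresponding interval, which is slightly more self-contained.
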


\begin{proof}
    See \Cref{section:proof:proposition:ms-lmc-path}.
\end{proof}

\citet{lange2021lifted} have established a complete description of the lifted multicut polytope for paths which, by the above observation, yields a complete description of the lifted multi-separator polytope for paths.
More specifically, let
\begin{align}
	x_{0n} & \leq 1
	\label{eq:path-box} 
	\\
	\forall i \in \{0,\dots,n-1\} \colon \quad \hspace{11.2ex}  
	x_{0i} &\leq x_{0,i+1} 
	\label{eq:path-separator-left} 
	\\
	\forall i \in \{1,\dots,n\} \colon \quad \hspace{11ex} 
	x_{in} &\leq x_{i-1,n} 
	\label{eq:path-separator-right}
	\\
	\forall i \in \{0,\dots,n-1\} \colon \quad \hspace{8.5ex}
	x_{i,i+1} & \leq x_{ii} + x_{i+1,i+1} 
	\label{eq:path-triangle} 
	\\
	\forall i,j \in \{1,\dots,n-1\} \ \forall i \leq j \colon \quad
	x_{ij} + x_{i-1,j+1} & \leq x_{i-1,j} + x_{i,j+1} 
	\label{eq:path-intersection}
\end{align}
where $x_{ii}$ is the variable corresponding to vertex $i$ for $i \in \{0,\dots,n\}$.
Notice that \eqref{eq:path-box} is a box inequality, \eqref{eq:path-triangle} are connector inequalities, \eqref{eq:path-separator-left} and \eqref{eq:path-separator-right} are generalized separator inequalities (See \Cref{cor:ineq-cut-vertex}), and \eqref{eq:path-intersection} are intersection inequalities introduced in \Cref{section:facets-intersection}.
System \eqref{eq:path-box}~--~\eqref{eq:path-intersection} corresponds to system (11)~--~(15) of \citet{lange2021lifted}.
By Theorem~1 of their work, this latter system provides a complete totally dual integral description of the lifted multicut polytope for paths.
As a consequence, we obtain the following result immediately.

\begin{proposition}\label{prop:ms-lmc-path-tdi}
	For any $n \in \mathbb{N}$, we have
	\begin{equation*}
		\Xi_{P_n F_n} =
        \left\{
            x \in \mathbb{R}^{V_n \cup F_n} \mid x \textnormal{ satisfies \eqref{eq:path-box} -- \eqref{eq:path-intersection}}
        \right\}
        \,.
	\end{equation*}
	Furthermore, the system \eqref{eq:path-box} -- \eqref{eq:path-intersection} is totally dual integral.
\end{proposition}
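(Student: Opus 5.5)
The plan is to transport the result of \citet{lange2021lifted} along the identification of \Cref{prop:ms-lmc-path}. Specializing that proposition to $F = F_n = \tbinom{V_n}{2}$ gives $\Xi_{P_n F_n} = \lmc_{P_{n+1} P'_{n+1}}$. Here $E'_{n+1}$ consists of all pairs $\{i,j\}$ with $i<j$ and $\{i,j-1\} \in V_n \cup F_n$. This is all of $\tbinom{V_{n+1}}{2}$, so $P'_{n+1}$ is the complete graph on $V_{n+1}$.

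The first thing to settle is how coordinates are matched. The equality in \Cref{prop:ms-lmc-path} implicitly identifies $\mathbb{R}^{V_n \cup F_n}$ with $\mathbb{R}^{E'_{n+1}}$. A vertex $i$, written $x_{ii}$, corresponds to the edge $\{i,i+1\}$ of $P_{n+1}$. An interaction $\{i,j\}$ with $i<j$ corresponds to the lifted edge $\{i,j+1\}$.

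With that dictionary fixed, I would rewrite system (11)--(15) of \citet{lange2021lifted}, which is stated in the variables $y_{i,j+1}$, term by term in the variables $x_{ij}$. The goal is to check that it becomes exactly \eqref{eq:path-box}--\eqref{eq:path-intersection}:
\begin{itemize}
\item their bound on the long-range edge $y_{0,n+1}$ becomes \eqref{eq:path-box};
\item their monotonicity constraints along the left and right ends become \eqref{eq:path-separator-left} and \eqref{eq:path-separator-right};
\item their triangle-type inequality on three consecutive nodes becomes \eqref{eq:path-triangle};
\item their ``intersection'' inequalities on four nodes $i-1<i\le j<j+1$, after the index shift, become \eqref{eq:path-intersection}.
\end{itemize}
This is the only place where care is needed. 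In particular the boundary cases $i=j$ in \eqref{eq:path-intersection} must be handled: there the term $x_{ii}$ is a vertex variable, and it must match their edge variable $y_{i,i+1}$.

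Once the two systems coincide under this linear bijection of coordinates, Theorem~1 of \citet{lange2021lifted} gives that their system describes $\lmc_{P_{n+1}P'_{n+1}}$, hence \eqref{eq:path-box}--\eqref{eq:path-intersection} describes $\Xi_{P_nF_n}$.

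Total dual integrality transfers as well. The coordinate change is a permutation of variables, with no scaling and no translation, so the constraint matrices and right-hand sides are literally the same up to column permutation. TDI is invariant under such a relabelling, because integrality of the dual optimum for every integral objective is preserved when objectives are permuted accordingly.

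The main obstacle is purely bookkeeping: confirming that the indexing conventions of \citet{lange2021lifted} (their node set, which pairs count as the path edges, and the orientation of the monotonicity constraints) align exactly with the shift used in \Cref{prop:ms-lmc-path}. A mismatch there would at most reorder the inequalities, and would not change the argument.
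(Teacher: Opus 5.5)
Your proposal is correct and is essentially the paper's proof. You specialize Proposition~\ref{prop:ms-lmc-path} to $F=F_n$ (so that $P'_{n+1}$ is complete), match \eqref{eq:path-box}--\eqref{eq:path-intersection} with system (11)--(15) of \citet{lange2021lifted} under the bijection $i\mapsto\{i,i+1\}$, $\{i,j\}\mapsto\{i,j+1\}$, and invoke their Theorem~1, with total dual integrality carried over because the bijection only permutes coordinates. One point to tighten: total dual integrality can be lost when inequalities are dropped or rescaled, so the match with (11)--(15) must be exact row by row, not merely up to an equivalent system or a reordering as your closing remark suggests; this exact correspondence is what the paper asserts.
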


\begin{proof}
    See \Cref{section:proof:proposition:ms-lmc-path-tdi}.
\end{proof}

\section{Conclusion}
\label{section:conclusion}

We describe in terms of efficiently-decidable, graph-theoretic conditions the connector inequalities, separator inequalities and box inequalities that define a facet of the multi-separator polytope. 
To obtain a tighter polyhedral relaxation, we introduce path and intersection inequalities.
For path graphs where every vertex pair forms an interaction, this yields a totally dual integral description of the multi-separator polytope.
We relate the multi-separator polytope to the boolean quadric polytope, showing that facets induced by odd-cycle inequalities do not transfer generally, and to the lifted multicut polytope, showing that either polytope is a projection of a face of the other.
Our finding that it can be decided efficiently whether a separator inequality defines a facet of the multi-separator polytope is in contrast to the result of \citet{naumannbox} that deciding whether a cut inequality defines a facet of the lifted multicut-polytope is NP-hard.

\bibliography{references}
\bibliographystyle{plainnat}

\addtocontents{toc}{\protect\setcounter{tocdepth}{1}}


\appendix
\section{Separation algorithms}

In this section we describe how to efficiently separate connector and separator inequalities.
In particular, we show that the connector inequalities~\eqref{eq:connector} can be separated by searching for shortest paths and that the separator inequalities~\eqref{eq:separator} can be separated by solving a max-flow problem in an auxiliary graph.

\begin{proposition}\label{prop:connector-separation}
    Let $G = (V, E)$ be a connected graph, let $F \subseteq \binom{V}{2}$ be a set of interactions and let $x \in [0,1]^{V \cup F}$.
    Then one can decide in time $\mathcal{O}(|E||V|+|V|^2 \text{log}|V|)$ whether there exists a violated connector inequality and return such an inequality.
\end{proposition}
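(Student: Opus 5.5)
We reduce the separation problem to shortest-path computations with vertex weights given by $x$. For a fixed interaction $f = \{a,b\} \in F$, a connector inequality $x_f \leq \sum_{v \in C} x_v$ is violated for some $f$-connector $C$ if and only if
\[
\min_{C \in f\textnormal{-connectors}(G)} \sum_{v \in C} x_v < x_f \,.
\]
Since $x \geq 0$, this minimum is attained at a minimal $f$-connector. By \Cref{cor:connector-separator-minimal}~\ref{item:cor-connector-minimal}, a minimal $f$-connector induces an $\{a,b\}$-path. Conversely, the vertex set of any $\{a,b\}$-path is an $f$-connector. Hence the minimum equals the minimum vertex-weight of an $\{a,b\}$-path in $G$, where the weight of a path is the sum of $x_v$ over all of its vertices, endpoints included.

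Vertex-weighted shortest paths are converted to edge-weighted ones in the standard way. Orient every edge $\{u,w\} \in E$ in both directions, and give the arc $(u,w)$ the length $x_w$. A directed $a$-to-$b$ path then has length equal to its vertex weight minus $x_a$. All lengths are nonnegative because $x \in [0,1]^{V \cup F}$, so Dijkstra's algorithm with a Fibonacci heap computes, from a single source $a$, the distances $d(a,b)$ to all $b \in V$ in time $\mathcal{O}(|E| + |V|\log|V|)$.

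To avoid paying a cost for each interaction separately, we group interactions by an endpoint. We run Dijkstra once from every vertex $a \in V$, for a total time of $\mathcal{O}(|V||E| + |V|^2\log|V|)$, which matches the claimed bound. For every $f = \{a,b\} \in F$ we then test whether $x_a + d(a,b) < x_f$. This takes $\mathcal{O}(1)$ per interaction, so $\mathcal{O}(|F|) \subseteq \mathcal{O}(|V|^2)$ in total, which is absorbed by the bound. If the test succeeds for some $f$, we recover the corresponding path from the predecessor pointers stored by Dijkstra and return the connector inequality for $f$ and the vertex set $C$ of that path. If the test fails for every $f$, no connector inequality is violated: any violated inequality with connector $C$ would give an $\{a,b\}$-path inside $C$ of weight at most $\sum_{v \in C} x_v < x_f$, by nonnegativity of $x$.

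The only point needing care is the correctness argument in the previous paragraph, namely that it suffices to consider paths. This rests on $x \geq 0$, which lets us shrink any connector to a minimal one without increasing its weight, together with \Cref{cor:connector-separator-minimal}. The remaining steps are standard bookkeeping. Connectivity of $G$ guarantees that all distances are finite, although finiteness is not actually needed.
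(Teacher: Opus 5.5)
Your proof is correct and takes essentially the same approach as the paper: restrict to $f$-paths via minimal connectors, which is justified by $x \geq 0$, and run Dijkstra from every vertex to get the stated bound. The only difference is cosmetic. You turn vertex weights into edge weights with directed arcs of length $x_w$, while the paper uses symmetric undirected weights $w_{uv} = \tfrac{1}{2}(x_u + x_v)$, so that a path has weight $\sum_{u \in V_P} x_u - \tfrac{1}{2}(x_s + x_t)$.
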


\begin{proof}
    We define the non-negative edge weights $w_{uv} = \frac{1}{2}(x_u+x_v)$ for $\{u, v\} \in E$.
    By this definition, for every $\{s, t\}$-path $P=(V_P, E_P)$ in $G$ it holds that
    \begin{align}\label{eq:path-weight-connector-weight}
        w(P) \coloneqq \sum_{e \in E_P} w_e = \sum_{u \in V_P} x_u - \frac{1}{2}(x_s + x_t)\,.
    \end{align}
    Now let $f=\{s,t\} \in F$ and let $P=(V_P,E_P)$ be a minimum weighted $f$-path in $G$ with weights $w$.
    Then $x$ violates a connector inequality associated with $f$ if and only if $w(P) + \tfrac{1}{2}(x_s + x_t) < x_{f}$ by the following arguments. 

    Suppose it holds that $w(P) + \tfrac{1}{2}(x_s + x_t) < x_{f}$. 
    Clearly, $V_P$ is an $f$-connector of $G$ and \eqref{eq:path-weight-connector-weight} yields that $x$ violates the connector inequality associated with the interaction $f$ and the $f$-connector $V_P$.

    Conversely, suppose that there exists an $f$-connector $C$ such that $x$ violates the connector inequality associated with $f$ and $C$.
    By \Cref{lemma:ilp}, every connector inequality associated with a non-minimal connector is dominated by a connector inequality associated with a minimal connector. 
    Let $C'$ be a minimal $f$-connector of $G$ with $C' \subseteq C$.
    Since $C'$ is an $f$-minimal it induces an $f$-path $P'=(V_{P'},E_{P'})$ with $V_{P'} = C'$.
    By the assumption that $P$ is a minimum-weight $f$-path, together with \eqref{eq:path-weight-connector-weight} and the violated connector inequality, we obtain
    \[
        w(P) + \tfrac{1}{2}(x_s + x_t) \leq w(P') + \tfrac{1}{2}(x_s + x_t) = \sum_{u \in C'} x_u \leq \sum_{u \in C} x_u < x_{f} \enspace.
    \]

    Using Dijkstra's algorithm, shortest $f$-paths in $G$ for all $f \in F$ can be computed in time $\mathcal{O}(|E||V|+|V|^2 \text{log}|V|)$.
\end{proof}

\begin{proposition}
    Let $G = (V, E)$ be a connected graph, let $F \subseteq \binom{V}{2}$ be a set of interactions and let $x \in [0,1]^{V \cup F}$.
    Then for any $f \in F$, one can find an $f$-separator $S$ of $G$ such that $x$ violates the separator inequality associated with $f$ and $S$ or decide that such a separator does not exist in time $\mathcal{O}(|V|^2|E|)$.
\end{proposition}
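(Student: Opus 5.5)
We reduce separation of separator inequalities for a fixed $f = \{s,t\}\in F$ to a minimum vertex-cut computation, i.e.\ a max-flow problem in an auxiliary directed graph. Rewriting \eqref{eq:separator}, $x$ violates the inequality for $f$ and $S$ if and only if $\sum_{v\in S}(1-x_v) < 1 - x_f$. So it suffices to compute an $f$-separator $S^\ast$ minimizing $\sum_{v \in S}(1-x_v)$ over all $f$-separators of $G$, and compare its weight with $1-x_f$. If $\sum_{v\in S^\ast}(1-x_v) < 1-x_f$, we return $S^\ast$. Otherwise no violated separator inequality for $f$ exists, since every $f$-separator has weight at least that of $S^\ast$. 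The weights $1-x_v$ are non-negative because $x\in[0,1]^{V\cup F}$, so this is a minimum-weight vertex-separator problem with non-negative weights.

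Recall that in this paper an $f$-separator may contain $s$ or $t$. By the definition, $S$ is an $\{s,t\}$-separator if and only if every $\{s,t\}$-path in $G$, endpoints included, meets $S$. This follows from \Cref{cor:connector-separator-minimal}, or directly: $V\setminus S$ is not an $f$-connector exactly when no path from $s$ to $t$ avoids $S$.

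We build the standard vertex-split network $D$:
\begin{itemize}
\item for each $v\in V$, create nodes $v^-$ and $v^+$ joined by an arc $(v^-,v^+)$ of capacity $1-x_v$;
\item for each edge $\{u,w\}\in E$, add arcs $(u^+,w^-)$ and $(w^+,u^-)$ of infinite capacity (or capacity $|V|+1$, which exceeds any finite separator weight);
\item take source $s^-$ and sink $t^+$.
\end{itemize}
Because the split arcs of $s$ and $t$ themselves are present, cutting them corresponds to putting $s$ or $t$ into $S$.

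The key claim is that finite-capacity $s^-$–$t^+$ cuts in $D$ correspond to $f$-separators of $G$ with equal weight. Given $S$, deleting the arcs $(v^-,v^+)$ for $v\in S$ disconnects $s^-$ from $t^+$: any directed $s^-$–$t^+$ path in $D$ projects to an $s$–$t$ walk in $G$ that uses the split arc of every vertex it visits, including $s$ and $t$. Conversely, a minimum cut of finite capacity contains only split arcs, and the set of corresponding vertices meets every $s$–$t$ path. By the max-flow min-cut theorem, the minimum cut value equals the minimum separator weight. A minimum cut, and hence $S^\ast$, is read off from the residual reachability set after computing a maximum flow.

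For the running time, $D$ has $2|V|$ nodes and $|V|+2|E|$ arcs. An $\mathcal{O}(n^2 m)$ max-flow algorithm, such as Dinic's algorithm, therefore runs in $\mathcal{O}(|V|^2|E|)$; $G$ is connected, so $|V| = \mathcal{O}(|E|)$. Extracting the cut from the residual graph takes linear time.

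The only delicate point is the correct treatment of the endpoints. The source must be $s^-$ and the sink $t^+$, not $s^+$ and $t^-$, so that $s$ and $t$ are themselves eligible for inclusion in $S$. Otherwise we would compute only separators disjoint from $f$ and miss violated inequalities such as $x_s \le x_f$ with $S=\{s\}$.
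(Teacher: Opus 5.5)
Your proposal is correct and is essentially the paper's proof. Both build a vertex-split digraph with capacity $1-x_v$ on each split arc and large capacity on all other arcs. Both take a minimum cut from the in-copy of one endpoint of $f$ to the out-copy of the other, so that the endpoints themselves may lie in $S$, and both run Dinic's algorithm in $\mathcal{O}(|V|^2|E|)$. The differences are only cosmetic: the paper orients the edges at the endpoints of $f$ in one direction only and uses capacity $2$ instead of $\infty$ or $|V|+1$.
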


\begin{proof}
    Suppose $f = \{u,v\}$. Firstly, we define the auxiliary directed graph $G'=(V', E')$ with 
    \begin{align*}
        V' = {}& \{s',s''\mid s \in V\}\,, \\
        E' = {}& \{(s',s'') \mid s \in V \} \\
        & \cup \{(s'',t'), (t'', s') \mid st \in E,\, s,t \notin \{u,v\}\} \\
        & \cup \{(u'', s') \mid s \in V \setminus \{u\} \text{ with } us \in E\} \\
        & \cup \{(s'', v') \mid s \in V \setminus \{v\} \text{ with } vs \in E\} \,.
    \end{align*}
    Further, we define the non-negative edge weights $w_{(s',s'')} = 1-x_s$ for $s \in V$ and $w_a = 2$ for all other arcs $a \in E' \setminus \{(s',s'') \mid s \in V\}$.
    We show that there exists an $f$-separator $S$ such that $x$ violates the separator inequality associated with $f$ and $S$ if and only if the value of a minimum $(u',v'')$-cut of $G'$ is strictly smaller than $1-x_{f}$.

    Since $(u',u'')$ is the only arc in $E'$ that is adjacent to $u'$, we observe that the set $\{(u',u'')\}$ is a $(u',v'')$-cut of value $w_{(u',u'')} = (1-x_u) \leq 1$.
    Let $S'$ be a minimum $(u',v'')$-cut of $G'$ with weight $w(S') \coloneqq \sum_{a \in S'} w_a$.
    By the above observation and the minimality of $S'$, it holds that $w(S') \leq 1$.
    It follows from the definition of $w$ that $S' \subseteq \{(s',s'') \mid s \in V\}$.
    Let $S \subseteq V$ such that $S' = \{(s',s'')\mid s \in S\}$.
    By our definition of $G'$, the vertex set $S$ is an $f$-separator of $G$. 
    Conversely, every $f$-separator $T \subseteq V$ induces a $(u',v'')$-cut $T'=\{(s',s'') \mid s \in T\}$ of $G'$.
    By the definition of the weights, it holds that $\sum_{s \in S} (1 - x_s) = w(S')$.
    Therefore, there exists a violated separator inequality associated with the interaction $f$ if and only if the minimum $(u',v'')$-cut $S'$ has value $w(S') < (1 - x_{f})$.

    Using Dinic's algorithm a minimum $(u',v'')$-cut of $G'$ can be computed in time $\mathcal{O}(|V'|^2|E'|)$.
    The result follows from the observation $|V'| = 2 |V|$ and $|E'| = |V| + 2|E|$.
\end{proof}

\label{section:separation}

\section{Construction of feasible vectors}
\label{section:lemma-ie}

In this appendix, we introduce a basic technique for constructing from a set $\sigmab$ of feasible vectors of the multi-separator problem vectors in the the linear space
\begin{align*}
	\lin{\sigmab}
    =
	\left\{ 
		\sum_{x \in \sigmab} \alpha_x \, x
		\ \middle|\ 
		\alpha \in \mathbb{R}^{\sigmab}, \
		\sum_{x \in \sigmab} \alpha_x = 0
	\right\}
	\bjoerni{\enspace .}
\end{align*}

\begin{lemma}
    \label{lemma:ie}
    Let $G$ be a graph, let $F$ be a set of interactions on $G$, and let $\sigmab \subseteq \feasible$.
    Suppose that $A$ and $B$ are two vertex subsets of $G$ such that the feasible vectors $x^A$,~$x^B$,~$x^{A \cap B}$ and~$x^{A \cup B}$ are all in $\sigmab$, then
    \begin{align}
        \mathbbm{1}_{Q_{AB}} & \in \lin{\sigmab}\,,
       	\label{eq:ie-1}
        \\
        \mathbbm{1}_{F_{AB}} - \mathbbm{1}_{H_{AB}} & \in \lin{\sigmab}\,.
        \label{eq:ie-2}
    \end{align}
    where\footnote{$A \triangle B$ denotes the symmetric difference of $A$ and $B$.}
    \begin{align*}
        Q_{AB} & \coloneqq
        \left\{
            f \in F \;\middle|\;
            A \cup B \ \textnormal{is an $f$-connector of $G$, but $A \cap B$ is not}
        \right\}
        \;\cup\; (A \triangle B) 
        \,,
		\\
        F_{AB} & \coloneqq
        \left\{
            f \in F \;\middle|\;
            A \cup B \ \textnormal{is an $f$-connector of $G$, but $A$ and $B$ are not}
        \right\}
        \,,
		\\
        H_{AB} & \coloneqq
        \left\{
            f \in F \;\middle|\;
            \textnormal{$A$ and $B$ are $f$-connectors of $G$ but $A \cap B$ is not}
        \right\}
		\,.
    \end{align*}
    \end{lemma}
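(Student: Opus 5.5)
The whole argument rests on a single inclusion–exclusion combination. The vector $y \coloneqq x^{A \cap B} + x^{A \cup B} - x^{A} - x^{B}$ has coefficients $1,1,-1,-1$, which sum to zero, so $y \in \lin{\sigmab}$. I will compute $y$ coordinate by coordinate and show that it equals $\mathbbm{1}_{F_{AB}} - \mathbbm{1}_{H_{AB}}$ on $F$ and vanishes on $V$. That gives \eqref{eq:ie-2} directly. For \eqref{eq:ie-1} I will use a second combination, $z \coloneqq x^{A \cap B} - x^{A \cup B}$, whose coefficients also sum to zero, so $z \in \lin{\sigmab}$.

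\textbf{Vertex coordinates.} For $v \in V$ we have $x^U_v = 1 - [v \in U]$. Since $[v\in A\cap B] + [v \in A\cup B] = [v\in A] + [v\in B]$, this gives $y_v = 0$. For $z$ we get $z_v = [v \in A\cup B] - [v\in A\cap B] = [v \in A \triangle B]$.

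\textbf{Interaction coordinates.} For $f \in F$ write $c_U = [U \text{ is an } f\text{-connector}]$, so that $x^U_f = 1 - c_U$. The key monotonicity fact is that $U \subseteq W$ and $c_U = 1$ imply $c_W = 1$, because a path in $G[U]$ is also a path in $G[W]$. Hence $c_{A\cap B} \le c_A, c_B \le c_{A\cup B}$, and so
\begin{align*}
y_f = c_A + c_B - c_{A\cap B} - c_{A\cup B}\,, \qquad z_f = c_{A\cup B} - c_{A\cap B}\,.
\end{align*}
Thus $z_f = 1$ exactly when $A\cup B$ is an $f$-connector and $A\cap B$ is not, and $z_f = 0$ otherwise. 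Combined with the vertex computation, this shows $z = \mathbbm{1}_{Q_{AB}}$, which proves \eqref{eq:ie-1}.

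\textbf{Case analysis for $y_f$.} I enumerate the monotone patterns of $(c_{A\cap B}, c_A, c_B, c_{A\cup B})$.
\begin{itemize}
\item All equal to $0$, or all equal to $1$: then $y_f = 0$.
\item $c_{A\cap B}=0$, $c_{A\cup B}=1$, and exactly one of $c_A, c_B$ equals $1$: then $y_f = 1 - 0 - 1 = 0$.
\item $c_{A\cap B}=0$, $c_{A\cup B}=1$, $c_A=c_B=0$: then $y_f = -1$, and this is exactly the case $f \in F_{AB}$.
\item $c_{A\cap B}=0$, $c_A=c_B=1$ (which forces $c_{A\cup B}=1$): then $y_f = 2 - 1 = 1$, and this is exactly the case $f \in H_{AB}$.
\end{itemize}
So $y = \mathbbm{1}_{H_{AB}} - \mathbbm{1}_{F_{AB}}$. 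Since $\lin{\sigmab}$ is a linear space, $-y \in \lin{\sigmab}$, which gives \eqref{eq:ie-2}.

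The only real step is the monotonicity of connectors, which rules out patterns such as $c_{A\cap B} = 1$ with $c_A = 0$. Apart from that there is no obstacle; the proof is bookkeeping.
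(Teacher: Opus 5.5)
Your proof is correct and is essentially the paper's argument: the paper uses the same two zero-sum combinations, $x^{A\cap B}-x^{A\cup B}$ for \eqref{eq:ie-1} and $x^A+x^B-x^{A\cap B}-x^{A\cup B}$ (the negative of your $y$) for \eqref{eq:ie-2}. The only difference is in how the second identity is checked. The paper applies inclusion--exclusion to the augmented sets $A' = A\cup\{f\in F \mid f \textnormal{ connected in } G[A]\}$ and $B'$, whereas you enumerate the monotone patterns of $(c_{A\cap B},c_A,c_B,c_{A\cup B})$, which makes explicit the monotonicity of connectors that the paper uses only implicitly.
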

    
    \begin{proof}
        For the given two vertex subsets $A$ and $B$, we define
        \begin{align*}
	        y^{A, B} & \coloneqq x^{A \cap B} - x^{A \cup B}\,, \\
	        z^{A, B} & \coloneqq x^A + x^B - x^{A \cap B} - x^{A \cup B} 
	        \,.
        \end{align*}
        Both $y^{A, B}$ and $z^{A, B}$ are in the linear space $\lin{\sigmab}$ because $x^A$,~$x^B$,~$x^{A \cap B}$ and~$x^{A \cup B}$ are in $\sigmab$, by the assumption, and the sum of coefficients in the forms that define $y^{A, B}$ and $z^{A, B}$ equals zero.
        Thus, it is sufficient to show 
        \begin{align}
            y^{A, B} & = \mathbbm{1}_{Q_{AB}} \label{eq:ie-1-proof} \,,\\            
            z^{A, B} & = \mathbbm{1}_{F_{AB}} - \mathbbm{1}_{H_{AB}} \label{eq:ie-2-proof}
            \enspace . 
        \end{align}
        Equation~\eqref{eq:ie-1-proof} is a straightforward consequence of the definition of $y^{A, B}$.
        To establish \eqref{eq:ie-2-proof}, we define
        \begin{align*}
            A^\prime & := A \cup \{f \in F \mid f\ \textnormal{is connected in}\ G[A]\} \,,\\
            B^\prime & := B \cup \{f \in F \mid f\ \textnormal{is connected in}\ G[B]\} \,,
        \end{align*}
        and observe that the following equations hold:
        \begin{align*}
            \mathbbm{1}_{A^\prime} & = \mathbbm{1}_{V \cup F} - x^A\,,
            \\
            \mathbbm{1}_{B^\prime} & = \mathbbm{1}_{V \cup F} - x^B\,,
			\\
            \mathbbm{1}_{A^\prime \cup B^\prime} & = \mathbbm{1}_{V \cup F} - x^{A \cup B} - \mathbbm{1}_{F_{AB}}\,,
            \\
            \mathbbm{1}_{A^\prime \cap B^\prime} & = \mathbbm{1}_{V \cup F} - x^{A \cap B} + \mathbbm{1}_{H_{AB}}\,.            
        \end{align*}
        By the inclusion-exclusion principle, the characteristic vectors defined by $A^{\prime}$ and $B^{\prime}$ are related by the identity
        \begin{equation*}
	        \mathbbm{1}_{A^\prime} 
	        + \mathbbm{1}_{B^\prime}
            - \mathbbm{1}_{A^\prime \cap B^\prime} 
            - \mathbbm{1}_{A^\prime \cup B^\prime} = 0\,,
        \end{equation*}
        which, together with the above equalities, implies \eqref{eq:ie-2-proof}.
    \end{proof}
    
    \begin{figure}
        \centering
        \begin{subfigure}[b]{0.2\textwidth}
            \centering
            \input{figures/ie1.tex}
            \caption{}
            \label{fig:ie1}
        \end{subfigure}
        \hspace{3ex}
        \begin{subfigure}[b]{0.2\textwidth}
            \centering
            \input{figures/ie2.tex}
            \caption{}
            \label{fig:ie2}
        \end{subfigure}
        \hspace{3ex}
        \begin{subfigure}[b]{0.2\textwidth}
            \centering
            \input{figures/ie3.tex}
            \caption{}
            \label{fig:ie3}
        \end{subfigure}
        \caption{Examples illustrating the constructions in \Cref{lemma:ie}.
        Depicted are graphs $G$ (solid, black) and interactions $F$ (dashed, green).}
        \label{fig:ie}
    \end{figure}
    
    The example below illustrates how the constructions $Q_{AB}$,~$F_{AB}$ and~$H_{AB}$ introduced in \Cref{lemma:ie} are computed in a special, but typical, case, i.e., when $A \setminus B$ and $B \setminus A$ contain at most one vertex.
    
    \begin{example}\label{example:ie}
        Let $G$ and $F$ be as in \Cref{lemma:ie}.
        If $A \subseteq B$ and $B \setminus A = \{b\}$, then
        \[
            Q_{AB} =
            \{b\} \cup
            \left\{
                f \in F \mid
                b \ \textnormal{is an $f$-cut-vertex of}\ G[B]
            \right\}
            \,,
        \]
        and $F_{AB} = H_{AB} = \emptyset$.
        If $A$ and~$B$ satisfy $A \setminus B = \{a\}$ and $B \setminus A = \{b\}$, then
        \begin{align*}
            F_{AB} & =
            \left\{
                f \in F \mid
                f \subseteq A \cup B\,,\ \textnormal{both $a$ and $b$ are $f$-cut-vertices of} \ G[A \cup B]
            \right\}
            \,,
            \\
            H_{AB} & =
            \left\{
                f \in F \mid
                f \subseteq A \cap B\,,\ \textnormal{$\{a, b\}$ is a minimal $f$-separator of}\ G[A \cup B]
            \right\}
            \,.
        \end{align*}
        We illustrate these cases using \Cref{fig:ie}.
        \begin{enumerate}[label=\textnormal{(\Alph*)}]
            \item In \Cref{fig:ie}~\subref{fig:ie1},~\subref{fig:ie2} and~\subref{fig:ie3}, let $B$ be the vertex set of $G$ and let $A = B \setminus \{u\}$.
            Then $Q_{AB} = \{u, uv, uw\}$ in \Cref{fig:ie}~\subref{fig:ie1} and $Q_{AB} = \{u, uw\}$ in \Cref{fig:ie}~\subref{fig:ie2},~\subref{fig:ie3}.
            \item In \Cref{fig:ie}~\subref{fig:ie1}, let $A = \{u, v\}$ and $B = \{v, w\}$.
            Clearly, $A \setminus B = \{u\}$ and $B \setminus A = \{w\}$.
            Thus, $F_{AB} = \{uw\}$ and $H_{AB} = \emptyset$.
            \item In \Cref{fig:ie}~\subref{fig:ie2} and~\subref{fig:ie3}, let $A = \{u, v, v'\}$ and $B = \{v, v', w\}$.
            Clearly, $A \setminus B = \{u\}$ and $B \setminus A = \{w\}$.
            In \Cref{fig:ie}~\subref{fig:ie2}, we have $F_{AB} = \{uw\}$ and $H_{AB} = \emptyset$, since $\{u, w\}$ is not a minimal $\{v,v'\}$-separator of $G$.
            In contrast, $\{u, w\}$ is a minimal $\{v, v'\}$-separator of $G$ in \Cref{fig:ie}~\subref{fig:ie3}, and thus $H_{AB} = \{vv'\}$, though $F_{AB} = \{uw\}$ as before.
        \end{enumerate}
    \end{example}

\section{Dimension and pseudoforests}
\label{section:pseudoforest}
In this section, we give an algebraic lemma that we apply in order to establish the sufficiency of conditions for an inequality to induce a facet of the multi-separator polytope.

To begin with, we recall additional, basic notions from graph theory:
A \emph{loop graph} is an ordered pair of disjoint sets $(V, E)$ with $E \subseteq \tbinom{V}{1} \cup \tbinom{V}{2}$.
The incidence matrix of a loop graph $G = (V, E)$ is defined as the binary vertex-edge matrix $M \in \{0,1\}^{V \times E}$ such that $M(v, e) = 1$ if and only if $v \in e$.
A \emph{pseudoforest} is a loop graph such that each of its connected components contains at most one cycle (including loops).
It is immediate from the definition that a loop graph $G = (V, E)$ is a pseudoforest if and only if for any vertex subset $U \subseteq V$, the number of edges in the induced subgraph $G[U]$ is at most the number of vertices in $U$.

%
%

The result below relates pseudoforests to loop graphs whose incidence matrices have full column rank.

\begin{lemma}\label{lemma:pseudoforest}
    The incidence matrix of a loop graph has full column rank if and only if the loop graph is a pseudoforest without even cycles.
\end{lemma}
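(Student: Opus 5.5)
Full column rank means the columns of the incidence matrix $M$ are linearly independent, i.e.\ the only $\lambda \in \mathbb{R}^E$ with $M\lambda = 0$ is $\lambda = 0$. The plan is to show that a nonzero kernel vector exists precisely when some connected component contains either two cycles or one even cycle. Columns of different components have disjoint supports, so $M$ has full column rank if and only if the incidence matrix of every connected component does. It therefore suffices to treat a connected loop graph $G = (V, E)$.

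\emph{Necessity.} Suppose first that $G$ contains an even cycle $e_1, \dots, e_{2k}$ (a genuine cycle, so no loops). Give its edges alternating signs $\pm 1$ and all other edges $0$. At every vertex of the cycle the two incident cycle edges receive opposite signs, so this vector lies in the kernel. Suppose next that a component contains at least two cycles, i.e.\ $|E(G)| > |V(G)|$, since a connected graph with $|E| \le |V|$ has at most one cycle. Then $M$ has more columns than rows, so its columns are linearly dependent. Hence failure to be a pseudoforest without even cycles forces a nontrivial kernel.

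\emph{Sufficiency.} Now let $G$ be connected with at most one cycle, and that cycle is either a loop or odd. I argue by induction on $|E|$ via leaf removal. If some vertex $v$ has degree $1$ via a non-loop edge $e$, then row $v$ of $M\lambda = 0$ reads $\lambda_e = 0$. Deleting $e$ and $v$ leaves a smaller graph of the same type, and induction finishes the step. (The row of the remaining endpoint just loses a zero term.) Repeated stripping ends either with a single vertex and no edges, where the claim is trivial, or with the unique cycle alone. If that cycle is a loop $\{v\}$, row $v$ gives $\lambda_{\{v\}} = 0$. If it is an odd cycle $v_1, \dots, v_m$ with edges $e_i = \{v_i, v_{i+1}\}$, the equations $\lambda_{e_{i-1}} + \lambda_{e_i} = 0$ force $\lambda_{e_i} = (-1)^{i-1}\lambda_{e_1}$. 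Going around the cycle gives $\lambda_{e_1} = (-1)^m \lambda_{e_1} = -\lambda_{e_1}$, so $\lambda = 0$.

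The main obstacle is getting the pseudoforest bookkeeping right. The leaf-stripping must be set up so that loops count as cycles, consistent with the paper's edge-counting characterization of pseudoforests. Equally, two loops in one component must be recognised as violating the pseudoforest condition rather than being treated separately. Everything else is linear algebra on small, explicit systems.
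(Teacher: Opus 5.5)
Your argument is correct, but it does not follow the paper's route, because the paper gives no argument of its own: it proves the lemma by citing \citet{shapiro2020algebras}.

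You instead give a self-contained elementary proof. First, you use the block structure of $M$ over connected components. Second, you exhibit explicit dependencies for the two obstructions: alternating $\pm 1$ around an even cycle, and a column count $\lvert E \rvert > \lvert V \rvert$ in a component with two cycles. The count works because in a connected loop graph with $\lvert E \rvert \le \lvert V \rvert$ at most one edge lies outside a spanning tree. Third, for sufficiency, you strip genuine leaves until only a loop or an odd cycle remains. There the relations $\lambda_{e_{i-1}} + \lambda_{e_i} = 0$ close up to $\lambda_{e_1} = -\lambda_{e_1}$.

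Your version makes the appendix self-contained and shows where the ground field matters. The odd-cycle step needs $2 \neq 0$, so the lemma holds over $\mathbb{R}$, as used in the proof of \Cref{thm:separator-fd}, but fails over $\mathrm{GF}(2)$. The bare citation leaves this implicit. The citation, in turn, is shorter and places the lemma in standard theory: it is a special case of the characterization of independent sets in the frame matroid of a signed graph, applied to the all-negative signed graph with loops acting as half-edges.

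The one spot in your sufficiency argument that deserves an explicit sentence concerns a vertex $u$ carrying a loop and exactly one non-loop edge $e$. Such a vertex must not be stripped, since its row reads $\lambda_{\{u\}} + \lambda_e = 0$. This is harmless: a tree with a loop attached and at least two vertices always has another leaf without a loop.
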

\begin{proof}
    See \citet{shapiro2020algebras}.
\end{proof}

\section{Proofs}
\label{section:proofs}

\subsection{Proof of Proposition~\ref{prop:connector-separator-minimal}}
\label{section:proof:prop:connector-separator}
For necessity of Part~\ref{item:prop-connector-minimal}:
Since \(U\) is an \({A,B}\)-connector, \(G[U]\) contains an \(ab\)-path for some \(a\in A\), \(b\in B\).
A minimal such path gives a connector contained in \(U\), hence by minimality it uses all vertices of \(U\).
Thus \(G[U]\) is an induced path.
Moreover, if \(U\cap A\) contained a vertex different from the first endpoint, or \(U\cap B\) contained a vertex different from the last endpoint, then a proper subpath would still be an \({A,B}\)-connector, contradicting minimality.
Hence \(U\cap A=\{u_0\}\) and \(U\cap B=\{u_n\}\).

For sufficiency of Part~\ref{item:prop-connector-minimal}:
If \(G[U]\) is such a path, then any \({A,B}\)-connector contained in \(U\) must contain the unique \(A\)-vertex \(u_0\) and the unique \(B\)-vertex \(u_k\).
Since \(G[U]\) is a path, the only \(u_0u_k\)-path in \(G[U]\) uses all vertices of \(U\). Therefore no proper subset of \(U\) is an \({A,B}\)-connector, so \(U\) is minimal.

For sufficiency of Part~\ref{item:prop-separator-minimal}:
We first observe that $U$ is indeed an $\{A, B\}$-separator of $G$, since every $\{A, B\}$-connector of $G$ contains a minimal $\{A, B\}$-connector of $G$.
If $U$ is a non-minimal $\{A, B\}$-separator of $G$, then we can find a vertex $u \in U$ such that $U \setminus \{u\}$ is also an $\{A, B\}$-separator of $G$.
This means that $C \cap U \neq \{u\}$ holds for any minimal $\{A, B\}$-connector $C$ of $G$, a contradiction.

For necessity of Part~\ref{item:prop-separator-minimal}:
Suppose $U$ is a minimal $\{A, B\}$-separator of $G$.
By the definition of separator, every minimal $\{A, B\}$-connector of $G$ intersects $U$.
If there exists a vertex $u \in U$ such that the intersection of $U$ with any minimal $\{A, B\}$-connector $C$ of $G$ is not $\{u\}$, then it follows from the observation in the proof of sufficiency that $U \setminus \{u\}$ is also an $\{A, B\}$-separator of $G$.
This contradicts the minimality of $U$.
\qed

\subsection{Proof of Lemma~\ref{lemma:ilp}}
\label{section:proof:lemma:ilp}

For the first part of the lemma, we verify that
\[
    \feasible =
    \{
        x \in \mathbb{Z}^{V \cup F} \mid
        x \ \textnormal{satisfies} \ \eqref{eq:connector}, \eqref{eq:separator} \ \textnormal{and} \ \eqref{eq:box}
    \}
    \,.
\]
Clearly, every feasible vector $x \in \feasible$ satisfies all the inequalities~\eqref{eq:connector},~\eqref{eq:separator} and~\eqref{eq:box}.
To see the converse, we just need to show that for any $f \in F$ and any $x \in \{0, 1\}^{V \cup F}$ satisfying \eqref{eq:connector} and~\eqref{eq:separator}, the following statements are equivalent:
\begin{enumerate}[label=\textnormal{(\Alph*)}]
    \item $x_f = 0$.
    \label{item:lemma-value-1}
    \item Every $f$-separator of $G$ contains a vertex at which $x$ takes zero value.
    \label{item:lemma-value-2}
    \item There exists an $f$-connector of $G$ whose vertices are labeled zero by $x$.
    \label{item:lemma-value-3}
\end{enumerate}

It is immediate from the separator inequalities~\eqref{eq:separator} that \ref{item:lemma-value-1} implies \ref{item:lemma-value-2}.
If \ref{item:lemma-value-2} holds but every $f$-connector $C$ of $G$ contains at least one vertex at which $x$ takes $1$, let $S$ be the set of all these vertices labeled $1$ by $x$.
Then $S$ is an $f$-separator of $G$, a contradiction.
Thus \ref{item:lemma-value-3} follows from \ref{item:lemma-value-2}.
By the connector inequalities~\eqref{eq:connector}, part~\ref{item:lemma-value-3} implies \ref{item:lemma-value-1}.

For the second part of the lemma, we show that for a given interaction $f \in F$, connector inequalities associated with non-minimal $f$-connectors of $G$ and separator inequalities associated with non-minimal $f$-separators of $G$ are implied by stronger ones and hence redundant in the linear relaxation.
If $C$ is an arbitrary $f$-connector of $G$, then for every minimal $f$-connector $C^{\prime}$ of $G$ contained in $C$, we have
\[
    x_{f}
    \leq
    \sum_{v \in C^{\prime}} x_{v}
    \leq
    \sum_{v \in C} x_{v}
    \,.
\]
That is, the connector inequality for \(C'\) implies the connector inequality for \(C\).
Similarly, let $S$ be any $f$-separator of $G$ and let $S^{\prime}$ be a minimal $f$-separator of $G$ that is contained in $S$, then
\[
    1 - x_{f}
    \leq
    \sum_{v \in S^{\prime}} (1 - x_{v})
    \leq
    \sum_{v \in S} (1 - x_{v})
    \,.
\]
That is, the separator inequality for \(S'\) implies the separator inequality for \(S\).
Therefore, it suffices to enforce connector inequalities only for minimal \(f\)-connectors and separator inequalities only for minimal \(f\)-separators.
This completes the proof.
\qed

\subsection{Proof of Proposition~\ref{prop:dimension}}
\label{section:proof:prop:dimension}
Since $\lin{\feasible}$ is the linear space associated with the affine hull of the multi-separator polytope, it suffices to show that $\mathbbm{1}_{\{g\}} \in \lin{\feasible}$ for every $g \in V \cup F$.

First let \(v\in V\).
Set $A = \{v\}$ and $B = \emptyset$.
Then \(A\triangle B=\{v\}\).
Moreover, since every interaction in \(F\subseteq \binom V2\) has two distinct endvertices, \(A\cup B=\{v\}\) is not an \(f\)-connector for any \(f\in F\).
Hence $Q_{AB} = \{v\}$.
By \eqref{eq:ie-1} of \Cref{lemma:ie}, we obtain $\mathbbm{1}_{\{v\}} \in \lin{\feasible}$.

Now let \(f=\{u,w\}\in F\).
Since \(G\) is connected, there exists a minimal \(f\)-connector \(U\) of $G$.
By \Cref{cor:connector-separator-minimal}, \(G[U]\) is an induced path.
Set $A = U \setminus \{u\}$ and $B = U \setminus \{w\}$.
Then \(A\cup B=U\) and \(A\cap B=U\setminus \{u,w\}\).
Since \(U\) is a minimal \(f\)-connector, neither \(A\) nor \(B\) is an \(f\)-connector, while \(A\cup B\) is.
Hence \(f\in F_{AB}\).
Moreover, if \(g\in F_{AB}\), then \(g\) must have one endvertex in \(U\setminus A=\{u\}\) and the other in \(U\setminus B=\{w\}\), and therefore \(g=f\).
Thus \(F_{AB}=\{f\}\).
Finally, any interaction \(g\in F\) with \(g\subseteq A\cap B\) is connected in $G[A \cap B]$, so \(H_{AB}=\emptyset\).
Therefore, by \eqref{eq:ie-2} of \Cref{lemma:ie}, we have $\mathbbm{1}_{\{f\}} \in \lin{\feasible}$, and the proof is complete.
\qed

\subsection{Proof of Proposition~\ref{prop:canonical-facets}}
\label{section:proof:prop:canonical-facets}

Necessity being trivial, we show sufficiency.
Let us denote $F(a) = \{f \in F \mid a_f \neq 0\}$ and $V(a) = V(a)^{+} \cup V(a)^{-}$, where $V(a)^{+} = \{v \in V \mid a_v > 0\}$ and $V(a)^{-} = \{v \in V \mid a_v < 0\}$.
The sufficiency part asserts that if $\lvert F(a) \rvert \leq 1$, then the given facet-defining inequality
\begin{equation}\label{eq:valid}
    ax \leq \alpha
\end{equation}
must be canonical.
In the following, we consider two possible cases $F(a) = \emptyset$ and $\lvert F(a) \rvert = 1$ separately.

Suppose $F(a) = \emptyset$.
Then inequality~\eqref{eq:valid} has the form
\begin{equation}\label{eq:valid-proof-1}
    \sum_{v \in V(a)^{+}} a_v x_v \leq \sum_{v \in V(a)^{-}} (-a_v) x_v + \alpha\,.
\end{equation}
As $x^{V(a)^{-}}$ is feasible, it holds that $\sum_{v \in V(a)^{+}} a_v \leq \alpha$.
If $\sum_{v \in V(a)^{+}} a_v < \alpha$, then inequality~\eqref{eq:valid-proof-1} has to be strict for all feasible vectors, which contradicts the assumption that \eqref{eq:valid} is facet-defining.
Thus we have $\alpha = \sum_{v \in V(a)^{+}} a_v$, in which case \eqref{eq:valid} becomes
\[
    \sum_{v \in V(a)^{+}} a_v (1 - x_v) + \sum_{v \in V(a)^{-}} (- a_v) x_v \geq 0\,.
\]
This inequality is a linear combination of the box inequalities $x_v \geq 0$ and $1 - x_v \geq 0$ for $v \in V$ with nonnegative coefficients.
Therefore, we must have either $\lvert V(a)^{+} \rvert = 1$ and $V(a)^{-} = \emptyset$, or $V(a)^{+} = \emptyset$ and $\lvert V(a)^{-} \rvert = 1$, which proves the first case $F(a) = \emptyset$.

We continue with the second case $\lvert F(a) \rvert = 1$.
Suppose $F(a) = \{f\}$ for some interaction $f \in F$.
If $V(a)$ is empty, then \eqref{eq:valid} becomes a box inequality associated with $f$.
From now on, we assume that $V(a)$ is non-empty.
For convenience of notation we rewrite inequality \eqref{eq:valid} as
\begin{equation}\label{eq:valid2}
    a_f x_f + \sum_{v \in V(a)^{+}} a_v x_v \leq \sum_{v \in V(a)^{-}} b_v x_v + \alpha\,,
\end{equation}
where $b \in \mathbb{Z}^{V}_{\geq 0}$ is defined by $b_v = - a_v$ if $v \in V(a)^{-}$, and $0$ otherwise.

Firstly, we consider the case $a_f > 0$.
As the feasible vector $x^{V \setminus V(a)^{+}}$ satisfies \eqref{eq:valid2}, it follows that $\alpha \geq \sum_{v \in V(a)^{+}} a_{v}$.
Furthermore, this inequality cannot be strict, i.e., we must have $\alpha = \sum_{v \in V(a)^{+}} a_{v}$, otherwise the equality $x_f = 1$ would hold for any feasible vector $x$ that satisfies \eqref{eq:valid2} at equality, contradicting the assumptions that \eqref{eq:valid2} is facet-defining and $V(a) \neq \emptyset$.
Now, substituting $\alpha = \sum_{v \in V(a)^{+}} a_{v}$ in \eqref{eq:valid2} gives
\begin{equation}\label{eq:valid2-1}
    a_f x_f \leq \sum_{v \in V(a)^{+}} a_v (1 - x_v) + \sum_{v \in V(a)^{-}} b_v x_v\,.
\end{equation}
If $V(a)^{-}$ is not an $f$-connector of $G$, then the feasible vector $x^{V(a)^{-}}$ satisfies $x^{V(a)^{-}}_f = 1$, which implies $a_f \leq 0$ by \eqref{eq:valid2-1}.
This is a contradiction.
Therefore, $V(a)^{-}$ must be an $f$-connector of $G$.
For every $f$-separator $S$ of $G[V(a)^{-}]$, it holds that
\begin{equation}\label{eq:valid2-sep}
    a_f \leq \sum_{v \in S} b_v\,,
\end{equation}
since $x^{V(a)^{-} \setminus S} \in \feasible$.
This implies that $V(a)^{+} = \emptyset$.
Otherwise, since $x_f=1$ for any $x=x^{V \setminus S}$, i.e., $f$ is separated by $S$ in $G$, implies that $S \cap V(a)^{-}$ is an $f$-separator of $G[V(a)^{-}]$, it follows from \eqref{eq:valid2-sep} that the equality $x_v = 1$ holds for every $v \in V(a)^{+}$ and every feasible vector $x$ that satisfies \eqref{eq:valid2-1} at equality.
Hence, \eqref{eq:valid2-1} reduces to
\begin{equation}\label{eq:valid2-1-1}
    a_f x_f \leq \sum_{v \in V(a)^{-}} b_v x_v\,.
\end{equation}

Let us give each vertex $v \in V$ the capacity $b_v$ and each edge an infinite capacity.
By \eqref{eq:valid2-sep} and the fact that inequality~\eqref{eq:valid2-1-1} is facet-defining, we see that the minimal capacity of an $f$-separator is precisely $a_f$.
Then, by the Max-Flow Min-Cut Theorem \cite{dantzig1955max, schrijver2003combinatorial}, there exist nonnegative integers $\kappa_C$ for all $f$-connectors $C$ of $G[V(a)^{-}]$ such that
\[
    \sum_{C \in f\textnormal{-connectors}(G[V(a)^{-}])} \kappa_C = a_f
\]
and
\[
    \sum_{C \in f\textnormal{-connectors}(G[V(a)^{-}])} \kappa_C \mathbbm{1}_{C} \leq b\,.
\]
Here inequalities between vectors are meant to be componentwise.
From this, we see that
\begin{align*}
    & - a_f x_f + \sum_{v \in V(a)^{-}} b_v x_v \\
    = & \sum_{C \in f\textnormal{-connectors}(G[V(a)^{-}])} \kappa_C \left(- x_f + \sum_{v \in C} x_v\right) + \left(b - \sum_{C \in f\textnormal{-connectors}(G[V(a)^{-}])} \kappa_C \mathbbm{1}_C\right) \cdot x|_{V}\,.
\end{align*}
That is, \eqref{eq:valid2-1-1} can be written as a nonnegative linear combination of connector inequalities associated with the interaction $f$ and $f$-connectors of $G$, and box inequalities $x_v \geq 0$ associated with vertices $v \in V(a)^{-}$.
Thus, we conclude that \eqref{eq:valid2-1} must be a connector inequality under the specified assumption.

Next, we consider the other case $a_f < 0$.
As before, since $x^{V(a)^{-}} \in \feasible$, we get $a_f + \sum_{v \in V(a)^{+}} a_v \leq \alpha$ from \eqref{eq:valid2}.
Again, the number $\alpha$ is determined by the vector $a$, that is, $a_f + \sum_{v \in V(a)^{+}} a_v = \alpha$, otherwise we would get the equality $x_f = 0$ that holds for all feasible vectors $x$ that satisfy \eqref{eq:valid2} at equality, contradicting the assumptions that \eqref{eq:valid2} is facet-defining and $V(a) \neq \emptyset$.
Hence, inequality \eqref{eq:valid2} can be written as
\begin{equation}\label{eq:valid2-2}
    - a_f (1 - x_f) \leq \sum_{v \in V(a)^{+}} a_v (1 - x_v) + \sum_{v \in V(a)^{-}} b_v x_v\,,
\end{equation}
If $V(a)^{+}$ is not an $f$-separator of $G$, then the feasible vector $x^{V\setminus V(a)^{+}}$ satisfies $x^{V\setminus V(a)^{+}}_f = 0$, which implies $a_f \geq 0$, a contradiction.
Therefore, $V(a)^{+}$ must be an $f$-separator of $G$.
For each $f$-connector $C$ of $G$, we have
\begin{equation}\label{eq:valid2-path}
    - a_f \leq \sum_{v \in V(a)^{+} \cap C} a_v\,,
\end{equation}
since $x^{C \cup (V\setminus V(a)^{+})} \in \feasible$.
This implies that $V(a)^{-} = \emptyset$.
Otherwise, by \eqref{eq:valid2-path}, we would get an equality $x_v = 0$ for every $v \in V(a)^{-}$ that holds for all feasible vectors $x$ that satisfy \eqref{eq:valid2-2} at equality.
Now, inequality \eqref{eq:valid2-2} takes the form
\begin{equation}\label{eq:valid2-2-1}
    - a_f (1 - x_f) \leq \sum_{v \in V(a)^{+}} a_v (1 - x_v)\,.
\end{equation}

As in the previous case, we assign a weight $a_v$ to each vertex $v \in V$.
Note that the minimum weight of an $f$-connector of $G$ is precisely $-a_f$ by \eqref{eq:valid2-path} and the fact that inequality~\eqref{eq:valid2-2-1} is facet-defining.
Then, by the Max-Potential Min-Work Theorem \cite{duffin1962extremal, schrijver2003combinatorial}, there exist nonnegative integers $\lambda_S$ for all $f$-separators $S$ of $G$ such that
\[
    \sum_{S \in f\textnormal{-separators}(G)} \lambda_S = - a_f
\]
and
\[
    \sum_{S \in f\textnormal{-separators}(G)} \lambda_S \mathbbm{1}_S \leq a|_{V}\,.
\]
Again, inequalities between vectors are componentwise.
Therefore, we have
\begin{align*}
    & a_f (1 - x_f) + \sum_{v \in V(a)^{+}} a_v (1 - x_v) \\
    = & \sum_{S \in f\textnormal{-separators}(G)} \lambda_S \left(- 1 + x_f + \sum_{v \in S} (1 - x_v)\right) + \left(a|_{V} - \sum_{S \in f\textnormal{-separators}(G)} \lambda_S \mathbbm{1}_S\right) \cdot \left(\mathbbm{1}_V - x|_{V}\right)\,.
\end{align*}
That is, \eqref{eq:valid2-2-1} is a nonnegative linear combination of separator inequalities associated with the interaction $f$ and $f$-separators of $G$, and box inequalities $x_v \leq 1$ associated with vertices $v \in V$.
Thus, we arrive at the conclusion that \eqref{eq:valid2-2} must be a separator inequality under the given assumption.
This completes the proof.
\qed

\subsection{Proof of Proposition~\ref{prop:ms-disconnected}}
\label{section:proof:prop:ms-disconnected}

The decomposition of $\polytope$ follows immediately from the fact that interactions in $F(\emptyset)$ are always labeled $1$ by feasible vectors and the label of any interaction not in $F(\emptyset)$ is completely determined by the connected component containing its endvertices.
The remaining statement is just a basic property of polytope product. 
\qed

\subsection{Proof of Proposition~\ref{prop:box}}
\label{section:proof:prop:box}
To see the first statement, let $f = \{u, v\} \in F$, then $\{v\}$ is an $f$-separator of $G$ and the associated separator inequality reads $x_v \leq x_f$, which implies $x_f \geq 0$ under the assumption that $x_v \geq 0$.
As for the second statement, let $\{u, v\}$ be an edge of $G$ incident with $v$, then $x_v \geq 0$ follows from the connector inequality $x_{uv} \leq x_u + x_v$ and the separator inequality $x_u \leq x_{uv}$, and $x_v \leq 1$ follows from the separator inequality $x_v \leq x_{uv}$ and our assumption that $x_{uv} \leq 1$.
\qed

\subsection{Proof of Theorem~\ref{thm:v=0-fd}}
\label{section:proof:thm:box:v=0}

Let $\sigmab = \{x \in \feasible \mid x_v = 0\}$ denote the set of all feasible vectors satisfying the given lower box inequality $x_v \geq 0$ at equality and let $\Sigmab = \{U \subseteq V \mid v \in U\}$ denote the set of all vertex subsets of $G$ containing $v$.
Clearly, every feasible vector $x \in \sigmab$ can be written as $x=x^U$ for some $U \in \Sigmab$ and vice versa.
	
Necessity.
In order to prove necessity, we consider the violation of condition~\ref{item:v=0-fd-1} and condition~\ref{item:v=0-fd-2} in turn and, for either case, construct a linear equality in $\mathbb{R}^{V \cup F}$ that is independent of $x_v = 0$ but satisfied by all $x \in \sigmab$.
Since $\polytope$ is full-dimensional by \Cref{prop:dimension}, this implies that $x_v \geq 0$ is not facet-defining in the case under consideration.

If condition~\ref{item:v=0-fd-1} is violated, then there exists an adjacent vertex $u$ of $v$ in $G$ such that $\{u, v\} \in F$.
In this case every $x \in \sigmab$ satisfies the equality $x_{\{u, v\}} = x_{u}$.
To see this, we first observe that the inequality $x_{u} \leq x_{\{u, v\}}$ is valid, since it is just the separator inequality associated with the interaction $\{u, v\} \in F$ and the $uv$-cut-vertex $u$ of $G$.
The other direction follows from the condition $x_v = 0$ and the connector inequality $x_{\{u, v\}} \leq x_u + x_{v}$ associated with the interaction $\{u, v\} \in F$ and the $uv$-connector $\{u, v\}$ of $G$, since $\{u, v\} \in E \cap F$.

If condition~\ref{item:v=0-fd-2} is violated, then there exist two vertices $u$ and $w$ in $V$ such that $\{u, v\} \in E$ and $\{v, w\}, \{u, w\} \in F$ and $u$ is a $vw$-cut-vertex of $G$.
In this case we claim that the equality $x_{vw} = x_{uw}$ holds for all $x \in \sigmab$ by the following argument.
Let $x$ be an arbitrary feasible vector in $\sigmab$ of the form $x=x^{U}$ for some vertex subset $U \in \Sigmab$.
If $x_{uw} = 0$, i.e., $U$ is a $uw$-connector of $G$, then $U$ is also a $vw$-connector of $G$ by the assumption that $\{u, v\} \in E$ and the fact that $v \in U$.
This means that $x_{vw} = 0$, and hence establishes the inequality $x_{vw} \leq x_{uw}$ for any $x \in \sigmab$.
For the other direction, we suppose that $x_{vw} = 0$, i.e., $U$ is a $vw$-connector of $G$.
Then, since $u$ is assumed to be a $vw$-cut-vertex of $G$, we see that $U$ is also a $uw$-connector of $G$, i.e., $x_{uw} = 0$.
This yields the inequality $x_{uw} \leq x_{vw}$ for any $x \in \sigmab$, as desired.

Sufficiency.
Assume conditions~\ref{item:v=0-fd-1} and~\ref{item:v=0-fd-2} are satisfied.
In the following, we show that for every vertex except $v$, and for every interaction, the corresponding standard unit vector is in the linear space $\lin{\sigmab}$ associated with the face under consideration.

For any vertex $v^\prime$ that is not $v$, set $A = \{v\}$ and $B = \{v, v^{\prime}\}$.
Then, by condition~\ref{item:v=0-fd-1}, we have $Q_{AB} = \{v^{\prime}\}$.
As both $\{v\}$ and $\{v, v^\prime\}$ belong to $\Sigmab$, it follows from \eqref{eq:ie-1} of \Cref{lemma:ie} that
\[
    \mathbbm{1}_{\{v^\prime\}} \in \lin{\sigmab}\,.
\]
That is, all the standard unit vectors associated with vertices different from $v$ are in $\lin{\sigmab}$.

For any interaction $f = \{u, w\} \in F$, we consider the following possible cases depending on the relation between $v$ and minimal $f$-connectors of $G$.
\begin{enumerate}[label=\textnormal{(\Alph*)}]
    \item Suppose that there exists an $f$-path $P = (V_P, E_P)$ in $G$ such that $v$ is an internal vertex of $P$.
    In this case, we set $A = V_P \setminus \{w\}$ and $B = V_P \setminus \{u\}$.
    Then, $A \setminus B = \{u\}$, $B \setminus A = \{w\}$, $A \cup B = V_P$ and $A \cap B = V_P \setminus \{u, w\}$, from which it follows easily that $F_{AB} = \{f\}$ and $H_{AB} = \emptyset$.
    Moreover, all $A$,~$B$,~$A \cap B$, and~$A \cup B$ belong to $\Sigmab$.
    Therefore, by \eqref{eq:ie-2} of \Cref{lemma:ie}, we get $\mathbbm{1}_{\{f\}} \in \lin{\sigmab}$.
    \label{item:v-a}
    \item Suppose that case~\ref{item:v-a} does not hold, i.e., every $f$-path in $G$ does not contain $v$ as an internal vertex, but we can find a minimal $f$-connector $C$ of $G$ such that $v \not\in C$ and $G[C \cup \{v\}]$ is not a path, i.e., such that $v$ belongs to neither $N_G(u) \setminus C$ nor $N_G(w) \setminus C$.
    An example is depicted in \Cref{fig:box-v=0-proof}~\subref{fig:box-v=0-proof-1}.
    In this case we observe that $G[C \cup \{v\}]$ cannot be a chordless cycle, otherwise $G[f \cup \{v\}]$ would be an $f$-path with $v$ as its internal vertex, a contradiction.
    Set $A = (C \cup \{v\}) \setminus \{w\}$ and $B = (C \cup \{v\}) \setminus \{u\}$.
    Then, $A\setminus B = \{u\}, B\setminus A = \{w\}, A \cup B = C \cup \{v\}$ and $A \cap B = (C \cup \{v\}) \setminus \{u, w\}$.
    Taking into account the preceding observation, we see that $F_{AB} = \{f\}$ and $H_{AB} = \emptyset$.
    Moreover, all $A$,~$B$,~$A \cap B$ and~$A \cup B$ belong to $\Sigmab$.
    Therefore, it follows from \eqref{eq:ie-2} of \Cref{lemma:ie} that $\mathbbm{1}_{\{f\}} \in \lin{\sigmab}$.
    \label{item:v-b}
    \item Suppose that neither case~\ref{item:v-a} nor case~\ref{item:v-b} is true.
    This implies that, for any minimal $f$-connector $C$ of $G$, the induced subgraph $G[C \cup \{v\}]$ is a path and $v \not\in C \setminus f$.
    In particular, $v$ lies in the closed neighborhood of $f$ in $G$.
    Let $u^\prime$ denote the unique vertex of $C$ that is adjacent to $v$ in $G$.
    Without loss of generality, we assume that $w$ is the endvertex of $G[C \cup \{v\}]$ that is different from $v$.
    So, the other endvertex $u$ of $f$ could be equal to either $v$ or $u^\prime$, depending on whether $v$ is in $C$ or not.
    See \Cref{fig:box-v=0-proof}~\subref{fig:box-v=0-proof-2} and~\subref{fig:box-v=0-proof-3} for illustration.
    Clearly, $u' \neq w$ when $u' = u$.
    On the other hand, if $u' \neq u$, then $v = u$ and $u'$ is adjacent to $v$ in $G[C]$, and hence by condition~\ref{item:v=0-fd-1} and the fact that $\{u, w\} \in F$, we again have $u' \neq w$.
    So, no matter which is the case, $u^\prime$ and $w$ are always distinct vertices.
    Let us define $A = (C \cup \{v\}) \setminus \{w\}$ and $B = (C \cup \{v\}) \setminus \{u^\prime\}$.
    Then $A \setminus B = \{u^\prime\}$, $B \setminus A = \{w\}$, $A \cup B = C \cup \{v\}$ and $A \cap B = (C \cup \{v\}) \setminus \{u^\prime, w\}$.
    It is direct to check that $F_{AB} = \{u^\prime w, v w\} \cap F$ and $H_{AB} = \emptyset$.
    Moreover, all the subsets $A$,~$B$,~$A \cap B$ and~$A \cup B$ belong to $\Sigmab$.
    Then, by \eqref{eq:ie-2} of \Cref{lemma:ie}, we obtain $\mathbbm{1}_{F_{AB}} \in \lin{\sigmab}$.
    Now we observe that, if $F_{AB} \setminus \{f\}$ is non-empty, then it is a singleton and the unique interaction in it falls into case~\ref{item:v-a}, from which we see that the standard unit vector corresponding to this unique interaction belongs to $\lin{\sigmab}$ already by the previous discussion. 
    In fact, let $f^\prime$ be the unique interaction in $\{u^\prime w, v w\}$ that is different from $f$.
    Since there exists a minimal $vw$-connector of $G$ that does not contain $u^\prime$ by condition~\ref{item:v=0-fd-2}, it follows from the assumption of case~\ref{item:v-c} that this can only happen when $u=v$, in which case $f^\prime$ coincides with $u^\prime w$ and it can be connected by a path in $G$ that includes $v$ as an internal vertex, which is exactly case~\ref{item:v-a}.
    So, either $F_{AB} \setminus \{f\}$ is empty or it contains precisely one interaction satisfying case~\ref{item:v-a}, as illustrated by \subref{fig:box-v=0-proof-2} and \subref{fig:box-v=0-proof-3} of \Cref{fig:box-v=0-proof}, respectively.
    In any case, it always holds that $\mathbbm{1}_{F_{AB} \setminus \{f\}} \in \lin{\sigmab}$.
    We thus get $\mathbbm{1}_{\{f\}} = \mathbbm{1}_{F_{AB}} - \mathbbm{1}_{F_{AB} \setminus \{f\}} \in \lin{\sigmab}$, as $f \in F_{AB}$.
    \label{item:v-c}
\end{enumerate}
In summary, for every interaction in $F$ we have constructed its corresponding standard unit vector in $\lin{\sigmab}$.

\begin{figure}
	\centering
	\begin{subfigure}[b]{0.25\textwidth}
		\centering
		\input{figures/box/v=0-proof-1.tex}
		\caption{}
		\label{fig:box-v=0-proof-1}
	\end{subfigure}
	\hspace{3ex}
	\begin{subfigure}[b]{0.25\textwidth}
		\centering
		\input{figures/box/v=0-proof-2.tex}
		\caption{}
		\label{fig:box-v=0-proof-2}
	\end{subfigure}
	\hspace{3ex}
	\begin{subfigure}[b]{0.25\textwidth}
		\centering
		\input{figures/box/v=0-proof-3.tex}
		\caption{}
		\label{fig:box-v=0-proof-3}
	\end{subfigure}

	\caption{Depicted above are graphs $G$ (black, solid), interactions (green, dashed) and a specified vertex $v$, which defines a lower box inequality $x_v \geq 0$.
    Both conditions~\ref{item:v=0-fd-1} and~\ref{item:v=0-fd-2} are satisfied.
    Case~\ref{item:v-b} in \Cref{section:proof:thm:box:v=0} applies to \subref{fig:box-v=0-proof-1}, where the required minimal $f$-connector can be chosen as $C = \{u, r, w\}$.
    Case~\ref{item:v-c} in \Cref{section:proof:thm:box:v=0} applies to \subref{fig:box-v=0-proof-2} and \subref{fig:box-v=0-proof-3}, where the required minimal $f$-connectors can be chosen as $C = \{u, r, w\}$ and $C = \{u, u', w\}$, respectively.
    Note that $u'$ is the unique vertex of $C$ that is adjacent to $v$ in $G$ in both \subref{fig:box-v=0-proof-2} and \subref{fig:box-v=0-proof-3}.
    The set $F_{AB}$ constructed in case~\ref{item:v-c} is $\{f\}$ for \subref{fig:box-v=0-proof-2} and $\{f, f'\}$ for \subref{fig:box-v=0-proof-3}.
	}
	\label{fig:box-v=0-proof}
\end{figure}

By all the results above, we see that the linear subspace $\lin{\sigmab}$ has a basis $\{\mathbbm{1}_{\{v'\}} \mid v' \in V \setminus \{v\}\} \cup \{\mathbbm{1}_{\{f\}} \mid f \in F\}$ and hence is of codimension~$1$ in $\mathbb{R}^{V \cup F}$, and so is the face defined by $\sigmab$.
Therefore, in view of \Cref{prop:dimension}, we conclude that the lower box inequality $x_v \geq 0$ defines a facet of the multi-separator polytope $\polytope$ under the given conditions, and the proof is complete.
\qed

\subsection{Proof of Theorem~\ref{thm:f=1-fd}}
\label{section:proof:thm:box:f=1}

Let $\sigmab = \{x \in \feasible \mid x_f = 1\}$ denote the set of all feasible vectors satisfying the upper box inequality $x_f \leq 1$ at equality and let $\Sigmab = \{U \subseteq V \mid V \setminus U \ \textnormal{is an $f$-separator of} \ G\}$ denote the set of all vertex subsets of $G$ that are not $f$-connectors of $G$.
Clearly, every vector $x$ in $\sigmab$ has the form $x=x^{U}$ for some vertex subset $U$ in $\Sigmab$ and vice versa.

Necessity.
Suppose to the contrary that there is an interaction $f^\prime \in F \setminus \{f\}$ such that $f$ is a pair of $f^\prime$-cut-vertices of $G$.
This implies that every $f^{\prime}$-connector of $G$ is also an $f$-connector of $G$, and thus the inequality $x_f \leq x_{f^\prime}$ is valid for $\polytope$.
As $x_{f'} \leq 1$ is also valid, we immediately see that the equality $x_{f^{\prime}} = 1$ holds for all $x \in \sigmab$.
Therefore, the face defined by $\sigmab$ is not a facet of $\polytope$ by \Cref{prop:dimension}.

Sufficiency.
Assume that $f$ is not a pair of $f^{\prime}$-cut-vertices of $G$ for all other interactions $f^{\prime}$.
We show that the given inequality defines a facet by constructing a basis of cardinality $\lvert V \rvert + \lvert F \rvert - 1$ of the associated linear space $\lin{\sigmab}$.

Firstly, for any vertex $v \in V$, it is clear that $\emptyset, \{v\} \in \Sigmab$ and $Q_{\emptyset \{v\}} = \{v\}$.
Then it follows straightforwardly from \eqref{eq:ie-1} of \Cref{lemma:ie} that $\mathbbm{1}_{\{v\}} \in \lin{\sigmab}$.

Secondly, we consider any interaction $f^\prime = \{u, w\} \in F\setminus\{f\}$.
By assumption, we can find a minimal $f^\prime$-connector $C$ of $G$ that contains at most one of the endvertices of $f$.
Set $A = C \setminus \{w\}$ and $B = C \setminus \{u\}$.
Then, $F_{AB} = \{f^\prime\}$ and $H_{AB} = \emptyset$.
It is clear that all $A$,~$B$,~$A \cap B$ and~$A \cup B$ belong to $\Sigmab$.
Thus, it follows from \eqref{eq:ie-2} of \Cref{lemma:ie} that $\mathbbm{1}_{\{f^\prime\}} \in \lin{\sigmab}$.

Therefore, $\lin{\sigmab}$ has codimension $1$ in $\mathbb{R}^{V \cup F}$ and the face defined by $\sigmab$ is a facet of $\polytope$ by \Cref{prop:dimension}.
\qed

\subsection{Proof of Theorem~\ref{thm:v=1-fd}}
\label{section:proof:thm:box:v=1}

Let $\sigmab = \{x \in \feasible \mid x_v = 1\}$ denote the set of all feasible vectors satisfying the upper box inequality $x_v \leq 1$ at equality and let $\Sigmab = \{U \subseteq V \mid v \not\in U\}$ denote the set of all vertex subsets of $G$ not containing $v$.
Clearly, the elements in $\Sigmab$ are exactly those vertex subsets $U$ of $G$ satisfying $x^{U} \in \sigmab$.
	
Necessity.
Suppose there exists an interaction \(f\in F\) such that \(v\) is an \(f\)-cut-vertex of \(G\).
Then, by the separator inequality $x_v \leq x_f$ and the box inequality $x_f \leq 1$, we see that the equality $x_f = 1$ holds for all $x \in \sigmab$.
This implies that the face defined by $\sigmab$ is not a facet of $\polytope$ by \Cref{prop:dimension}.

Sufficiency.
Assume that there is no interaction with $v$ as its cut-vertex.
To establish the dimension of the face defined by $\sigmab$, we apply the technique from \Cref{lemma:ie} to construct a basis for the associated linear space $\lin{\sigmab}$ as follows.

Firstly, for any vertex $u \in V\setminus \{v\}$, it is clear that $\emptyset, \{u\} \in \Sigmab$ and $Q_{\emptyset \{u\}} = \{u\}$.
Thus $\mathbbm{1}_{\{u\}} \in \lin{\sigmab}$ by \eqref{eq:ie-1} of \Cref{lemma:ie}.

Secondly, let us consider an arbitrary interaction $f = \{u, w\} \in F$.
By assumption, we may choose a minimal $f$-connector $C$ of $G$ such that it does not pass through $v$.
Set $A = C \setminus \{u\}$ and $B = C \setminus \{w\}$.
Then $F_{AB} = \{f\}$ and $H_{AB} = \emptyset$.
Since all $A$,~$B$,~$A \cap B$ and~$A \cup B$ belong to $\Sigmab$, it follows from \eqref{eq:ie-2} immediately that $\mathbbm{1}_{\{f\}} \in \lin{\sigmab}$.

Therefore, we have shown that $\lin{\sigmab}$ has codimension $1$ in $\mathbb{R}^{V \cup F}$.
Since $\polytope$ is full-dimensional by \Cref{prop:dimension}, the proof follows immediately.
\qed

\subsection{Proof of Theorem~\ref{thm:connector-fd}}
\label{section:proof:thm:connector}

Let
\begin{equation*}
    \sigmab = \left\{x \in \feasible \ \middle|\ x_f = \sum_{v \in C} x_v \right\}
\end{equation*}
denote the set of all feasible vectors that satisfy connector inequality~\eqref{eq:connector-fd} associated with $f$ and $C$ at equality.
To make the structure of $\sigmab$ more explicit, we define
\[
    \Sigmab =
    \left\{
        U \subseteq V \ \middle|\
        \textnormal{either}\ C \subseteq U, \ \textnormal{or}\ U \ \textnormal{is not an $f$-connector of $G$ and $C \setminus U$ is a singleton}
    \right\}
\]
as the set of all vertex subsets of $G$ whose complement in $V$ is either disjoint from $C$, or an $f$-separator of $G$ intersecting $C$ at a single vertex.
Then, a feasible vector $x$ is in $\sigmab$ if and only if there is an element $U$ of $\Sigmab$ such that $x=x^{U}$.
For the convenience of studying connector inequality~\eqref{eq:connector-fd}, given any neighbor $v$ of $C$, we denote $\textnormal{BP}_v$ as the set consisting of $v$ and all vertices $c \in C$ such that $v$ is a $c$-bypass of $C$.

\emph{Necessity}.
Since the multi-separator polytope $\polytope$ is full-dimensional by \Cref{prop:dimension} and non-minimal connector inequalities are implied by other canonical inequalities by \Cref{lemma:ilp}, it follows that condition~\ref{item:connector-fd-1} is necessary for the face defined by $\sigmab$ to be a facet of $\polytope$.

So, without loss of generality, in the remainder of the necessity proof we assume that condition~\ref{item:connector-fd-1} always holds, i.e., $C$ is a minimal $f$-connector of $G$.

Suppose that condition~\ref{item:connector-fd-2} is not satisfied.
Then, there is an interaction $f^\prime$ that is different from $f$ and contained in $C$.
Let $C^\prime$ be the unique minimal $f^\prime$-connector of $G[C]$.
As the union of an $f'$-connector of $G$ and $C \setminus C'$ is an $f$-connector of $G$, we see that the inequality
\begin{equation}\label{eq:connector-fd-1}
    x_{f} \leq x_{f^\prime} + \sum_{v \in C\setminus C^\prime} x_v
\end{equation}
is valid for $\polytope$.
Thus, the connector inequality~\eqref{eq:connector-fd} can be written as a sum of two valid inequalities, i.e., inequality~\eqref{eq:connector-fd-1} and the connector inequality associated with $f^\prime$ and $C^\prime$.
This contradicts the assumption that the convex hull of $\sigmab$ is a facet of $\polytope$ and the fact that $\polytope$ is full-dimensional by \Cref{prop:dimension}.

Suppose that condition~\ref{item:connector-fd-3} does not hold; that is, there is an interaction $uw \in F$ with $u \in N_G(C)$ and $w \in C$ such that $u$ is a $w$-bypass of $C$.
Observe that the inequality $x_u \leq x_{uw}$ is valid as it is just a separator inequality.
We show that every $x \in \sigmab$ with $x_u = 0$ also satisfies $x_{uw} = 0$, from which it follows that the equality $x_{u} = x_{uw}$ holds for all $x \in \sigmab$, and thus the connector inequality cannot be facet-defining by \Cref{prop:dimension}.
Suppose $x_{u} = 0$ for some $x = x^{U} \in \sigmab$ with $U \in \Sigmab$.
Then we argue that $x_{v^\prime} = 0$ holds for all $v^\prime \in \textnormal{BP}_u$.
Since $\textnormal{BP}_u$ together with $N_G(u) \cap C \cap U$, which is nonempty by definitions of $u$ and $U$, forms a $uw$-connector of $G$, this yields $x_{uw} = 0$.
In fact, if $x_{v^\prime} = 1$ for some vertex $v^\prime \in \textnormal{BP}_u$, as $x \in \sigmab$, it follows that $x_f = 1$ and $x_{v^{\prime\prime}} = 0$ for all $v^{\prime\prime} \in C \setminus \{v^\prime\}$, which implies that $(C \cup \{u\}) \setminus \{v^\prime\}$ cannot be an $f$-connector of $G$, contradicting the fact that $u$ is a $v^\prime$-bypass of $C$.

Suppose that condition~\ref{item:connector-fd-4} does not hold; that is, there are two interactions $uw, uw^\prime \in F$ with $u \not\in N_G[C]$, $w \in N_G(C)$ and $w^\prime \in C$ such that $w$ is both a $\{\{u\}, C\}$-cut-vertex of $G$ and a $w^\prime$-bypass of $C$.
If $x_{uw^\prime} = 0$ for some $x \in \feasible$, then it is also true that $x_{uw} = 0$ as $w$ is a $\{\{u\}, C\}$-cut-vertex of $G$.
This implies that the inequality $x_{uw} \leq x_{uw^\prime}$ is valid.
If $x_{uw} = 0$ for some $x \in \sigmab$, just as in the previous paragraph, then we have $x_{v^\prime} = 0$ for all $v^\prime \in \textnormal{BP}_w$, which yields $x_{uw^\prime} = 0$, and hence the inequality $x_{uw} \geq x_{uw^\prime}$.
Therefore, the equality $x_{uw} = x_{uw^\prime}$ holds for all $x \in \sigmab$, which again leads to a contradiction by the same argument as above.

Suppose that condition~\ref{item:connector-fd-5} does not hold; that is, there are two distinct interactions $uw, uw^\prime \in F$ with $u \not\in N_G[C]$ and $w, w^\prime \in C$ such that the set
\[
    W = \left\{v \in N_G(C) \mid v \textnormal{ is both a} \ w \text{-bypass and a} \ w^\prime \textnormal{-bypass of} \ C\right\}
\]
is a $\{\{u\}, C\}$-separator of $G$.
If $x_{uw} = 0$ for some $x \in \sigmab$, then there exists a minimal $uw$-connector $C^{\prime}$ of $G$ such that $x_{r} = 0$ for all $r \in C^{\prime}$.
By definition, $C'$ is also a $\{\{u\}, C\}$-connector of $G$ and therefore must intersect $W$, which implies that there exists a $w$- and $w'$-bypass $v \in N_G(C)$ of $C$ satisfying $x_v = 0$.
As in the previous paragraph, by considering $x_v = 0$, we get $x_{v^\prime} = 0$ for all $v^\prime \in \textnormal{BP}_v$.
This implies that $x_{uw^\prime} = 0$, since $C'$ together with $\textnormal{BP}_v$ forms a $uw'$-connector of $G$, and hence the inequality $x_{uw} \geq x_{uw^\prime}$.
A symmetric argument yields that, if $x_{uw^\prime} = 0$ for some $x \in \sigmab$, then it holds that $x_{uw} = 0$, and hence the inequality $x_{uw} \leq x_{uw^\prime}$.
Therefore, the equality $x_{uw} = x_{uw^\prime}$ holds for all $x \in \sigmab$, which again leads to a contradiction as before.

\emph{Sufficiency}.
In order to show that connector inequality~\eqref{eq:connector-fd} defines a facet, we construct a basis of cardinality $\lvert V \rvert + \lvert F \rvert - 1$ of its associated linear space $\lin{\sigmab}$ by making use of the following results.

\begin{claim}\label{claim:connector}
    Assume that $C$ is a minimal $f$-connector of $G$, let $v \in N_G(C)$ be a neighbor of $C$.
    \begin{enumerate}[label=\textnormal{(\alph*)}]
        \item \label{item:connector-claim-1}
        It holds that
        \begin{equation}\label{eq:connector-1-1}
            \mathbbm{1}_{\{v\} \cup \{vw \in F \mid w \in \textnormal{BP}_v\}} \in \lin{\sigmab}\,,
        \end{equation}
        and, for any interaction $vw \in F$ with $w \in C \setminus \textnormal{BP}_v$,
        \begin{equation}\label{eq:connector-1-2}
            \mathbbm{1}_{\{vw\}} \in \lin{\sigmab}\,.
        \end{equation}
        \item \label{item:connector-claim-2}
        For any vertex $u$ that is not in $N_G[C]$, and such that there exists a minimal $\{\{u\}, N_G[C]\}$-connector of $G$ containing $v$, it holds that
        \begin{equation}\label{eq:connector-2-1}
            \mathbbm{1}_{\{uw \in F \mid w \in \textnormal{BP}_v\}} \in \lin{\sigmab}\,,
        \end{equation}
        and, for any interaction $uw \in F$ with $w \in C \setminus \textnormal{BP}_v$,
        \begin{equation}\label{eq:connector-2-2}
            \mathbbm{1}_{\{uw\}} \in \lin{\sigmab}\,.
        \end{equation}
    \end{enumerate}
\end{claim}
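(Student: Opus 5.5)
The plan is to produce every required vector with \Cref{lemma:ie}, always choosing $A$ and $B$ so that $A$, $B$, $A\cap B$ and $A\cup B$ all lie in $\Sigmab$. Two kinds of sets are used. Sets containing $C$ lie in $\Sigmab$ automatically. Sets of the form $(C\setminus\{c\})\cup X$ lie in $\Sigmab$ provided they are not $f$-connectors of $G$.

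For part~\ref{item:connector-claim-1}, first take $A=C$ and $B=C\cup\{v\}$. Then $A\cap B=C$ and $A\cup B=C\cup\{v\}$, so \eqref{eq:ie-1} gives $\mathbbm{1}_{Q_{AB}}\in\lin{\sigmab}$. Here $Q_{AB}$ consists of $v$ together with all interactions newly connected when $v$ is added to $G[C]$. Since $G[C]$ is connected and $v\in N_G(C)$, these are exactly the interactions $vw\in F$ with $w\in C$. So we obtain
\[
\mathbbm{1}_{\{v\}\cup\{vw\in F\mid w\in C\}}\in\lin{\sigmab}.
\]
It then suffices to prove \eqref{eq:connector-1-2} for each $w\in C\setminus\textnormal{BP}_v$, because subtracting these vectors yields \eqref{eq:connector-1-1}.

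For such $w$, take $A=(C\setminus\{w\})\cup\{v\}$ and $B=C$. Since $v$ is not a $w$-bypass, $A$ is not an $f$-connector. Moreover $C\setminus A=\{w\}$, so $A\in\Sigmab$. Also $A\cap B=C\setminus\{w\}\in\Sigmab$ by minimality of $C$, and $A\cup B\supseteq C$. Then $vw\in F_{AB}$, since $w\notin A$ and $v\notin B$, and I would then apply \eqref{eq:ie-2}. The hard part is controlling the rest of $F_{AB}$ and $H_{AB}$. The set $C\setminus\{w\}$ splits the induced path $G[C]$ into at most two subpaths. Interactions between these subpaths are connected both via $w$ and possibly via $v$, so they may land in $H_{AB}$. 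Interactions $vw''$ with $w''$ on the far side may land in $F_{AB}$.

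I expect to handle these leftovers in two ways. For the interactions of type $vw''$, I would run an induction along $C$ outward from the neighbours of $v$, so that their unit vectors are already known when they appear. For the interactions with both endvertices in $C$, I would subtract vectors obtained from the analogous construction with $v$ removed, that is, $A=C\setminus\{w\}$ and $B=C$. If these terms cannot be cancelled in general, I would fall back on condition~\ref{item:connector-fd-2} in the sufficiency proof, where no such interaction other than $f$ exists.

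For part~\ref{item:connector-claim-2}, let $P$ be a minimal $\{\{u\},N_G[C]\}$-connector containing $v$; it is an induced path from $u$ to $v$ by \Cref{prop:connector-separator-minimal}. The idea is to repeat the computation of part~\ref{item:connector-claim-1} with $C$ replaced by $C\cup(P\setminus\{u\})$ and $v$ replaced by $u$. All sets involved still contain $C$, or are $C$ minus a single vertex, so they stay in $\Sigmab$. First I use $A=C\cup P\setminus\{u\}$ and $B=A\cup\{u\}$ together with \eqref{eq:ie-1}. Then I use the sets $(C\setminus\{w\})\cup P$ for $w\in C\setminus\textnormal{BP}_v$; since $P$ touches $N_G[C]$ only at $v$, these are not $f$-connectors. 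The vectors already obtained for interactions of $u$ with vertices of $P$ and of $C$ are removed by induction on the length of $P$. The main obstacle is again the bookkeeping of $H_{AB}$ in this last step.
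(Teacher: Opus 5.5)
Your constructions are the right ones; they are the paper's. In (a), $A=(C\setminus\{w\})\cup\{v\}$ and $B=C$ are the paper's $B(t)$ and $A$ for $u=v$. In (b), $(C\setminus\{w\})\cup P$ and $(C\cup P)\setminus\{u\}$ are $B(t)$ and $A$ with $U=C\cup P$. Your first step in (a) via \eqref{eq:ie-1} is also the paper's. But the proposal stops where the verification is needed, and two of the steps you sketch do not work as written.

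\textbf{The set $H_{AB}$.} You leave $H_{AB}$ open and offer to fall back on condition~(ii) of \Cref{thm:connector-fd}. That would prove a weaker statement, since the claim assumes only that $C$ is minimal, and it is unnecessary. The missing observation is that $w\notin\textnormal{BP}_v$ already forces $H_{AB}=\emptyset$. Write $G[C]$ as $c_0,\dots,c_n$ and $w=c_h$. If $v$ had neighbours $c_i$, $c_j$ with $i<h<j$, then $c_0,\dots,c_i,v,c_j,\dots,c_n$ would be an $f$-path in $(C\setminus\{w\})\cup\{v\}$, i.e.\ $v$ would be a $w$-bypass. So $v$ attaches to at most one component of $G[C\setminus\{w\}]$. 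Hence an interaction inside $C\setminus\{w\}$ is connected in $A$ if and only if it is connected in $A\cap B$, and $H_{AB}=\emptyset$. In (b) it is even simpler. An interaction connected in $(C\cup P)\setminus\{u\}$ avoids $u$, and $u$ is a pendant vertex of $G[(C\setminus\{w\})\cup P]$, so connectivity in $A$ and in $A\cap B=A\setminus\{u\}$ agree for such interactions.

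\textbf{The set $F_{AB}$ and the induction order.} The same observation identifies $F_{AB}=\{vw''\in F\mid w''\in C(v,w)\}$, and likewise $\{uw''\}$ in (b). Here $C(v,w)$ is $w$ together with the vertices of $C$ that $w$ separates from $v$ in $G[C\cup\{v\}]$. That is the side of $w$ facing away from $N_G(v)\cap C$, or all of $C$ if $w$ is the only neighbour of $v$ on $C$. The leftovers therefore lie on the far side. Your induction must start at the endpoints of $G[C]$ and move toward $N_G(v)\cap C$. If you run it ``outward from the neighbours of $v$'', the vectors you want to subtract are not yet available. Write $F(w)$ for the set $F_{AB}$ built from $w$. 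These sets are nested along each side of $C$, so $\mathbbm{1}_{\{vw\}}=\mathbbm{1}_{F(w)}-\mathbbm{1}_{F(w')}$, where $w'$ is the neighbour of $w$ on $C$ on the far side. There is no subtraction if there is no such neighbour, and you subtract both neighbours if $w$ is the unique neighbour of $v$ on $C$. This is exactly the paper's case analysis.

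\textbf{Extra terms in (b).} Your first step in (b), \eqref{eq:ie-1} with $B=A\cup\{u\}$, also produces $\mathbbm{1}_{\{u\}}$ and the unit vectors of $uw'$ for interior vertices $w'$ of $P$. None of these occur in \eqref{eq:connector-2-1}, and ``induction on the length of $P$'' does not supply them. They are available directly:
\begin{itemize}
\item $A=C$, $B=C\cup\{u\}$ gives $\mathbbm{1}_{\{u\}}$;
\item \eqref{eq:ie-2} applied to $(C\cup P')\setminus\{u\}$ and $(C\cup P')\setminus\{w'\}$, with $P'$ the $uw'$-subpath of $P$, gives $\mathbbm{1}_{\{uw'\}}$.
\end{itemize}
Alternatively, avoid them as the paper does. \eqref{eq:ie-2} with $A=U\setminus\{u\}$ and $B=U\setminus\{v\}$ yields $\mathbbm{1}_{\{uv'\in F\mid v'\in C\cup\{v\}\}}$ at once, because $v$ separates $P\setminus\{v\}$ from $C$ in $G[U]$.

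With these repairs your argument becomes the paper's proof.
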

\begin{claimproof}
    As we shall see below, the constructions of the characteristic vectors in case~\ref{item:connector-claim-1} and case~\ref{item:connector-claim-2} can be done completely analogously.
    To this end, we assume that $u$ is a vertex that is not in $C$ and $v$ is a neighbor of $C$ that is on a minimal $\{\{u\}, N_G[C]\}$-connector of $G$.
    So, if $u$ is a neighbor of $C$, then $u = v$, which is precisely case~\ref{item:connector-claim-1}.
    Conversely, if $u$ is not a neighbor of $C$, then $u \neq v$, which is precisely case~\ref{item:connector-claim-2}.

    The remainder of this proof is structured as follows.
    Initially, we construct some auxiliary vectors in $\lin{\sigmab}$ which are characteristic vectors of specific sets of interactions that are incident with $u$ plus possibly $u$ itself.
    Subsequently, we show that the characteristic vectors in the statement of the claim can be obtained as linear combinations of these auxiliary vectors and thus also belong to $\lin{\sigmab}$.

    For our purposes here, let $\tilde{C}$ denote an arbitrary minimal $uv$-connector of $G$ with $\tilde{C} \cap N_G(C) = \{v\}$, which indeed exists by definition of $v$, and write $U = C \cup \tilde{C}$.
    We define
    \[
        A = U \setminus \{u\}\,,
    \]
    \[
        B = (U \setminus \{v\}) \cup \{u\}\,,
    \]
    and
    \[
        B(t) = U \setminus \{t\}
    \]
    for every $t \in C \setminus \textnormal{BP}_v$, i.e., $v$ is not a $t$-bypass of $C$.
    Then, by construction,
    \[
        A \setminus B = \{v\} \setminus \{u\}\,, \quad B \setminus A = \{u\}\,, \quad A \cap B = U \setminus \{u, v\}\,, \quad A \cup B = U\,,
    \]
    and
    \[
        A \setminus B(t) = \{t\}\,, \quad B(t) \setminus A = \{u\}\,, \quad A \cap B(t) = U \setminus \{u, t\}\,, \quad A \cup B(t) = U\,.
    \]
    Looking at the definition of a bypass vertex, we see that for all $t \in C \setminus \textnormal{BP}_v$, neither $B(t)$ nor $A \cap B(t)$ is an $f$-connector of $G$.
    Moreover, $C \setminus B(t) = C \setminus (A \cap B(t)) = \{t\}$.
    Also note that $A$, $B$, $A \cap B$, $A \cup B$ and $A \cup B(t)$ all contain the $f$-connector $C$.
    In view of the definition of $\Sigmab$, we arrive at the following observation:
    \begin{equation}\label{eq:connector-claim-observation}
        A, B, B(t), A \cap B, A \cup B, A \cap B(t), A \cup B(t) \in \Sigmab\,.
    \end{equation}
    It follows directly from the definitions of $Q_{AB}$ and $F_{AB}$ in \Cref{lemma:ie} that
    \[
        Q_{AB}
        = (B \setminus A) \cup \{f^\prime \in F \mid f^\prime \subseteq U, \ u \ \textnormal{is an $f^\prime$-cut-vertex of} \ G[U]\}
        = \{v\} \cup \{vv^\prime \in F \mid v^\prime \in C\}
    \]
    if $u = v$ (note that in this case we have $A =C$ and $B = C \cup \{v\}$), and
    \[
        F_{AB}
        = \{f^\prime \in F \mid f^\prime \subseteq U, \ u \ \textnormal{and} \ v \ \textnormal{are $f^\prime$-cut-vertices of} \ G[U]\}
        = \{uv^\prime \in F \mid v^\prime \in C \cup \{v\}\}
    \]
    otherwise.
    Similarly, for every $t \in C \setminus \textnormal{BP}_v$, we have
    \[
        F_{AB(t)}
        = \{f^\prime \in F \mid f^\prime \subseteq U, \ u \ \textnormal{and} \ t \ \textnormal{are $f^\prime$-cut-vertices of} \ G[U]\}
        = \{uv^\prime \in F \mid v^\prime \in C(v, t)\}\,,
    \]
    where $C(v, t)$ is the set of all vertices $v^\prime$ in $C$ such that $t$ is a $vv^\prime$-cut-vertex of $G[C \cup \{v\}]$.
    By the definition of $H_{AB}$ in \Cref{lemma:ie}, we have
    \[
        H_{AB} = \emptyset
    \]
    if $u = v$, and
    \[
        H_{AB} = \{f^\prime \in F \mid f^\prime \subseteq U \setminus \{u, v\}, \ \{u, v\} \ \textnormal{is a minimal $f^\prime$-separator of} \ G[U]\} = \emptyset
    \]
    otherwise, upon noting that $\{u, v\}$ is a minimal $f^\prime$-separator of $G[U]$ if and only if $U \setminus \{u\}$ and $U \setminus \{v\}$ are $f^\prime$-connectors of $G$ but not $U \setminus \{u, v\}$.
    Similarly, for every $t \in C \setminus \textnormal{BP}_v$, we have
    \[
        H_{AB(t)} = \{f^\prime \in F \mid f^\prime \subseteq U \setminus \{u, t\}, \ \{u, t\} \ \textnormal{is a minimal $f^\prime$-separator of} \ G[U]\} = \emptyset\,.
    \]
    Thus, it follows from \eqref{eq:connector-claim-observation} and \eqref{eq:ie-1} of \Cref{lemma:ie} that
    \begin{equation}\label{eq:connector-claim-0}
        \mathbbm{1}_{Q_{AB}} = \mathbbm{1}_{\{v\}} + \mathbbm{1}_{\{vv^\prime \in F \mid v^\prime \in C\}} \in \lin{\sigmab}
    \end{equation}
    when $u = v$.
    In addition, by \eqref{eq:connector-claim-observation} and \eqref{eq:ie-2} of \Cref{lemma:ie}, we have
    \begin{equation}\label{eq:connector-claim-1}
        \mathbbm{1}_{F_{AB}} - \mathbbm{1}_{H_{AB}} = \mathbbm{1}_{\{uv^\prime \in F \mid v^\prime \in C \cup \{v\}\}} \in \lin{\sigmab}
    \end{equation}
    when $u \neq v$, and
    \begin{equation}\label{eq:connector-claim-2}
        \mathbbm{1}_{F_{AB(t)}} - \mathbbm{1}_{H_{AB(t)}} = \mathbbm{1}_{\{uv^{\prime} \in F \mid v^{\prime} \in C(v, t)\}} \in \lin{\sigmab}
    \end{equation}
    for all $t \in C \setminus \textnormal{BP}_v$.

    To understand $C(v, t)$ more specifically, let us fix a natural linear order $c_0 < \dots < c_{\alpha} \leq \dots \leq c_{\beta} < \dots < c_n$ on $C$, i.e., $c_{i-1}c_{i} \in E$ for $i = 1, \dots, n$, where $\alpha$ (respectively, $\beta$) is the minimal (respectively, maximal) index such that $c_{\alpha}$ (respectively, $c_{\beta}$) is adjacent to $v$ in $G$.
    Then, $\textnormal{BP}_v \cap C = \{c_{\alpha + 1}, \dots, c_{\beta - 1}\}$.
    Note that $\alpha$ and $\beta$ could coincide, which can only happen in the case that $v$ is not a bypass of $C$.
    Let $t = c_h$ be a vertex of $C$ that is not in $\textnormal{BP}_v$, i.e., $h \leq \alpha$ or $h \geq \beta$.
    If $\alpha = \beta = h$, then $C(v, t) = C$.
    Otherwise, $C(v, t) = \{c_k \in C \mid k \leq h\}$ if $h \leq \alpha$, and $C(v, t) = \{c_k \in C \mid k \geq h\}$ if $h \geq \beta$.

    If we generalize our notation a little bit and define $uv$ to be $v$ whenever $u = v$, then \eqref{eq:connector-claim-0},~\eqref{eq:connector-claim-1} and~\eqref{eq:connector-claim-2} together imply that
    \begin{equation}\label{eq:connector-claim-3}
        \mathbbm{1}_{\{uw \in N_G(C) \cup F \mid w \in \textnormal{BP}_v\}} = \mathbbm{1}_{\{uv^\prime \in N_G(C) \cup F \mid v^\prime \in C \cup \{v\}\}} - \sum_{t \in \{c_{\alpha}, c_{\beta}\}} \mathbbm{1}_{F_{AB(t)}} \in \lin{\sigmab}\,,
    \end{equation}
    which yields \eqref{eq:connector-1-1} if $u = v$, and \eqref{eq:connector-2-1} if $u \neq v$.
    For the case $u \neq v$, examples of the construction above are depicted in \Cref{fig:connector-claim-1} and \Cref{fig:connector-claim-2} for $\alpha \neq \beta$ and $\alpha = \beta$, respectively.
    (As for the case $u = v$, in each of \Cref{fig:connector-claim-1} and \Cref{fig:connector-claim-2}, the only difference would be that the path between $u$ and $v$ should be contracted to a single vertex, namely $u = v$, and then the interaction $uv$ reduces to the vertex $v$.)

    It remains to establish \eqref{eq:connector-1-2} and \eqref{eq:connector-2-2}.
    Suppose $uw$ is an interaction with $w \in C \setminus \textnormal{BP}_v$.
    Let $h$ be the index of $w$, i.e., $w = c_h$, so that $h \leq \alpha$ or $h \geq \beta$.
    We distinguish two possible cases.
    \begin{enumerate}[label=\textnormal{(\Alph*)}]
        \item \label{item:claim-connector-1}
        $w$ is an endvertex of the path $G[C]$, i.e., either $h = 0$ or $h = n$.
        Let $w^\prime \in C$ be the unique adjacent vertex of $w$ on $C$.
        More explicitly, $w^\prime = c_1$ if $w = c_0$, and $w^\prime = c_{n-1}$ if $w = c_n$.
        By \eqref{eq:connector-claim-2}, we obtain
        \begin{equation}\label{eq:connector-claim-4}
            \mathbbm{1}_{\{uw\}} = \mathbbm{1}_{F_{AB(w)}} - \mathbbm{1}_{F_{AB(w^\prime)}} \in \lin{\sigmab}
        \end{equation}
        if $\alpha = \beta = h$, and
        \begin{equation}\label{eq:connector-claim-5}
            \mathbbm{1}_{\{uw\}} = \mathbbm{1}_{F_{AB(w)}} \in \lin{\sigmab}
        \end{equation}
        otherwise.
        \item \label{item:claim-connector-2}
        $w$ is an interior vertex of the path $G[C]$, i.e., $h \in \{1, \dots, n-1\} \setminus \{\alpha+1, \dots, \beta-1\}$.
        If $\alpha = \beta = h$, by \eqref{eq:connector-claim-2}, we get
        \begin{equation}\label{eq:connector-claim-6}
            \mathbbm{1}_{\{uw\}} = \mathbbm{1}_{F_{A B(w)}} - \mathbbm{1}_{F_{A B(c_{h-1})}} - \mathbbm{1}_{F_{A B(c_{h+1})}} \in \lin{\sigmab}\,.
        \end{equation}
        Otherwise, let $w^\prime \in C \setminus \textnormal{BP}_v$ denote the unique vertex that is adjacent to $w$, and such that $w$ is a $vw^\prime$-cut-vertex of $G[C \cup \{v\}]$.
        In other words, $w^\prime = c_{h-1}$ if $1 \leq h \leq \alpha$, and $w^\prime = c_{h+1}$ if $\beta \leq h \leq n-1$.
        Hence, $C(v, w) \setminus C(v, w^\prime) = \{w\}$.
        Then, by \eqref{eq:connector-claim-2}, we obtain
        \begin{equation}\label{eq:connector-claim-7}
            \mathbbm{1}_{\{uw\}} = \mathbbm{1}_{F_{AB(w)}} - \mathbbm{1}_{F_{AB(w^\prime)}} \in \lin{\sigmab}\,.
        \end{equation}
        An example is depicted in \Cref{fig:connector-claim-3}.
    \end{enumerate}
    In both case~\ref{item:claim-connector-1} and case~\ref{item:claim-connector-2}, it holds that $\mathbbm{1}_{\{uw\}} \in \lin{\sigmab}$, which yields \eqref{eq:connector-1-2} if $u = v$, and \eqref{eq:connector-2-2} if $u \neq v$.
    This completes the proof of the claim.
\end{claimproof}

\begin{figure}
    \centering
    \begin{subfigure}[b]{0.2\textwidth}
		\centering
		\input{figures/connector/connector-claim-1-1.tex}
		\caption{}
		\label{fig:connector-claim-1-1}
	\end{subfigure}
    \hspace{3ex}
    \begin{subfigure}[b]{0.2\textwidth}
		\centering
		\input{figures/connector/connector-claim-1-2.tex}
		\caption{}
		\label{fig:connector-claim-1-2}
	\end{subfigure}
    \hspace{3ex}
    \begin{subfigure}[b]{0.2\textwidth}
		\centering
		\input{figures/connector/connector-claim-1-3.tex}
		\caption{}
		\label{fig:connector-claim-1-3}
	\end{subfigure}
    \hspace{3ex}
    \begin{subfigure}[b]{0.2\textwidth}
		\centering
		\input{figures/connector/connector-claim-1-4.tex}
		\caption{}
		\label{fig:connector-claim-1-4}
	\end{subfigure}
    \caption{This figure illustrates the construction of the characteristic vector in \eqref{eq:connector-claim-3} for $\alpha \neq \beta$. 
    Consider the graph $G=(V,E)$ depicted above in solid black and the $f$-connector $C=\{c_0,\dots,c_5\}$ for $f=c_0c_5$.
    Here it holds that $\alpha = 1$ and $\beta = 4$.
    Suppose that the set of interactions is $F=\binom{V}{2}$.
    Depicted in \subref{fig:connector-claim-1-1},~\subref{fig:connector-claim-1-2}, and~\subref{fig:connector-claim-1-3} in dashed green are the characteristic vectors $\mathbbm{1}_{F_{AB}}$, $\mathbbm{1}_{F_{AB(c_\alpha)}}$, and $\mathbbm{1}_{F_{AB(c_\beta)}}$, respectively.
    Depicted in \subref{fig:connector-claim-1-4} in dashed green is the characteristic vector of the interaction set $\{uw \in F \mid w \in \textnormal{BP}_v\}$ that is constructed in \eqref{eq:connector-claim-3}.}
    \label{fig:connector-claim-1}
\end{figure}

\begin{figure}
    \centering
    \begin{subfigure}[b]{0.1\textwidth}
		\centering
		\input{figures/connector/connector-claim-2-1.tex}
		\caption{}
		\label{fig:connector-claim-2-1}
	\end{subfigure}
    \hspace{3ex}
    \begin{subfigure}[b]{0.1\textwidth}
		\centering
		\input{figures/connector/connector-claim-2-2.tex}
		\caption{}
		\label{fig:connector-claim-2-2}
	\end{subfigure}
    \hspace{3ex}
    \begin{subfigure}[b]{0.1\textwidth}
		\centering
		\input{figures/connector/connector-claim-2-3.tex}
		\caption{}
		\label{fig:connector-claim-2-3}
	\end{subfigure}
    \caption{This figure illustrates the construction of the characteristic vector in \eqref{eq:connector-claim-3} for $\alpha = \beta$. 
    Consider the graph $G=(V,E)$ depicted above in solid black and the $f$-connector $C=\{c_0,c_1,c_2\}$ for $f=c_0c_2$.
    Here it holds that $\alpha = \beta = 1$.
    Suppose that the set of interactions is $F=\binom{V}{2}$.
    Depicted in \subref{fig:connector-claim-2-1} and \subref{fig:connector-claim-2-2} in dashed green are the characteristic vectors $\mathbbm{1}_{F_{AB}}$ and $\mathbbm{1}_{F_{AB(c_\alpha)}}$, respectively.
    Depicted in \subref{fig:connector-claim-2-3} in dashed green is the characteristic vector of the interaction set $\{uw \in F \mid w \in \textnormal{BP}_v\}$ that is constructed in \eqref{eq:connector-claim-3}.
    Notice that we have $\textnormal{BP}_v = \{v\}$ as $v$ is not a $C$-bypass.}
    \label{fig:connector-claim-2}
\end{figure}

\begin{figure}
    \centering
    \begin{subfigure}[b]{0.2\textwidth}
		\centering
		\input{figures/connector/connector-claim-3-1.tex}
		\caption{}
		\label{fig:connector-claim-3-1}
	\end{subfigure}
    \hspace{3ex}
    \begin{subfigure}[b]{0.2\textwidth}
		\centering
		\input{figures/connector/connector-claim-3-2.tex}
		\caption{}
		\label{fig:connector-claim-3-2}
	\end{subfigure}
    \hspace{3ex}
    \begin{subfigure}[b]{0.2\textwidth}
		\centering
		\input{figures/connector/connector-claim-3-3.tex}
		\caption{}
		\label{fig:connector-claim-3-3}
	\end{subfigure}
    \caption{This figure illustrates the construction of the characteristic vector in \eqref{eq:connector-claim-7} of case~\ref{item:claim-connector-2}.
    Consider the graph depicted above in solid black and the $f$-connector $C=\{c_0,\dots,c_5\}$ for $f=c_0c_5$.
    Let $w = c_2 \notin \textnormal{BP}_v$.
    Then the unique vertex $w'$ defined in the proof of \Cref{claim:connector} is $w'=c_1$.
    Suppose that the set of interactions is $F=\binom{V}{2}$.
    Depicted in \subref{fig:connector-claim-3-1} and~\subref{fig:connector-claim-3-2} in dashed green are the characteristic vectors $\mathbbm{1}_{F_{AB(w)}}$ and $\mathbbm{1}_{F_{AB(w^{\prime})}}$, respectively.
    Depicted in \subref{fig:connector-claim-3-3} in dashed green is the characteristic vector $\mathbbm{1}_{\{uw\}}$ that is constructed in \eqref{eq:connector-claim-7}.
    }
    \label{fig:connector-claim-3}
\end{figure}

We are now ready to delve into the construction of a basis of $\lin{\sigmab}$ under conditions~\ref{item:connector-fd-1}~--~\ref{item:connector-fd-5} of \Cref{thm:connector-fd}.

First of all, by condition~\ref{item:connector-fd-1}, $C$ is a minimal $f$-connector of $G$.
So \Cref{claim:connector} applies.

Let $v \in V$.
If $v$ is not a vertex in the closed neighborhood of $C$, let $A = C$ and $B = C \cup \{v\}$, then $Q_{AB} = \{v\}$.
Thus, by \eqref{eq:ie-1} of \Cref{lemma:ie}, we get
\[
    \mathbbm{1}_{\{v\}} \in \lin{\sigmab}\,,
\]
since both $C$ and $C \cup \{v\}$ are in $\Sigmab$.
If $v$ is a vertex in $C$, we set $A = C$ and $B = C \setminus \{v\}$.
Then $Q_{AB}=\{v,f\}$.
Indeed, \(A\triangle B=\{v\}\), and by condition~\ref{item:connector-fd-2}, the only interaction in \(F\) contained in \(C\) whose connectivity is destroyed by deleting \(v\) is \(f\).
Therefore, it follows from \eqref{eq:ie-1} of \Cref{lemma:ie} that
\[
    \mathbbm{1}_{\{v, f\}} \in \lin{\sigmab}\,,
\]
since both $C$ and $C \setminus \{v\}$ are in $\Sigmab$.
For any vertex $v$ in the open neighborhood of $C$, condition~\ref{item:connector-fd-3} implies $\{vw \in F \mid w \in \textnormal{BP}_v\} = \emptyset$.
Then, by \eqref{eq:connector-1-1} of \Cref{claim:connector}, we get
\begin{equation*}
    \mathbbm{1}_{\{v\}} \in \lin{\sigmab}\,.
\end{equation*}
So, for every vertex $v \in V \setminus C$, its corresponding standard unit vector is in $\lin{\sigmab}$.

Now, we are left with constructing standard unit vectors associated with interactions that are different from $f$ under the given conditions in the theorem.
To achieve this, we partition the interactions $f^\prime = uw$ in $F \setminus \{f\}$ into the following classes according to the relative position between $C$ and their endvertices:
\begin{enumerate}[label=\textnormal{(\Alph*)}]
    \item \label{item:connector-sufficiency-1}
    $u, w \in C$.
    \item \label{item:connector-sufficiency-2}
    $u \in N_G(C)$ and $w \in C$.
    \item \label{item:connector-sufficiency-3}
    $u, w \in N_G(C)$.
    \item \label{item:connector-sufficiency-4}
    $u, w \in V \setminus N_G[C]$.
    \item \label{item:connector-sufficiency-5}
    $u \not\in N_G[C]$, $w \in N_G(C)$ and $w$ is not a $\{\{u\}, C\}$-cut-vertex of $G$.
    \item \label{item:connector-sufficiency-6}
    $u \not\in N_G[C]$, $w \in N_G(C)$, and $w$ is a $\{\{u\}, C\}$-cut-vertex of $G$.
    \item \label{item:connector-sufficiency-7}
    $u \not\in N_G[C]$, $w \in C$.
\end{enumerate}
Note that these classes are exhaustive and disjoint, i.e., every interaction in $F \setminus \{f\}$ lies in exactly one of these classes.

By condition~\ref{item:connector-fd-2}, there are no interactions in $F \setminus \{f\}$ that satisfy \ref{item:connector-sufficiency-1}.

Let $f^\prime = uw$ satisfy \ref{item:connector-sufficiency-2}.
By condition~\ref{item:connector-fd-3}, $u$ cannot be a $w$-bypass of $C$, i.e.~$w \in C \setminus \textnormal{BP}_u$.
Then, it follows from \eqref{eq:connector-1-2} of \Cref{claim:connector} that
\[
    \mathbbm{1}_{\{f^\prime\}} \in \lin{\sigmab}\,.
\]

Let $f^\prime = uw$ be an interaction that lies in \ref{item:connector-sufficiency-3}, or~\ref{item:connector-sufficiency-4}, or~\ref{item:connector-sufficiency-5}.
We consider two possible cases for $f^\prime$.
If there exists a minimal $f^\prime$-connector $C^{\prime}$ of $G$ that is disjoint from $N_G[C]$, then we set
\[
    A = (C \cup C^{\prime}) \setminus \{u\}\,, \qquad B = (C \cup C^{\prime}) \setminus \{w\}\,.
\]
Otherwise, we can choose a minimal $\{\{u\}, N_G[C]\}$-connector $C_1$ of $G$ and a minimal $\{\{w\}, N_G[C]\}$-connector $C_2$ of $G$ such that $w \not\in C_1$ and $u \not\in C_2$.
Set
\[
    A = (C \cup C_1 \cup C_2) \setminus \{u\}\,, \qquad B = (C \cup C_1 \cup C_2) \setminus \{w\}\,.
\]
In each case, we see that
\[
    F_{AB} = \{f^{\prime\prime} \in F \mid f^{\prime\prime} \subseteq A \cup B, \ u \ \textnormal{and} \ w \ \textnormal{are $f^{\prime\prime}$-cut-vertices of} \ G[A \cup B]\} = \{f^\prime\}\,,
\]
and
\[
    H_{AB} = \{f^{\prime\prime} \in F \mid f^{\prime\prime} \subseteq A \cap B, \ \{u, w\} \ \textnormal{is a minimal $f^{\prime\prime}$-separator of} \ G[A \cup B]\} = \emptyset.
\]
Then, by \eqref{eq:ie-2} of \Cref{lemma:ie}, we obtain
\[
    \mathbbm{1}_{\{f^\prime\}} \in \lin{\sigmab}\,,
\]
since $A$, $B$, $A \cap B$ and $A \cup B$ are in $\Sigmab$.

Let $f^\prime = uw$ be in \ref{item:connector-sufficiency-6}.
Looking at condition~\ref{item:connector-fd-4}, we see that $\{uw^\prime \in F \mid w^\prime \in \textnormal{BP}_w\} = \{uw\}$.
Then, by \eqref{eq:connector-2-1} of \Cref{claim:connector}, it holds that
\[
    \mathbbm{1}_{\{uw\}} \in \lin{\sigmab}\,.
\]

It remains to consider interactions that belong to \ref{item:connector-sufficiency-7}.
Let $f^\prime = uw$ be such an interaction, i.e., $u \not\in N_G[C]$ and $w \in C$, then we consider two possible cases:
\begin{enumerate}[label=\textnormal{(G.\arabic*)}]
    \item \label{item:connector-claim-7-1}
    There exists a vertex $v$ in the intersection of $N_G(C)$ and a minimal $\{\{u\}, N_G[C]\}$-connector of $G$ such that $w \in C \setminus \textnormal{BP}_v$.
    Then, from \eqref{eq:connector-2-2} of \Cref{claim:connector}, it follows that
    \[
        \mathbbm{1}_{\{f^\prime\}} \in \lin{\sigmab}\,.
    \]
    \item \label{item:connector-claim-7-2}
    For every $v \in N_G(C)$ that is on some minimal $\{\{u\}, N_G[C]\}$-connector of $G$, we have $w \in C \cap \textnormal{BP}_v$.
    Let us fix such a vertex $v$.
    By \eqref{eq:connector-2-1}, we know that
    \[
        \mathbbm{1}_{\{uw^\prime \in F \mid w^\prime \in \textnormal{BP}_v\}} \in \lin{\sigmab}\,.
    \]
    Furthermore, in the following we argue that $\mathbbm{1}_{\{uw^\prime\}} \in \lin{\sigmab}$ holds for all $uw^\prime \in F$ with $w^\prime \in \textnormal{BP}_v \setminus \{w\}$.
    Taking into account these two facts, we obtain
    \[
        \mathbbm{1}_{\{f^\prime\}} = \mathbbm{1}_{\{uw^{\prime} \in F \mid w^{\prime} \in \textnormal{BP}_v\}} - \sum_{uw^\prime \in F \colon w^\prime \in \textnormal{BP}_v \setminus \{w\}} \mathbbm{1}_{\{uw^\prime\}} \in \lin{\sigmab}\,.
    \]
    In order to prove that $\mathbbm{1}_{\{uw^\prime\}} \in \lin{\sigmab}$ holds for all $uw^\prime \in F$ with $w^\prime \in \textnormal{BP}_v \setminus \{w\}$, by case~\ref{item:connector-claim-7-1} and the previous discussion concerning \ref{item:connector-sufficiency-5}, it suffices to show the following:
    For any other interaction $uw^\prime \in F$ with $w^\prime \in \textnormal{BP}_v \setminus \{w\}$, there exists a vertex $v^\prime$ in the intersection of $N_G(C)$ and a minimal $\{\{u\}, N_G[C]\}$-connector of $G$ such that $w^\prime \not\in \textnormal{BP}_{v^\prime}$.
    Suppose to the contrary that there exists an interaction $uw^\prime \in F$ with $w^\prime \in \textnormal{BP}_v \setminus \{w\}$ such that for any vertex $v^\prime$ in the intersection of $N_G(C)$ and a minimal $\{\{u\}, N_G[C]\}$-connector of $G$ it holds that $w^\prime \in \textnormal{BP}_{v^\prime}$.
    If $w^\prime \in N_G(C)$, then $v$ coincides with $w^\prime$ and it is a $\{\{u\}, C\}$-cut-vertex of $G$.
    This is a contradiction to condition~\ref{item:connector-fd-4} and hence such an interaction $uw^\prime$ cannot exist.
    If $w^\prime \in C$, then the set of all neighbors of $C$ that are on minimal $\{\{u\}, N_G[C]\}$-connectors of $G$, which is clearly a $\{\{u\}, C\}$-separator of $G$, is a subset of
    \[
        \{v \in N_G(C) \mid v \textnormal{ is both a} \ w \text{-bypass and a} \ w^\prime \textnormal{-bypass of} \ C\}\,.
    \]
    This leads to a contradiction to condition~\ref{item:connector-fd-5}.
\end{enumerate}
Combining case~\ref{item:connector-claim-7-1} and case~\ref{item:connector-claim-7-2}, we see that $\mathbbm{1}_{\{f^\prime\}} \in \lin{\sigmab}$ holds for every interaction $f^\prime$ in \ref{item:connector-sufficiency-7}.

Looking at all the results obtained above, we arrive at the statement that the linear space $\lin{\sigmab}$ has a basis
\begin{equation*}
    \begin{aligned}
        {} & \{\mathbbm{1}_{\{v, f\}} \mid v \in C\}\\
        {}\cup{} & \{\mathbbm{1}_{\{v\}} \mid v \in V \setminus C\}\\
        {}\cup{} & \{\mathbbm{1}_{\{f^\prime\}} \mid f^\prime \in F \setminus \{f\}\}\,.
    \end{aligned}
\end{equation*}
Since the polytope $\Xi_{GF}$ is full-dimensional by \Cref{prop:dimension}, we conclude that the connector inequality~\eqref{eq:connector-fd} indeed defines a facet of $\polytope$ under the given conditions.
\qed

\subsection{Proof of Theorem~\ref{thm:separator-fd}}
\label{section:proof:thm:separator-fd}

In this section, we give a proof of \Cref{thm:separator-fd}, i.e., showing that conditions~\ref{item:separator-fd-1}~--\ref{item:separator-fd-4} are sufficient and necessary for separator inequality~\eqref{eq:separator-fd} to be facet-defining.
To determine the dimension of the corresponding face, we shall develop some general properties of the feasible vectors in this face.
More specifically, let
\begin{equation*}
	\sigmab =
	\left\{
		x \in \feasible \ \middle|\
		1 - x_f = \sum_{s \in S} (1 - x_s)
	\right\}
\end{equation*}
denote the set of all feasible vectors of the multi-separator problem~\eqref{eq:ms} that satisfy separator inequality~\eqref{eq:separator-fd} associated with $f$ and $S$ at equality.
Since $x^\emptyset=\mathbbm{1}_{V \cup F} \in \sigmab$, the linear space $\lin{\sigmab}$ associated with the affine hull of $\sigmab$ is
\[
    \lin{\sigmab} =
    \vspan{
        \{
            \mathbbm{1}_{V \cup F} - x \mid
            x \in \sigmab
        \}
    }
    \,.
\]
So inequality~\eqref{eq:separator} is facet-defining if and only if $\lin{\sigmab}$ has codimension~$1$ in $\mathbb{R}^{V \cup F}$.
Following the strategy described in \Cref{section:pseudoforest}, we first analyze $\lin{\sigmab}$ by decomposing it into several nicely behaved pieces, so that the structure of the feasible vectors belonging to each piece can be easily understood.
By combining structural information from different pieces, we are able to obtain a spanning set of $\lin{\sigmab}$.
Then an auxiliary loop graph is constructed from these spanning vectors so that our desired characterization follows as a consequence of \Cref{lemma:pseudoforest} in \Cref{section:pseudoforest}.
For this purpose, we begin with the introduction of a few useful notions and the development of their properties, only after which are we ready to show the sufficiency and necessity of the stated conditions.
As the first step, let us note that, for any $s \in S$, if we denote by $\sigmab_s$ the set of all feasible vectors $x \in \sigmab$ satisfying $x^{-1}(0) \cap S \subseteq \{s\}$, and by $\sigmab_{*}$ the set of all feasible vectors $x \in \sigmab$ satisfying $x^{-1}(0) \cap S = \emptyset$, then $\sigmab$ admits the decomposition $\sigmab = \cup_{s \in S \cup \{*\}} \sigmab_s$.
This in turn implies
\begin{equation}\label{eq:separator-decomposition-0}
    \lin{\sigmab} =
    \sum_{s \in S \cup \{*\}}
    \lin{\sigmab_s}
    \,.
\end{equation}

Before proceeding further, let us consider vertices not in $S$ and interactions not in $F(S)$.
Because their corresponding standard unit vectors, which will later be completed to a basis of $\lin{\sigmab}$, can be easily constructed as follows.
For any vertex $v \in V\setminus S$, we have $Q_{\emptyset \{v\}} = \{v\}$.
Since both $x^{\emptyset}$ and $x^{\{v\}}$ belong to $\sigmab_*$, the corresponding standard unit vector $\mathbbm{1}_{\{v\}}$ is in $\lin{\sigmab_*}$ by \eqref{eq:ie-1} of \Cref{lemma:ie}.
For any interaction $f^\prime = uw \in F$ that is not in $F(S)$, there exists a minimal $f^\prime$-connector $C$ of $G$ such that $C \cap S = \emptyset$.
Set $A = C \setminus \{w\}$ and $B = C \setminus \{u\}$, then all $x^A$,~$x^B$,~$x^{A \cap B}$ and~$x^{A \cup B}$ belong to $\sigmab_*$.
Furthermore, $F_{AB} = \{f^\prime\}$ and $H_{AB} = \emptyset$.
Again, by \eqref{eq:ie-2} of \Cref{lemma:ie}, the standard unit vector $\mathbbm{1}_{\{f^\prime\}}$ is in $\lin{\sigmab_*}$.
Thus, we have constructed all the standard unit vectors of vertices in $V \setminus S$ and interactions in $F \setminus F(S)$.
Since the summand $\lin{\sigmab_{*}}$ in \eqref{eq:separator-decomposition-0} is a subspace of $\mathbb{R}^{V \setminus S} \oplus \mathbb{R}^{F \setminus F(S)}$, the result we just obtained can be summarized as
\begin{equation}\label{eq:separator-decomposition-1}
    \lin{\sigmab} =
    \left(
        \mathbb{R}^{V \setminus S} \oplus
        \mathbb{R}^{F \setminus F(S)}
    \right)
    +
    \sum_{s \in S} \lin{\sigmab_s}
    \,.
\end{equation}

This in particular indicates that in our further development we only need to concentrate on those elements in $S \cup F(S)$.
On the other hand, constructing the remaining basis vectors is not straightforward.
To achieve this, we utilize the properties of $\sigmab$ developed (till \Cref{claim:separator-equivalence}), to further refine the decomposition~\eqref{eq:separator-decomposition-1}.

To begin our development of properties of $\sigmab$, let us assume that condition~\ref{item:separator-fd-1} of \Cref{thm:separator-fd} holds, i.e., $S$ is a minimal $f$-separator of $G$, so that all the notions introduced in \Cref{def:separator} are valid.
Note that this assumption is justified, because, by \Cref{lemma:ilp}, we already know that the minimality of $S$ is a necessary condition for the associated separator inequality to be facet-defining.
For notational simplicity, given any $s \in S$, we write $C(s)$ for the set of all $f$-cut-vertices of $G(S, s)$.

In the following we partition interactions that are connected in $G(S, s)$ into several classes for each fixed $s \in S$.
Such a partition is motivated by a simple fact about $\sigmab$ as explained below.
For any fixed $s \in S$ and any $x \in \sigmab_s \setminus \sigmab_*$, it holds that $x_f = 0$,~$x_v = 0$ for all $v \in C(s)$ and that $x_{s^\prime} = 1$ for all $s^\prime \in S \setminus \{s\}$.
Since every two vertices of $C(s)$ are connected in $G[V \cap x^{-1}(0)]$, it follows that $x_{uv} = x_{uv'}$ for all $uv, uv' \in F(S)$ with $v, v' \in C(s)$;
that is, two such related interactions $uv$,~$uv'$ are always assigned the same value by all the feasible vectors $x \in \sigmab_s$.
This suggests that, if we restrict ourselves to $\sigmab_s$, then it is natural to regard any two adjacent interactions as equivalent which are connected in $G(S, s)$ and whose non-common endvertices belong to $C(s)$.
More importantly, it is desirable to have the property that the characteristic vectors of equivalence classes belong to $\lin{\sigmab}$, from which we can proceed further to construct basis vectors of $\lin{\sigmab}$.
The above consideration motivates us to introduce the following definition, which plays a fundamental role in our approach.

With the given minimal $f$-separator $S$ of $G$, we define the set $[f^\prime]_s \subseteq S \cup F(S)$ for any $f^\prime = uw \in F(S)$ and any $s \in S$ according to the following four possible cases:
\begin{enumerate}[label=\textnormal{(\roman*)}]
    \item If $s \not\in S(f^\prime)$, then $[f^\prime]_s = \emptyset$.
    \item If $s \in S(f^\prime)$ and $u, w \not\in C(s)$, then $[f^\prime]_s = \{f^\prime\}$.
    \item If $s \in S(f^\prime)$ and there is precisely one endvertex, say $w$, of $f^\prime$ that is in $C(s)$, then $[f^\prime]_s = \{u w^{\prime} \in F(S) \mid w^{\prime} \in C(s)\}$.
    \item If $s \in S(f^\prime)$ and $u, w \in C(s)$, then $[f^\prime]_s = \{s\} \cup \{f^{\prime\prime} \in F(S) \mid f^{\prime\prime} \subseteq C(s)\}$.
\end{enumerate}

Note that for each fixed $s \in S$, the collection of all the sets $[f^\prime]_s$ for $f^\prime \in F(S)$ with $s \in S(f^\prime)$ gives a partition of the set consisting of $s$ and all interactions $f^\prime \in F(S)$ with $s \in S(f^\prime)$.
For example, in \Cref{fig:separator-not-facet}~(e), the vertex $s_1$ induces the partition $\{\{s_1, f, f_1\}, \{f_2, f_3\}\}$ and $s_2$ induces the partition $\{\{s_2, f, f_3\}, \{f_1,f_2\}\}$; in \Cref{fig:separator-not-facet}~(f), the vertex $s_1$ induces the partition $\{\{s_1, f\}, \{f_1, f_2\}, \{f_3,f_4\}\}$, and $s_2$ induces the partition $\{\{s_2, f\}, \{f_1,f_4\}, \{f_2, f_3\}\}$.

The following result, which is employed to demonstrate both the sufficiency and the necessity of \Cref{thm:separator-fd} afterward, reinforces our original idea to introduce $[f^\prime]_s$.

\begin{claim}\label{claim:separator-equivalence}
    Let $f^\prime \in F(S)$ and $s \in S(f^\prime)$, then for any $x \in \sigmab_s$ it holds that
    \begin{equation}\label{eq:separator-equivalence-1}
        \forall g \in [f^\prime]_s \colon x_{g} = x_{f^\prime}\,.
    \end{equation}
    Moreover, we have
    \begin{equation}\label{eq:separator-equivalence-2}
        \mathbbm{1}_{[f^\prime]_s} \in \lin{\sigmab_s}\,.
    \end{equation}
\end{claim}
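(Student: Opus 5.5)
The plan is to treat the two assertions separately: \eqref{eq:separator-equivalence-1} by a direct case analysis on $x \in \sigmab_s$, and \eqref{eq:separator-equivalence-2} by the inclusion--exclusion technique of \Cref{lemma:ie}, applied to vertex sets that all contain an $f$-connector through $s$.

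\emph{Step 1: the equalities \eqref{eq:separator-equivalence-1}.} If $x \in \sigmab_*$, then every vertex of $S$ is blocked. Every element of $[f^\prime]_s$ lies in $S \cup F(S)$, so all of them, and $f^\prime$ itself, take value $1$.

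Otherwise $x = x^U$ with $U \cap S = \{s\}$, and $U$ is an $f$-connector. Every $f$-path in $G[U]$ lies in $G(S,s)$. Such a path must pass through $s$, since $G(S,s) - s$ avoids $S$ and $S$ separates $f$. It must also pass through every vertex of $C(s)$. Hence $s$ and all of $C(s)$ lie in the connected component $K$ of $G[U]$ that contains $f$. We then check the four cases of the definition of $[f^\prime]_s$:
\begin{itemize}
\item Case (i) is vacuous, since $s \in S(f^\prime)$.
\item Case (ii) is trivial, since $[f^\prime]_s = \{f^\prime\}$.
\item In case (iii), $x_{uw^\prime} = 0$ holds if and only if $u \in K$, independently of $w^\prime \in C(s)$. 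Indeed, every vertex of $C(s)$ lies in $K$, and the connectivity of $uw^\prime$ under $U$ is decided inside $G[V \setminus (S \setminus \{s\})]$.
\item In case (iv), all elements of $[f^\prime]_s$ take value $0$: $s$ is unblocked, and every interaction contained in $C(s)$ is connected through $K$. This includes $f^\prime$.
\end{itemize}

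\emph{Step 2: generators of $\lin{\sigmab_s}$.} Since $x^\emptyset \in \sigmab_s$, the space $\lin{\sigmab_s}$ is spanned by the vectors $\mathbbm{1}_{V \cup F} - x^U$. Here $U$ ranges over the sets with $U \cap S = \emptyset$, and over the sets with $U \cap S = \{s\}$ that are $f$-connectors. I would apply \Cref{lemma:ie} only to quadruples $A$, $B$, $A\cap B$, $A\cup B$ that all contain $s$, avoid $S\setminus\{s\}$, and contain a common minimal $f$-connector $P$. By \Cref{cor:connector-separator-minimal}, $P$ is an induced $f$-path in $G(S,s)$; it passes through $s$ and contains $C(s)$. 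Then all four vectors lie in $\sigmab_s$, and the resulting vectors $\mathbbm{1}_{Q_{AB}}$ and $\mathbbm{1}_{F_{AB}} - \mathbbm{1}_{H_{AB}}$ belong to $\lin{\sigmab_s}$.

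\emph{Step 3: one case at a time.}
\begin{itemize}
\item Case (iv). Let $W = V(G(S,s))$, $A = W \setminus \{s\}$ and $B = W$. Then $Q_{AB}$ consists of $s$ together with the interactions connected in $G(S,s)$ but disconnected by deleting $s$. Every interaction contained in $C(s)$ is of this kind. The unwanted members are interactions that $s$ separates in $G(S,s)$ but that are not contained in $C(s)$. Each such interaction belongs to a class of type (ii) or (iii), so I would subtract its class vector, which the next two cases provide.
\item Case (ii), $f^\prime = uw$ with $u, w \notin C(s)$. Take $W = P \cup R$, where $R$ is a minimal $f^\prime$-connector in $G(S,s)$, and set $A = W \setminus \{u\}$ and $B = W \setminus \{w\}$. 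The aim is $F_{AB} = \{f^\prime\}$ and $H_{AB} = \emptyset$, which requires that $u, w \notin P$.
\item Case (iii), $f^\prime = uw$ with $w \in C(s)$. Use $A = P \cup R$ and $B = (P \cup R) \setminus \{u\}$. By Step 1, the interactions $uw^\prime$ with $w^\prime \in C(s)$ appear together in $Q_{AB}$.
\end{itemize}

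\emph{Main obstacle.} The hard part is the bookkeeping in Step 3. One must ensure that $u, w \notin P$, which may force re-choosing $P$ inside $G(S,s)$. This uses that $u$ and $w$ are not $f$-cut-vertices of $G(S,s)$, so some $f$-path in $G(S,s)$ avoids each of them. One must also show that $F_{AB}$, $H_{AB}$ and $Q_{AB}$ contain no interactions outside the targeted class.

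My first attempt would be to choose $P$ and $R$ so that $P \cup R$ induces a path or a path with one attached path. When an unwanted interaction does appear, I would remove it by induction on the number of vertices of $P \cup R$, handling such interactions first, as in the proof of \Cref{thm:connector-fd}.
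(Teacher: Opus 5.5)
Your overall route is the paper's. Step~1 is the same argument, and Step~3 applies \Cref{lemma:ie} to supersets of an $f$-path through $s$ inside $G(S,s)$: case~(ii) is settled first, and the other classes are obtained by subtracting class vectors already constructed. Your case~(iv), with $A=V(G(S,s))\setminus\{s\}$ and $B=V(G(S,s))$, is a valid and slightly cleaner variant of the paper's, which uses only the $f$-path. Here $Q_{AB}=\{s\}\cup\{g\in F(S)\mid s\in S(g)\}$ is exactly the union of all classes at $s$, and $x^A\in\sigmab_*\subseteq\sigmab_s$ even though $A$ misses $s$. In case~(iii), the paper deletes $u$ and $s$ and uses $F_{AB}$; your $Q_{AB}$ variant also works. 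However, the step you single out as the main obstacle is not resolved by the reason you give.

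That $u$ and $w$ are not $f$-cut-vertices of $G(S,s)$ only gives an $f$-path avoiding $u$ and another avoiding $w$. It does not give one path avoiding both. For instance, if the only $p$--$s$ routes in $G(S,s)$ were $p,u,s$ and $p,w,s$, every $f$-path would meet $\{u,w\}$.

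What rules this out is $f'\in F(S)$, which you never use at this point. Since $u,w\in G(S,s)$ and $u,w\neq s$ (because $s\in C(s)$), neither lies in $S$. So $u$ and $w$ lie in different components of $G-S$, that is, of $G(S,s)-s$. Every $f$-path in $G(S,s)$ consists of a $p$--$s$ path inside $K_p\cup\{s\}$ and an $s$--$q$ path inside $K_q\cup\{s\}$, where $K_p$ and $K_q$ are the components of $G-S$ containing $p$ and $q$. (If $s\in f$, then $S=\{s\}$ and only one half exists.) At most one of $u,w$ lies in $K_p$, and at most one in $K_q$. So you can choose the two halves independently, each avoiding the vertex on its side.

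The same fact shows that $R$ passes through $s$, so $(P\cup R)\setminus\{u,w\}$ is connected. This is exactly what gives $F_{AB}=\{f'\}$ and $H_{AB}=\emptyset$ in case~(ii), and no induction on $\lvert P\cup R\rvert$ is needed.

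Your closing target, showing that $Q_{AB}$ contains nothing outside the class, is also the wrong one for case~(iii). With $u\notin P$, $A=W$ and $B=W\setminus\{u\}$, one always gets $Q_{AB}=\{u\}\cup\{uy\in F\mid y\in W\setminus\{u\}\}$, because $W\setminus\{u\}$ is connected ($R\setminus\{u\}$ ends at $w\in P$). You obtain $\mathbbm{1}_{[f']_s}$ only after subtracting three kinds of vectors:
\begin{itemize}
\item $\mathbbm{1}_{\{u\}}$;
\item the vectors $\mathbbm{1}_{\{uy\}}$ with $uy\notin F(S)$, which can occur even for $y\in C(s)$;
\item the case-(ii) singletons $\mathbbm{1}_{\{uy\}}$ with $uy\in F(S)$ and $y\notin C(s)$.
\end{itemize}
The first two kinds require the unit vectors of $V\setminus S$ and of $F\setminus F(S)$. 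These lie in $\lin{\sigmab_*}\subseteq\lin{\sigmab_s}$ by \Cref{lemma:ie}, applied to the pair $\emptyset,\{v\}$ and to induced paths avoiding $S$; the paper records this just before the claim. Your proposal never produces these vectors, and without them the subtraction in case~(iii), and hence case~(iv), cannot be completed.
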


\begin{claimproof}
To show the first statement, i.e.~\eqref{eq:separator-equivalence-1}, let $x \in \sigmab_s$ and let $g$ be an arbitrary element in $[f^\prime]_s$.
As $[f^\prime]_s \subseteq S \cup F(S)$, the statement is clear when $x \in \sigmab_*$.
If $x \in \sigmab_s \setminus \sigmab_*$, then $x_s = 0$, $x_f = 0$, and $x_{s^\prime} = 1$ for all $s^\prime \in S \setminus \{s\}$.
So there is a minimal $f$-connector $C$ of $G(S, s)$ with $x|_C = 0$.
Suppose $x_{f^\prime} = 0$.
Then we can find a minimal $f^\prime$-connector $C^\prime$ of $G(S, s)$ with $x|_{C^\prime} = 0$.
From the definition of $[f^\prime]_s$, we see that $g$ is connected in $G[C \cup C^{\prime}]$, and hence $x_{g} = 0$.
As $f^\prime \in [g]_s$, a similar argument yields that $x_g = 0$ implies $x_{f^\prime} = 0$.
Therefore the desired equality $x_{f^\prime} = x_{g}$ follows.

For the second statement, i.e.~\eqref{eq:separator-equivalence-2}, we prove it by applying \Cref{lemma:ie} to construct the characteristic vector $\mathbbm{1}_{[f^\prime]}$ as a linear combination of vectors in $\lin{\sigmab_s}$.
Let us start with introducing a few vertex subsets that will be used in the subsequent construction.
Write $f=pq$ and $f^\prime=uw$.
Let $C_1$ (respectively, $C_2$) be a minimal $ps$-connector (respectively, $qs$-connector) of $G(S, s)$.
If $u$ (respectively, $w$) is not in $C(s)$, then it is clear that we can choose $C_1$,~$C_2$ such that $u \not\in C_1 \cup C_2$ (respectively, $w \not\in C_1 \cup C_2$).
Furthermore, if both $u$ and $w$ are not in $C(s)$, then $C_1$ and $C_2$ can be chosen such that $u,w \not\in C_1 \cup C_2$.
Note that $C_1 \cup C_2$ is a minimal $f$-connector of $G$.
In any case, as $s \in S(f^\prime)$, we can pick a minimal $\{\{u\}, N_G[C_1 \cup C_2]\}$-connector $C^\prime_1$ of $G(S, s)$ and a minimal $\{\{w\}, N_G[C_1 \cup C_2]\}$-connector $C^\prime_2$ of $G(S, s)$, respectively.
Write $U = C_1 \cup C_2 \cup C^\prime_1 \cup C^\prime_2$.
Examples of the sets constructed here can be found in \Cref{fig:separator-claim-equivalence}.
Below we show that $\mathbbm{1}_{[f^\prime]_s} \in \lin{\sigmab_s}$ by distinguishing three possible cases as in the definition of $[f^\prime]_s$.
\begin{enumerate}[label=\textnormal{(\Alph*)}]
    \item If both $u$ and $w$ are not $f$-cut-vertices of $G(S, s)$, then we set $A = U \setminus \{u\}$ and $B = U \setminus \{w\}$.
    Thus, all $x^A$,~$x^B$,~$x^{A \cap B}$ and~$x^{A \cup B}$ belong to $\sigmab_s$.
    Furthermore, $F_{AB} = \{f^\prime\}$ and $H_{AB} = \emptyset$.
    By \eqref{eq:ie-2} of \Cref{lemma:ie}, we get $\mathbbm{1}_{[f^\prime]_s} = \mathbbm{1}_{\{f^\prime\}} \in \lin{\sigmab_s}$.
    \item If $u$ is not an $f$-cut-vertex of $G(S, s)$ but $w$ is an $f$-cut-vertex of $G(S, s)$, then we set $A = U \setminus \{u\}$ and $B = U \setminus \{s\}$.
    It follows by definition that $F_{AB} = \{uw^\prime \in F \mid w^\prime \in C_1 \cup C_2 \ \textnormal{and} \ s \ \textnormal{is a} \ uw^\prime\textnormal{-cut-vertex of} \ G[U]\}$ and $H_{AB} = \emptyset$.
    Thus, \eqref{eq:ie-2} of \Cref{lemma:ie} immediately gives $\mathbbm{1}_{F_{AB}} \in \lin{\sigmab_s}$, upon noting that $x^A$,~$x^B$,~$x^{A \cap B}$ and~$x^{A \cup B}$ belong to $\sigmab_s$.
    On the other hand, we have the identity
    \[
        \mathbbm{1}_{[f^\prime]_s} = \mathbbm{1}_{F_{AB}} - \sum_{uw^\prime \in F_{AB} \setminus F(S)} \mathbbm{1}_{\{uw^\prime\}} - \sum_{uw^\prime \in F_{AB} \cap F(S) \colon w^\prime \not\in C(s)} \mathbbm{1}_{\{uw^\prime\}}\,.
    \]
    Therefore, $\mathbbm{1}_{[f^{\prime}]_s} \in \lin{\sigmab_s}$, since $\mathbbm{1}_{F_{AB}} \in \lin{\sigmab_s}$ as we just saw, and $\mathbbm{1}_{\{uw^\prime\}} \in \lin{\sigmab_s}$ for every interaction $uw^\prime \not\in F(S)$ by the discussion preceding this claim, and $\mathbbm{1}_{\{uw^\prime\}} \in \lin{\sigmab_s}$ for every interaction $uw^\prime \in F(S)$ with $w^\prime \in (C_1 \cup C_2) \setminus C(s)$ by the previous case.
    See \Cref{fig:separator-claim-equivalence}~\subref{fig:claim-equivalence-a} for an example.
    Similarly, if $u$ is an $f$-cut-vertex of $G(S, s)$ but $w$ is not an $f$-cut-vertex of $G(S, s)$, then we can also obtain $\mathbbm{1}_{[f^\prime]_s} \in \lin{\sigmab_s}$.
    \item Now, suppose both $u$ and $w$ are $f$-cut-vertices of $G(S, s)$.
    Set $A = U \setminus \{s\}$ and $B = U$.
    Then, $x^A, x^B \in \sigmab_s$ and $Q_{AB} = \{s\} \cup \{u^\prime w^\prime \in F(S) \mid u^\prime \in C_1 \ \textnormal{and} \ w^\prime \in C_2\}$.
    By \eqref{eq:ie-1} of \Cref{lemma:ie}, we get $\mathbbm{1}_{Q_{AB}} \in \lin{\sigmab_s}$.
    Moreover, it holds that
    \[
        \mathbbm{1}_{[f^\prime]_s} = \mathbbm{1}_{Q_{AB}} - \sum_{[u^\prime w^\prime]_s \colon u^\prime \in C_1 \setminus C(s) \ \textnormal{or} \ w^\prime \in C_2 \setminus C(s)} \mathbbm{1}_{[u^\prime w^\prime]_s}\,.
    \]
    Notice that here the sum is taken over all distinct sets $[u^\prime w^\prime]_s$ with $u^\prime \in C_1 \setminus C(s)$ or $w^\prime \in C_2 \setminus C(s)$.
    It is clear that $x^A$,~$x^B$,~$x^{A \cap B}$ and~$x^{A \cup B}$ are all in $\sigmab_s$.
    By the results above, we have already seen that $\mathbbm{1}_{[u^\prime w^\prime]_s} \in \lin{\sigmab_s}$ for every interaction $u^\prime w^\prime \in F(S)$ with $u^\prime \in C_1 \setminus C(s)$ or $w^\prime \in C_2 \setminus C(s)$.
    Thus once again $\mathbbm{1}_{[f^\prime]_s} \in \lin{\sigmab_s}$.
    See \Cref{fig:separator-claim-equivalence}~\subref{fig:claim-equivalence-b} for an example.
\end{enumerate}
The above discussion has established \eqref{eq:separator-equivalence-2} in all possible cases, and hence completes the proof.

\begin{figure}
    \centering
    \begin{subfigure}[b]{0.4\textwidth}
        \centering
        \input{figures/separator/claim-equivalence-a.tex}
        \caption{}
        \label{fig:claim-equivalence-a}
    \end{subfigure}
	\hspace{3ex}
    \begin{subfigure}[b]{0.4\textwidth}
        \centering
        \input{figures/separator/claim-equivalence-b.tex}
        \caption{}
        \label{fig:claim-equivalence-b}
    \end{subfigure}
    \caption{Depicted above are examples illustrating the constructions in the proof of \Cref{claim:separator-equivalence}.
    Each figure consists of a graph $G$ (solid black), a designated interaction $f=pq$ (dashed red), a minimal $f$-separator $S = \{s\}$ of $G$, and other interactions (dashed green).
    In \subref{fig:claim-equivalence-a}, if $C_1 = \{p, s\}$, $C_2 = \{s, w, w_1, q\}$, $C^{\prime}_1 = \{u\}$ and $C^{\prime}_2 = \{w\}$, then $U = \{u, p, s, w, w_1, q\}$, $A = U \setminus \{u\}$ and $B = U \setminus \{s\}$.
    Thus, $F_{AB} = \{up, us, uw, uw_1\}$ and $[uw]_s = \{us, uw\}$.
    In \subref{fig:claim-equivalence-b}, if $C_1 = \{p, u_1, u, s\}$, $C_2 = \{s, w, w_2, q\}$, $C^{\prime}_1 = \{u\}$ and $C^{\prime}_2 = \{w\}$, then $U = \{p, u_1, u, s, w, w_2, q\}$, $A = U \setminus \{s\}$ and $B = U$.
    Thus, $Q_{AB} = \{uw, u_1w, uw_2, sw_2, s, f\}$, $[uw]_s = \{s, f, uw\}$, $[u_1w]_s = \{u_1w\}$, and $[uw_2]_s = \{uw_2, sw_2\}$.}
    \label{fig:separator-claim-equivalence}
\end{figure}
\end{claimproof}

As an immediate consequence of \Cref{claim:separator-equivalence}, we have
\[
    \lin{\sigmab_s} =
    \mathbb{R}^{V \setminus S} \oplus
    \mathbb{R}^{F \setminus F(S)} \oplus
    \sum_{f^\prime \in F(S)} \vspan{\mathbbm{1}_{[f^\prime]_s}}
\]
for every $s \in S$.
Combining with \eqref{eq:separator-decomposition-1} yields
\begin{equation}\label{eq:separator-decomposition-2}
    \lin{\sigmab} =
    \mathbb{R}^{V \setminus S} \oplus
    \mathbb{R}^{F \setminus F(S)} \oplus
    \sum_{s \in S, f^\prime \in F(S)} \vspan{\mathbbm{1}_{[f^\prime]_s}}
    \,.
\end{equation}

Let us take a closer look at the last direct summand of the above decomposition.
Firstly, we observe that if $f^\prime \in F(S)$ satisfies $f^\prime \cap C(s) = \emptyset$ for some $s \in S(f^\prime)$, then $\mathbbm{1}_{\{f^\prime\}} \in \lin{\sigmab_s}$ by \eqref{eq:separator-equivalence-2} of \Cref{claim:separator-equivalence}, since $[f^\prime]_s = \{f^\prime\}$ in this case.
Secondly, for each vertex $s \in S$ there is exactly one set, i.e., $[f]_s$, that contains $s$ among all $[f^\prime]_{s^\prime}$ with $s^\prime \in S$ and $f^\prime \in F(S)$.
Similarly, each $sv \in F(S)$ with $s \in S$ and $v \in G(S, s) \setminus C(s)$ is contained in $[sv]_s$ only.
Thirdly, for every $uw \in F(S)$ such that $u$ is non-degenerate, $u \not\in C(s)$ for some $s \in S(uw)$, and $w \in C(s)$ for any $s \in S(uw)$, there is one and only one set, i.e., $[uw]_s$ for some $s \in S(uw)$ with $us \not\in F$, that contains $uw$ among all $[f^\prime]_{s^\prime}$ with $s^\prime \in S$ and $f^\prime \in F(S)$.
Taking these observations into account, the last term of \eqref{eq:separator-decomposition-2} can be rewritten as
\begin{align}
    \sum_{s \in S, f^\prime \in F(S)} \vspan{\mathbbm{1}_{[f^\prime]_s}} \cong \mathbb{R}^{S}
    &\oplus \mathbb{R}^{\{f^\prime \in F(S) \mid \exists s \in S(f^\prime)\colon f^\prime \cap C(s) = \emptyset\}}
    \oplus \mathbb{R}^{\{sv \in F(S) \mid s \in S,\, v \in G(S, s) \setminus C(s)\}}\nonumber\\
    &\oplus \sum_{s \in S, f^\prime \in F_{\parallel}} \vspan{\mathbbm{1}_{[f^\prime]_s}}\oplus \sum_{s \in S, f^\prime \in F(S, s)} \vspan{\mathbbm{1}_{[f^\prime]_s}}\label{eq:separator-decomposition-2-span}
    \,,
\end{align}
where
\[
    F_{\parallel} =
    \left\{
        uw \in F(S) \;\middle|\;
        \begin{aligned}
            & S \in \{\{u\}, f\}\textnormal{-separators}(G),\\
            & u \ \textnormal{is non-degenerate, and}\\
            & \forall s \in S(uw) \colon w \in C(s)
        \end{aligned}
    \right\}
\]
and
\[
    F(S, s) =
	\left\{
		f^{\prime} \in F(S) \setminus \tbinom{C(s)}{2} \;\middle|\;
		\begin{aligned}
            & s \in S(f^{\prime}), \forall f^{\prime\prime} \in [f^{\prime}]_s \colon s \not\in f^{\prime\prime},\\
			& \forall s^\prime \in S(f^{\prime}) \colon f^{\prime} \cap C(s^\prime) \neq \emptyset,\\
            & \forall v \in f^\prime \colon S \not\in \{\{v\}, f\}\textnormal{-separators}(G)
		\end{aligned}
	\right\}
\]
for every $s \in S$.
Note that if $f^\prime \in F(S, s)$ for some $s \in S$, then neither of the endvertices of $f^\prime$ is separated from $f$ by $S$ in $G$.

Combining \eqref{eq:separator-decomposition-2} and \eqref{eq:separator-decomposition-2-span}, we obtain
\begin{equation}\label{eq:separator-decomposition-3}
    \lin{\sigmab} \cong
    \mathbb{R}^{V}
    \oplus
    \mathbb{R}^{
        (
            F \setminus F_{\varnothing}
        )
        \setminus
        (
            \cup_{s \in S} F(S, s) \cup F_{\parallel}
        )
        }
    \oplus
    \sum_{s \in S, f^\prime \in F_{\parallel}} \vspan{\mathbbm{1}_{[f^\prime]_s}}
    \oplus
    \sum_{s \in S, f^\prime \in F(S, s)} \vspan{\mathbbm{1}_{[f^\prime]_s}}
    \,,
\end{equation}
where
\[
    F_{\varnothing} =
    \left\{
        f^\prime \in F(S) \;\middle|\;
        \begin{aligned}
            & \forall s \in S(f^\prime) \colon \ \textnormal{either} \ f^\prime \subseteq C(s),\ \textnormal{or}\\
            & f^\prime \setminus (C(s) \setminus S) = \{v\} \ \textnormal{and} \ vs \in F
        \end{aligned}
    \right\}
    \,.
\]

\begin{figure}
    \centering
    \begin{subfigure}[b]{0.45\textwidth}
        \centering
        \input{figures/separator/equivalence-a.tex}
        \caption{$\langle v \rangle = [f^{\prime}]_{s_1} = [f^{\prime}]_{s_2}$}
        \label{fig:equivalence-a}
    \end{subfigure}
    \hfill
    \begin{subfigure}[b]{0.45\textwidth}
        \centering
        \input{figures/separator/equivalence-b.tex}
        \caption{$\langle v \rangle = [f^{\prime}]_{s_1} \neq [f^{\prime}]_{s_2}$}
        \label{fig:equivalence-b}
    \end{subfigure}
    \caption{Depicted above are examples illustrating the notion $\langle v \rangle$.
    Each figure consists of a graph $G$ (solid black), a designated interaction $f=pq$ (dashed red), a minimal $f$-separator $S = \{s_1, s_2\}$ of $G$, and other interactions (dashed green).
    In \subref{fig:equivalence-a}, $[f^{\prime}]_{s_1} = \{f^\prime, f^{\prime\prime}\} = [f^{\prime}]_{s_2}$, hence $\langle v \rangle = [f^{\prime}]_{s_1} = [f^{\prime}]_{s_2}$.
    As a comparison, we have $[f^\prime]_{s_1} = \{f^\prime, f^{\prime\prime}\}$ and $[f^\prime]_{s_2} = \{f^\prime, f^{\prime\prime}, vs_2\}$ in \subref{fig:equivalence-b}, hence $\langle v \rangle = [f^\prime]_{s_1} \neq [f^\prime]_{s_2}$.
    }
    \label{fig:separator-equivalence}
\end{figure}

Now it is clear that in order to determine the dimension of $\lin{\sigmab}$, we just need to understand the last direct summand of decomposition~\eqref{eq:separator-decomposition-3}.
To this end, in the next step we construct an auxiliary loop graph so that its (vertex-edge) incidence matrix has full column rank if and only if so does the last direct summand of \eqref{eq:separator-decomposition-3}.
Then we can utilize \Cref{lemma:pseudoforest} to decide when $\lin{\sigmab}$ has codimension~$1$.
More specifically, we require that, for all vertices of the auxiliary loop graph that are not incident with loops, the corresponding characteristic vectors of incident edges are given exactly by $\mathbbm{1}_{[f^\prime]_s}$ with $s \in S$ and $f^\prime \in F(S, s)$.
Notice that we exclude vertices incident with loops in the above requirement because in the last direct summand of \eqref{eq:separator-decomposition-3} it is possible that some $f^\prime$ is contained in only one equivalence class $[f^\prime]_s$.
The exact construction will be given below after making one more vital observation.

To achieve our goal of interpreting the last direct summand of \eqref{eq:separator-decomposition-3} as the incidence matrix of some loop graph, let us observe that for any $v \in V \setminus S$ that is not an $f$-cut-vertex of $G$ and any $f^\prime \in \cup_{s \in S} F(S, s)$ with $v \in f^\prime$, it holds that $[f^\prime]_{s_1} = [f^\prime]_{s_2}$ for all $s_1, s_2 \in S(f^\prime)$ satisfying $v \not \in C(s_1) \cup C(s_2)$ and $vs_1, vs_2 \not\in F$.
See \Cref{fig:separator-equivalence} for illustration.
This observation motivates us to introduce $\langle v \rangle$ to represent the common set $[f^\prime]_{s_1} = [f^\prime]_{s_2}$ and hence the set of edges incident with $v$.
More formally, for any $v \in V$, we define $\langle v \rangle$ as the set of all those interactions $f^\prime \in F(S)$ with $v \in f^\prime$ that satisfy the following two conditions:
\begin{enumerate}[label=\textnormal{(\alph*)}]
    \item $f^\prime \cap C(s) \neq \emptyset$ for any $s \in S(f^\prime)$.
    \item There exists $s \in S(f^\prime)$ such that $v \not\in C(s)$ and $vs \not\in F$.
\end{enumerate}
Notice that if $v \in V$ is degenerate, then $\langle v \rangle$ is empty.
Moreover, the union of all $\langle v \rangle$ with $v$ not separated from $f$ by $S$ in $G$ is exactly $\cup_{s \in S} F(S, s)$.
We construct an auxiliary loop graph $G_{\infty} = (V_{\infty}, E_{\infty})$ as follows:
\begin{itemize} 
    \item $V_{\infty}$ is the set of all vertices of $G$ that are not separated from $f$ by $S$ in $G$.
    \item $E_{\infty}$ consists of all loops $\{v\}$ incident with degenerate vertices $v$ in $V_{\infty}$ and all interactions in $\cup_{v \in V_{\infty}} \langle v \rangle$.
\end{itemize}
Note that the loopless part of $G_{\infty}$ is bipartite, whose two parts are given by the two vertex subsets $\{v \in V \setminus S \mid S \not\in uv\textnormal{-separators}(G)\}$ for $u \in f$.
This means that $G_{\infty}$ has no odd cycles except loops.
It follows from the definition of $\langle v \rangle$ that
\[
    \sum_{s \in S, f^\prime \in F_{\parallel}}
    \vspan{
        \mathbbm{1}_{[f^\prime]_s}
    }
    =
    \bigoplus_{\langle v \rangle \colon v \in V \setminus V_{\infty}}
    \vspan{
        \mathbbm{1}_{\langle v \rangle}
    }
\]
and
\begin{equation*}
    \sum_{s \in S, f^\prime \in F(S, s)} \vspan{\mathbbm{1}_{[f^\prime]_s}}
    =
    \sum_{v \in V_{\infty}} \vspan{\mathbbm{1}_{\langle v \rangle}}
    \,.
\end{equation*}
Putting it all together, we have
\begin{equation}\label{eq:separator-decomposition-4}
    \lin{\sigmab} \cong
    \mathbb{R}^{V}
    \oplus
    \mathbb{R}^{
        (
            F \setminus F_{\varnothing}
        )
        \setminus
        \cup_{v \in V} \langle v \rangle
    }
    \oplus
    \bigoplus_{
        \langle v \rangle \colon
        v \in V \setminus V_{\infty}
    }
    \vspan{
        \mathbbm{1}_{\langle v \rangle}
    }
    \oplus
    \sum_{v \in V_{\infty}} \vspan{\mathbbm{1}_{\langle v \rangle}}
    \,.
\end{equation}
We further observe that the row space of the incidence matrix of $G_{\infty}$ is
\[
    \bigoplus_{\{v \in V_{\infty} \colon \{v\} \in E_{\infty}\}} \vspan{\mathbbm{1}_{\{e \in E_{\infty} \mid v \in e\}}}
    \oplus
    \sum_{v \in V_{\infty}} \vspan{\mathbbm{1}_{\langle v \rangle}}
    \,.
\]

Taking the preceding observation into account and looking at the decomposition~\eqref{eq:separator-decomposition-4}, we see that $\lin{\sigmab}$ has codimension~$1$ in $\mathbb{R}^{V \cup F}$ if and only if
\begin{enumerate}[label=\textnormal{(\roman*$'$)}]
    \setcounter{enumi}{1}
    \item \label{item:separator-fd-2'}
    $F_{\varnothing} = \{f\}$.
    \item \label{item:separator-fd-3'}
    $\lvert \langle v \rangle \rvert \leq 1$ for every $v \in V$ such that $S$ is a $\{\{v\}, f\}$-separator of $G$.
    \item \label{item:separator-fd-4'}
    The incidence matrix of $G_{\infty}$ has full column rank.
\end{enumerate}
Moreover, by \Cref{lemma:pseudoforest} and the fact that the loopless part of $G_{\infty}$ is bipartite, condition~\ref{item:separator-fd-4'} is equivalent to the statement that $G_{\infty}$ is a pseudoforest without cycles of length~$\geq 2$.

Finally, we claim that, assuming $S$ is a minimal $f$-separator of $G$, conditions~\ref{item:separator-fd-2},~\ref{item:separator-fd-3} and~\ref{item:separator-fd-4} in \Cref{thm:separator-fd} are equivalent to conditions~\ref{item:separator-fd-2'},~\ref{item:separator-fd-3'} and~\ref{item:separator-fd-4'} above.
This is easy to see:
conditions~\ref{item:separator-fd-2} and~~\ref{item:separator-fd-3} together are equivalent to conditions~\ref{item:separator-fd-2'} and~\ref{item:separator-fd-3'};
under conditions~\ref{item:separator-fd-1}~--~\ref{item:separator-fd-3}, condition~\ref{item:separator-fd-4} is equivalent to condition~\ref{item:separator-fd-4'}.
This concludes the proof of \Cref{thm:separator-fd}.
\qed

\subsection{Proof of Lemma~\ref{lemma:separator-polynomial-time}}
\label{section:proof:lemma:separator-polynomial-time}

Condition~\ref{item:separator-fd-1} can be verified by checking whether $S \setminus \{s\}$ is still an $f$-separator of $G$ for every $s \in S$, which can be done clearly in polynomial time.

In view of \Cref{remark:seprator-def-P}, we see that the decision problems for the remaining conditions of \Cref{thm:separator-fd} can be solved in polynomial time.

\subsection{Proof of Lemma~\ref{lemma:separator-three-disjoint-paths}}
\label{section:proof:lemma:separator-three-disjoint-paths}

Necessity is clear, since both $p$ and $v$ are degenerate.

To see sufficiency, we first note that $\HH(f, S)$ is acyclic due to the existence of three such internally vertex-disjoint $f$-paths in $G$, since otherwise any cycle of $\HH(f, S)$, whose vertices must not be separated from $f$ by $S$ in $G$ by condition~\ref{item:separator-fd-3}, has an edge $f'$ whose two endvertices are not $f$-cut-vertices of $G(S, s)$ for some $s \in S(f') = S$, contradicting the definition of $\HH(f, S)$.
Let $P = (V_P, E_P)$ be a path in $\HH(f, S)$ with $E_P \neq \emptyset$ whose only degenerate vertices are its two endvertices.
For the desired result, we just need to show that $P$ has only one edge, which shares one common vertex with $f$ and whose other endvertex is not separated from $f$ by $S$ in $G$ and is connected to every vertex of $S$ by an interaction in $F$.
Indeed, it follows from condition~\ref{item:separator-fd-3} that none of the vertices of $P$ is separated from $f$ by $S$ in $G$.
Then, by the definition of $\HH(f, S)$, for any edge $e$ of $P$ and any $f$-path $P^{\prime} = (V_{P'}, E_{P'})$ in $G$ that intersects $S$ in one vertex, we have $e \cap V_{P^{\prime}} \neq \emptyset$.
However, as it is assumed that there exist at least three such $f$-paths that are internally vertex-disjoint, this implies that $e \cap f \neq \emptyset$, i.e., $e$ shares a common vertex with $f$.
Because the endvertices of $f$ are degenerate and the internal vertices of $P$ are assumed to be non-degenerate, $P$ has exactly one edge $e$.
Condition~\ref{item:separator-fd-2} requires that $e$ neither intersects $S$ nor is a pair of $f$-cut-vertices of $G$, and hence forces the other degenerate endvertex of $e$, the one that is not incident with $f$, to be connected to every vertex in the separator $S$ by an interaction.

\subsection{Proof of Lemma~\ref{lemma:path-valid}}
\label{section:proof:lemma:path-valid}

Let $P = (V_P, E_P)$.
By multiplying \eqref{eq:path} by a factor $2$ and rearranging, it can be written as
\begin{equation}\label{eq:path-valid}
    (2 x_f - x_p - x_q) \leq \sum_{uw \in E_P} (2 x_{uw} - x_u - x_w)\,,
\end{equation}
where $p$ and $q$ are the two endvertices of $f$.
Note that, for any $x \in \feasible$, both $2 x_f - x_p - x_q$ and $2 x_{uw} - x_u - x_w$ for every $uw \in E_P$ can only take values $0, 1, 2$.
To prove validity, we verify each possible case below.

If $2 x_f - x_p - x_q = 0$, then \eqref{eq:path-valid} is obviously satisfied.
If $2 x_f - x_p - x_q = 1$, then it can only happen that $x_f = 1$ and $x_p \neq x_q$.
In this case, the path $P$ contains an edge $uw \in F$ whose endvertices $u$ and $w$ take different values, hence $2 x_{uw} - x_u - x_w = 1$ and \eqref{eq:path-valid} holds.
If $2 x_f - x_p - x_q = 2$, then the only possibility is that $x_f = 1$ and $x_p = x_q = 0$.
In this case, the edges of $P$ cannot all take the value $0$ by the definition of $\feasible$.
From this we see that the path $P$ either possesses an edge $uw \in F$ with $x_{uw} = 1$,~$x_u = x_w = 0$, or at least two distinct edges $u_1 w_1, u_2 w_2 \in F$ with $x_{u_1 w_1} = x_{u_2 w_2} = 1$,~$x_{u_1} \neq x_{w_1}$,~$x_{u_2} \neq x_{w_2}$, which again implies that \eqref{eq:path-valid} holds.
Therefore, we have shown that inequality~\eqref{eq:path-valid} is satisfied by all $x \in \feasible$, and hence \eqref{eq:path} is valid.

\subsection{Proof of Theorem~\ref{thm:path-facet}}
\label{section:proof:thm:path-facet}

For convenience, we fix a linear order on $P$, i.e. let
\[
    V_P = \{v_0, v_1, \dots, v_n\}
\]
and
\[
    E_P = \{e_1 = v_0v_1, \dots, e_n = v_{n-1}v_n\}\,.
\]
Hence $f = v_0 v_n$.
We denote
\[
    \sigmab = \left\{x \in \feasible \mid x_f = \sum_{e \in E_P} x_e - \sum_{v \in V_P \setminus f} x_v\right\}
\]
as the set of all feasible vectors that satisfy \eqref{eq:path} with equality.

(Necessity)
Suppose that $P$ has a chord $f^\prime = v_i v_j$ in $F \setminus \{f\}$, where \(0\leq i<j\leq n\) and \(j\geq i+2\).
Then, we consider a second $f$-path $P^\prime = (V_{P^\prime}, E_{P^\prime})$ in $(V, F)$ obtained by replacing the part between $v_i$ and $v_j$ with $f^\prime$, that is, $V_{P^\prime} = \{v_0, \dots, v_i, v_j, \dots, v_n\}$ and $E_{P^\prime} = \{e_1, \dots, e_i, e_{j+1}, \dots, e_n\} \cup \{f^\prime\}$.
By \Cref{lemma:path-valid}, it follows that the inequality
\begin{equation}\label{eq:path-3}
    x_f \leq x_{f^\prime} + \sum_{k \in \{1, \dots, n\} \setminus \{i+1, \dots, j\}} x_{e_k} - \sum_{l \in \{1, \dots, n-1\} \setminus \{i+1, \dots, j-1\}} x_{v_l}
\end{equation}
is valid for $\polytope$.
Now we notice that \eqref{eq:path} is simply a sum of \eqref{eq:path-3} and the path inequality
\[
    x_{f^\prime} \leq \sum_{k=i+1}^j x_{e_k} - \sum_{l = i+1}^{j-1} x_{v_l}\,.
\]
Consequently, inequality \eqref{eq:path} does not define a facet of $\polytope$ under the given condition.

(Sufficiency)
Assume that the path $P$ is chordless in the graph $(V, F \setminus \{f\})$.
In other words, there is no interaction $f^\prime \in F \setminus (E_P \cup \{f\})$ such that $f^\prime \subseteq V_P$.

Looking at this inequality, we observe that, for any vertex subset $U$ whose intersection with $V_P$ is either empty or induces a path in $P$ that also contains an endvertex of $P$, the induced feasible vector $x^{U}$ satisfies this inequality tightly and hence belongs to $\sigmab$.
To see this more explicitly, as the case when either $U \cap V_P = \emptyset$ or $V_P \subseteq U$ is obvious, suppose without loss of generality that $U \cap V_P = \{v_0, \dots, v_h\}$ for some $h \in \{0, \dots, n-1\}$.
Then, $x^{U}_f = 1$, $\sum_{e \in E_P} x^{U}_e = n - h$ and $\sum_{v \in V_P \setminus f} x^{U}_v = n - h - 1$, and hence the observation follows.

In the following, we construct the associated standard unit vector for each vertex that is not in the interior of $P$ and each interaction that is not in $E_P \cup \{f\}$.

Firstly, let $v$ be a vertex that is not in the interior of $P$.
Since the all-one vector $x^{\emptyset} = \mathbbm{1}_{V \cup F}$ and $x^{\{v\}}$ belong to $\sigmab$ by the observation above, we have
\[
    \mathbbm{1}_{\{v\}} = \mathbbm{1}_{V \cup F} - x^{\{v\}} \in \lin{\sigmab}\,.
\]

Secondly, let $f^\prime = st \in F$ be an interaction such that neither of the endvertices of $f^\prime$ is in $V_P$.
If $V_P$ is not an $f^\prime$-separator of $G$, we can choose a chordless $f^\prime$-path $P^\prime$ in $G$ such that the two vertex sets of $P$ and $P^\prime$ are disjoint.
Set
\begin{equation*}
    A = V_{P^\prime} \setminus \{t\}\,, \quad B = V_{P^\prime} \setminus \{s\}\,.
\end{equation*}
Then $F_{AB} = \{f^\prime\}$ and $H_{AB} = \emptyset$.
By \eqref{eq:ie-2} of \Cref{lemma:ie}, we get
\[
    \mathbbm{1}_{\{f^\prime\}} \in \lin{\sigmab}\,,
\]
since $x^{A}$, $x^{B}$, $x^{A \cap B}$ and $x^{A \cup B}$ are in $\sigmab$ by the observation above.
Otherwise, $V_P$ is an $f^\prime$-separator of $G$.
Then, we take a chordless $\{s, V_P\}$-path $P^\prime$ and a chordless $\{t, V_P\}$-path $P^{\prime\prime}$ in $G$ and set
\[
    U = V_P \cup V_{P^\prime} \cup V_{P^{\prime\prime}}, \quad A = U \setminus \{s\}\,, \quad B = U \setminus \{t\}\,.
\]
As $V_P$ is an $f^\prime$-separator of $G$, none of $P^{\prime}$ and $P^{\prime\prime}$ contains an interior vertex of another.
Thus, $F_{AB} = \{f^\prime\}$ and $H_{AB} = \emptyset$.
From \eqref{eq:ie-2} of \Cref{lemma:ie} and the fact that $x^{A}$, $x^{B}$, $x^{A \cap B}$ and $x^{A \cup B}$ are in $\sigmab$ by the above observation, it follows that
\begin{equation*}
    \mathbbm{1}_{\{f^\prime\}} \in \lin{\sigmab}\,.
\end{equation*}
An example is depicted in \Cref{fig:path-fd}~\subref{fig:path-fd-1}.

\begin{figure}
	\centering
    \begin{subfigure}[b]{0.45\textwidth}
        \centering
        \input{figures/path-ineq-proof-1.tex}
        \caption{}
        \label{fig:path-fd-1}
    \end{subfigure}
    \hspace{4ex}
    \begin{subfigure}[b]{0.45\textwidth}
        \centering
        \input{figures/path-ineq-proof-2.tex}
        \caption{}
        \label{fig:path-fd-2}
    \end{subfigure}
	\caption{Examples of the constructions in the proof of \Cref{thm:path-facet}.
    Each subfigure consists of a graph $G$ (solid black) , a path $P$ on the vertex set $\{v_0, v_1, v_2, v_3, v_4\}$, and interactions (dashed green).
    In \subref{fig:path-fd-1}, the paths induced by the edges $sv_1$ and $tv_3$ can be chosen as $P^\prime$ and $P^{\prime\prime}$ respectively; in \subref{fig:path-fd-2}, the path induced by any of the edges $sv_0$, $sv_1$, $sv_3$ can be chosen as $P^\prime$.}
    \label{fig:path-fd}
\end{figure}

Thirdly, we will consider all interactions with one endvertex in $V_P$.
To that end, let $s \in V \setminus V_P$ be an arbitrary but fixed vertex and let $P^\prime$ be a chordless $\{s, V_P\}$-path in $G$.
Let $u$ be the unique vertex in $V_{P^\prime} \cap N_G(V_P)$ and let $n^{-}$ (respectively, $n^{+}$) denote the minimal (respectively, maximal) index $i$ such that $v_i$ is adjacent to $u$ in $G$.
For any $f^\prime = sv_m \in F$ with $m < n^+$, we set
\[
    U = (V_{P^\prime} \setminus V_P) \cup \{v_l \in V_P \mid l \geq m\}, \quad A = U \setminus \{s\}\,, \quad B = U \setminus \{v_m\}\,.
\]
It can be checked that $F_{AB} = \{f^\prime\}$ and $H_{AB} = \emptyset$.
By \eqref{eq:ie-2} of \Cref{lemma:ie}, we obtain
\begin{equation*}
    \mathbbm{1}_{\{f^\prime\}} \in \lin{\sigmab}\,,
\end{equation*}
since $x^{A}$, $x^{B}$, $x^{A \cap B}$ and $x^{A \cup B}$ are in $\sigmab$ by the observation above.
By a symmetric argument, it can be obtained that the associated standard unit vector $\mathbbm{1}_{\{f^\prime\}}$ also lies in $\lin{\sigmab}$ for the case $f^\prime = sv_m \in F$ with $m > n^{-}$.
An example is shown in \Cref{fig:path-fd}~\subref{fig:path-fd-2}.
Now, for the remaining case $f^\prime = sv_m \in F$ with $m = n^{-} = n^{+}$, we set
\[
    A = (V_P \cup V_{P^\prime}) \setminus \{s\}\,, \quad B = V_P \cup V_{P^\prime}\,.
\]
Then, $Q_{AB} = \{st^\prime \in F \mid t^\prime \in A\} \cup \{s\}$.
From the above observation, $x^{A}$, $x^{B}$, $x^{A \cap B}$ and $x^{A \cup B}$ are all in $\sigmab$.
It then follows from \eqref{eq:ie-1} of \Cref{lemma:ie} that $\mathbbm{1}_{Q_{AB}} \in \lin{\sigmab}$.
Moreover, the previous discussion shows that the standard unit vector associated with every element in $Q_{AB}$ lies in $\lin{\sigmab}$ except for $f^\prime$.
Therefore, it holds that
\[
    \mathbbm{1}_{\{f^\prime\}} = \mathbbm{1}_{Q_{AB}} - \sum_{g \in Q_{AB} \setminus \{f^\prime\}} \mathbbm{1}_{\{g\}} \in \lin{\sigmab}\,.
\]
Since the vertex $s$ has been chosen arbitrarily we have $\mathbbm{1}_{\{f'\}} \in \lin{\sigmab}$ for all $f' \in F$ with exactly one endvertex in $V_P$.

By the previous discussion and our assumption that $P$ is chordless in the graph $(V, F \setminus \{f\})$, we immediately see that the standard unit vector associated with every interaction that is not in $E_P \cup \{f\}$ is contained in $\lin{\sigmab}$.

Next, for every interior vertex $v$ of $P$ and every edge $e$ of $P$, we show that both the vectors $\mathbbm{1}_{\{e\}} + \mathbbm{1}_{\{f\}}$ and $\mathbbm{1}_{\{v\}} - \mathbbm{1}_{\{f\}}$ are elements of $\lin{\sigmab}$.
In view of the preceding paragraph, for our purpose, it is sufficient to consider the special case $F = E_P \cup \{f\}$.
Let us first deal with an edge $e$ of $P$.
Suppose $e = e_{k}$ for some $k \in \{1, \dots, n\}$.
Note that both $x^{\{v_l \in V_P \mid l < k\}}$ and $x^{\{v_l \in V_P \mid l \geq k\}}$ are members of $\sigmab$ by the observation above.
Then, we have
\[
    \mathbbm{1}_{\{e_k\}} + \mathbbm{1}_{\{f\}} = x^{\{v_l \in V_P \mid l < k\}} + x^{\{v_l \in V_P \mid l \geq k\}} - \mathbbm{1}_{V \cup F} - \mathbbm{1}_{V \setminus V_P}\,.
\]
It lies in $\lin{\sigmab}$ since both $\mathbbm{1}_{V \cup F}$ and $\mathbbm{1}_{V \setminus V_P}$ belong to $\sigmab$ by the observation above.
Next, let $v = v_m$ be an interior vertex of $P$ for some $m \in \{1, \dots, n-1\}$.
As the vectors $\mathbbm{1}_{\{e\}} + \mathbbm{1}_{\{f\}}$ are known to be in $\lin{\sigmab}$ for all $e \in E_P$, we can iteratively calculate $\mathbbm{1}_{\{v\}} - \mathbbm{1}_{\{f\}}$ from $m=1$ to $m=n-1$ by
\[
    \mathbbm{1}_{\{v_m\}} - \mathbbm{1}_{\{f\}} = \mathbbm{1}_{V \cup F} - x^{\{v_l \in V_P \mid l \leq m\}} - \mathbbm{1}_{\{v_0\}} - \sum_{1 \leq l < m} (\mathbbm{1}_{\{v_l\}} - \mathbbm{1}_{\{f\}}) - \sum_{1 \leq l \leq m} (\mathbbm{1}_{\{e_l\}} + \mathbbm{1}_{\{f\}})\,,
\]
which also lies in $\lin{\sigmab}$ for every $m \in \{1, \dots, n-1\}$.

Thus, we have shown that the linear space $\lin{\sigmab}$ has a basis
\begin{equation*}
    \begin{aligned}
        {} & \{\mathbbm{1}_{\{v\}} \mid v \in (V \setminus V_P) \cup f\}\\
        {}\cup{} &\{\mathbbm{1}_{\{f^\prime\}} \mid f^\prime \in F \setminus (E_P \cup \{f\})\}\\
        {}\cup{} &\{\mathbbm{1}_{\{v\}} - \mathbbm{1}_{\{f\}} \mid v \in V_P \setminus f\}\\
        {}\cup{} &\{\mathbbm{1}_{\{e\}} + \mathbbm{1}_{\{f\}} \mid e \in E_P\}\,,
    \end{aligned}
\end{equation*}
which implies that it has codimension $1$, and hence so does the face $\aff(\sigmab)$.
Since the polytope $\polytope$ is full-dimensional by \Cref{prop:dimension}, we conclude that the path inequality \eqref{eq:path} is facet-defining for $\polytope$ under the specified assumption.
\qed

\subsection{Proof of Lemma~\ref{lemma:intersection-valid}}
\label{section:proof:lemma:intersection-valid}

\emph{Necessity}. Assume that the intersection inequality is valid for $\polytope$.
Without loss of generality, we may assume that $\{v_2, w_2\}$ is not a $\{v_1, w_1\}$-separator of $G$.
Then, there exists a vertex subset $U$ such that $v_1, w_1 \in U$ and $v_2, w_2 \not\in U$.
Thus, we have $x^U_{v_1w_2} = 1$ and $x^U_{v_2w_1} = 1$.
Since $x^U_{v_1w_1} = 0$, the vector $x^U$ violates the intersection inequality, which contradicts our assumption.

\emph{Sufficiency}. Assume that $\{v_1, w_1\}$ is a $\{v_2, w_2\}$-separator of $G$ and $\{v_2, w_2\}$ is a $\{v_1, w_1\}$-separator of $G$.
Let $x \in \feasible$ be an arbitrary feasible vector.
We need to show that the intersection inequality is satisfied by $x$.
Let us first consider the following two cases.
\begin{enumerate}
    \item Case~(a): $x_{v_1w_1} = x_{v_2w_2} = 0$. This implies the existence of a vertex subset $U$ such that $x|_U = 0$ and $U$ is both a $\{v_1, w_1\}$- and a $\{v_2, w_2\}$-connector of $G$.
    Since $\{w_1, w_2\} \in E$, the set $U$ is also a $\{v_1, w_2\}$- and a $\{v_2, w_1\}$-connector of $G$.
    Consequently, $x_{v_1w_2} = 0$ and $x_{v_2w_1} = 0$, implying that \(x\) satisfies the intersection inequality.
    \item Case~(b): $x_{v_1w_1} = 0$ and $x_{v_2w_2} = 1$. In this case, there exists a vertex subset \(U\) such that \(x|_U = 0\) and \(U\) is a \(\{v_1, w_1\}\)-connector of \(G\).
    Since both $\{v_1, v_2\}$ and $\{v_2, w_2\}$ are in $E$, the set $U \cup \{v_2, w_2\}$ is also a $\{v_2, w_2\}$-connector of $G$.
    Thus, if $x_{v_2} = x_{w_2} = 0$, then $x_{v_2w_2} = 0$, contradicting our assumption.
    Hence, at least one of $x_{v_2}$ and $x_{w_2}$ takes the value $1$.
    Note that, $x_{v_2}$ and $x_{w_2}$ cannot both be $1$. Otherwise, since $\{v_2, w_2\}$ is a $\{v_1, w_1\}$-separator of $G$, we would have $x_{v_1w_1} = 1$, contradicting our assumption.
    Therefore, exactly one of $x_{v_2}$ and $x_{w_2}$ takes the value $1$, which implies that either $x_{v_1w_2} = 0$ or $x_{v_2w_1} = 0$, and hence the intersection inequality is satisfied by $x$.
\end{enumerate}
Note that the case $x_{v_1w_1} = 1$ and $x_{v_2w_2} = 0$ can be handled by a symmetric argument, and the case $x_{v_1w_1} = x_{v_2w_2} = 1$ is trivial.
Thus, all possible cases have been considered, and we have shown that the intersection inequality is valid for $\polytope$ under the given condition.
\qed

\subsection{Proof of Theorem~\ref{theorem:intersection-facet}}
\label{section:proof:theorem:intersection-facet}

\emph{Necessity}. This follows immediately from \Cref{lemma:intersection-valid}.

\emph{Sufficiency}. Assume that $\{v_1, w_1\}$ is a $\{v_2, w_2\}$-separator of $G$ and $\{v_2, w_2\}$ is a $\{v_1, w_1\}$-separator of $G$.
Under these assumptions, we construct a basis of the linear space $\lin{\sigmab}$, where $\sigmab$ is the set of all feasible vectors that satisfy the intersection inequality with equality.

Let us consider an arbitrary vertex $v$.
Set \(A=\emptyset\) and \(B=\{v\}\).
Then \(x^A,x^B,x^{A\cap B},x^{A\cup B}\in \sigmab\), because no singleton vertex set connects any of the four interactions appearing in the intersection inequality.
Moreover, $Q_{AB}=Q_{\emptyset,\{v\}}=\{v\}$.
Thus, by \eqref{eq:ie-1} of \Cref{lemma:ie}, we have $\mathbbm{1}_{\{v\}}\in \lin{\sigmab}$.

We first show that the following two conditions cannot both hold:
\begin{enumerate}[label=\textnormal{(\alph*)}]
	\item $\{v_1, w_2\}$ is an $\{v_2, w_1\}$-separator of $G$.
	\item $\{v_2, w_1\}$ is an $\{v_1, w_2\}$-separator of $G$.
\end{enumerate}
Indeed, if both conditions hold, our initial separator assumptions together with the fact that \(\{v_1, v_2\}, \allowbreak \{w_1, w_2\} \in E\) imply that any minimal $\{v_1, w_2\}$-connector $C_1$ of $G$ must contain $\{v_2, w_1\}$, and symmetrically, any minimal $\{v_2, w_1\}$-connector $C_2$ of $G$ must contain $\{v_1, w_2\}$.
Consequently, the vertex subset \(C_1 \setminus \{v_1, w_2\}\) is a minimal \(\{v_2, w_1\}\)-connector of $G$ disjoint from \(\{v_1, w_2\}\), and \(C_2 \setminus \{v_2, w_1\}\) is a minimal \(\{v_1, w_2\}\)-connector of $G$ disjoint from \(\{v_2, w_1\}\).
This yields a contradiction.

By symmetry, we may assume henceforth that $\{v_1, w_2\}$ is not a $\{v_2, w_1\}$-separator of $G$.
Let $C^\prime$ denote the union of $\{v_1, w_2\}$ and a minimal $\{v_2, w_1\}$-connector of $G$ disjoint from $\{v_1, w_2\}$.
By our initial assumption and the fact that \(\{v_1, v_2\}, \allowbreak \{w_1, w_2\} \in E\), the induced subgraph \(G[C^\prime]\) is either a chordless path, if \(v_1w_2\notin E\), or a chordless cycle, if \(v_1w_2\in E\).

\begin{figure}
	\tikzstyle{interaction1}=[draw=mycolor1]
	\tikzstyle{interaction2}=[draw=mycolor2]
	\tikzstyle{interaction3}=[draw=mycolor3]
	(1)
    \input{figures/intersection-ineq-proof-1.tex}
	\hspace{2ex}
	(2)
	\begin{tikzpicture}[baseline={([yshift={-\ht\strutbox}]current bounding box.north)}]
		\node[vertex, label=below:$v_1$] at (0, 0) (v1) {};
		\node[vertex, label=below:$v_2$] at (1, 0) (v2) {};
		\node[vertex, draw=white] at (2, 0) (u) {$\ \ldots\ $};
		\node[vertex, label=below:$w_1$] at (3, 0) (w1) {};
		\node[vertex, label=below:$w_2$] at (4, 0) (w2) {};
		\node[vertex, label=below left:$s$] at (0.5, -0.5) (s) {};
		\node[vertex, label=below right:$t$] at (3.5, -0.5) (t) {};
		\draw (v1) -- (v2);
		\draw (v2) -- (u);
		\draw (u) -- (w1);
		\draw (w1) -- (w2);
		\draw (s) -- (v2);
		\draw (t) -- (w1);
		\path[interaction1][dashed] (v1) to[out=90, in=90] node[midway, above] {} (w2);
		\path[interaction1][dashed] (v2) to[out=90, in=90] node[midway, below] {} (w1);
		\path[interaction1][dashed] (v1) to[out=90, in=90] node[near start, below] {} (w1);
		\path[interaction1][dashed] (v2) to[out=90, in=90] node[near end, below] {} (w2);
		\path[interaction2][dashed] (s) to[out=-30, in=210] node[midway, above] {$f$} (t);
	\end{tikzpicture}

	(3)
	\begin{tikzpicture}[baseline={([yshift={-\ht\strutbox}]current bounding box.north)}]
		\node[vertex, label=below:$v_1$] at (0, 0) (v1) {};
		\node[vertex, label=below:$v_2$] at (1, 0) (v2) {};
		\node[vertex] at (2, 0) (r) {};
		\node[vertex, label=below:$w_1$] at (3, 0) (w1) {};
		\node[vertex, label=below right:$w_2/t$] at (4, 0) (w2) {};
		\node[vertex, label=below left:$s$] at (2, -1) (s) {};
		\draw (v1) -- (v2);
		\draw (v2) -- (r);
		\draw (r) -- (w1);
		\draw (w1) -- (w2);
		\draw (r) -- (s);
		\path[interaction1][dashed] (v1) to[out=90, in=90] node[midway, above] {} (w2);
		\path[interaction1][dashed] (v2) to[out=90, in=90] node[midway, below] {} (w1);
		\path[interaction1][dashed] (v1) to[out=90, in=90] node[near start, below] {} (w1);
		\path[interaction1][dashed] (v2) to[out=90, in=90] node[near end, below] {} (w2);
		\path[interaction2][dashed] (s) to[out=0, in=-90] node[midway, below] {$f$} (w2);
	\end{tikzpicture}
	\hspace{2ex}
	(4)
	\begin{tikzpicture}[baseline={([yshift={-\ht\strutbox}]current bounding box.north)}]
		\node[vertex, label=below:$v_1$] at (0, 0) (v1) {};
		\node[vertex, label=below right:$v_2/t$] at (1, 0) (v2) {};
		\node[vertex] at (2, 0) (r) {};
		\node[vertex, label=below:$w_1$] at (3, 0) (w1) {};
		\node[vertex, label=below right:$w_2$] at (4, 0) (w2) {};
		\node[vertex, label=below right:$s$] at (2, -1) (s) {};
		\draw (v1) -- (v2);
		\draw (v2) -- (r);
		\draw (r) -- (w1);
		\draw (w1) -- (w2);
		\draw (r) -- (s);
		\path[interaction1][dashed] (v1) to[out=90, in=90] node[midway, above] {} (w2);
		\path[interaction1][dashed] (v2) to[out=90, in=90] node[midway, below] {} (w1);
		\path[interaction1][dashed] (v1) to[out=90, in=90] node[near start, below] {} (w1);
		\path[interaction1][dashed] (v2) to[out=90, in=90] node[near end, below] {} (w2);
		\path[interaction2][dashed] (s) to[out=180, in=-90] node[midway, below] {$f$} (v2);
	\end{tikzpicture}
	\caption{Depicted are graph (solid black) and interactions (dashed). In each subfigure, the green interactions define an intersection equality. For each red interaction, its corresponding standard unit vector is constructed in the proof of \Cref{theorem:intersection-facet}.}
	\label{fig:intersection-proof}
\end{figure}

For any interaction $f = \{s, t\} \in F$ that is not separated by $\{v_1, v_2\}$ (resp. $\{w_1, w_2\}$) in $G$, we choose a minimal $f$-connector $C$ of $G$ such that $C \cap \{v_1, v_2\} = \emptyset$ (resp. $C \cap \{w_1, w_2\} = \emptyset$).
Set $A = C \setminus \{s\}$ and $B = C \setminus \{t\}$, then $F_{AB} = \{f\}$ and $H_{AB} = \emptyset$.
Moreover, $x^A$, $x^B$, $x^{A \cap B}$ and $x^{A \cup B}$ are all in $\sigmab$.
For example, if $C \cap \{v_1, v_2\} = \emptyset$, then every set among $A$, $B$, $A \cap B$ and $A \cup B$ is also disjoint from $\{v_1, v_2\}$. Hence none of the four special interactions can be connected, because each of them has one endpoint in $\{v_1, v_2\}$. Therefore all four special interaction variables are equal to \(1\), and the intersection inequality holds at equality: $1+1=1+1$.
By \eqref{eq:ie-2} of \Cref{lemma:ie}, we have
\[
	\mathbbm{1}_{\{f\}} \in \lin{\sigmab}\,.
\]
See \Cref{fig:intersection-proof}-(1) for an example.

Let \(f = \{s, t\} \in F\) be an interaction that is separated by both \(\{v_1, v_2\}\) and \(\{w_1, w_2\}\) in \(G\), and satisfies \(f \not\subseteq \{v_{1},v_{2},w_{1},w_{2}\}\).
We distinguish two possible cases depending on whether the two sets $f$ and $\{v_1, v_2, w_1, w_2\}$ intersect.

First, suppose \(f\) does not intersect \(\{v_1, v_2, w_1, w_2\}\); then \(f\) is also disjoint from \(C^{\prime }\).
Hence, we can find a minimal $\{\{s\}, C^\prime\}$-connector $U_1$ of $G$ and a minimal $\{\{t\}, C^\prime\}$-connector $U_2$ of $G$.
Set $A = (U_1 \cup U_2 \cup C^\prime) \setminus \{s\}$ and $B = (U_1 \cup U_2 \cup C^\prime) \setminus \{t\}$.
By construction, $F_{AB} = \{f\}$ and $H_{AB} = \emptyset$.
Indeed, the set \(A\cup B\) contains an \(st\)-connector, whereas \(A\) misses \(s\) and \(B\) misses \(t\). Thus \(f\in F_{AB}\). No other interaction belongs to \(F_{AB}\), because every other interaction contained in \(A\cup B\) remains connected in at least one of \(A\) or \(B\). Moreover, deleting \(s\) and \(t\) simultaneously does not destroy any connection among endpoints lying in \(A\cap B\), so \(H_{AB}=\emptyset\).
By \eqref{eq:ie-2} of \Cref{lemma:ie}, we obtain
\[
    \mathbbm{1}_{\{f\}} \in \lin{\sigmab}\,,
\]
since $x^A$, $x^B$, $x^{A \cap B}$ and $x^{A \cup B}$ are in $\sigmab$.
See \Cref{fig:intersection-proof}-(2) for an illustration.

Next, we consider the case where \(f\) intersects \(\{v_1, v_2, w_1, w_2\}\).
Suppose that $t$ is the only element in the intersection of \(f\) and \(\{v_1, v_2, w_1, w_2\}\), and let \(U\) be a minimal \(\{s, C^\prime\}\)-connector of \(G\).
If \(t \in \{v_1, w_2\}\), we define $A = (U \cup C^\prime) \setminus \{s\}$ and $B = (U \cup C^\prime) \setminus \{t\}$.
A straightforward verification like above shows that $F_{AB} = \{f\}$ and $H_{AB} = \emptyset$.
By \eqref{eq:ie-2} of \Cref{lemma:ie}, we obtain
\[
    \mathbbm{1}_{\{f\}} \in \lin{\sigmab}\,,
\]
since $x^A$, $x^B$, $x^{A \cap B}$ and $x^{A \cup B}$ are in $\sigmab$.
See \Cref{fig:intersection-proof}-(3) for an illustration.
Symmetrically, if \(t \in \{v_2, w_1\}\) (say, \(t = v_2\)), we set $A = (U \cup C^\prime) \setminus \{s, v_1\}$ and $B = (U \cup C^\prime) \setminus \{t, v_1\}$.
It follows that $F_{AB} = \{f\}$ and $H_{AB} = \emptyset$.
Applying \eqref{eq:ie-2} of \Cref{lemma:ie} yields
\[
    \mathbbm{1}_{\{f\}} \in \lin{\sigmab}\,,
\]
since $x^A$, $x^B$, $x^{A \cap B}$, $x^{A \cup B}$ are in $\sigmab$.
This case is illustrated in \Cref{fig:intersection-proof}-(4).

Finally, we consider the remaining elements $\{v_1, w_2\}$, $\{v_2, w_1\}$, $\{v_1, w_1\}$, and $\{v_2, w_2\}$.
One can readily verify that the following equalities hold:
\[
    \mathbbm{1}_{\{v_1w_1, v_1w_2\}} = x^{C^\prime \setminus \{v_1\}} - x^{C^\prime} - \mathbbm{1}_{\{v_1\}} - \sum_{v_1v \in F \colon v \in C^\prime \setminus \{w_1, w_2\}} \mathbbm{1}_{\{v_1v\}}\,,
\]
\[
    \mathbbm{1}_{\{v_2w_2, v_1w_2\}} = x^{C^\prime \setminus \{w_2\}} - x^{C^\prime} - \mathbbm{1}_{\{w_2\}} - \sum_{w_2w \in F \colon w \in C^\prime \setminus \{v_1, v_2\}} \mathbbm{1}_{\{w_2w\}}\,,
\]
and
\[
    \mathbbm{1}_{\{v_2w_2, v_2w_1\}} = x^{C^\prime \setminus \{v_1, v_2\}} - x^{C^\prime \setminus \{v_1\}} - \mathbbm{1}_{\{v_2\}} - \sum_{v_2v \in F \colon v \in C^\prime \setminus \{w_1, w_2\}} \mathbbm{1}_{\{v_2v\}}\,.
\]
These three vectors all belong to $\lin{\sigmab}$, since the zero vector $x^{V}$ and the vectors on the right-hand sides of the equalities above are in $\sigmab$.

By all the results above, we see that the linear space $\lin{\sigmab}$ has a basis
\[
    \left\{\mathbbm{1}_{\{g\}} \mid g \in (V \cup F) \setminus \{v_1w_2, v_2w_1, v_1w_1, v_2w_2\}\right\} \cup \left\{\mathbbm{1}_{\{v_1w_1, v_1w_2\}}, \mathbbm{1}_{\{v_2w_2, v_1w_2\}}, \mathbbm{1}_{\{v_2w_2, v_2w_1\}}\right\}\,.
\]
Since $\polytope$ is full-dimensional by \Cref{prop:dimension}, we conclude that the intersection inequality~\eqref{eq:ineq-intersection} defines a facet of $\polytope$ under the given conditions.
\qed

\subsection{Proof of Lemma~\ref{lemma:bqp-msp}}
\label{section:proof:lemma:bqp-msp}

We first prove \ref{item:bqp-msp-1}. 
Let \(x \in \feasible\), and define \(z\in \{0,1\}^{V\cup F}\) by $z_v=1-x_v$ for all $v\in V$ and $z_{uv}=z_u z_v$ for all $uv\in F$.
Then \(z\) is a vertex of \(\bqp_{(V,F)}\).

For every interaction \(uv\in F\), if \(x_{uv}=0\), then \(u\) and \(v\) are connected in \(G[x^{-1}(0)\cap V]\), and in particular \(x_u=x_v=0\). Hence \(z_u=z_v=1\), so \(z_{uv}=1\). Therefore $1-x_{uv}\leq z_{uv}$ for all $uv\in F$.
Moreover, if \(uv\in E\), then \(u\) and \(v\) are connected in \(G[x^{-1}(0)\cap V]\) if and only if \(x_u=x_v=0\). 
Thus $1-x_{uv}=z_{uv}$ for all \(uv\in E\).
Since \(a_{uv}\geq 0\) for all \(uv\in F\setminus E\), we obtain $a(1-x)\leq az$.
By validity of \(az\leq \alpha\) for \(\bqp_{(V,F)}\), we have \(az\leq \alpha\). 
Hence $a(1-x)\leq\alpha$.
This proves \ref{item:bqp-msp-1}.

The proof of part~\ref{item:bqp-msp-2} is analogous.
Let \(z\) be a vertex of \(\bqp_{(V,F)}\), and define $U=\{v\in V\mid z_v=1\}$.
Let \(x=x^U \in \feasible\).
Then \(1-x_{uv}\leq z_{uv}\) for every \(uv\in F\), with equality for \(uv\in E\). 
Since \(a_{uv}\leq 0\) for all \(uv\in F\setminus E\), it follows that $az\leq a(1-x)$.
By validity of \(a(1-x)\leq \alpha\) for \(\polytope\), we obtain $az\leq \alpha$.
Thus \(az\leq \alpha\) is valid for \(\bqp_{(V,F)}\).

It remains to prove part~\ref{item:bqp-msp-3}. 
Assume that $\{f\in F\mid a_f\neq 0\} \subseteq E \subseteq F$ and that \(a(1-x)\leq \alpha\) is facet-defining for \(\polytope\).
Note that the inequality \(a|_{V \cup E}z\leq \alpha\) is valid for \(\bqp_G\).
Thus we only need to show that the face defined by \(a|_{V \cup E}z\leq \alpha\) has dimension \(|V| + |E| - 1\).

Let $\sigmab = \{x\in \feasible \mid a(1-x)=\alpha\}$ be the set of feasible multi-separator vectors in the corresponding face. 
Since \(\polytope\) is full-dimensional and \(a(1-x)\leq\alpha\) defines a facet, we have $\dim \lin{\sigmab} = |V| + |F| - 1$.
Moreover, \(a_f=0\) for every \(f\in F\setminus E\). 
Hence $\mathbbm{1}_{\{f\}} \in \lin{\sigmab}$ for $f \in F \setminus E$.

Let $Z' = \{z\in \operatorname{vert}(\bqp_G)\mid a|_{V \cup E}z=\alpha\}$.
Since \(E\subseteq F\), the affine map defined by $z_v=1-x_v$ for $v \in V$ and $z_e=1-x_e$ for $e\in E$ identifies the projection of \(\sigmab\) onto the subsapce \(\mathbb{R}^{V\cup E}\) with \(Z'\).
Therefore, $\lin{\sigmab} \cong \lin{Z'} \oplus \mathbb{R}^{F \setminus E}$, and hence
\[
    \dim \conv Z'
    = \dim \lin{Z'}
    = \dim \lin{\sigmab} - \lvert F \setminus E \rvert
    = \lvert V \rvert + \lvert E \rvert - 1
    \,,
\]
Since \(\bqp_G\) is full-dimensional in \(\mathbb R^{V\cup E}\), the inequality \(a|_{V \cup E}z\leq\alpha\) defines a facet of \(\bqp_G\).
\qed

\subsection{Proof of Proposition~\ref{prop:lmc-to-ms}}
\label{section:proof:proposition:lmc-to-ms}

Let
\[
    X^{\prime}_{\widehat{G}F} =
    \{
        x \in X_{\widehat{G}F} \mid
        \forall v \in V \colon x_v = 0
        \land
        \forall e \in E \colon x_e = x_{v_{e}}
    \}
\]
denote the set of all feasible vectors in $X_{\widehat{G}F}$ that vanish on $V$ and whose $e$- and $v_{e}$-components take the same value for each $e \in E$.
Since all inequalities $x_v \geq 0$ with $v \in V$ are valid for $x \in X_{\widehat{G}F}$ and $x_e \leq x_{v_{e}}$ are valid for $x \in X_{\widehat{G}F}$ with $x|_V = 0$, the convex hull $\Xi^{\prime}_{\widehat{G}F} \coloneqq \conv X^{\prime}_{\widehat{G}F}$ is a face of the multi-separator polytope $\Xi_{\widehat{G}F}$.

We claim that the lifted multicut polytope with respect to $G$ and $G^{\prime}$ is the image of the face $\Xi^{\prime}_{\widehat{G}F}$ under the projection $\proj_{F} \colon \mathbb{R}^{\widehat{V} \cup F} \to \mathbb{R}^{F}$ which maps $(x_1, x_2) \in \mathbb{R}^{\widehat{V}} \times \mathbb{R}^{F}$ to $x_2$, i.e.,
\[
    \lmc_{GG^{\prime}} = \proj_{F} \Xi^{\prime}_{\widehat{G}F}\,.
\]
For this purpose, it is enough to show that $\lmcfeasible_{GG^{\prime}} = \proj_{F} X^{\prime}_{\widehat{G}F}$, since the operations of constructing the convex hull and taking the projection commute with each other.

To establish $\proj_F X^{\prime}_{\widehat{G}F} \subseteq \lmcfeasible_{GG^{\prime}}$, let $x \in X^{\prime}_{\widehat{G}F}$ and let $f \in F$, we show that those of the inequalities~\eqref{eq:lmc-cycle}~--~\eqref{eq:lmc-cut} associated with $f$ for $\lmcfeasible_{GG^{\prime}}$ can be obtained from the connector and separator inequalities associated with $f$ for $X_{\widehat{G}F}$.
Suppose $H$ is a cycle containing $f$ in $G$ if $f \in E$ or an $f$-path in $G$ if $f \in F \setminus E$.
By \Cref{remark:lmc-ilp}, $H$ can be assumed to be chordless in $G$.
No matter which is the case, the vertex set of $\widehat{H}$ is always a minimal $f$-connector of $\widehat{G}$ by \ref{item:cor-connector-minimal} of \Cref{cor:connector-separator-minimal}.
In this way, each chordless cycle or path inequality associated with $f$ for $\lmcfeasible_{GG'}$ corresponds to a minimal connector inequality associated with $f$ for $X_{\widehat{G}F}$.
Using this fact and the definition of $X^{\prime}_{\widehat{G}F}$, it is straightforward to see that $x$ satisfies the cycle inequalities~\eqref{eq:lmc-cycle} and the path inequalities~\eqref{eq:lmc-path} for $\lmcfeasible_{GG'}$.
Furthermore, since every minimal $f$-connector of $\widehat{G}$, except $f \cup \{v_f\}$ when $f \in E$, arises from such a chordless cycle or path in $G$, it follows from the separator inequalities for $X_{\widehat{G}F}$ that $y$ also satisfies the cut inequalities~\eqref{eq:lmc-cut} and hence its image under $\proj_F$ belongs to $\lmcfeasible_{GG^{\prime}}$.
To see the other direction, i.e.~$\lmcfeasible_{GG^{\prime}} \subseteq \proj_F X^{\prime}_{\widehat{G}F}$, for any $y \in \lmcfeasible_{GG^{\prime}}$ we construct another vector $x \in \{0, 1\}^{\widehat{V} \cup F}$, whose projection to $\mathbb{R}^{F}$ is $y$, as follows:
\[
    \forall v \in V \colon x_v = 0
    \land
    \forall e \in E \colon x_{v_{e}} = y_{e}
    \land
    \forall f \in F \colon x_{f} = y_{f}
    \,.
\]
It is easy to verify that $x$ is feasible and hence $x \in X^{\prime}_{\widehat{G}F}$.
This completes the proof.
\qed

\subsection{Proof of Proposition~\ref{prop:ms-to-lmc}}
\label{section:proof:proposition:ms-to-lmc}

Let
\[
    \lmcfeasible^{\prime}_{\bar{G}\bar{G}^{\prime}} =
    \{
        y \in \lmcfeasible_{\bar{G}\bar{G}^{\prime}} \mid
        \forall v \in e \in E \colon
        y_{\{\bar{v}, v_e\}} = 1
        \textnormal{ and }
        y_{\{v, \bar{v}\}} + y_{\{v, v_e\}} = 1
    \}
\]
denote the set of all those feasible vectors in $\lmcfeasible_{\bar{G}\bar{G}^{\prime}}$ whose induced lifted multicut $M$ satisfy the following condition:
All edges $\{\bar{v}, v_e\}$ are in the lifted multicut $M$ for $v \in e \in E$, and either $\{v, \bar{v}\} \in M$ and $\{v, v_e\} \not\in M$ for all $e \in E$ incident with $v$, or $\{v, \bar{v}\} \not\in M$ and $\{v, v_e\} \in M$ for all $e \in E$ incident with $v$.
The convex hull of $\lmcfeasible^{\prime}_{\bar{G}\bar{G}^{\prime}}$ defines a face of $\lmc_{\bar{G}\bar{G}^{\prime}}$, since all inequalities $y_{e'} \leq 1$ with $e' \in E'$ are valid for $y \in \lmcfeasible_{\bar{G}\bar{G}^{\prime}}$ and $y_{\{v, \bar{v}\}} + y_{\{v, v_e\}} \geq 1$ are valid for those $y \in \lmcfeasible_{\bar{G}\bar{G}^{\prime}}$ satisfying $y|_{E'} = 1$, where $E'$ denotes the set of all edges $\{\bar{v}, v_e\}$ with $v \in e \in E$.
Below we show that $\polytope$ can be obtained as a projection of the face $\conv \lmcfeasible^{\prime}_{\bar{G}\bar{G}^{\prime}}$.
Since the convex hull operator and the projection operator commute, it suffices to show that $\feasible$ is a projection of $\lmcfeasible^{\prime}_{\bar{G}\bar{G}^{\prime}}$.

Let $\tilde{E}$ be an edge subset of the subdivision $\widehat{G}$ such that there is exactly one edge in $\tilde{E}$ incident with each vertex of $G$.
So there is a bijection $\lambda \colon V \cup F \to \tilde{E} \cup F$ mapping $v \in V$ to its unique adjacent edge in $\tilde{E}$ and leaving $f \in F$ unchanged.
This gives rise to a linear isomorphism $\lambda^{*} \colon \mathbb{R}^{\tilde{E} \cup F} \to \mathbb{R}^{V \cup F}$ sending $x \colon \tilde{E} \cup F \to \mathbb{R}$ to the composite function $x \circ \lambda$.
We claim that the image of $\lmcfeasible^{\prime}_{\bar{G}\bar{G}^{\prime}}$ under $\lambda^* \circ \proj_{\tilde{E} \cup F}$ is exactly $\feasible$.
Similar to the proof of \Cref{prop:lmc-to-ms}, for any $f \in F$, every minimal $f$-connector of $G$ is the set of all the vertices in $V$ of some chordless path in $\bar{G}$ and vice versa.
Thus $\lambda^{*}(\proj_{\tilde{E} \cup F} \lmcfeasible^{\prime}_{\bar{G}\bar{G}^{\prime}})$ is contained in $X_{GF}$.
The other direction follows from the fact that any $x \in X_{GF}$ is the image of $y$ under $\lambda^* \circ \proj_{\tilde{E} \cup F}$, where $y$ is the vector in $\lmcfeasible^{\prime}_{\bar{G}\bar{G}^{\prime}}$ defined by
\[
    \forall v \in e \in E \colon y_{\{v, \bar{v}\}} = 1 - x_v \ \textnormal{and} \ y_{\{v, v_e\}} = x_v\,.
\]
Altogether, we have
\[
    X_{GF} =
    \lambda^{*}
    \left(
        \proj_{\tilde{E} \cup F} \lmcfeasible^{\prime}_{\bar{G}\bar{G}^{\prime}}
    \right)
    \,.
\]
Therefore, up to relabeling, $\Xi_{GF}$ is the projection of the face $\conv \lmcfeasible^{\prime}_{\bar{G}\bar{G}^{\prime}}$ of $\lmc_{\bar{G}\bar{G}^{\prime}}$, as desired.
\qed

\subsection{Proof of Proposition~\ref{prop:ms-lmc-path}}
\label{section:proof:proposition:ms-lmc-path}

We can identity $V_n \cup F$ with $E^\prime_{n+1}$ by the bijection $i \mapsto \{i,i+1\}$ for $i \in V_n$ and $\{i,j\} \mapsto \{i,j+1\}$ for $\{i,j\} \in F$ with $i < j$.
By renaming the indices of $x$ in inequalities \eqref{eq:path-connector} and \eqref{eq:path-separator} according to this bijection we obtain precisely the path and cut inequalities for the lifted multicut polytope for paths from \citet{lange2021lifted}.
Therefore there is a one-to-one correspondence between the multi-separator polytope for a path of length $n$ the lifted multicut polytope for a path of length $n+1$.

\subsection{Proof of Proposition~\ref{prop:ms-lmc-path-tdi}}
\label{section:proof:proposition:ms-lmc-path-tdi}

Because of the one-to-one correspondence between system~\eqref{eq:path-box}~--~\eqref{eq:path-intersection} and system (11)~--~(15) from \citet{lange2021lifted}, the desired result follows immediately from Theorem~1 of \citet{lange2021lifted}, which says that system (11)~--~(15) therein is totally dual integral.

\end{document}